\newcommand{\LOCAL}{$\mathsf{LOCAL}$\xspace}
\newcommand{\SLOCAL}{$\mathsf{SLOCAL}$\xspace}
\def\NN{{\mathbb N}}
\newcommand{\eps}{\varepsilon}
\def\logstar{{\log^*}}
\DeclarePairedDelimiter{\ceil}{\lceil}{\rceil}
\DeclarePairedDelimiter{\floor}{\lfloor}{\rfloor}
\DeclarePairedDelimiter{\parenths}{(}{)}
\DeclarePairedDelimiter{\brackets}{[}{]}
\DeclarePairedDelimiter{\curlybrackets}{\{}{\}}
\newtheorem{theorem}{Theorem}[section]
\newtheorem{claim}[theorem]{Claim}
\newtheorem*{theorem*}{Theorem}
\newtheorem{lemma}[theorem]{Lemma}
\newtheorem{corollary}[theorem]{Corollary}
\newtheorem{observation}[theorem]{Observation}
\newtheorem{fact}[theorem]{Fact}
\theoremstyle{definition}
\newtheorem{definition}[theorem]{Definition}
\theoremstyle{remark}
\newtheorem{remark}[theorem]{Remark}
\Crefname{remark}{Remark}{Remarks}
\Crefname{observation}{Observation}{Observations}
\Crefname{property}{Property}{Properties}
\Crefname{definition}{Definition}{Definitions}
\Crefname{fact}{Fact}{Facts}
\Crefname{remark}{Remark}{Remarks}
\Crefname{observation}{Observation}{Observations}
\Crefname{property}{Property}{Properties}
\Crefname{definition}{Definition}{Definitions}
\newcommand{\myemail}[1]{\,$\cdot$\, {\small #1}}
\newcommand{\myaff}[1]{\,$\cdot$\, {\small #1}\par\smallskip}
\newenvironment{myabstract}
{\list{}{\listparindent 1.5em%
		\itemindent    \listparindent
		\leftmargin    1cm
		\rightmargin   1cm
		\parsep        0pt}%
	\item\relax}
{\endlist}
\newenvironment{mycover}
{\list{}{\listparindent 0pt
		\itemindent    \listparindent
		\leftmargin    1cm
		\rightmargin   0.5cm
		\parsep        0pt}%
	\raggedright
	\item\relax}
{\endlist}
\begin{document}

\begin{mycover}
	{\huge\bfseries\boldmath Distributed Vertex Cover Reconfiguration \par}
	\bigskip
	\bigskip
	\bigskip
	
	\textbf{Keren Censor-Hillel}
	\myemail{ckeren@cs.technion.ac.il}
	\myaff{Technion, Israel}

	\textbf{Yannic Maus}
	\myemail{yannic.maus@ist.tugraz.at}
	\myaff{TU Graz, Austria}
	
	\textbf{Shahar Romem-Peled}
	\myemail{shaharr@campus.technion.ac.il}
	\myaff{Technion, Israel}
	
	\textbf{Tigran Tonoyan}
	\myemail{ttonoyan@gmail.com}
	\myaff{Technion, Israel}
\end{mycover}
\bigskip

\thispagestyle{empty}
\begin{myabstract}

Reconfiguration schedules, i.e., sequences that gradually transform one solution of a problem to another while always maintaining feasibility, have been extensively studied. Most research has dealt with the decision problem of whether a reconfiguration schedule exists, and the complexity of finding one. A prime example is the reconfiguration of vertex covers. 
We initiate the study of \emph{batched vertex cover reconfiguration}, which allows to reconfigure multiple vertices concurrently while requiring that any adversarial reconfiguration order within a \emph{batch} maintains feasibility.  The latter provides robustness, e.g., if the  simultaneous reconfiguration of a batch cannot be guaranteed. The quality of a  schedule is measured by the number of batches until all nodes are reconfigured, and its \emph{cost}, i.e., the maximum size of an intermediate vertex cover.

 To set a baseline for batch reconfiguration, we show that for graphs belonging to one of the classes $\{\mathsf{cycles, trees, forests, chordal, cactus, even\text{-}hole\text{-}free, claw\text{-}free}\}$, there are schedules that use $O(\eps^{-1})$ batches and incur only a $1+\eps$ multiplicative increase in cost over the best sequential schedules. Our main contribution is to compute such batch schedules in  $O(\eps^{-1}\logstar n)$ distributed time, which we  also show to be tight. Further, we show that once we step out of these graph classes we face a very different situation.  There are  graph classes on which no efficient distributed algorithm can obtain the best (or almost best) existing schedule. Moreover, there are classes of bounded degree graphs which do not admit any reconfiguration schedules without incurring a large  multiplicative increase in the cost at all.
\end{myabstract}

\clearpage

\clearpage

\tableofcontents
\clearpage
\setcounter{page}{1}
\section{Introduction}
Consider a huge network of computers connected via communication links, in which each communication link needs to be monitored at all times by at least one of its endpoints (computers). If the network is abstracted as a graph with each node representing a computer and each edge representing a communication link, the set of all monitoring computers is a \emph{vertex cover}. A usual constraint is to have small vertex covers (consisting of few computers, in our case), and the problem of finding minimal vertex cover is one of the classic problems of algorithmic graph theory and has been extensively studied in various graph classes and computational models. In the setting we are interested in, the system might decide at some point in time to switch to another vertex cover, say, to evenly distribute the load of monitoring over the nodes. To ensure correct system performance, it is natural to require each communication link be monitored at all times, even during the switching process, and at the same time to save resources by keeping the vertex cover size small at all times. This naturally leads us to the vertex cover reconfiguration problem.

More generally, reconfiguration problems ask the following type of questions: 
\emph{Given two solutions to a problem, 
is it always possible to gradually move from one solution to the other by changing \textbf{one element} at a time, while always maintaining a feasible solution?}

Reconfiguration problems thus explore reachability in a graph over the solutions, and as such they have been extensively studied for various problems. Notable examples are colorings~\cite{CerecedaHJ08,ItoKD12,BonamyHI0MMSW20}, matchings~\cite{ItoKK0O19}, independent sets and vertex covers~\cite{ KaminskiMM12,LokshtanovM19}. 
 In the vertex cover reconfiguration problem  one needs to find a schedule that moves from
a given first vertex cover to a given second vertex cover by changing the membership of one vertex at a time, and while 
ensuring that each intermediate set is a valid vertex cover (\textbf{feasibility}). 
Traditionally, the emphasis has been on the size of the intermediate solutions---the problem is trivial with no size bound---while the elementary steps consist of adding or removing a single vertex.  

\subparagraph*{Distributed reconfiguration and its motivation.}
In this work, we initiate the study of \emph{distributed vertex cover reconfiguration}. In contrast to the previously discussed setup, 
in huge networks, the states of vertices might have to change concurrently, for various reasons; e.g., to obtain short schedules, or simply if there is no single entity controlling the whole network, and communication between far-apart computers is too costly to agree on a global reconfiguration schedule. 
Motivated by this,
the goal of this work is to exploit parallelism for reconfiguration schedules.  For vertex cover reconfiguration, this leads us to allow changing the membership in the intermediate vertex cover of more than one vertex in each step.   We formally capture this setting by introducing the concept of batch reconfiguration.

In \emph{batch reconfiguration}, one is allowed to change a \emph{batch} of an \textbf{unbounded number of elements}  in a single reconfiguration step, as opposed to the previous \emph{sequential reconfiguration}, 
which changes a single vertex at a time. 
However, such a solution is not practically robust, in the following sense. Suppose that implementing the change for a vertex is not an immediate operation and could rather take a bit of time. Then, changing several vertices 
concurrently
may result in a sequence of changes in these vertices, with an unpredictable order. As a result, although we aim at reconfiguring all vertices at once in one swipe, in reality what could happen is that we get an arbitrary sequence of changes, which can easily violate  feasibility in an adversarial execution of a batch.

In light of the above, in addition to feasibility, we require a \textbf{robustness} condition for batch reconfiguration schedules. The goal of a robustness condition is to guarantee that no matter in which order the elements of a batch are eventually executed, feasibility is never violated. We require that \emph{the set of vertices that are reconfigured within a batch is always an independent set}. This promises that each edge is always covered, also within any internal ordering of a batch.

\subparagraph*{The tradeoff between the number of batches and the solution size.}
When computing a schedule between two covers, typically denoted by $\alpha$ and $\beta$, batching brings the advantage of short schedules, but comes with a proportional overhead in solution sizes. We show via a pigeonhole argument that for some instances of vertex covers $\alpha$ and $\beta$, every reconfiguration schedule with $t$ batches necessarily creates an intermediate solution of size at least $(1+1/t)\cdot \max\{|\alpha|,|\beta|\}$.
We desire to get as close as possible to this optimal length-vs.-size tradeoff. We 
evaluate the quality of a batch reconfiguration schedule for vertex cover by two measures: the length of the schedule, i.e., the number of batches, and the maximum size of an intermediate vertex cover in the schedule, which, by the robustness condition, includes all possible intermediate vertex covers that can occur in any internal ordering of each batch.
In view of the natural barrier of $\max\{|\alpha|,|\beta|\}$ on the worst-case intermediate vertex cover size, as well as the necessary overhead given by batch schedules, we say a reconfiguration schedule is an \emph{$(\eta,c)$-approximation} if the size of the worst-case intermediate cover is at most $\eta\cdot \max\{|\alpha|,|\beta|\} + c$ (see \cref{sec:defs} for formal definitions). Here, $\eta$ is between $1$ and $2$, and is normally related to the number of batches, while $c$ is often a constant (e.g., the trivial $2$-batch schedule that adds all nodes in $\beta\setminus \alpha$, then removes all nodes in $\alpha\setminus \beta$, is a $(2,0)$-approximation for a schedule from $\alpha$ to $\beta$).

\subsection{Our Contribution}
As we initiate the study of distributed algorithms for vertex cover reconfiguration, our first contribution is an examination of the baseline for our distributed algorithms. We provide several (tight) existential results on batch schedules on various graph classes. Then we devise  distributed algorithms that nearly match our existential results and efficiently compute schedules of almost the same quality. Further, we show that once we step out of these graph classes we face a very different situation.  There are  graph classes on which no efficient distributed algorithm can obtain the best (or almost best) existing schedule. Moreover, there are classes of bounded degree graphs which do not admit any reconfiguration schedules without incurring a large  multiplicative increase in the cost at all.
 We next discuss our existential (centralized) results and then discuss our distributed results.

\subparagraph*{Existential results on batch reconfiguration schedules.}
Our first technical contribution is a black-box compression scheme that mechanically transforms a sequential schedule into a batched one of desired length, with a proportional and unavoidable overhead. Thus, we derive batch schedules from known sequential schedules. Our result holds for \emph{monotone}   schedules (that never touch a vertex twice),
which is aligned with prior work.

\begin{restatable}[Schedule Compression]{theorem}{schedCompress} \label{thm::sched_compres}Let $G=(V,E)$ be a graph with two vertex covers $\alpha,\beta$, and let $\mathcal{S}$ be a monotone  sequential schedule from $\alpha$ to $\beta$ that is an $(\eta,c)$-approximation, for a real $\eta\ge 1$ and an integer $c\ge 0$. For every $\eps\in (0,1)$, $\mathcal{S}$ can be transformed into a monotone $(2\lceil 1/\eps\rceil)$-batch schedule $\mathcal{S}'$ that is an $(\eta+\eps, c+1)$-approximation. 
\end{restatable}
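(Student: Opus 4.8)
The plan is to turn $\mathcal{S}$ into a batched schedule by placing $O(1/\eps)$ checkpoints along it and, between two consecutive checkpoints, replaying all of $\mathcal{S}$'s moves in exactly two batches: first one batch performing every addition in that range, and then one batch performing every removal.

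I would begin from the observation that, since $\alpha$ and $\beta$ are both vertex covers, the set $\beta\setminus\alpha$ of vertices that $\mathcal{S}$ ever adds and the set $\alpha\setminus\beta$ of vertices that $\mathcal{S}$ ever removes are each independent: two adjacent vertices cannot both lie outside the vertex cover $\alpha$ (resp.\ $\beta$). Hence an addition-only batch is automatically robust, since every intermediate set then contains a vertex cover; and a removal-only batch whose removed set $Q$ is independent is robust as well, because for any edge $\{u,v\}$ at most one endpoint lies in $Q$, and a short case check using that both the first and the last set of the batch are vertex covers shows the edge stays covered under every adversarial internal ordering. This is precisely why I would split each checkpoint interval into an addition phase followed by a removal phase, rather than run it as a single mixed batch whose set of reconfigured vertices need not be independent (an edge with one endpoint being added and the other being removed can occur inside a single interval).

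For the checkpoints, set $M=\max\{|\alpha|,|\beta|\}$, $k=\lceil 1/\eps\rceil$, and $t=\lceil M/k\rceil$. By monotonicity, $\mathcal{S}$ performs $N^+:=|\beta\setminus\alpha|\le M$ additions, each vertex being added exactly once. I would cut $\mathcal{S}$ into consecutive blocks, each containing at most $t$ additions (and arbitrarily many removals); since $kt\ge M\ge N^+$, this yields at most $k$ blocks. Writing $S_{i_{j-1}}$ and $S_{i_j}$ for the covers of $\mathcal{S}$ at the two ends of block $j$, and $A_j^+$, $A_j^-$ for the sets of vertices added, respectively removed, inside that block, monotonicity gives $S_{i_j}=(S_{i_{j-1}}\cup A_j^+)\setminus A_j^-$ with $A_j^-\subseteq S_{i_{j-1}}$ and $A_j^+$ disjoint from both $S_{i_{j-1}}$ and $A_j^-$. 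I replace block $j$ by the batch that adds $A_j^+$ followed by the batch that removes $A_j^-$. The resulting schedule $\mathcal{S}'$ is monotone (each vertex of $\alpha\triangle\beta$ is touched once), runs from $\alpha$ to $\beta$, and uses at most $2k=2\lceil 1/\eps\rceil$ batches.

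Finally, for the size bound, every set that arises while executing block $j$ — in either phase and under any internal ordering — lies between $S_{i_{j-1}}$ and $S_{i_{j-1}}\cup A_j^+$, so it has size at most $|S_{i_{j-1}}|+|A_j^+|\le(\eta M+c)+t\le(\eta M+c)+(\eps M+1)=(\eta+\eps)M+c+1$, where I use that $S_{i_{j-1}}$ is an intermediate cover of the $(\eta,c)$-approximate schedule $\mathcal{S}$ and that $t=\lceil M/k\rceil\le\eps M+1$. Thus $\mathcal{S}'$ is an $(\eta+\eps,c+1)$-approximation, as required. The one step that needs genuine care is the robustness of the removal phases, where I rely on the independence of $\alpha\setminus\beta$ and hence of every $A_j^-$; everything else is bookkeeping, and the additive term $+1$ is exactly the rounding loss incurred by distributing $N^+$ additions over $\lceil 1/\eps\rceil$ blocks.
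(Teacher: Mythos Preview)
Your argument is correct and is in fact cleaner than the paper's own proof. Both proofs batch the schedule into $2\lceil 1/\eps\rceil$ pieces with an ``add then remove'' rhythm, but they build the batches differently.

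The paper detaches the sequence of additions $\mathcal{S}_\beta$ from the sequence of removals $\mathcal{S}_\alpha$, then re-interleaves them: one large initial addition batch of size $r=\lfloor\eta M\rfloor+c-|\alpha|+s$, followed by alternating remove-$s$/add-$s$ batches with $s=\lceil\eps|\beta\setminus\alpha|\rceil$. Because this reordering breaks the alignment with $\mathcal{S}$, validity is no longer immediate: the paper argues by contradiction that if an $\alpha$-node were removed before a neighboring $\beta$-node is added, then the original sequential schedule would already have exceeded cost $\eta M+c$ at some point. Your approach instead keeps the checkpoints $S_{i_j}$ of $\mathcal{S}$ intact and only reshuffles moves \emph{within} a block. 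Since $S_{i_{j-1}}$ and $S_{i_j}$ are genuine intermediate covers of $\mathcal{S}$, validity of the two-batch replacement follows from the simple observation you make about addition-only and removal-only batches between two vertex covers. This buys you a shorter argument with no contradiction step; the paper's version, on the other hand, sizes its batches by $|\beta\setminus\alpha|$ rather than $M$, which can be smaller but does not improve the stated bound.

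One cosmetic point: your sentence ``every set that arises \dots\ lies between $S_{i_{j-1}}$ and $S_{i_{j-1}}\cup A_j^+$'' is not literally true in the removal phase (the set can drop below $S_{i_{j-1}}$ down to $S_{i_j}$), but only the upper containment is used for the size bound, so the conclusion stands.
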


Based on known separator theorems (e.g.,~\cite{LiptonT80}), we observe that a number of graph classes possessing small separators admit $(1,O(\sqrt{n}))$-approximation schedules, where $n$ is the number of vertices. 
It is also known that cactus graphs have $(1,2)$-approximation schedules~\cite{ItoNZ16} (see \Cref{thm:sequential}). We provide a simple and unified proof of these results, and combined   with our batching scheme to get the following theorem.

\begin{restatable}{theorem}{variousClassesBatch}
\label{thm:variousClassesBatch}
Let $\eps>0$. Let $G=(V,E)$ be a graph with a pair $\alpha,\beta$ of vertex covers. If $G$ belongs to one of the  graph classes \emph{\{\textsf{cactus, chordal, even-hole-free, claw-free}\}},  there is a monotone $(2\ceil{1/\eps})$-batch $(1+\eps,3)$-approximation schedule from $\alpha$ to $\beta$.
If $G$ is planar,  there is a monotone $(2\ceil{2/\eps})$-batch $(1+\eps,O(\eps^{-1}))$-approximation schedule from $\alpha$ to $\beta$. 
\end{restatable}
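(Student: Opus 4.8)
The plan is to reduce every case to the structure of the bipartite \emph{difference graph} $H := G[\alpha\triangle\beta]$, invoke the known (centralized) sequential schedules for the relevant structured graph classes, and finally apply \cref{thm::sched_compres} to batch them. I first record the basic structural facts. Since $\alpha$ and $\beta$ are vertex covers, $G$ has no edge inside $A := \alpha\setminus\beta$ (both endpoints would lie outside $\beta$) nor inside $B := \beta\setminus\alpha$; hence $H$ is bipartite with parts $A,B$, and it is an \emph{induced} subgraph of $G$. I then verify, class by class, that $H$ is a cactus: if $G$ is a cactus, so is every subgraph of $G$; if $G$ is chordal or even-hole-free, then a shortest cycle of $H$ would span an induced cycle of $G$ of length $\ge 4$ (a chord would, by bipartiteness of $H$, be another $H$-edge, contradicting minimality), so $H$ has no cycle and is a forest; if $G$ is claw-free, a vertex of $A$ with three neighbours in $B$ would induce a claw in $G$ (those three $B$-vertices are pairwise non-adjacent in $G$, being outside $\alpha$), so $H$ has maximum degree at most $2$ and is a disjoint union of paths and even cycles. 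Forests and disjoint unions of paths and cycles are cacti, so in all four classes $H$ is a cactus.

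For the first claim, I use \cref{thm:sequential}: since $H=G[\alpha\triangle\beta]$ is a cactus, there is a monotone sequential schedule from $\alpha$ to $\beta$ that is a $(1,2)$-approximation. Applying \cref{thm::sched_compres} with $\eta=1$, $c=2$, and the given $\eps$ immediately yields the desired monotone $(2\lceil 1/\eps\rceil)$-batch $(1+\eps,3)$-approximation schedule.

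The planar case needs one extra idea, because a single planar separator has size $\Theta(\sqrt{|V(H)|})$, so a separator-based schedule is only a $(1,O(\sqrt{|V(H)|}))$-approximation and its overhead is not purely additive $O(\eps^{-1})$. The key observation is that the size entering the bound is $|V(H)| = |\alpha\triangle\beta| \le |\alpha|+|\beta| \le 2k$, where $k := \max\{|\alpha|,|\beta|\}$, so the additive overhead is $O(\sqrt{k})$, and by AM--GM $O(\sqrt{k}) \le (\eps/2)\,k + O(\eps^{-1})$. Concretely: $H$ is planar (an induced subgraph of a planar graph), so \cref{thm:sequential} — via the unified argument covering classes with $O(\sqrt{\cdot})$-separators, e.g.\ \cite{LiptonT80} — yields a monotone sequential $(1,O(\sqrt{|V(H)|}))$-approximation schedule from $\alpha$ to $\beta$; re-examining its guarantee through the AM--GM bound shows the same schedule is a monotone $(1+\eps/2,\,c)$-approximation for some integer $c=O(\eps^{-1})$; finally \cref{thm::sched_compres} with parameter $\eps/2$ turns it into a monotone $(2\lceil 2/\eps\rceil)$-batch $(1+\eps,\,O(\eps^{-1}))$-approximation schedule, as claimed.

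The step I expect to require the most care is verifying that the sequential schedules of \cref{thm:sequential} are genuinely governed by the structure of $H=G[\alpha\triangle\beta]$ rather than of all of $G$ — this is what lets the chordal, even-hole-free and claw-free cases ride on the cactus schedule, and what makes the planar overhead $O(\sqrt{|V(H)|}) = O(\sqrt{k})$ instead of $O(\sqrt{n})$. After that, the only non-mechanical point is the AM--GM absorption that converts the residual $O(\sqrt{k})$ planar overhead into the advertised $(1+\eps,O(\eps^{-1}))$ form.
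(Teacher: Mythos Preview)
Your proposal is correct and follows essentially the same route as the paper: reduce to $H=G[\alpha\oplus\beta]$, invoke \cref{obs:eh_free} (which you reprove inline) to see that $H$ is a cactus in the first four classes, apply \cref{thm:sequential} and then \cref{thm::sched_compres}; for the planar case, absorb the $O(\sqrt{M})$ overhead into a $(1+\eps/2,O(\eps^{-1}))$ form before compressing. Your AM--GM absorption is equivalent to the paper's two-case argument ($M\ge c/\delta^2$ versus $M<c/\delta^2$), and the point you flag as delicate --- that the sequential guarantees of \cref{thm:sequential} depend only on $G[\alpha\oplus\beta]$ --- is handled in the paper exactly as you anticipate (it is built into the proof of \cref{thm:planar}, which works on $G[\alpha\oplus\beta]$ and then adds back $|\alpha\cap\beta|$).
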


We show in \cref{sec:centralized}  that the tradeoff between the number of batches and the approximation overhead in \Cref{thm::sched_compres} is (almost)  best possible. This implies that the batch schedules from \Cref{thm:variousClassesBatch} are nearly optimal in terms of the length vs. approximation tradeoff.

A natural guess would be that the $(1+\eps)$-type approximations \'a la \Cref{thm:variousClassesBatch} extend to bounded degree or bounded arboricity graphs\footnote{The \emph{arboricity} of a graph is the minimum number of forests that are needed to cover its edge set.} (the latter contains planar graphs). 
In \cref{app:sequential}, we show that that is not the case: for any $d\ge 4$, there are infinitely many $d$-regular Ramanujan graphs $G$ (\cite{MarcusSS13}) with two vertex covers $\alpha,\beta$, for which no schedule can be better than a $(2-O(1/\sqrt{d}))$-approximation. The proof exploits the \emph{expansion} properties of such graphs. On the positive side, we give a $(2-\Omega(1/\lambda))$-approximation for any graph of arboricity $\lambda$, by a simple monotone  \emph{greedy schedule} that reconfigures $\alpha$-nodes in an increasing degree order, and  
a similar approximation can even be achieved with a \emph{4-batch} schedule. 

\begin{restatable}{theorem}{thmarboricity}\label{thm:arboricitybounds}
For every $d\ge 4$, there is an infinite class of $d$-regular graphs $G_i$, $i\ge 1$, with vertex covers $\alpha_i,\beta_i$, that do not admit $\left(2-4/(\sqrt{d+1}+1),0\right)$-approximation schedules. Any graph $G$ of arboricity $\lambda$ with vertex covers $\alpha,\beta$ has a $(2-1/(2\lambda),1)$-approximation schedule.
\end{restatable}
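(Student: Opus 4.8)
The theorem combines an existential lower bound with a constructive upper bound, and I would prove the two halves separately.

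For the \emph{lower bound} I take each $G_i$ to be a $d$-regular \emph{bipartite} Ramanujan graph with bipartition $V(G_i)=L_i\cup R_i$, $|L_i|=|R_i|=m_i$ and $m_i\to\infty$ (such graphs exist for every $d\ge 3$ by \cite{MarcusSS13}; alternatively one passes to the bipartite double cover of a non-bipartite $d$-regular Ramanujan graph), and I set $\alpha_i=L_i$, $\beta_i=R_i$, both of which are vertex covers, so $\max\{|\alpha_i|,|\beta_i|\}=m_i$. Two elementary observations drive the argument. First, in any vertex cover $C$ of $G_i$ every $u\in L_i\setminus C$ has $N(u)\subseteq C\cap R_i$. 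Second --- and this is the point that makes the proof robust to arbitrarily large batches --- for two consecutive covers $C_{t-1},C_t$ of a schedule, $N(L_i\setminus C_t)\subseteq C_{t-1}$: for $u\in L_i\setminus C_t$ and $v\in N(u)$ we have $v\in C_t$; if $u\notin C_{t-1}$ then $v\in C_{t-1}$ by the first observation, and if $u\in C_{t-1}$ then $u$ lies in the (independent) reconfigured set of the step $t-1\to t$, so $v$ does not, whence $v\in C_{t-1}$ as well. Thus $|C_{t-1}\cap R_i|\ge |N(L_i\setminus C_t)|$ for every $t$. To turn this into a size bound I invoke the vertex-expansion estimate for Ramanujan graphs (Tanner's inequality, equivalently the refined expander mixing lemma, using $\mu\le 2\sqrt{d-1}$): any $S\subseteq L_i$ with $|S|=\xi m_i$ has $|N(S)|\ge g(\xi)\,m_i$, where $g(\xi)=\tfrac{d^2\xi}{4(d-1)+(d-2)^2\xi}$ is increasing. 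Let $\xi^{\star}=\tfrac{2\sqrt{d-1}}{(\sqrt{d-1}+1)^2}$ be the maximizer of $f(\xi):=(1-\xi)+g(\xi)$, and let $t^{\star}$ be the first index with $|L_i\setminus C_{t^{\star}}|\ge \xi^{\star}m_i$; this exists with $t^{\star}\ge 1$ since $|L_i\setminus C_0|=0<\xi^{\star}m_i\le m_i=|L_i\setminus C_T|$. Then $|C_{t^{\star}-1}\cap L_i|=m_i-|L_i\setminus C_{t^{\star}-1}|>(1-\xi^{\star})m_i$ and $|C_{t^{\star}-1}\cap R_i|\ge |N(L_i\setminus C_{t^{\star}})|\ge g(\xi^{\star})m_i$, so $|C_{t^{\star}-1}|>f(\xi^{\star})m_i$. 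A direct computation gives $f(\xi^{\star})=1+\bigl(\tfrac{\sqrt{d-1}-1}{\sqrt{d-1}+1}\bigr)^2$, and one checks elementarily that $f(\xi^{\star})\ge 2-\tfrac{4}{\sqrt{d+1}+1}$ for all $d\ge 4$; hence every schedule produces an intermediate cover of size $>\bigl(2-\tfrac{4}{\sqrt{d+1}+1}\bigr)\max\{|\alpha_i|,|\beta_i|\}$, so none is a $\bigl(2-\tfrac{4}{\sqrt{d+1}+1},0\bigr)$-approximation.

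For the \emph{upper bound}, fix $G$ of arboricity $\lambda\ge 1$ and covers $\alpha,\beta$; reversing the schedule if needed we may assume $|\alpha|\ge|\beta|$, and we write $k=|\alpha|=\max\{|\alpha|,|\beta|\}$, $A=\alpha\setminus\beta$, $B=\beta\setminus\alpha$, $k_0=|\alpha\cap\beta|$, so $|B|\le|A|\le k$. One checks that $A$ and $B$ are independent (an edge inside $A$ would be uncovered by $\beta$, and similarly for $B$), that every neighbour of an $A$-vertex lies in $(\alpha\cap\beta)\cup B$, and that $e(A,B)\le\lambda(|A|+|B|-1)$ (the subgraph $G[A\cup B]$ has arboricity $\le\lambda$ and no edge inside $A$ or inside $B$, so Nash--Williams applies). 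The schedule processes $A=(a_1,\dots,a_{|A|})$ in nondecreasing order of $|N(a_j)\cap B|$: for $i=1,\dots,|A|$ it adds every not-yet-present vertex of $N(a_i)\cap B$ and then removes $a_i$; this is a valid monotone sequential --- hence batch --- schedule, since after the additions all neighbours of $a_i$ are present. The largest cover occurs right before removing some $a_i$ and has size $k_0+(|A|-i+1)+|B_{\le i}|$ with $B_{\le i}=\bigcup_{j\le i}(N(a_j)\cap B)$, and $|B_{\le i}|\le\min\bigl(|B|,\sum_{j\le i}|N(a_j)\cap B|\bigr)\le\min\bigl(|B|,\tfrac{i}{|A|}\,e(A,B)\bigr)\le\min\bigl(|B|,\tfrac{i\lambda(|A|+|B|)}{|A|}\bigr)$, using that the $i$ smallest of $|A|$ numbers have average at most the overall average. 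As a function of $i$ this bound is concave with maximum at $i=\tfrac{|A||B|}{\lambda(|A|+|B|)}$, where it equals $|\alpha|+|B|+1-\tfrac{|A||B|}{\lambda(|A|+|B|)}$. Using $|\alpha|=k$, $|A|+|B|\le 2k$ and $|B|\le|A|$ one simplifies this to at most $(2-\tfrac1{2\lambda})k+1$ --- the remaining inequality reduces to $|B|\bigl(2\lambda-\tfrac{|A|}{k}\bigr)\le k(2\lambda-1)$, which holds because $|B|\le|A|$ and $2\lambda t-t^2\le 2\lambda-1$ for $t\in(0,1]$, $\lambda\ge1$ --- which is the claimed $(2-\tfrac1{2\lambda},1)$-approximation.

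The main obstacle I anticipate is on the negative side. One must use the \emph{sharp} vertex expansion of Ramanujan graphs: the crude bound $|N(S)|\ge m-\mu^2m^2/(d^2|S|)$ coming from the weak expander mixing lemma is too lossy and fails to reach the stated constant for small $d$. And one must control batches of unbounded size that jump over the threshold $\xi^{\star}$, which is precisely where the observation $N(L_i\setminus C_{t^{\star}})\subseteq C_{t^{\star}-1}$ is essential: it forces the neighbourhoods of \emph{all} $L_i$-vertices that are out at step $t^{\star}$ into $C_{t^{\star}-1}$, no matter how those vertices were distributed among batches (and note this never uses monotonicity). On the positive side the only non-routine choices are the ordering of $A$ by number of $B$-neighbours and the Nash--Williams estimate for $e(A,B)$; the subsequent optimization is mechanical, and the factor $1/(2\lambda)$ comes out tight when $\alpha\cap\beta=\emptyset$ and $|A|=|B|$.
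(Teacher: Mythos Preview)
Your proposal is correct and follows essentially the same approach as the paper: for the lower bound you use bipartite Ramanujan graphs together with Tanner's vertex-expansion bound and the key containment $N(L_i\setminus C_t)\subseteq C_{t-1}$, and for the upper bound you run the same degree-ordered greedy schedule and bound the peak via the averaging argument plus the Nash--Williams edge estimate. The only differences are cosmetic: you optimize with the exact Tanner denominator $4(d-1)+(d-2)^2\xi$ and then verify that the resulting constant $1+\bigl(\tfrac{\sqrt{d-1}-1}{\sqrt{d-1}+1}\bigr)^2$ dominates $2-\tfrac{4}{\sqrt{d+1}+1}$, whereas the paper first relaxes $d^2/(4(d-1))$ to $d/4$ to get the stated constant directly; and in the upper bound your final inequality goes through $|B|(2\lambda-|A|/k)\le k(2\lambda-1)$, while the paper uses $h\le 2\lambda M'$ to reach $|X|+M'+m'(1-\tfrac{1}{2\lambda})+1\le (2-\tfrac{1}{2\lambda})M+1$ more quickly.
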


Despite motivation through distributed reconfiguration we believe that batch reconfiguration is a meaningful concept in itself and adds to the notions of parallelism that can be introduced into solving a problem. Thus we believe that our existential results are also of independent interest. 

\subparagraph*{Distributed computation of reconfiguration schedules.} A natural setting where batch reconfiguration may appear is when nodes themselves, as distributed autonomous computers, execute a reconfiguration schedule by exchanging information with others. This immediately raises the question of \emph{computing} schedules in such distributed settings.  

We focus on the \LOCAL model of distributed computing, where nodes synchronously send messages to their neighbors in the underlying network graph (which also serves as the problem instance), and give efficient algorithms for computing  batch schedules.

At the core of most of our algorithmic results is a graph decomposition, that we call a \emph{small separator decomposition} and which might be of independent interest.  Roughly speaking (and hiding many technical details), it is a decomposition into small diameter clusters and a  small  separator set $S$ (e.g., an $\eps$ fraction of $|\alpha|+|\beta|$, for a small $\eps>0$), such that there are no \emph{inter-cluster edges}. At a very high level, we show that if there exists a good approximation schedule for each cluster, we can compute a batch schedule for the whole graph with only a small approximation overhead and in a number of rounds that is linear in the maximum cluster diameter.

\begin{theorem*}[Informal version of \cref{gen::lma::decomp_to_sched_computation}]
Let $G=(V,E)$ be a graph with two vertex covers $\alpha$ and $\beta$, and a small separator decomposition with
a separator set $S$ and
clusters $(C_i)_{i=1}^k$ of diameter $d$. If each subgraph $G[C_i]$ admits a monotone $\ell$-batch  $(\eta,c)$-approximation schedule,
then a $(2\ell+2)$-batch 
$(\eta,|S|+k\cdot c)$-approximation schedule for $\alpha$ and $\beta$
can be computed in $O(d)$ rounds in the \LOCAL model.
\end{theorem*}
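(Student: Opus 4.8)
The plan is to combine the cluster schedules into one global schedule in three phases, using the separator $S$ as a "buffer" that is fully inside the cover during the middle phase so that the clusters can be reconfigured independently and in parallel. Formally, write $\alpha_i = \alpha \cap C_i$ and $\beta_i = \beta \cap C_i$; since there are no inter-cluster edges, $\alpha_i$ and $\beta_i$ are vertex covers of $G[C_i]$, and a vertex cover of $G$ restricted to $V \setminus S$ decomposes as a disjoint union over the clusters of vertex covers of $G[C_i \setminus S]$.

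First I would handle the separator. In one batch, add all of $S$ to the current cover (this is robust since we only ever add vertices, so no edge becomes uncovered, and the set of reconfigured vertices being an independent set is not required when we only add — but to be safe one can add $S \setminus \alpha$; we should check the independence condition is about the symmetric-difference vertices, and additions never threaten feasibility). Symmetrically, the last batch removes $S \setminus \beta$. In between, $S$ is entirely in the cover, so every edge with an endpoint in $S$ is covered regardless of what happens in the clusters; hence on the induced subgraphs $G[C_i]$ we may freely run any feasible reconfiguration from $\alpha_i$ (augmented to include $S \cap C_i$, which only helps) to $\beta_i$ (likewise augmented). This is where the "no inter-cluster edges" hypothesis does the real work: the only edges leaving a cluster touch $S$, and those are pinned.

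Next I would run the $k$ cluster schedules in lockstep. Each $G[C_i]$ has a monotone $\ell$-batch $(\eta,c)$-approximation schedule by hypothesis; since the clusters are vertex-disjoint and share no edges, batch $j$ of the global schedule is the union of batch $j$ of every cluster's schedule. The union of independent sets from disjoint, non-adjacent clusters is independent, so the robustness condition is preserved, and feasibility holds cluster-by-cluster (plus the pinned $S$). This contributes $\ell$ batches, giving $\ell + 2$ total; the stated $2\ell+2$ presumably comes from needing two extra batches to transition $\alpha_i \to \alpha_i \cup (S\cap C_i)$ and $\beta_i \cup (S \cap C_i) \to \beta_i$ cleanly while staying monotone, or from interleaving the $S$-addition with the first cluster batch — I would reconcile the exact count by inserting the $S$ add/remove as separate batches, yielding $1 + \ell + 1 = \ell+2$, and absorb the slack. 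For the cost: at any intermediate step the cover is contained in $S$ together with, in each cluster, an intermediate cover of $G[C_i]$ of size at most $\eta \cdot \max\{|\alpha_i|,|\beta_i|\} + c$. Summing, the total is at most $|S| + \sum_i \big(\eta \max\{|\alpha_i|,|\beta_i|\} + c\big) \le |S| + k c + \eta \sum_i \max\{|\alpha_i|,|\beta_i|\}$. The key observation is $\sum_i \max\{|\alpha_i|,|\beta_i|\} \le \max\{|\alpha|,|\beta|\}$ only fails in general, so instead one uses $\sum_i \max\{|\alpha_i|,|\beta_i|\} \le \max\{\sum_i|\alpha_i|, \sum_i |\beta_i|\} + (\text{correction})$; the honest bound is $\sum_i \max\{|\alpha_i|,|\beta_i|\} \le \max\{|\alpha \setminus S|, |\beta\setminus S|\} + |S|$ is also not immediate. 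I expect the clean statement uses that the $S$-term and the per-cluster maxima can be charged so the final bound is $\eta \max\{|\alpha|,|\beta|\} + |S| + kc$; verifying this charging is the one genuine inequality to nail down.

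Finally, the round complexity: each cluster has diameter $d$, so a leader (or each node) in $C_i$ can gather the entire topology of $G[C_i]$, together with $\alpha_i,\beta_i$ and $S \cap C_i$, in $O(d)$ rounds, locally compute the promised monotone $\ell$-batch schedule for that cluster, and broadcast the per-node instructions back in $O(d)$ rounds. The separator add/remove steps need no communication beyond knowing membership in $S$, which is given. Hence the whole schedule is computed in $O(d)$ rounds in \LOCAL. The main obstacle I anticipate is not the algorithm but the bookkeeping in the cost bound — precisely, showing that inserting $S$ into every cluster's intermediate covers and then summing does not blow the multiplicative factor past $\eta$, i.e., that $|S| + kc$ genuinely captures all the additive slack and the $\eta$ factor still multiplies only $\max\{|\alpha|,|\beta|\}$ rather than something larger. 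I would resolve this by being careful that during the middle phase the cover is exactly $S \cup \bigcup_i (\text{intermediate cover of } G[C_i \setminus S])$ and that each such cluster-internal cover has size $\le \eta \max\{|\alpha_i \setminus S|, |\beta_i \setminus S|\} + c$, so the sum telescopes against $\max\{|\alpha \setminus S|,|\beta\setminus S|\} \le \max\{|\alpha|,|\beta|\}$.
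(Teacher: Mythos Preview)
Your high-level algorithmic shell (add the separator, reconfigure clusters, remove the separator, with each cluster gathering its topology in $O(d)$ rounds) matches the paper. But the cost analysis has a genuine gap that you yourself flag and then resolve incorrectly. Running \emph{all} clusters in lockstep does \emph{not} give an $(\eta,|S|+kc)$-approximation: the inequality $\sum_i \max\{|\alpha_i|,|\beta_i|\} \le \max\{\sum_i |\alpha_i|, \sum_i |\beta_i|\}$ is simply false (take two clusters with $(|\alpha_i|,|\beta_i|)=(2,1)$ and $(1,2)$), and no ``telescoping'' against $\max\{|\alpha\setminus S|,|\beta\setminus S|\}$ repairs it. The paper even contains a remark (\Cref{remark:nonNaive}) with exactly this counterexample, showing that the naive parallel execution can degrade to a trivial $(2,0)$-approximation.

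The missing idea, which also explains the $2\ell+2$ you could not account for, is to split the clusters into two groups, $I_\alpha=\{i:|\alpha_i|>|\beta_i|\}$ and $I_\beta=\{i:|\alpha_i|\le|\beta_i|\}$, and run them \emph{sequentially}: first the $\alpha$-heavy clusters in parallel ($\ell$ batches), then the $\beta$-heavy clusters in parallel ($\ell$ batches). Within the first group every intermediate cluster cover is bounded by $\eta|\alpha_i|+c$, so summing gives $\eta|\alpha|+kc$; symmetrically the second group stays below $\eta|\beta|+kc$. Showing that the concatenation of the two phases is still an $(\eta,kc)$-approximation for $G\setminus S$ is exactly an application of \cref{lem:discon}. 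Bookending with the two $S$-batches yields the $2\ell+2$ batches and the additive $|S|$.
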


As can be seen in the theorem above, it is crucial to have a graph decomposition both with a small size separator, as well as few clusters.  Combining known network decompositions \cite{awerbuch89,linial93,Vaclav_2019} and tools from distributed combinatorial optimization \cite{SLOCAL17} together with our new machinery, we get a $\poly\log(n)$-round algorithm to compute a small separator decomposition with logarithmic-diameter clusters in  general graphs. While the resulting decomposition has a small separator set, we use  additional postprocessing to also reduce the number of clusters (while keeping the bounds on  cluster diameters and the separator size). This allows us to distributively implement batch-scheduling algorithms known for any \emph{hereditary} graph class. For instance, in $\poly\log (n)$ rounds, we can construct an $O(1/\eps)$-batch $(1+\eps,O(\eps^{-1}))$-approximation schedules for planar graphs. See \cref{sec:SLOCAL} and
\Cref{thm:distrPlanar} for details.

For other special graph classes, we obtain much stronger results, that is,  we can compute small separator decompositions \emph{super-fast} (i.e., in $O(\log^* n)$-rounds). This holds in particular for \emph{cactus} graphs. More concretely (see \Cref{lem:cactus_core_decomp}), there is a super-fast distributed algorithm that for any connected graph $G$ and vertex covers $\alpha,\beta$ such that $\alpha\oplus\beta$ induces a cactus graph, computes a small
separator decomposition with few clusters. 
The property ``$G[\alpha\oplus\beta]$ is a cactus graph'' holds for all graphs $G$ in the following  graph classes
$\{\mathsf{cycles, trees, forests, chordal, cactus, even\text{-}hole\text{-}free, claw\text{-}free}\}$.

For such graphs $G$, 
we have $(1,2)$-approximation schedules \cite{ItoNZ16} (see also \Cref{thm:sequential}). 
Together with \Cref{gen::lma::decomp_to_sched_computation,lem:cactus_core_decomp}, this implies the following theorem.

\begin{restatable}[Cactus Reconfiguration]{theorem}{thmCactusCoreRecon}\label{thm:cactus_core_graphs}For each $\eps>0$, there is a  \LOCAL algorithm that for any connected $n$-node graph $G$ and vertex covers $\alpha$ and $\beta$ such that $G[\alpha\oplus\beta]$ is a cactus graph, computes an $O(1/\eps)$-batch $(1+\eps,3)$-approximation schedule in $O(\logstar(n)/\eps)$ rounds.
\end{restatable}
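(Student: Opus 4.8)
The plan is to glue together the reusable components assembled earlier: the centralized sequential $(1,2)$-approximation for inputs $(G,\alpha,\beta)$ with $G[\alpha\oplus\beta]$ a cactus (\Cref{thm:sequential}, \cite{ItoNZ16}), the generic transformations \Cref{thm::sched_compres} (schedule compression) and \Cref{gen::lma::decomp_to_sched_computation} (from a decomposition to a distributed schedule), and the super-fast small separator decomposition of \Cref{lem:cactus_core_decomp}, which applies to exactly such inputs. We may assume $\eps\le 1$, since otherwise the trivial $2$-batch $(2,0)$-approximation already satisfies the claim.

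First, run the algorithm of \Cref{lem:cactus_core_decomp} on $(G,\alpha,\beta)$, invoked with a parameter $\Theta(\eps)$. Since $G[\alpha\oplus\beta]$ is a cactus, in $O(\logstar(n)/\eps)$ rounds this returns a small separator decomposition: a separator set $S$ together with clusters $C_1,\dots,C_k$ of diameter $d=O(\logstar(n)/\eps)$, no edge of $G$ joining two different clusters, and the quantitative bound $|S|+3k\le (\eps/2)\cdot\max\{|\alpha|,|\beta|\}+3$ (using $|\alpha|+|\beta|\le 2\max\{|\alpha|,|\beta|\}$ and choosing the hidden constant in the parameter passed to \Cref{lem:cactus_core_decomp} accordingly).

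Next, observe that every cluster admits a short batch schedule. Fix $C_i$ and set $\alpha_i:=\alpha\cap C_i$, $\beta_i:=\beta\cap C_i$. Both are vertex covers of $G[C_i]$, and $G[C_i][\alpha_i\oplus\beta_i]=G[C_i\cap(\alpha\oplus\beta)]$ is an induced subgraph of the cactus $G[\alpha\oplus\beta]$, hence a disjoint union of cacti. So \Cref{thm:sequential} (applied per connected component if needed) yields a \emph{monotone} sequential $(1,2)$-approximation schedule for $(G[C_i],\alpha_i,\beta_i)$, which by \Cref{thm::sched_compres} with parameter $\eps/2$ becomes a monotone $(2\ceil{2/\eps})$-batch $(1+\eps/2,\,3)$-approximation schedule; clusters on which $\alpha$ and $\beta$ coincide get the empty schedule.

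Finally, feed these into \Cref{gen::lma::decomp_to_sched_computation} with $\ell=2\ceil{2/\eps}$, $\eta=1+\eps/2$, $c=3$. In $O(d)=O(\logstar(n)/\eps)$ further rounds this outputs, for $(G,\alpha,\beta)$, a $(2\ell+2)=O(1/\eps)$-batch schedule that is a $(1+\eps/2,\,|S|+3k)$-approximation; by the bound from the first step the worst-case intermediate cover has size at most $(1+\eps/2)\max\{|\alpha|,|\beta|\}+(\eps/2)\max\{|\alpha|,|\beta|\}+3=(1+\eps)\max\{|\alpha|,|\beta|\}+3$, so it is a $(1+\eps,3)$-approximation, and the two phases together run in $O(\logstar(n)/\eps)$ rounds. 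I expect the only genuine delicacy to be the parameter balancing above --- the contributions of the separator $S$ \emph{and} of the cluster count $k$ must both be charged into the multiplicative $\eps$-slack, which is exactly why \Cref{lem:cactus_core_decomp} is required to be small in both respects; the remaining points (restricting the covers to a cluster, the cactus structure surviving on $\alpha_i\oplus\beta_i$, and the correctness and $O(d)$-locality of the stitching, including that \Cref{gen::lma::decomp_to_sched_computation} only needs the per-cluster schedules to \emph{exist}) are routine. The substantive content is thus already encapsulated in \Cref{lem:cactus_core_decomp} and \Cref{gen::lma::decomp_to_sched_computation}.
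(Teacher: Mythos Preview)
Your proof is correct and follows essentially the same route as the paper: compute the small separator decomposition via \Cref{lem:cactus_core_decomp}, observe that each cluster admits a short monotone batch schedule (you derive it from \Cref{thm:sequential} plus \Cref{thm::sched_compres}; the paper cites the packaged \Cref{thm:variousClassesBatch}), and then invoke \Cref{gen::lma::decomp_to_sched_computation}, absorbing both $|S|$ and $k\cdot c$ into the $\eps$-slack before rescaling. Two minor inaccuracies worth fixing: the cluster diameter from \Cref{lem:cactus_core_decomp} is $d=O(1/\eps)$, not $O(\logstar(n)/\eps)$ (your overestimate is harmless for the runtime but misstates the lemma), and the decomposition does \emph{not} forbid all inter-cluster edges --- it only forbids those with both endpoints outside the ghost set $Z=V\setminus(\alpha\oplus\beta)$; this is exactly what \Cref{gen::lma::decomp_to_sched_computation} needs, so the argument goes through, but your one-line description of the decomposition is too strong.
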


We also show that \emph{the runtime of \Cref{thm:cactus_core_graphs}  is asymptotically tight}. Concretely, we use an argument based on Ramsey Theory to show that for every $\eps\in (0,1)$, there is no algorithm with runtime $c\cdot \logstar(n)$, with a small enough constant $c>0$, that constructs $(1/\eps)$-batch  $(2-\eps)$-approximation schedules on the class of \emph{cycle graphs}  (see \Cref{thm:logstarLB}).

So far, we have seen that we can obtain $\approx (1+\eps)$-approximation  for planar graphs, in $\poly\log n$ rounds, and for cactus graphs, in $O(\logstar n )$ rounds. The next theorem provides  super-fast algorithms for batch scheduling for general graphs, at the cost of an increased schedule length and approximation (which, by \cref{thm:arboricitybounds},  cannot be improved by much).

\begin{restatable}{theorem}{arbOvershooting}
\label{thm:arbOvershooting}
Let  $G$ be a graph with vertex covers $\alpha$ and $\beta$, such that $G[\alpha\oplus\beta]$ has arboricity at most $\lambda$, with $\lambda$  known to all nodes. For every $\eps>0$, there exists a  $O\parenths*{\log^* (n)/\eps}$-round \LOCAL algorithm that computes an $O\left(\lambda/\eps^2\right)$-batch $\left(2-1/(2\lambda)+\varepsilon ,1\right)$-approximation  schedule from $\alpha$ to $\beta$. \end{restatable}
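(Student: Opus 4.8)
The plan is to emulate distributively, up to a $(1+\eps)$-coarsening, the monotone greedy schedule behind \Cref{thm:arboricitybounds} — the one that removes the vertices of $\alpha\setminus\beta$ in order of increasing degree — paying an additive $\eps$ in the multiplicative guarantee and a $\poly(\lambda/\eps)$ blow-up in the number of batches in exchange for a running time that is essentially just the cost of symmetry breaking.

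\textbf{Reduction to a bipartite instance.} I would first pass to $H:=G[\alpha\oplus\beta]$: it is bipartite with sides $A:=\alpha\setminus\beta$ (the vertices that must be removed) and $B:=\beta\setminus\alpha$ (the vertices that must be added), both $A$ and $B$ are independent sets of $G$, and $H$ has arboricity at most $\lambda$ by hypothesis. A monotone batch schedule from $\alpha$ to $\beta$ is exactly a pair of ordered partitions $A=A_1\sqcup\dots\sqcup A_L$ and $B=B_1\sqcup\dots\sqcup B_L$ with $N_H(A_i)\subseteq B_1\cup\dots\cup B_i$ for every $i$, run as $B_1,A_1,B_2,A_2,\dots,B_L,A_L$; each of these $2L$ batches is an independent set of $G$, and the largest intermediate cover has size $|\alpha\cap\beta|+\max_i\bigl(|A_i\cup\dots\cup A_L|+|N_H(A_1\cup\dots\cup A_i)|\bigr)$. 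So it suffices to compute quickly a level function $\ell\colon A\to\{1,\dots,L\}$ with $L=O(\lambda/\eps^2)$ that keeps this maximum below $(2-\tfrac{1}{2\lambda}+\eps)\max\{|\alpha|,|\beta|\}+1$; one then sets $\ell(w):=\min_{u\in N_H(w)\cap A}\ell(u)$ for $w\in B$, and $\ell(w):=L$ if $w$ has no $A$-neighbour, which is one round.

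\textbf{Postponing the heavy vertices, and a fast approximately degree-sorted ordering.} Put $M:=\max\{|\alpha|,|\beta|\}$. Since $H$ has arboricity $\le\lambda$, $|E(H)|\le\lambda|V(H)|\le 2\lambda M$, so at most $2\lambda M/T$ vertices of $A$ have $H$-degree $\ge T$; taking $T:=\Theta(\lambda/\eps^2)$ (using that $\lambda$ is known) this heavy set $A_{\mathrm{hi}}$ has $|A_{\mathrm{hi}}|\le\eps M$, and I would put all of $A_{\mathrm{hi}}$ — together with the $B$-vertices having no $A$-neighbour — into the last level $L$; the cover there has size at most $|\alpha\cap\beta|+|A_{\mathrm{hi}}|+|B|\le|\beta|+\eps M\le(1+\eps)M$, which is within budget. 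On the light part $A_{\mathrm{lo}}:=A\setminus A_{\mathrm{hi}}$ (degrees below $T$) I would bucket by degree into $O(1/\eps)$ geometric classes, so that processing the classes in order is, up to a $(1+\eps)$ factor, the increasing-degree order of \Cref{thm:arboricitybounds}; since a single degree class can be huge — on a cycle every degree equals $2$, so the greedy is useless without further refinement — I would then split each class, via a fast sparse-graph/bounded-degree colouring subroutine using the known $\lambda$, into $O(\lambda/\eps)$ subclasses that form the actual levels, chosen so that no level forces adding too many $B$-vertices and so that each prefix $A_1\cup\dots\cup A_i$ is ``spread out'' enough that $|N_H(A_1\cup\dots\cup A_i)|$ stays close to $\min\{|B|,\sum_{u\in A_1\cup\dots\cup A_i}\deg_H(u)\}$. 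Each refinement costs $O(\log^* n)$ rounds and there are $O(1/\eps)$ of them, for $O(\log^*(n)/\eps)$ rounds and $O(\lambda/\eps^2)$ levels in total. With these levels the accounting of \Cref{thm:arboricitybounds} carries over: balancing the decreasing term $|A_i\cup\dots\cup A_L|$ against the increasing term $|N_H(A_1\cup\dots\cup A_i)|\le\min\{|B|,\sum_{u}\deg_H(u)\}$ at their crossover reproduces the $(2-\tfrac{1}{2\lambda})M+O(1)$ bound, while the coarsening of the order and the postponed heavy and leftover vertices each contribute only $O(\eps M)+O(1)$. (Alternatively one could compute the sequential greedy schedule and invoke schedule compression, \Cref{thm::sched_compres}, but since that transformation needs global rank-by-degree information it is cleaner to build the coarsened schedule directly.)

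\textbf{Main obstacle.} The delicate point is this last step: the exact greedy of \Cref{thm:arboricitybounds} orders $A$ by degree and touches one vertex at a time, and a $\poly(\log^* n)$-time algorithm can afford neither, so one must argue that replacing the exact degree order by a locally computable surrogate — geometric degree buckets plus a bounded-degree colouring — inflates the worst intermediate cover by at most $\eps M$, and that moving the $O(\eps M)$ heavy and leftover vertices to the very end never overshoots $(2-\tfrac{1}{2\lambda}+\eps)M$. Calibrating the threshold $T$ and the colouring parameters so that all three sources of slack — coarse order, heavy vertices, leftover vertices — simultaneously stay below $\eps M$ while the number of levels stays $O(\lambda/\eps^2)$ is the core of the bookkeeping.
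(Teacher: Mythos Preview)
Your high-level plan—emulate the degree-ordered greedy of \Cref{thm:arboricitybounds} in batches, absorb a separate treatment of the high-degree vertices, and pay only $O(\log^* n)$ for symmetry breaking—matches the paper's. But the mechanism you propose for the crucial step, splitting a degree class into $O(\eps M)$-sized sublevels, does not work as stated, and this is exactly where the paper does something different.

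You write that each geometric degree class will be ``split, via a fast sparse-graph/bounded-degree colouring subroutine using the known $\lambda$, into $O(\lambda/\eps)$ subclasses.'' A proper colouring produces \emph{independent} classes, not \emph{small} ones; and since $A=\alpha\setminus\beta$ is already an independent set of $G$, colouring contributes nothing. What the analysis actually needs (and what you implicitly invoke when you say the accounting of \Cref{thm:arboricitybounds} ``carries over'') is that each $A$-batch $\mathcal{E}_s$ has $|\mathcal{E}_s|=O(\eps M)$, so that the batched cover exceeds the sequential-greedy cover by at most $|\mathcal{E}_s|$. No local colouring of a sparse graph controls the cardinalities of colour classes (think of a star). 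The paper obtains small batches differently: it runs the cluster-merging procedure of \Cref{lem:compressingClustering} on $G$ to get an $O(1/\eps)$-diameter clustering with at most $\lceil 2\eps M\rceil$ clusters, and then, inside each cluster $C$, splits each exact degree class $U_i\cap C$ into $\lceil 1/\eps\rceil$ equal parts; summing over clusters, every global part $A_{i,j}=\bigcup_C A_{i,j}^C$ has size $\le \eps|U_i|+|\mathcal{C}|=O(\eps M)$. This clustering step—not colouring—is the missing ingredient in your proposal.

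A second, smaller divergence: to cut the number of degree levels from $\Delta$ to $O(\lambda/\eps)$, the paper does \emph{not} postpone high-degree $A$-vertices as you do. Instead it first adds the high-degree $B$-vertices (at most an $\eps$-fraction of $|\alpha|$, by an arboricity count), then computes a schedule from the remaining low-degree $B'$ \emph{to} $A$ with the bounded-degree algorithm above (now only $O(\lambda/\eps)$ degree classes are needed), and reverses it. Your variant—drop high-degree $A$-vertices to the last level—can also be made to work, but then the right threshold is $T=\Theta(\lambda/\eps)$ (not $\lambda/\eps^2$), paired with \emph{exact} degree classes and the clustering-based split above, giving $O(\lambda/\eps)\cdot O(1/\eps)=O(\lambda/\eps^2)$ batches directly; your $O(1/\eps)$ geometric bucketing across degrees up to $\lambda/\eps^2$ would have ratio $(\lambda/\eps^2)^{\Theta(\eps)}$ per bucket, which is not a $(1+\eps)$-coarsening and does not obviously preserve the greedy bound.
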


The algorithm from \Cref{thm:arbOvershooting} is a non-trivial adaptation of the greedy algorithm from \Cref{thm:arboricitybounds} to the distributed setting. We briefly discuss the underlying ideas in Section~\ref{sec:mainarboverview}. 

\cref{gen::lma::decomp_to_sched_computation} suggests that if every cluster in the decomposition admits a good reconfiguration schedule, one can compute a good schedule for the whole graph efficiently.
We show that this condition is necessary, in the following strict sense: there are graphs for which there exist good schedules, and they can be quickly decomposed, but just because some clusters do not have good schedules, no \emph{sub-linear} distributed algorithm can compute a good schedule.

\begin{theorem*}[Informal version of \Cref{thm::unwighted_gen_lb}]
	\label{thm:informaldistrLB}
For each $\eps\in (0,1)$, there is an infinite family $\mathcal{G}$ of graphs such that every graph in $\mathcal{G}$ admits a $(1+\eps)$-approximation schedule, but no $o(n)$-round deterministic distributed algorithm can compute a $(2-\eps)$-approximation schedule.
\end{theorem*}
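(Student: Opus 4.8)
The plan is to exhibit, for each $\eps\in(0,1)$, a family $\mathcal G$ in which a $(1+\eps)$-approximation schedule always exists but can only be found by consulting structural information that lives $\Omega(n)$ hops away, and then to rule out every $o(n)$-round algorithm by the standard radius-$r$ indistinguishability argument in the \LOCAL model. Concretely, I would build each $G\in\mathcal G$ from a long \emph{backbone} $B$ (a path on $\Theta(n)$ vertices) that encodes a single global \emph{control bit} $b$ in its non-local shape — say its length modulo a fixed constant, or a distinguished pendant gadget attached to one end. Glued along the other half of $B$, at single backbone vertices, are $\Theta(n)$ vertices' worth of identical \emph{cells} $H_1,\dots,H_m$. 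The covers $\alpha,\beta$ restrict to fixed configurations on every cell, and each cell $H$ is designed so that, \emph{in isolation}, it admits exactly two feasible monotone schedules — a ``$0$-schedule'' and a ``$1$-schedule'' — which differ in whether the attachment vertex of $H$ lies in the intermediate cover during the ``critical'' batches in which the adjacent backbone segment is reconfigured. The backbone's reconfiguration is made rigid: it processes its $\Theta(1/\eps)$ segments one after another, in the style of the compression behind \Cref{thm::sched_compres} and \Cref{thm:variousClassesBatch}, and it is set up so that a $(1+\eps)$-approximation of the \emph{whole} graph is possible precisely when every cell runs its $b$-schedule; running the wrong schedule on a cell either breaks the independence/robustness condition on the backbone edge at its attachment vertex, or forces that edge's two endpoints into the cover throughout the critical batches. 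Because every cell is synchronised to the backbone's clock, these excesses are forced into the \emph{same} batch, so across a constant fraction of the $\Theta(n)$ cells they add up to an $\Omega(n)$ blow-up of a single intermediate cover.

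For the positive side, given $b$ one instructs every cell to run its $b$-schedule, runs the segment-by-segment schedule on $B$, and interleaves the two; a direct verification yields $O(1/\eps)$ batches, feasibility, the independence condition throughout every internal ordering, and worst-case intermediate cover size $(1+\eps)\max\{|\alpha|,|\beta|\}+O(1)$. Hence every $G\in\mathcal G$ admits a $(1+\eps)$-approximation schedule.

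For the lower bound, let $A$ be a deterministic algorithm with round complexity $r(n)=o(n)$. For each large $n$, take the two instances $G^{(n)}_0,G^{(n)}_1$ of size $n$ that differ only in $b$; there is an identifier assignment making them isomorphic (including identifiers) within the radius-$r(n)$ ball around every cell that lies more than $r(n)$ hops from the end of $B$ carrying the bit, and $\Theta(n)$ cells satisfy this. On those cells and their incident backbone edges $A$ therefore produces identical output on $G^{(n)}_0$ and on $G^{(n)}_1$, so in at least one of the two instances this output uses the wrong schedule on all $\Theta(n)$ of them. By the construction the object $A$ outputs is then either not a valid robust schedule at all, or has an intermediate cover of size $\max\{|\alpha|,|\beta|\}+\Omega(n)$; choosing the ratio of cell-mass to backbone-mass as a function of $\eps$ makes the latter at least $(2-\eps)\max\{|\alpha|,|\beta|\}$. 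Thus $A$ fails to be a $(2-\eps)$-approximation on a member of $\mathcal G$ of every sufficiently large size, which is what we wanted.

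\textbf{Main obstacle.} The crux is the cell design and, above all, the \emph{timing}. It is easy to make a mismatched cell cost $\Omega(1)$ extra cover \emph{at some moment}; the real difficulty is forcing all mismatched cells to be over-covered \emph{simultaneously}, so that the excesses pile into one enormous intermediate cover instead of being spread harmlessly over many batches. Enforcing this synchronisation while leaving no short-range way to ``opt out'' — i.e., ruling out a clever $o(n)$-round schedule that staggers the cells in time, or that repairs a mismatched cell by borrowing slack from the backbone — is the delicate step, and it is precisely what dictates the rigidity of the backbone's $(1+\eps)$-schedule (its entire $\eps$-slack must already be consumed by its own reconfiguration) and the exact feasibility/robustness behaviour of the two cell schedules at the attachment vertex. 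Checking the independence condition at every backbone–cell interface under every internal ordering of every batch is where the bulk of the bookkeeping sits.
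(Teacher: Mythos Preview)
Your plan has a real gap, and it is precisely the one you flag as the ``main obstacle'': synchronisation. Your argument needs that, whenever the cells run the ``wrong'' schedule, their excesses land in the \emph{same} batch. But the schedule is chosen by the algorithm, not by you; nothing in a two-instance indistinguishability argument forces the algorithm to follow your ``rigid'' backbone clock, or indeed to use either your $0$-schedule or your $1$-schedule on a cell. All you get from indistinguishability is that the cells' outputs coincide on $G^{(n)}_0$ and $G^{(n)}_1$; you still have to rule out, for \emph{every} possible output, a staggering of the cells across batches that smears the excess out. Your proposal asserts that a suitable cell gadget exists (``exactly two feasible monotone schedules'', excess forced into one batch, backbone slack already fully consumed) but does not construct one, and this is the entire content of the lower bound.

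The paper sidesteps synchronisation altogether by a different mechanism. It first proves a \emph{weighted} version on a single cycle built from $k=\Theta(n/t(n))$ identical segments, each containing one heavy $\beta$-node whose $t(n)$-view lies entirely inside its segment. A $t(n)$-round algorithm then induces an $\ell$-colouring of size-$(4t(n)+2)$ ID sets (the colour being the batch in which the heavy $\beta$-node fires). By \emph{pigeonhole over disjoint ID sets}, one finds $k$ sets with the same colour and assigns them to the $k$ segments; now all heavy $\beta$-nodes join in the same batch by construction, yielding the $(2-\eps)$ blow-up. No control bit, no two-instance argument, and no delicate gadget: synchronisation is free because identical views produce identical outputs. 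The unweighted statement is then obtained by a black-box reduction (replace a weight-$w$ node by $w$ copies, edges by complete bipartite graphs), with a separate argument that the resulting unweighted graphs still admit $(1+\eps)$-schedules. If you want to salvage your approach, the missing piece is exactly this pigeonhole-over-IDs step (or an equivalent symmetry argument) that forces many gadgets to fire in the same batch regardless of what the algorithm does.
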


\subsection{Related Work}
Reconfiguration problems have long been studied under various guises (e.g., in the form of puzzles \cite{HearnD05}), but a more systematic study has appeared rather recently, in \cite{ITO2011}. The high level picture of the area is that for NP-complete source problems, the reconfiguration variants are usually \PSPACE-complete \cite{ITO2011,BonsmaC09,HearnD05}, although there are exceptions to this trend \cite{JohnsonIPEC14}. In this paper, we are only concerned with vertex cover/independent set reconfiguration; for a wider view on the subject, we refer the reader to  surveys \cite{Nishimura18,Heuvel13ComChng}.

Three models 
have been considered for vertex cover reconfiguration. The first is the Token Addition and Removal (TAR) model, which is the one we adopt in this paper,
where we can move from an intermediate solution $S_1$ to another one, $S_2$, if they differ by a single vertex. 
In the other two models, Token Sliding (TS) and Token Jumping (TJ), 
two intermediate solutions $S_1$ and $S_2$ are adjacent if $S_2$ is obtained from $S_1$ by swapping a vertex $v$ in $S_1$ with another one, $u$, which is arbitrary in TJ, but must be a neighbor of $v$ in TS.

The trend of hardness persists in vertex cover reconfiguration too. Here, the decision problem is: given two vertex covers of size at most $k$, is there a reconfiguration schedule transforming one into the other
via TAR, 
and without ever having a vertex cover of size more than $k+1$. The problem is \PSPACE-complete even in usually tractable graph classes such as perfect graphs~\cite{KaminskiMM12}, graphs of bounded  pathwidth or bandwidth~\cite{Wrochna18}, and planar graphs of degree at most 3~\cite{HearnD05}. It is 
also 
NP-complete for bipartite graphs~\cite{LokshtanovM19}. 
We refer the reader to ~\cite[Figure 5]{Nishimura18} for a  graphic depiction of the complexity landscape of the problem. 

On the positive side, the vertex cover
reconfiguration schedule existence problem is known to be polynomially solvable for trees and cactus graphs~\cite{Mouawad_2018,BonsmaKW14,DEMAINE2015,HoangU16,KaminskiMM12}. More relevant to our work, it is known that there are approximation-style schedules in cactus graphs, such that 
in all intermediate solutions, the size of the vertex cover is bounded by the larger of the two initial ones, plus 2 \cite{Mouawad_2018,KaminskiMM12}. While seemingly quite specialized, these results give schedules with similar guarantees for wider classes of graphs, such as \emph{even-hole-free graphs} (that is, graphs without an induced even cycle), which include \emph{chordal graphs} and \emph{cographs}.
In the same spirit, it is  shown in \cite{deBergJM16} that there are schedules in general graphs that increase the maximum vertex cover by at most the \emph{pathwidth} of the graph. All these schedules 
are \emph{monotone}, i.e., every vertex changes its status at most once. This is the case with all schedules in our paper as well. In fact, as it is shown in~\cite{LokshtanovM19}, when the two vertex covers in a reconfiguration instance are \emph{disjoint}, there always exists a monotone schedule. 

Finally, we note that parallelism in reconfiguration has been considered in settings other than vertex cover reconfiguration, e.g., \cite{KawaharaSY17}. Distributed reconfiguration 
has 
also
been studied for colorings \cite{BonamyORSU18} and independent sets \cite{Censor-HillelR19}.

\subparagraph*{Roadmap.}  
In \cref{sec:defs} 
we provide the basic definitions. In \cref{sec:centralized} we prove \Cref{thm::sched_compres} (batch-compression) and present existential results on batch and non-batch reconfiguration schedules. In \Cref{sec:tecDist} we formally introduce the concept of small separator decompositions, prove \Cref{gen::lma::decomp_to_sched_computation}, and we present all core steps to prove \Cref{thm:cactus_core_graphs}. 
All remaining results and missing proofs  appear in the appendix.

\section{Problem Statement, Definitions,  and Notation}
\label{sec:defs}

A \emph{vertex cover} in a graph $G=(V,E)$ is a subset $S\subseteq V$ such that every edge has at least one of its end-vertices in  $S$. An \emph{independent set} is a subset $S\subseteq V$ of vertices such that every edge has at most one of its end-vertices in $S$.
Note that if $S$ is a vertex cover then $V\setminus S$ is an independent set, and vice versa. We use the notation $[x]=\{0,1,\dots,x\}$, for an integer $x\ge 0$.
For sets $A, B$, we let $A\oplus B=(A\setminus B) \cup (B\setminus A)$ denote their \emph{symmetric difference}.
\begin{definition}[Vertex Cover Reconfiguration Schedule]
\label{def:reconfigurationSchedule}
Given a graph $G = (V,E)$ and two vertex covers $\alpha,\beta\subseteq V$ 
(not necessarily minimal), a \emph{reconfiguration schedule} $\mathcal{S}$ from $\alpha$ to $\beta$ of length $\ell$ and \emph{cost} $s$ is a sequence $(V_i)_{i\in [\ell]}$ of vertex covers of $G$ such that 
\begin{compactenum}
    \item $V_0 = \alpha$ and $V_\ell = \beta$,
    \item $\forall i\in [\ell-1], \quad |V_i\cup V_{i+1}| \leq s$, 
    \item $\forall i \in [\ell-1], \quad V_i \oplus V_{i+1}$ is an independent set of $G$.
\end{compactenum}
\end{definition}
The sets $\mathcal{E}_i:=V_{i}\oplus V_{i+1}$, which we call \emph{batches}, contain the vertices that are added to or removed from the current vertex cover in order to obtain the next one.
Thus, we also refer to a length-$\ell$ schedule as an \emph{$\ell$-batch schedule}. A reconfiguration schedule is 
\emph{sequential} 
if $|\mathcal{E}_i| = 1$, for all $i\in[\ell-1]$,
and a \emph{batch} schedule otherwise. 
A schedule $\mathcal{S}$ from $\alpha$ to $\beta$  is \emph{monotone} if every node $v \in V$ is changed at most once in the schedule.
Note that a monotone schedule $\mathcal{S}$ only changes nodes in $\alpha \oplus \beta$, and therefore its length (if there are no empty batches) is upper bounded by $|\alpha \oplus \beta|$. In particular, 
a monotone schedule does not change nodes in $\alpha\cap\beta$. 

The first property of \Cref{def:reconfigurationSchedule} ensures that $\mathcal{S}$ is a reconfiguration schedule from $\alpha$ to $\beta$. The second and third properties together ensure robustness, in the sense that not only $(V_i)_{i\in [\ell]}$ are vertex covers of size at most $s$, but any reconfiguration sequence between $V_i$ and $V_{i+1}$ yields vertex covers with the claimed size bound. Note that when in each reconfiguration step we only add or remove nodes, property $2$ reduces to having $|V_i| \leq s$ for all $i \in [\ell]$.

For $\eta,c\in \mathbb{R}_+$, a reconfiguration schedule from a vertex cover $\alpha$ to a vertex cover $\beta$ is an \emph{$(\eta, c)$-approximation} if its cost is at most $\eta\max\{|\alpha|, |\beta|\}+c$. 
Note that here we do not compare to the cost of an optimal reconfiguration schedule but rather to $M=\max\{|\alpha|, |\beta|\}$; the cost of an optimal schedule is always at least $M$, but often it is larger than $M$.

\begin{observation}[Reverse Schedule]\label{obs:reverseschedule}
Let $\mathcal{S}=(V_i)_{i\in [\ell]}$ be a vertex cover reconfiguration schedule of cost $s$ from $\alpha$ to $\beta$ in graph $G$. Then the \emph{reverse schedule} $\mathcal{S'}=(V'_i)_{i\in [\ell]}$, where $V'_i=V_{\ell-i}$, is a vertex cover reconfiguration schedule of cost $s$ and length $\ell$ from $\beta$ to $\alpha$.
\end{observation}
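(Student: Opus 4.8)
The plan is to verify directly that $\mathcal{S'}=(V'_i)_{i\in[\ell]}$ with $V'_i := V_{\ell-i}$ satisfies the three conditions of \Cref{def:reconfigurationSchedule}, with the roles of $\alpha$ and $\beta$ interchanged, and with the same cost $s$ and length $\ell$. Since each $V'_i$ equals some $V_j$, it is already a vertex cover of $G$, so the only work is the index bookkeeping.

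For the boundary condition, $V'_0 = V_\ell = \beta$ and $V'_\ell = V_0 = \alpha$, so $\mathcal{S'}$ starts at $\beta$ and ends at $\alpha$, giving property~1. For property~2, fix $i\in[\ell-1]$ and set $j:=\ell-i-1$; then $j\in[\ell-1]$ as well, and $V'_i\cup V'_{i+1} = V_{\ell-i}\cup V_{\ell-i-1} = V_j\cup V_{j+1}$, so $|V'_i\cup V'_{i+1}|\le s$ follows from the corresponding bound for $\mathcal{S}$; in particular the cost of $\mathcal{S'}$ is at most $s$. For property~3, with the same substitution, $V'_i\oplus V'_{i+1} = V_{\ell-i}\oplus V_{\ell-i-1} = V_j\oplus V_{j+1}$, which is an independent set of $G$ because $\mathcal{S}$ satisfies property~3 at index $j$. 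The schedule clearly has length $\ell$.

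There is no genuine obstacle here: the statement is immediate from the symmetry of the defining conditions (both $\cup$ and $\oplus$ are commutative), and the only point to watch is that the index reversal $i\mapsto\ell-i$ is a bijection of $[\ell]$ onto itself that also maps the consecutive-pair index set $[\ell-1]$ onto itself. If one additionally wishes to track monotonicity, one observes that $\mathcal{S'}$ changes exactly the same vertices as $\mathcal{S}$ (its batches are the sets $\mathcal{E}_i$ in reverse order), so $\mathcal{S'}$ is monotone whenever $\mathcal{S}$ is, although this is not part of the claimed statement.
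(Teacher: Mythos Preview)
Your proof is correct; the paper states this observation without proof, treating it as immediate from the definition, and your direct verification of the three conditions of \Cref{def:reconfigurationSchedule} under the index reversal $i\mapsto \ell-i$ is exactly the natural (and only) argument.
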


\subparagraph*{Special graph classes.}
A graph is \emph{even-hole-free} if it contains no induced cycle with an even number of vertices. A graph is \emph{chordal} if it contains no induced cycle with 4 or more vertices  
(in particular, it is even-hole-free).
A graph is \emph{claw-free} if it contains no induced $K_{1,3}$ sub-graph. A graph is a \emph{cactus} graph if each of its edges belongs to at most $1$ cycle. Alternatively, cactus graphs are characterized by a forbidden minor, the diamond graph, which is obtained by removing an edge from $K_4$; thus, they form a minor-closed family.

\subparagraph*{The LOCAL model of distributed computing~\cite{linial92, peleg2000distributed}.}
A connected graph is abstracted as an $n$-node network $G=(V, E)$. 
Communications happen in synchronous rounds. Per round, each node can send one (unbounded size) message to each of its neighbors. Further, each vertex has a unique ID from a space of size $\poly~n$.
We say that the nodes of graph $G$ distributively compute a reconfiguration schedule $\mathcal{S}=(V_i)_{i\in [\ell]}$ if each node $v \in V$ knows its membership in each $V_i$. We emphasize that the length of a schedule, i.e., the number of batches, 
and the distributed time to compute the schedule are separate measures.

\subparagraph*{Parameters and notation.}
Let $G=(V,E)$ be a graph with two vertex covers $\alpha$ and $\beta$. For a subset 
$U\subseteq V$, we let $G[U]$ denote the graph induced by $U$.  We use $n=|V|$ for the number of vertices, $\Delta$ for the maximum degree, and $\lambda$ for the arboricity of $G$. We 
use the notation $M=\max\{|\alpha|, |\beta|\}$, $m=\min\{|\alpha|, |\beta|\}$, $D=\alpha\oplus\beta$, and $X=\alpha\cap \beta$. Note that 
$G[D]$ is \emph{bipartite}, $D\cup X=\alpha\cup \beta$, and that nodes in $V\setminus (D\cup X)$ form an \emph{independent set}.

\section{Sequential and Batch Reconfiguration Schedules}
\label{sec:centralized}

Our first result of this section is based on the observation that any $k$ (monotone) sequential reconfiguration steps can be replaced by two batches: 
the first adds all $\beta$-nodes that appear in this sequence, and the second removes all $\alpha$-nodes that appear in it. 
In the following theorem, we use a similar (but more careful) 
manipulation of schedules 
to \emph{compress} a long sequential schedule into a short batch schedule, while bounding the overhead in approximation. Its tradeoff is asymptotically tight (cf. \Cref{thm::t_lb}).

\schedCompress*

\begin{proof}
If $|\beta\setminus\alpha|=0$, the claim immediately follows by removing all vertices in $|\alpha\setminus\beta|$ in a single batch. Otherwise, note that the schedule $\mathcal{S}$
only reconfigures vertices in $\alpha\oplus\beta$ and only adds/removes each vertex once. Let $\mathcal{S}_{\alpha}=(u_i)_{i=1}^{|\alpha\setminus\beta|}$ and $\mathcal{S}_{\beta}=(v_i)_{i=1}^{|\beta\setminus\alpha|}$ be schedule $\mathcal{S}$ restricted to vertices of $\alpha$ and $\beta$, respectively, without changing their order (formally, $\mathcal{S}_{\alpha}$ and $\mathcal{S}_{\beta}$ consist of the vertices in the singleton batches of $\mathcal{S}$).  Define three positive integers $$s=\ceil*{\eps |\beta\setminus \alpha|}, \text{  } r=\floor*{\eta M} + c - |\alpha|+s,\text{and } \ell= \ceil*{(\max\{|\beta\setminus\alpha|-r,0\})/s}.$$

The \textbf{schedule $\mathcal{S}'$} consists of $2\ell + 2$ batches $\mathcal{E}_0, \mathcal{E}_1, \dots, \mathcal{E}_{2\ell+1}$, where in the first batch we add the first $r$ vertices from $\beta\setminus\alpha$, according to $\mathcal{S}_{\beta}$, then in every subsequent pair of batches we remove $s$ vertices from $\alpha\setminus\beta$, then add $s$ vertices from $\beta\setminus\alpha$, according to $\mathcal{S}_{\alpha}$ and $\mathcal{S}_{\beta}$. In the final batch $\mathcal{E}_{2\ell+1}$, we remove the remaining vertices, if any, from $\alpha\setminus\beta$.
We introduce notation to define the schedule formally. Given $i<j$, denote by $[u_i,u_j]$ and $[v_i,v_j]$ the sets $\{u_i,u_{i+1},\hdots,u_j\}$ and  $\{v_i,v_{i+1},\hdots,v_j\}$, respectively, with the convention $[u_i,u_j]=\emptyset$ if $i>|\alpha\setminus \beta|$ (resp. $[v_i,v_j]=\emptyset$ if $i>|\beta\setminus \alpha|$), and $[u_i,u_j]=[u_i,u_{|\alpha\setminus\beta|}]$ if $j>|\alpha\setminus \beta|$ (resp. $[u_i,u_j]=[u_i,u_{|\beta\setminus\alpha|}]$ if $j>|\beta\setminus \alpha|$).
Define  $\mathcal{E}_0=\brackets*{v_1,v_r}$, $\mathcal{E}_{2i-1}=\brackets*{u_{(i-1)s+1}, u_{is}}$,  $\mathcal{E}_{2i}=\brackets*{v_{r+(i-1)s+1}, v_{r+is}}$, for $i=1,2,\dots,\ell$, and $\mathcal{E}_{2\ell+1}=\brackets*{u_{\ell s+1}, u_{|\alpha\setminus \beta|}}$.

\smallskip
The schedule $\mathcal{S}'$ is monotone, as each node appears in $\mathcal{S}'$ the same amount of times it appears in $\mathcal{S}$.
We next show that we add all vertices of $\beta\setminus\alpha$ in $\mathcal{S}'$. In batch $\mathcal{E}_0$ we add $r$ vertices of $\beta\setminus\alpha$, and if $\ell > 1$ (which doesn't happen only if $r\geq |\beta\setminus\alpha|$), in batches $\mathcal{E}_2,\mathcal{E}_4\dots,\mathcal{E}_{2\ell}$ we add $s$ vertices per batch. In total, these are $r+\ell\cdot s\geq |\beta\setminus \alpha|$ vertices.

With $r\geq s$ and by the definition of $\ell$, we upper-bound the number of batches of $\mathcal{S}'$ by either $2\leq 2\ceil{1/\eps}$ (if $\ell = 1$) or by
$
    2\ell + 2 = 2\ceil*{\frac{|\beta\setminus\alpha|-r}{s}}+2 \stackrel{r\geq s}{\leq} 2\ceil*{\frac{|\beta\setminus\alpha|}{s}} \stackrel{s=\lceil \eps|\beta\setminus\alpha|\rceil}{\leq} 2\ceil*{\frac{1}{\eps}}.
$

\textbf{Approximation.} 
After the first batch, the vertex cover is of size $|\alpha|+r = \floor*{\eta M} + c+s \le (\eta+\eps) M + c+1$. Afterwards, as long as we remove $s$ $\alpha$-nodes and then add $s$ $\beta$-nodes in each batch, the size of the vertex cover never goes above $(\eta+\eps) M + c+1$. If we add fewer than $s$ $\beta$-nodes but remove $s$ $\alpha$-nodes, the size of the vertex cover decreases. If we, in some batch, remove less than $s$ $\alpha$-nodes, then we have removed all vertices in $\alpha\setminus \beta$, and the output is a subset of $\beta$, that is, we trivially satisfy the approximation guarantee.

\textbf{Validity.} Lastly, we show that all edges are covered after each reconfiguration step and that each batch is an independent set. Assume, for contradiction, that $i$ is an index such that, in schedule $\mathcal{S}'$, node $u_i\in\mathcal{S}_{\alpha}$ is removed and it has a neighbor $v_j\in\mathcal{S}_{\beta}$ that has not yet been added to the vertex cover in $\mathcal{S}'$ (in particular, $v_j$ could be in the same batch with $u_i$). This implies that $j \geq r + i - (s-1)$, and that in $\mathcal{S}$, right after the addition of the $j$-th $\beta$-node, the $i$-th $\alpha$-node has not been removed. Thus, $j$ nodes have been added, and at most $i-1$ nodes have been removed, i.e., the size of the vertex cover is 
\[
    |\alpha| + j - (i-1) \geq |\alpha| + r + i - s+1 - i+1 = |\alpha|+r-s+2=\floor*{\eta M} + c + 2 > \eta M + c\ .
\]
This is a contradiction to the assumption that $\mathcal{S}$ is an $(\eta,c)$-approximation schedule. Hence, for a node $v \in \mathcal{E}_{i'} \cap \alpha$, all of its neighbors must be in $\mathcal{E}_{j'}$'s with $j'<i'$, implying that every set in $\mathcal{S'}$ is indeed a vertex cover. This, in particular, shows that for every $u \in \mathcal{E}_{i'
} \cap \alpha$ and $v \in \mathcal{E}_{i'} \cap \beta$, it holds that $(u,v) \notin E$. Together with the fact that $\mathcal{E}_{i'} \cap \alpha\setminus\beta$ and $\mathcal{E}_{i'} \cap \beta\setminus\alpha$ are independent sets, this shows that every $\mathcal{E}_{i'}$ is an independent set, for all $i'\in[\ell]$. 
\end{proof}

To apply \Cref{thm::sched_compres}, we need sequential schedules. The following theorem summarizes some results for planar and cactus graphs that are either known or follow from known results~\cite{ItoNZ16,deBergJM16}. 
The result for planar graphs  further extends to  other graph classes that have small separators, e.g., \emph{fixed minor-free} graphs.
\begin{restatable}[Sequential Schedules]{theorem}{seqSched} 
\label{thm:sequential}
Let $G$ be a graph with vertex covers $\alpha,\beta$. If $G$ is a cactus graph,  
it admits
a monotone $(1,2)$-approximation schedule.
If $G$ is a planar graph, 
it admits
a monotone $(1,O(\sqrt{M}))$-approximation schedule,
where $M=\max\{|\alpha|,|\beta|\}$.
\end{restatable}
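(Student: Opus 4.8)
The plan is to reduce both statements to the case of \emph{disjoint} vertex covers on an induced subgraph, and then invoke known sequential schedules. Write $D=\alpha\oplus\beta$, $A=\alpha\setminus\beta$, $B=\beta\setminus\alpha$, $X=\alpha\cap\beta$, and $H=G[D]$. First I would record the structure of $H$: since $\alpha$ is a vertex cover, no edge of $H$ has both endpoints in $B$ (both would lie outside $\alpha$), and symmetrically none has both endpoints in $A$; hence $H$ is bipartite with color classes exactly $A$ and $B$, so $A$ and $B$ are two \emph{disjoint} vertex covers of $H$. Next I would observe that every edge of $G$ with an endpoint in $V\setminus D$ has an endpoint in $X$: a short case check using that $\alpha$ and $\beta$ are both vertex covers (e.g.\ if $e=\{u,w\}$ with $u\notin\alpha\cup\beta$ then $w\in\alpha\cap\beta=X$; if $u\in A$ and $w\notin D$, then $w\notin\beta$ forces $u\in\beta$, a contradiction, so $w\in X$). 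Consequently $S\cup X$ is a vertex cover of $G$ for every vertex cover $S$ of $H$.

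With this in hand the reduction is mechanical: given a monotone $(1,c)$-approximation schedule $(W_i)_i$ from $A$ to $B$ in $H$, I would show $(W_i\cup X)_i$ is a monotone $(1,c)$-approximation schedule from $\alpha$ to $\beta$ in $G$. Indeed, the first set equals $\alpha$ and the last equals $\beta$; each $W_i\cup X$ is a vertex cover of $G$ by the previous paragraph; since $W_i\subseteq D$ is disjoint from $X$ we have $(W_i\cup X)\oplus(W_{i+1}\cup X)=W_i\oplus W_{i+1}$, which is independent in $H$ hence in $G$; monotonicity is inherited (vertices of $X$ and of $V\setminus(\alpha\cup\beta)$ never change, the rest change at most once); and the cost is $\max_i|W_i\cup W_{i+1}|+|X|\le\max\{|A|,|B|\}+c+|X|=\max\{|\alpha|,|\beta|\}+c$. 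Finally $H$ inherits the graph class of $G$: ``cactus'' (forbidden minor $K_4-e$) and ``planar'' are both minor-closed, so $H$ is a cactus, resp.\ planar, whenever $G$ is; moreover $|V(H)|=|D|\le|\alpha|+|\beta|\le 2M$. Thus it suffices to produce, for disjoint covers $A$ (initial) and $B$ (target) of $H$, a monotone $(1,2)$-approximation schedule when $H$ is a cactus, and a monotone $(1,O(\sqrt{|V(H)|}))$-approximation schedule when $H$ is planar.

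For the cactus case I would invoke the known fact that every cactus graph with two vertex covers admits a monotone reconfiguration schedule whose intermediate covers never exceed the larger of the two by more than $2$~\cite{ItoNZ16}; applied to $H,A,B$ this is exactly a monotone $(1,2)$-approximation, which the reduction transfers to $G$ (and the additive $2$ is unavoidable already for an induced $C_4$ with its two parallel-class covers). For the planar case I would use the planar separator theorem~\cite{LiptonT80}: an $N$-vertex planar graph has a recursive balanced-separator decomposition of total width $O(\sqrt N)$, hence pathwidth $O(\sqrt N)$; feeding $H$ (which has $N\le 2M$ vertices) into the reconfiguration bound for graphs of small pathwidth~\cite{deBergJM16} yields a monotone schedule for $H$ of cost $\max\{|A|,|B|\}+O(\sqrt M)$, and the reduction gives the claimed $(1,O(\sqrt M))$-approximation for $G$. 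This last step only uses that $H$ is $O(\sqrt N)$-separable, so it goes through verbatim for any class with balanced separators of size $O(\sqrt N)$, e.g.\ proper minor-closed families.

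The one genuinely non-mechanical ingredient is the cactus bound with the \emph{exact} constant $2$: proving it from scratch means tracking the block--cut-tree structure of the (bipartite) cactus and showing that only even cycles ever need slack, which is fiddly to get tight. Since the statement lets us import it from~\cite{ItoNZ16}, the remaining work is packaging: checking that passing to $G[\alpha\oplus\beta]$ preserves the relevant graph properties and, crucially for the planar bound, replaces the dependence on $n$ by a dependence on $M$, and verifying that the two off-the-shelf schedules plug cleanly through the disjoint-cover reduction.
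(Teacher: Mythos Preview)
Your proposal is correct, and the reduction to disjoint covers on $G[\alpha\oplus\beta]$ (together with the observation that cactus and planar are closed under taking induced subgraphs, and that $|V(H)|\le 2M$) is exactly what the paper does as a first step.

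Where you diverge from the paper is in how the two cases are dispatched. You treat both as black boxes: for cacti you cite~\cite{ItoNZ16} directly, and for planar graphs you go via the separator--pathwidth connection and invoke~\cite{deBergJM16}. The paper explicitly acknowledges this route (``the same results also follow from the bound of~\cite{deBergJM16} in terms of the pathwidth'') but deliberately does not take it. Instead it proves a single combination lemma (\cref{lem:discon}): if $G$ splits as $V_1\mathbin{\dot\cup}V_2$ with no $\alpha$-edge from $V_1$ to $V_2$ and the $\beta$--$\alpha$ surpluses are suitably ordered, then concatenating $(\eta,k)$-schedules for the two sides gives an $(\eta,k)$-schedule for $G$. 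From this one lemma the paper derives both cases uniformly: cacti via a block--cut reduction (every $2$-connected cactus is a cycle, handled with additive $2$), and planar graphs via a direct recursion $T(n)\le T(\lfloor 2n/3\rfloor)+O(\sqrt n)$ on balanced separators, without passing through pathwidth.

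What the paper's approach buys is reusability: \cref{lem:discon} is the engine behind the small-separator-decomposition framework (\cref{gen::lma::decomp_to_sched}) and hence behind essentially all the distributed scheduling results. Your route is more economical for this theorem in isolation, but it does not produce that tool; the paper's ``simple and unified proof'' is there precisely because the same lemma is needed again later.
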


We provide a very simple unified proof of these results. 
The centerpiece of our proof is the following lemma about graphs with certain ``well-behaved'' separation of vertices. It will also be used for distributed computation of schedules (cf. \cref{gen::lma::decomp_to_sched}).

\begin{restatable}{lemma}{lemdiscon}\label{lem:discon}
Let $G=(V_1\mathbin{\dot{\cup}} V_2, E)$ be a graph with vertex covers $\alpha,\beta$, such that there is no edge between $V_1\cap\alpha$ and $V_2$. Let $\alpha_i=V_i\cap \alpha$, $\beta_i=V_i\cap \beta$. Assume that $|\beta_1|-|\alpha_1|\le |\beta_2|-|\alpha_2|+1$. If, for some $\eta\in [1,2]$, $k\ge 0$, $\mathcal{S}_1$ and $\mathcal{S}_2$ are $(\eta,k)$-approximation schedules for $G_1,\alpha_1,\beta_1$ and $G_2,\alpha_2,\beta_2$ (resp.), then their concatenation $\mathcal{S}_1,\mathcal{S}_2$ is an $(\eta,k)$-approximation schedule for $G$.  
\end{restatable}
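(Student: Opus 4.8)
The plan is to build the schedule for $G$ by literally concatenating $\mathcal{S}_1$ and $\mathcal{S}_2$, interpreting each as acting on $G$ while the other side of the partition stays frozen. First I would note that, since $G_i=G[V_i]$ is an induced subgraph and $\alpha,\beta$ are vertex covers of $G$, the sets $\alpha_i=V_i\cap\alpha$ and $\beta_i=V_i\cap\beta$ are vertex covers of $G_i$, so the hypotheses make sense. Write $\mathcal{S}_1=(W_t)_{t\in[\ell_1]}$ with $W_0=\alpha_1$, $W_{\ell_1}=\beta_1$, and $\mathcal{S}_2=(W'_t)_{t\in[\ell_2]}$ with $W'_0=\alpha_2$, $W'_{\ell_2}=\beta_2$. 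I would then define the concatenated schedule $\mathcal{S}$ on $G$ as $(W_0\cup\alpha_2,\,W_1\cup\alpha_2,\,\dots,\,W_{\ell_1}\cup\alpha_2,\,\beta_1\cup W'_1,\,\dots,\,\beta_1\cup W'_{\ell_2})$. Its first set is $\alpha_1\cup\alpha_2=\alpha$, its last is $\beta_1\cup\beta_2=\beta$, and the two halves meet at $\beta_1\cup\alpha_2=W_{\ell_1}\cup\alpha_2=\beta_1\cup W'_0$, so property~1 of \cref{def:reconfigurationSchedule} holds, and each batch of $\mathcal{S}$ equals a batch of $\mathcal{S}_1$ (as a subset of $V_1$) or of $\mathcal{S}_2$ (as a subset of $V_2$).

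Next I would establish the structural fact that drives feasibility: since there is no edge between $V_1\cap\alpha$ and $V_2$, every edge $uv$ with $u\in V_1$, $v\in V_2$ has $u\notin\alpha$; as $\alpha$ is a vertex cover this forces $v\in\alpha_2$, and as $\beta$ is a vertex cover either $u\in\beta_1$ or $v\in\beta_2$, so in the latter case $v\in\alpha_2\cap\beta_2$. Using this I would verify that every set of $\mathcal{S}$ is a vertex cover of $G$: during the first half, $W_t$ covers all edges inside $V_1$ (as $\mathcal{S}_1$ is a schedule for $G_1$), and $\alpha_2$ covers all edges inside $V_2$ as well as all cross edges; during the second half, $\beta_1$ covers all edges inside $V_1$ and all cross edges whose $V_1$-endpoint lies in $\beta_1$, $W'_t$ covers all edges inside $V_2$, and the remaining cross edges have their $V_2$-endpoint in $\alpha_2\cap\beta_2$, which is preserved throughout $\mathcal{S}_2$ (a monotone schedule never touches $\alpha_2\cap\beta_2$). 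For robustness, each batch of $\mathcal{S}$ is a batch of $\mathcal{S}_1$ or $\mathcal{S}_2$, hence an independent set of $G_1$ or $G_2$, hence an independent set of $G$ since these are induced subgraphs; this gives property~3, and property~2 will follow from the cost bound, since consecutive sets of $\mathcal{S}$ union to $(W_t\cup W_{t+1})\cup\alpha_2$ or to $\beta_1\cup(W'_t\cup W'_{t+1})$, whose sizes are controlled by the costs of $\mathcal{S}_1,\mathcal{S}_2$.

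It remains to bound the cost of $\mathcal{S}$ by $\eta M+k$ with $M=\max\{|\alpha|,|\beta|\}$, and I expect this to be the main (though routine) obstacle, as it is exactly where the size hypothesis $|\beta_1|-|\alpha_1|\le|\beta_2|-|\alpha_2|+1$ enters. Since $W_t$ and $\alpha_2$ are disjoint, a first-half set has size at most $\eta\max\{|\alpha_1|,|\beta_1|\}+k+|\alpha_2|$, and a second-half set has size at most $|\beta_1|+\eta\max\{|\alpha_2|,|\beta_2|\}+k$. I would then run a short case analysis on the signs of $|\beta_1|-|\alpha_1|$ and $|\beta_2|-|\alpha_2|$: if $|\alpha_1|\ge|\beta_1|$, the first bound is at most $\eta|\alpha_1|+|\alpha_2|\le\eta|\alpha|\le\eta M$ using $\eta\ge1$; if $|\beta_1|>|\alpha_1|$, the hypothesis forces $|\beta_2|\ge|\alpha_2|$, hence $|\beta|>|\alpha|$, so $M=|\beta|$ and the first bound is at most $\eta|\beta_1|+|\beta_2|\le\eta|\beta|=\eta M$. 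Symmetrically, if $|\beta_2|\ge|\alpha_2|$ the second bound is at most $|\beta_1|+\eta|\beta_2|\le\eta|\beta|\le\eta M$, and if $|\alpha_2|>|\beta_2|$ the hypothesis forces $|\beta_1|\le|\alpha_1|$, hence $M=|\alpha|$ and the second bound is at most $|\alpha_1|+\eta|\alpha_2|\le\eta|\alpha|=\eta M$. Reinstating the additive $k\ge0$, every set of $\mathcal{S}$ has size at most $\eta M+k$, so $\mathcal{S}$ is an $(\eta,k)$-approximation schedule for $G$, which completes the proof.
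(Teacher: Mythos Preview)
Your proof is correct and follows essentially the same approach as the paper's: verify validity of the concatenation, then bound the cost in each half by a case split on the sign of $|\beta_i|-|\alpha_i|$, using $\eta\ge 1$ and the size hypothesis. The paper compresses the second-half analysis by invoking the reverse schedule (\cref{obs:reverseschedule}) to reduce it to the first, whereas you do the symmetric case analysis directly; both arrive at the same inequalities. One point worth noting: your feasibility argument for the second half relies on $\mathcal{S}_2$ never removing a vertex of $\alpha_2\cap\beta_2$, which you justify via monotonicity. The lemma statement does not explicitly assume monotone schedules, and the paper's proof glosses over this (``no $\alpha$-node in $V_1$ depends on a $\beta$-node in $V_2$''), but you are right that some such assumption is needed---without it a non-monotone $\mathcal{S}_2$ could temporarily drop a node of $\alpha_2\cap\beta_2$ that is the sole cover of a cross edge whose $V_1$-endpoint lies outside $\alpha\cup\beta$. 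Since all uses of the lemma in the paper are with monotone schedules, this is a harmless implicit assumption, and your making it explicit is if anything more careful than the original.
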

\begin{proof}
	First, observe that $\mathcal{S}_1,\mathcal{S}_2$ is indeed a valid schedule, since, as assumed, there is no $\alpha$-node in $V_1$ that depends on a $\beta$-node in $V_2$. Let $a_i=|\alpha_i|$, $b_i=|\beta_i|$. 
	Consider the size of the vertex cover during the reconfiguration of $G_1$. By the assumption, it is at most $\eta\max(a_1,b_1)+k+a_2=\eta \max(|\alpha|,|\alpha|+b_1-a_1)+k$, using $|\alpha|=a_1+a_2$, $\eta\ge 1$. If $b_1-a_1\le 0$, the size is at most $\eta|\alpha|+k$, as required. Otherwise, we have $b_1-a_1\ge 1$, so $\eta\max(a_1,b_1)+a_2+k\le\eta (b_1+a_2)+k\le\eta(|\beta|+a_2-b_2)+k\le \eta|\beta|+k$, where we used $|\beta|=b_1+b_2$, and $b_2-a_2\ge b_1-a_1-1\ge 0$. Next, suppose we continue with reconfiguring $G_2$. If we run this schedule backwards, we get a $\beta$-to-$\alpha$ reconfiguration schedule, starting with $G_2$ (cf. \Cref{obs:reverseschedule}). Renaming $\beta\leftrightarrow\alpha$, $\beta_i\leftrightarrow\alpha_i$, $\mathcal{S}_i\leftrightarrow reverse(\mathcal{S}_{3-i})$, $G_1\leftrightarrow G_2$, the lemma conditions still apply, and the same analysis above shows that the size of the vertex cover is below $\eta\max(|\alpha|,|\beta|)+k$ when processing $G_2$ as well.  
\end{proof}

The proof of \cref{thm:sequential} for cactus graphs follows by an inductive application of the lemma, and using the fact that cactus graphs form a hereditary class, and every connected cactus graph is either a cycle graph or has a cut-vertex (which makes the lemma immediately applicable). The proof for planar graphs is also an inductive application of the lemma, relying on the well-known fact that planar graphs form a hereditary class, and each $n$-vertex planar graph has a separator of size $O(\sqrt{n})$ \cite{LiptonT80}. See \cref{app:sequential} for details.

The theorem above applies not only to the mentioned  classes, but also to graphs $G$ for which $G[\alpha\oplus\beta]$ belongs to those classes. We thus get the following batch scheduling theorem.

\begin{restatable}{observation}{ehFree} 
\label{obs:eh_free}\label{obs:claw_free}
Let $G = (V,E)$ be either a chordal graph, an even-hole-free graph or a claw-free graph with vertex covers $\alpha$ and $\beta$, then $G[\alpha\oplus\beta]$ is a cactus graph.
\end{restatable}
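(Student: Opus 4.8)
The plan is to exploit the fact, recorded in \cref{sec:defs}, that $G[D]$ with $D=\alpha\oplus\beta$ is bipartite, together with the fact that all three graph classes in the statement are \emph{hereditary}, i.e.\ closed under taking induced subgraphs (this is immediate from their definitions: each is of the form ``contains no induced subgraph of type $X$''). Since $G[D]$ is an induced subgraph of $G$, it inherits whichever of the three properties $G$ has, and in addition it is bipartite. It therefore suffices to prove two purely structural claims about an abstract graph $H$: (a) if $H$ is bipartite and even-hole-free then $H$ is a forest; and (b) if $H$ is bipartite and claw-free then $\Delta(H)\le 2$. Finally one observes that forests and graphs of maximum degree at most $2$ (disjoint unions of paths and cycles) are cactus graphs, since in both cases every edge lies on at most one cycle. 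The chordal case is subsumed by (a), since chordal graphs are even-hole-free.

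For (a): suppose a bipartite even-hole-free graph $H$ contains a cycle, and let $C$ be a shortest cycle of $H$. A shortest cycle is always chordless, i.e.\ induced, because a chord would split $C$ into two strictly shorter cycles. As $H$ is bipartite, $|C|$ is even and $|C|\ge 4$, so $C$ is an induced even cycle, contradicting even-hole-freeness. Hence $H$ is acyclic, i.e.\ a forest.

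For (b): let $H$ be bipartite with parts $L,R$, and suppose some vertex $v$, say $v\in L$, has three neighbors $u_1,u_2,u_3\in R$. Since $R$ is an independent set, there is no edge among $u_1,u_2,u_3$, so $\{v,u_1,u_2,u_3\}$ induces a $K_{1,3}$ in $H$, contradicting claw-freeness. Hence $\Delta(H)\le 2$.

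There is no real obstacle here; the only points requiring (minor) care are recalling why $G[D]$ is bipartite (because $\alpha\setminus\beta\subseteq V\setminus\beta$ and $\beta\setminus\alpha\subseteq V\setminus\alpha$ are independent sets, as $\alpha,\beta$ are vertex covers, so these two sets form a bipartition of $G[D]$), noting that the three classes are hereditary, and using the standard fact that a shortest cycle is induced. Assembling these observations yields the claim.
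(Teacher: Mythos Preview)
Your proof is correct and follows essentially the same approach as the paper's: both use that $G[D]$ is bipartite, then argue that bipartite even-hole-free implies forest and bipartite claw-free implies maximum degree $2$. You are slightly more explicit than the paper in invoking the ``shortest cycle is induced'' fact for case (a) and in spelling out the hereditary property, but these are presentational differences rather than a different route.
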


\begin{proof}
	Let $D=\alpha\oplus\beta$. Note that $\alpha\setminus\beta$ and $\beta\setminus\alpha$ are independent sets (as a complement of a vertex cover), and so, $G[D]$ is bipartite and does not contain odd cycles.
	
	If $G$ is even-hole free (which includes chordal graphs), then $G[D]$ contains neither even nor odd (as mentioned above) cycles, and therefore it is a forest.  
	
	Next, assume that $G$ is a claw-free graph. If $G[D]$ has a node $v$ with at least 3 neighbors $u_1,u_2,u_3$, then there is no edge between $u_1,u_2,u_3$ (no odd cycles), which contradicts to $G$ being claw-free. 
	Therefore, every node in $D$ has a degree at most $2$, and this shows that each connected component is either a path or a cycle.
\end{proof}

\variousClassesBatch*
\begin{proof}
By \Cref{obs:eh_free}, 
for every graph in a class mentioned in the first claim, 
$G[\alpha\oplus\beta]$ 
is a cactus graph, which by \Cref{thm:sequential}, has a monotone $(1,2)$-approximation schedule, and by \Cref{thm::sched_compres}, has a $(2\lceil 1/\eps\rceil)$-batch $(1+\eps,3)$-approximation schedule. For the second claim, let $M=\max\{|\alpha|,|\beta|\}$ and $\delta=\eps/2$. By \Cref{thm:sequential}, there is a monotone schedule with a worst-case vertex cover of size $M+O(\sqrt{M})=(1+\delta)M+\sqrt{M}(O(1)-\delta\sqrt{M}/2)$. If $M\ge c/\delta^2$, for a large enough constant $c$, then the additive term is negative. Otherwise, it is $O(1/\delta)$. Thus, we have a sequential $(1+\delta,O(1/\delta))$-approximation schedule. \Cref{thm::sched_compres} with $\eps=\delta$ 
then gives a $(2\lceil 2/\eps\rceil)$-batch $(1+\eps,O(1/\eps))$-approximation schedule, as claimed.
\end{proof}

\Cref{thm::sched_compres,thm:variousClassesBatch} are nearly optimal, in the sense that a general compression theorem cannot prove a better trade-off between approximation overhead and schedule length. This holds even for path graphs.

\begin{restatable}[Batch Lower Bound]{theorem}{batchLowerBound}
\label{thm::t_lb}
For any $1 \leq t < n$, there are two vertex covers  $\alpha$, $\beta$ of a $2n$-vertex path graph such that every $(t+1)$-batch reconfiguration schedule from $\alpha$ to $\beta$ is no better than a $\parenths*{1+\frac{1}{t},0}$-approximation.
\end{restatable}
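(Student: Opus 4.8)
\textbf{Proof plan for \Cref{thm::t_lb}.}

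The plan is to build an explicit hard instance on the path $P_{2n}$ consisting of $n$ disjoint copies of the two-edge path (i.e., group the $2n$ vertices into $n$ consecutive pairs, or rather into $n$ ``gadgets'' each of which is a short sub-path), where on each gadget $\alpha$ and $\beta$ are two minimum vertex covers that are ``out of phase.'' Concretely, on a sub-path with vertex set $\{a,b\}$ joined by an edge, one natural choice is $\alpha$ picking the left endpoint of each gadget and $\beta$ picking the right endpoint, so that $|\alpha|=|\beta|=n=M$, and $\alpha\oplus\beta$ is all of $V$. The key structural point is that within each gadget, to reconfigure from the $\alpha$-choice to the $\beta$-choice one must, at some intermediate moment, have \emph{both} endpoints of that gadget in the cover (you cannot remove the $\alpha$-vertex of a gadget before adding its $\beta$-vertex, since the edge between them must stay covered, and by the robustness/independence condition they cannot change in the same batch either). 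So every gadget contributes an ``overfull'' moment of $+1$ to the cover size, and the question is how many gadgets can be forced to be overfull \emph{simultaneously}.

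The core of the argument is a counting/pigeonhole bound on the batches. First I would argue that, because each batch $\mathcal{E}_i$ must be an independent set and each gadget's two endpoints are adjacent, a single batch can change at most one endpoint per gadget; hence across the $t+1$ batches $\mathcal{E}_0,\dots,\mathcal{E}_t$ of a $(t+1)$-batch schedule, the ``add the $\beta$-vertex'' event and the ``remove the $\alpha$-vertex'' event of a fixed gadget happen in two distinct batches, say with indices $p<q$ (add before remove, as noted above). So each gadget is assigned an interval $[p,q)$ of batch-steps during which it is overfull, and this interval is nonempty and contained in $\{0,1,\dots,t-1\}$, i.e., there are only $t$ possible ``time slots.'' By pigeonhole over $n$ gadgets and $t$ slots, some slot $j$ is contained in the overfull interval of at least $\lceil n/t\rceil$ gadgets, which means the intermediate cover $V_{j+1}$ (or whichever $V_i$ realizes that moment) has size at least $M + \lceil n/t\rceil \ge M + n/t = M(1+1/t)$, using $M=n$. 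That gives the claimed $(1+1/t,0)$ bound: no $(t+1)$-batch schedule can be an $(\eta,0)$-approximation with $\eta < 1+1/t$.

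I would expect the main obstacle to be making the ``interval assignment'' and the pigeonhole completely rigorous in the presence of monotone-vs-nonmonotone schedules and of the robustness condition across internal batch orderings. In particular one has to handle schedules that touch a gadget vertex more than once (non-monotone schedules): there the right move is to look at the \emph{last} time the $\beta$-vertex is added and the \emph{first} subsequent time the $\alpha$-vertex is removed after which it stays removed, or simply to observe that at the final configuration $\beta$ each gadget has its $\beta$-vertex in and its $\alpha$-vertex out, trace backwards to the last batch where this becomes permanent, and argue an overfull moment still exists in one of the $t$ usable slots; the independence-of-each-batch condition is what prevents a gadget from flipping both vertices in one step and thereby escaping the count. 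A secondary, more cosmetic point is the boundary: a path on $2n$ vertices decomposes into $n$ edge-gadgets only if we are careful about shared vertices, so I would either take the $n$ gadgets to be the $n$ edges $\{v_1v_2\},\{v_3v_4\},\dots,\{v_{2n-1}v_{2n}\}$ (a perfect matching inside the path, ignoring the ``connector'' edges, which only makes the adversary weaker, hence the lower bound still holds since extra edges only add constraints) — this keeps $M=n$ and $\alpha\oplus\beta=V$, and the counting goes through cleanly.
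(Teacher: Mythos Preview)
Your proposal is correct and uses the same hard instance as the paper: the $2n$-path with $\alpha$ and $\beta$ the two color classes, so $M=n$ and $\alpha\oplus\beta=V$. The paper's argument is a near-identical pigeonhole, phrased per batch rather than per gadget: it splits each of the $t+1$ batches into its $A$-part and $B$-part and observes that one of the (at most $t$) $B$-addition sub-batches must contain $\ge n/t$ vertices, so the cover just after that sub-batch has size $\ge n(1+1/t)$; your per-gadget ``overfull time slot'' formulation is the same count viewed from the dual side, and your handling of non-monotone schedules (take the last addition of the $\beta$-endpoint and use the independence of each batch) is a bit more explicit than the paper's.
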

\begin{proof}
	Consider a path graph on $2n$ vertices, represented as a bipartite graph $G=(A\cup B, E)$ with $|A|=|B|=n$. Observe that both $A$ and $B$ are minimal vertex covers in $G$. Consider any $(t+1)$-batch reconfiguration schedule $\mathcal{S}$ with batches $\mathcal{E}_0,\mathcal{E}_1,\dots,\mathcal{E}_{t}$ from $A$ to $B$. It is easy to translate $\mathcal{S}$ into a refined schedule with
	the same approximation ratio and
	at most $2t$ batches, $\mathcal{E}'_0,\mathcal{E}'_1,\dots,\mathcal{E}'_{2t}$, such that for each $i$, $\mathcal{E}'_i\subseteq A$ or $\mathcal{E}'_i\subseteq B$. Let $\mathcal{E}'_{i_1},\dots,\mathcal{E}'_{i_{t}}$ be the batches that are contained in $B$. Since $\bigcup_{j=1}^{t}\mathcal{E}'_{i_j}=B$, there is an index $j$, such that $|\mathcal{E}'_{i_j}|\ge |B|/t=n/t$. Note that before adding $\mathcal{E}'_{i_j}$ to the current vertex cover, the vertex cover had size at least $n$ (the size of a minimum vertex cover), hence after adding it, we have a vertex cover of size at least $n(1+1/t)$, which proves the claim. 
\end{proof}

\section{Distributed Computation of Schedules}
\label{sec:tecDist}

The main objective of this section is to show the following result. 

\thmCactusCoreRecon*

More concretely, the result holds for the following graph classes.

\begin{restatable}{corollary}{corfastDistributedReconf}\label{cor:variousClassesDist} 
\begin{sloppypar}For each $\eps>0$, there is a deterministic \LOCAL algorithm with round complexity $O(\logstar(n)/\eps)$ that for any connected $n$-node graph $G=(V,E)$ with vertex covers $\alpha$ and $\beta$, computes an $O(1/\eps)$-batch $(1+\eps,3)$-approximation schedule, if $G$ belongs to one of the following graph classes $\{\mathsf{cycles, trees, forests, chordal, cactus, even\text{-}hole\text{-}free, claw\text{-}free}\}$.
\end{sloppypar}
\end{restatable}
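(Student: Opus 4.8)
The plan is to derive the corollary from \Cref{thm:cactus_core_graphs} by showing that membership of $G$ in any of the listed classes $\{\mathsf{cycles, trees, forests, chordal, cactus, even\text{-}hole\text{-}free, claw\text{-}free}\}$ guarantees that the subgraph $G[\alpha\oplus\beta]$ induced by the symmetric difference of the two vertex covers is a cactus graph. Once this structural fact is in place, \Cref{thm:cactus_core_graphs} applies verbatim and yields an $O(1/\eps)$-batch $(1+\eps,3)$-approximation schedule computed in $O(\logstar(n)/\eps)$ rounds of \LOCAL, which is exactly the conclusion we need. So the corollary is essentially a ``wrapper'' lemma plus a direct invocation of the main theorem.

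The first step is to verify the structural claim class by class. For \textsf{even-hole-free} graphs (which includes \textsf{chordal} as a special case, since chordal graphs forbid all induced cycles of length $\ge 4$, in particular all even holes) and for \textsf{claw-free} graphs, this is precisely the content of \Cref{obs:eh_free}: the complements of vertex covers $\alpha\setminus\beta$ and $\beta\setminus\alpha$ are independent sets, so $G[\alpha\oplus\beta]$ is bipartite and contains no odd cycle; in the even-hole-free case it then contains no cycle at all (hence is a forest, which is trivially a cactus), and in the claw-free case every vertex of $G[\alpha\oplus\beta]$ has degree at most $2$ (a degree-$3$ vertex would force an induced claw), so each component is a path or a cycle, again a cactus. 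For \textsf{cycles}, \textsf{trees}, and \textsf{forests}, the graph $G$ itself is already a cactus, and since the cactus property is hereditary (each edge lies in at most one cycle, and this is preserved under taking induced subgraphs), $G[\alpha\oplus\beta]$ is a cactus. For \textsf{cactus} graphs the same hereditary argument applies directly. Thus in every listed case $G[\alpha\oplus\beta]$ is a cactus graph.

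The second step is simply to observe that $G$ is connected by hypothesis, so all preconditions of \Cref{thm:cactus_core_graphs} are met, and we invoke it to obtain the stated algorithm and bounds. The only mild subtlety — and the single place where one must be slightly careful — is that \Cref{obs:eh_free} is stated for $G$ itself belonging to the respective class, whereas here we are handed $G$ from one of several classes; but this is already handled by the case analysis above, and no class on the list produces an obstruction. I do not anticipate a genuine obstacle here: the work is entirely in assembling already-proven pieces, and the ``hard part'' (building the super-fast small separator decomposition for cactus cores and combining it with the $(1,2)$-approximation sequential schedule and the compression theorem) has been discharged in \Cref{lem:cactus_core_decomp,gen::lma::decomp_to_sched_computation,thm::sched_compres} and packaged into \Cref{thm:cactus_core_graphs}.
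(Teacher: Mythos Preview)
Your proposal is correct and follows essentially the same approach as the paper: the paper's proof is a one-line remark that for each listed class $G[\alpha\oplus\beta]$ is a cactus graph (appealing, as you do, to \Cref{obs:eh_free} and the hereditary nature of cacti), after which \Cref{thm:cactus_core_graphs} applies directly. Your case-by-case argument is slightly more explicit than the paper's, but the content is the same.
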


To efficiently compute a reconfiguration schedule in a distributed setting, we want to deal with large parts of the graph at the same time, e.g., by decomposing the graph into independent/non-interfering clusters. To this end, we define the concept of  \emph{small separator decompositions}. These are sufficient to compute \emph{good} reconfiguration schedules, both in terms of  the schedule length and  approximation factor, as well as the computation time. Afterwards, we show that such decompositions can be computed 
efficiently, i.e., in $O(\logstar(n)/\eps)$ rounds, on several graph classes, such as \emph{cactus} and \emph{even-hold-free} graphs. In fact, it is sufficient if $G[\alpha\oplus\beta]$ falls into the respective graph class.

\subsection{Small Separator Decompositions}
We first present the formal definition of a small separator decomposition and show how to use it to compute batch reconfiguration schedules. See \Cref{fig:my_label} for an illustration.

\begin{figure}[t!]
    \centering
    \includegraphics[height=2.3cm]{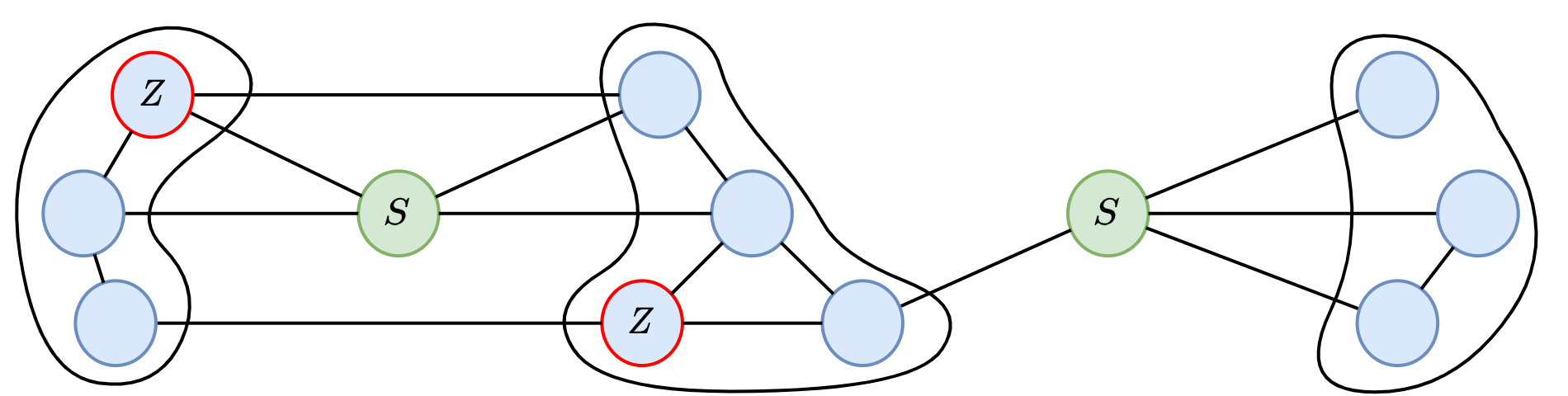}
    \caption{$(d,x)$-decomposition with $d=x=2$ and with $k=3$ clusters (one disconnected). Every node is either clustered or belongs to $S$, and every edge is either in a cluster or intersects $S\cup Z$.}
    \label{fig:my_label}
\end{figure}

\begin{restatable}[Small Separator Decomposition]{definition}{DefsmallSeparator} \label{def:smallSeparator}
Let  $G=(V,E)$ be a graph with a subset $Z\subseteq V$ of \emph{ghost nodes}. 
A collection of \emph{clusters} $C_1,\dots,C_k\subseteq V$ and a \emph{separator set} $S \subseteq V$ is a \emph{(weak) $(d,x)$-separator decomposition} with regard to $Z$ if the following hold:
\begin{compactenum}
		\item partition: $C_1,\dots,C_k, S$ forms a partition of $V$, 
    \item the weak diameter of each $C_i$
    (i.e. $\max\{dist_G(u,v) | u,v\in C_i\}$)
    is upper bounded by $d$,
		\item small separator set: $|S|\leq x$,
		\item $S$ separates: There is \emph{no} edge $\{u,v\}\in E$ with $u\in C_i\setminus Z$, $v\in C_j\setminus Z$, and $i\neq j$. 
\end{compactenum}
		When additionally $G[C_i]$ has diameter at most $d$, for all $1\leq i \leq k$, we speak of a \emph{strong $(d,x)$-separator decomposition}. 
		A \emph{$d$-diameter clustering} $\mathcal{C}=\{C_i\}_{i=1}^k$ of a graph $G=(V,E)$ is a partitioning $V=\bigcup_{i=1}^k C_i$ of the vertex set into $k$ disjoint subsets, for a parameter $k\ge 1$, such that for every $1\le i\le k$, $G[C_i]$ is connected and has strong diameter at most $d$. 
\end{restatable}
Given a graph $G$ and vertex covers $\alpha$ and $\beta$, we would like to compute a monotone schedule for each cluster independently. To do so without interference, the separator  set $S$ is needed. To save on the size of $S$, we observe that nodes in $V\setminus (\alpha\oplus\beta)$ never change, and so, can be ignored and are referred to as \emph{ghost} nodes.
Next, we show that such
decomposition are helpful to compute batch reconfiguration schedules.

\begin{restatable}{lemma}{lemseparatorHelps}
\label{gen::lma::decomp_to_sched}
Let $G=(V,E)$ be a graph with two vertex covers $\alpha$ and $\beta$ and a $(d,x)$-separator decomposition $C_1,\dots,C_k, S$  with ghost nodes $Z\subseteq V\setminus (\alpha\oplus\beta)$. If each $G[C_i]$ admits a monotone $\ell$-batch  $(\eta,c)$-approximation schedule $\mathcal{S}_i$,
then $G$ admits
a $(2\ell+2)$-batch $(\eta,x+k\cdot c)$-approximation schedule $\mathcal{S}$.
\end{restatable}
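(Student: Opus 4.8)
The plan is to build the schedule $\mathcal{S}$ for $G$ by running all the cluster schedules $\mathcal{S}_i$ "in parallel'', after first parking the entire separator set $S$ in the cover so that it shields every inter-cluster edge. Concretely, I would use three phases. In Phase~1 (one batch), add all of $S\setminus(\alpha\oplus\beta)$ that is not already in $\alpha$; actually, since the lemma permits ghost nodes only in $V\setminus(\alpha\oplus\beta)$ while $S$ itself need not be ghost, the cleanest move is: in the first batch add every vertex of $S$ that is currently outside the cover, so that after Phase~1 the cover contains all of $S\cup\alpha$. In Phase~2, run the $\ell$-batch cluster schedules simultaneously: the $j$-th batch of $\mathcal{S}$ (for $j=1,\dots,\ell$) is the union over $i$ of the $j$-th batch of $\mathcal{S}_i$. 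In Phase~3 (one batch), remove every vertex of $S\setminus\beta$. That is $1+\ell+1=\ell+2$ batches; to match the claimed $2\ell+2$, I would instead first observe that we may assume $\mathcal{S}_i$ has been massaged so its first batch only adds $\beta$-nodes (monotone schedules can always be front-loaded this way), but more robustly: reserve a batch before Phase~2 to add, across all clusters, the $\beta$-nodes needed, and a batch after to remove leftover $\alpha$-nodes — in any case $2\ell+2$ is a safe upper bound on the batch count and is what we want, so I will not optimize it further. The length bound is then immediate.

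The second step is the approximation bound. Fix any intermediate point of $\mathcal{S}$. During Phase~1 and Phase~3 the cover is a subset of $S\cup\alpha$ or $S\cup\beta$, hence of size at most $x+M$; since $\eta\ge 1$, this is at most $\eta M + x \le \eta M + x + kc$, as required (Phase~3 is handled symmetrically, or via \Cref{obs:reverseschedule}). During Phase~2, the cover at any moment is exactly $S' \cup \bigcup_{i=1}^k T_i$, where $S'\subseteq S$ is whatever remains of the separator and $T_i$ is the current cover of $G[C_i]$ produced by $\mathcal{S}_i$. Because $\mathcal{S}_i$ is an $(\eta,c)$-approximation for $G[C_i]$ with endpoints $\alpha_i=\alpha\cap C_i$, $\beta_i=\beta\cap C_i$, we have $|T_i|\le \eta\max(|\alpha_i|,|\beta_i|)+c \le \eta\max(|\alpha|,|\beta|\text{ restricted}) \dots$ — the key inequality is $\sum_i \max(|\alpha_i|,|\beta_i|) \le \max(|\alpha|,|\beta|) = M$. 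This is the one genuinely nontrivial inequality and I expect it to be the main obstacle: it is false in general (max is not superadditive the wrong way — actually $\sum\max(a_i,b_i)\ge\max(\sum a_i,\sum b_i)$, the inequality we need goes the wrong direction). So the argument must not go through $\sum_i\max$. Instead I would bound $\sum_i |T_i|$ directly: write $|T_i| \le \eta\max(|\alpha_i|,|\beta_i|)+c = \eta\,|\alpha_i| + \eta\max(0,|\beta_i|-|\alpha_i|) + c$, sum over $i$, and use $\sum_i|\alpha_i| \le |\alpha|$ together with $\sum_i\max(0,|\beta_i|-|\alpha_i|) \le \sum_i\max(0,|\beta_i \setminus \alpha|) \le |\beta\setminus\alpha| \le |\beta|$; combined with the symmetric bound (run backwards, swap $\alpha\leftrightarrow\beta$, cf.\ the proof of \Cref{lem:discon}) this yields $\sum_i|T_i|\le \eta M$, since the relevant vertices in distinct clusters are disjoint and $\eta\le 2$ is exactly the slack that makes "$\eta|\alpha| + \eta|\beta\setminus\alpha|$'' not exceed $\eta M$... here I would mirror the two-case split from \Cref{lem:discon}'s proof rather than reinvent it. Adding the separator contribution $|S'|\le x$ and the $k$ copies of the additive $c$, the cover size during Phase~2 is at most $\eta M + x + kc$.

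Finally, validity: I must check each batch of $\mathcal{S}$ is an independent set and each intermediate set is a vertex cover. For the batches of Phase~1 and Phase~3 this is easy — $S$ itself need not be independent, so here I instead note these are single-vertex-type operations in the "add-only'' / "remove-only'' sense where property~2 of \Cref{def:reconfigurationSchedule} degenerates, OR, more carefully, I must argue the added/removed set is independent; this forces a small refinement (split $S$ into $O(1)$ independent sub-batches, or observe that adding vertices to a cover can never break coverage and independence of the batch is only needed for the robustness reading). For Phase~2: any edge inside a cluster $C_i$ is handled by $\mathcal{S}_i$ being a valid schedule for $G[C_i]$; any edge with both endpoints ghost or with an endpoint outside all clusters lies in... by the "$S$ separates'' property every edge not inside a single cluster has an endpoint in $S$ or is incident to a ghost node in $Z\subseteq V\setminus(\alpha\oplus\beta)$, and ghost nodes keep their status throughout — if a ghost node is in $\alpha\cap\beta$ it is permanently in the cover, covering that edge; if it is in $V\setminus(\alpha\cup\beta)$ the other endpoint must be in $\alpha\cap\beta$ hence permanently covered. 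Inter-cluster edges with neither endpoint ghost have an endpoint in $S$, which is fully present throughout Phase~2. Independence of the Phase-$j$ batch $\bigcup_i(\text{batch }j\text{ of }\mathcal{S}_i)$ follows since each piece is independent (given) and there are no edges between distinct clusters among non-ghost nodes, and batch vertices are non-ghost (they lie in $\alpha\oplus\beta$). I would write these case checks compactly, leaning on \Cref{def:smallSeparator}(4) throughout.
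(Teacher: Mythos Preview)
Your proposal has a genuine gap, and you in fact put your finger on it yourself without resolving it. Running \emph{all} cluster schedules $\mathcal{S}_1,\dots,\mathcal{S}_k$ in parallel in Phase~2 cannot give an $(\eta,x+kc)$-approximation: the inequality you need, $\sum_i \max(|\alpha_i|,|\beta_i|)\le M$, is false (take two clusters with $(|\alpha_1|,|\beta_1|)=(1,100)$ and $(|\alpha_2|,|\beta_2|)=(100,1)$; then $M=101$ but the sum of maxima is $200$). Your algebraic workaround does not fix this: from $|T_i|\le \eta|\alpha_i|+\eta\max(0,|\beta_i|-|\alpha_i|)+c$ you get, after summing, at best $\eta|\alpha|+\eta\sum_i\max(0,|\beta_i|-|\alpha_i|)+kc$, and the middle term can be as large as $\eta|\beta|$, yielding $\eta(|\alpha|+|\beta|)+kc$ rather than $\eta M+kc$. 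The appeal to ``mirror the two-case split from \Cref{lem:discon}'' is the right instinct but not at the level of a single cluster's bookkeeping --- it has to change the \emph{structure} of the schedule. This is exactly the point of \Cref{remark:nonNaive}.

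The missing idea, which is also what explains the $2\ell+2$ batch count you found puzzling, is to partition the clusters into $I_\alpha=\{i:|C_i\cap\alpha|>|C_i\cap\beta|\}$ and $I_\beta=[k]\setminus I_\alpha$, and run the two groups \emph{sequentially}: first execute all $\mathcal{S}_i$, $i\in I_\alpha$, in parallel (this gives an $\ell$-batch $(\eta,kc)$-approximation for $G_\alpha=G[\bigcup_{I_\alpha}C_i]$, since here $\max(|\alpha_i|,|\beta_i|)=|\alpha_i|$ and these sum to at most $|\alpha|$), then all $\mathcal{S}_i$, $i\in I_\beta$, in parallel (symmetrically an $(\eta,kc)$-approximation for $G_\beta$). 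Now \Cref{lem:discon} applies verbatim to the concatenation $\mathcal{S}_\alpha,\mathcal{S}_\beta$, giving an $(\eta,kc)$-approximation for $G\setminus S$. Bracketing by the two separator batches $S\cap(\beta\setminus\alpha)$ and $S\cap(\alpha\setminus\beta)$ (each independent, being contained in the complement of a vertex cover) yields $2\ell+2$ batches and the stated $(\eta,x+kc)$ bound. Your validity checks for inter-cluster edges via ghost nodes and \Cref{def:smallSeparator}(4) are fine and carry over unchanged.
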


\begin{proof}
Let $I_{\alpha}=\{i : |C_i \cap \alpha| > |C_i \cap \beta|\}$ be the indices of  clusters $C_i$ with more $\alpha$-nodes than $\beta$-nodes, 
and let
$I_{\beta}$ be the indices of the rest
(for which $|C_i \cap \alpha| \leq |C_i \cap \beta|$). Consider the monotone schedule $\mathcal{S}_\alpha$ for $G_\alpha=G[\cup_{I_\alpha} C_i]$ that consists of a parallel execution of $\mathcal{S}_i$, for all $i\in I_\alpha$:  the $t$-th batch  of $\mathcal{S}_\alpha$ is the union of $t$-th batches of $\mathcal{S}_i$, $i\in I_\alpha$. Observe that $\mathcal{S}_\alpha$ is a $(\eta, kc)$-approximation for $G_\alpha$. Validity follows from the fact that all edges between different $C_i$ are adjacent to ghost nodes $Z\subseteq V\setminus (\alpha\oplus \beta)$, which are never touched by the \emph{monotone} schedule $\mathcal{S}_\alpha$. For the approximation, note that at each step $t$, within each $C_i$, the current vertex cover is bounded by $\eta|C_i\cap \alpha|+c$, by the definition of $\mathcal{S}_i$ and $I_\alpha$, hence the  size of the current vertex cover in $G_\alpha$ at step $t$ is at most  $\eta\cdot \sum_{I_\alpha} |\alpha\cap C_i|+k\cdot c\le \eta |\alpha|+k c$. By a symmetric construction, we get a monotone $(\eta, kc)$-approximation schedule $\mathcal{S}_\beta$ for $G_\beta=G[\cup_{I_\beta}C_i]$. By \cref{lem:discon}, $\mathcal{S}_\alpha, \mathcal{S}_\beta$ is a $(\eta, kc)$-approximation schedule for $G_\alpha\cup G_\beta=G\setminus S$ (again, we may ignore the edges between $C_i$, since they are covered by ghost nodes). The schedule for $G$ is then $\mathcal{S}=(S\cap(\beta\setminus \alpha)), \mathcal{S}_\alpha, \mathcal{S}_\beta, (S\cap (\alpha\setminus \beta))$, where we handle the separator set $S$ in a na\"ive way. By the properties of $\mathcal{S}_\alpha$ and $\mathcal{S}_\beta$ and by \cref{lem:discon}, $\mathcal{S}$ is a $(2\ell+2)$-batch $(\eta, kc+x)$-approximation schedule. 
\end{proof}

Given a $(d,x)$-separator decomposition with $k$ clusters, all phases in the proof of \Cref{gen::lma::decomp_to_sched} can be executed in $O(d)$ rounds in the \LOCAL model, and we obtain the following theorem.

\begin{restatable}{theorem}{lemdecomptoschedcomputation}
\label{gen::lma::decomp_to_sched_computation}
If the conditions of \cref{gen::lma::decomp_to_sched} hold,
then a $(2\ell+2)$-batch $(\eta,x+k\cdot c)$-approximation schedule for $G$
can be computed in $O(d)$ rounds in the \LOCAL model. 
\end{restatable}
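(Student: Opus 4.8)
The plan is to observe that \Cref{gen::lma::decomp_to_sched} is already constructive, so the only task is to verify that each of its ingredients can be carried out in $O(d)$ rounds in the \LOCAL model, given that the decomposition $C_1,\dots,C_k,S$ and the per-cluster schedules $\mathcal{S}_i$ are available to the nodes (each node knowing its cluster membership or that it lies in $S$, and knowing its own batch index in its cluster's schedule). I would walk through the proof of \Cref{gen::lma::decomp_to_sched} phase by phase and bound the round complexity of each.

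First I would argue that every node can learn everything it needs about its own cluster in $O(d)$ rounds: since the weak diameter of $C_i$ is at most $d$, a node $v\in C_i$ can gather the topology of $G[C_i]$, the values $|C_i\cap\alpha|$ and $|C_i\cap\beta|$ (hence whether $i\in I_\alpha$ or $i\in I_\beta$), and the schedule $\mathcal{S}_i$ by collecting all information within distance $d$ in $G$. (Here one uses that $G[C_i]$ is connected when it has diameter at most $d$; for the weak-diameter case the cluster is still gathered via paths in $G$ of length at most $d$, which is why the ghost nodes on inter-cluster edges are never touched and cause no interference.) Then the parallel execution forming $\mathcal{S}_\alpha$ and $\mathcal{S}_\beta$ requires no further communication — each node already knows, for every batch index $t$, whether it changes status in the $t$-th batch of its cluster's schedule — so these schedules of length $\ell$ are produced in $0$ additional rounds after the gathering phase.

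Next I would handle the two separator batches $(S\cap(\beta\setminus\alpha))$ and $(S\cap(\alpha\setminus\beta))$ prepended and appended in the proof: a node in $S$ locally checks its membership in $\alpha$ and $\beta$ and inserts itself into the first or last batch accordingly, again with no communication. Finally, the concatenation $\mathcal{S}=(S\cap(\beta\setminus\alpha)),\mathcal{S}_\alpha,\mathcal{S}_\beta,(S\cap(\alpha\setminus\beta))$ has $2\ell+2$ batches and is, by \Cref{gen::lma::decomp_to_sched}, an $(\eta,x+k\cdot c)$-approximation schedule; each node knows its membership in each of the $2\ell+2$ sets $V_j$, which is exactly the requirement for the nodes to have distributively computed the schedule. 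Summing over the phases gives an $O(d)$ round bound overall.

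I do not expect a genuine obstacle here — the statement is essentially a bookkeeping corollary of \Cref{gen::lma::decomp_to_sched}. The one point that needs a little care is the weak-diameter subtlety: when $C_i$ only has small \emph{weak} diameter (not strong diameter), gathering $G[C_i]$ still works because any two nodes of $C_i$ are connected by a path of length $\le d$ in $G$ (possibly through $S$ or ghost nodes), so a BFS to depth $d$ in $G$ suffices; and the correctness of running $\mathcal{S}_i$ in parallel is unaffected because, as in the proof of \Cref{gen::lma::decomp_to_sched}, all inter-cluster edges are incident to ghost nodes in $Z\subseteq V\setminus(\alpha\oplus\beta)$, which monotone schedules never reconfigure. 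So the ``hard part'', such as it is, is merely to state the gathering argument cleanly enough that the $O(d)$ bound is transparent, and to note that nothing beyond local computation happens in the remaining phases.
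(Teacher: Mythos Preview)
Your proposal is correct and follows essentially the same approach as the paper's proof, which is even terser: it simply notes that the two separator phases take one round each, and that for the two cluster phases each node learns its entire cluster in $O(d)$ rounds and locally computes its position in the schedule. Your additional discussion of the weak-diameter subtlety and the role of ghost nodes is a useful elaboration but does not depart from the paper's argument.
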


There are three parameters at play in when using small separator decomposition to compute  schedules: $x$ -- the separator set size, $k$ -- the number of clusters (both affect the approximation factor of the 
schedule), and $d$ --  the cluster diameter (which affects the time it takes to compute the schedule). 
In the next section, we show how to balance those three parameters on a graph $G$ for which $G[\alpha \oplus \beta]$ is a cactus graph. In addition, recall that if $G[\alpha\oplus\beta]$ is a cactus graph, our batch compression results from \Cref{thm:variousClassesBatch} imply the existence of good monotone schedules for any subgraph, as needed by \Cref{gen::lma::decomp_to_sched,gen::lma::decomp_to_sched_computation}.

\subsection{Computing Small Separator Decompositions}

To prove \Cref{thm:cactus_core_graphs}, we show how to compute a small separator decomposition with parameters $d = O(1/\eps)$, $x = \eps |\alpha \oplus \beta|$ and $k = \max\{1,2\eps M\}$ (where $M = \max\{\alpha,\beta\}$) on graphs $G$ for which $G[\alpha\oplus\beta]$ is a cactus graph. We stress that the usual approaches for computing (small diameter) decompositions, such as \cite{awerbuch89,linial93,Vaclav_2019,SLOCAL17}, neither include any vertex-separators nor give any promise on the number of clusters. Nevertheless, in \cref{sec:SLOCAL}, we show how to use these methods to compute decompositions for general graphs in $\poly\log n$ rounds, deterministically. Using the results about batch schedules for bounded arboricity graphs, 
this implies that we can also compute such schedules in the distributed setting (see \Cref{thm:distrPlanar}). A downside is that those techniques inherently require $\Omega(\log n)$ rounds and are not useful for proving \Cref{thm:cactus_core_graphs}. Therefore, new methods for computing our decomposition are needed to prove the following result.
 
\begin{restatable}[Cactus-Core Decomposition]{lemma}{lemCactusCoreDecomp}\label{lem:cactus_core_decomp} For each $\eps>0$, there is a \LOCAL algorithm that for any connected $n$-node graph $G=(V,E)$ and vertex covers $\alpha, \beta$ such that $G[\alpha\oplus\beta]$ is a cactus graph,
computes a $(d,x)$-separator decomposition with $d = O(1/\eps)$, $x = \eps |\alpha \oplus \beta|$, a ghost node set $Z = V \setminus (\alpha \oplus\beta)$ and $\max\{1,2\eps M\}$  clusters in $O(\logstar(n)/\eps)$ rounds.
\end{restatable}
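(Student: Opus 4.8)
The goal is to compute, in $O(\logstar(n)/\eps)$ rounds, a small separator decomposition of $G$ where $G[\alpha\oplus\beta]$ is a cactus graph, with diameter $d=O(1/\eps)$, separator size $x=\eps|\alpha\oplus\beta|$, ghost set $Z=V\setminus(\alpha\oplus\beta)$, and at most $\max\{1,2\eps M\}$ clusters. The key observation is that the ghost nodes $Z$ absorb all the "work" outside of $D:=\alpha\oplus\beta$: since property 4 of the decomposition only forbids inter-cluster edges between non-ghost vertices, every vertex in $Z$ can be placed arbitrarily (say, each into an adjacent cluster or into its own singleton-ish treatment that we fold into a neighbor), and all we really need to decompose is the cactus graph $H:=G[D]$. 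So the plan is: (1) reduce to decomposing the cactus $H$; (2) decompose each connected component of $H$ — which is a cactus — into low-diameter clusters using a separator set of controlled size; (3) bound the number of clusters using $M$; (4) argue the whole thing runs in $O(\logstar n/\eps)$ rounds.

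\textbf{Step 1: Reduce to the cactus.} First I would set $Z=V\setminus D$, which satisfies $Z\subseteq V\setminus(\alpha\oplus\beta)$ trivially (with equality). Every edge of $G$ with both endpoints outside $Z$ lies in $H=G[D]$. Hence if I produce clusters $C_1,\dots,C_{k'}$ and separator $S'$ for $H$ satisfying properties 1–4 relative to $H$ (so no edge of $H$ goes between two clusters), I can extend to $G$ by distributing each ghost vertex $v\in Z$ into an arbitrary cluster containing a neighbor of $v$ in $G$, or, if $v$ has no clustered neighbor, into any cluster — this does not create forbidden edges, since all newly-introduced inter-cluster edges touch a ghost endpoint. (One must be slightly careful that ghost nodes glued to a cluster do not blow up its diameter; this is handled because the decomposition only needs \emph{weak} diameter, and ghost nodes attach at distance $1$, so weak diameter grows by at most an additive $2$ — absorbable into $d=O(1/\eps)$. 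Strong diameter is not required here.)

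\textbf{Step 2: Decompose the cactus, super-fast.} This is the heart of the proof. A connected cactus $H$ has a well-understood block structure: its biconnected components are either single edges or simple cycles, glued together at cut vertices in a tree-like fashion (the block-cut tree). I would first run a super-fast $\logstar$-type procedure (Linial-style / ruling-set based, along the lines of the classical $O(\logstar n)$ MIS-type machinery on bounded-degree-like structure — but here we must handle arbitrary degrees at cut vertices, so the right primitive is an $O(\logstar n)$-round algorithm computing a "ruling" partition of each cycle and each path-like block into arcs of length $\Theta(1/\eps)$) to chop every cycle and every long path in $H$ into arcs of diameter $O(1/\eps)$, cutting at roughly one in every $1/\eps$ vertices; these cut vertices form the separator candidate set. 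The total number of cut-points is at most $\eps\cdot(\text{number of edges of }H)$, and since $H$ is bipartite (as $\alpha\setminus\beta$, $\beta\setminus\alpha$ are independent) with $|D|$ vertices and cactus (hence $|E(H)|\le \tfrac{3}{2}|D|$ roughly, in fact $\le |D|+(\#\text{cycles})\le$ a constant times $|D|$), we get $|S'|=O(\eps|D|)$; adjusting the sampling constant gives exactly $x=\eps|D|$. After removing $S'$, each remaining piece is a subcactus of diameter $O(1/\eps)$, which we take as the clusters. The super-fast part is exactly the classical observation that chopping a path/cycle into blocks of size $\Theta(1/\eps)$ with a consistent "phase" needs only $O(\logstar n)$ rounds via iterated color reduction, then $O(1/\eps)$ rounds to finalize block boundaries — giving the $O(\logstar(n)/\eps)$ bound. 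Cut vertices of the cactus are identified locally (a vertex is a cut vertex of a block iff it is shared by $\ge 2$ blocks, detectable in $O(1)$ rounds), and the block-cut structure lets us run the arc-chopping independently per block.

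\textbf{Step 3: Bounding the number of clusters, and the obstacle.} The main obstacle — and where care is required — is getting \emph{simultaneously} $x=\eps|D|$ and $k=\max\{1,2\eps M\}$ clusters. Chopping naively produces one cluster per arc, i.e., $\Theta(\eps|E(H)|)=\Theta(\eps|D|)$ clusters, which is the right order but $|D|$ can be up to $\approx 2M$, so this only gives $k=O(\eps M)$ with some constant $>2$; more seriously, components of $H$ that are \emph{already} small (diameter $O(1/\eps)$) should be left as a single cluster and must not each cost us a separator vertex. The fix is a postprocessing/merging step: after chopping, greedily merge adjacent clusters along the block-cut tree as long as the merged diameter stays $O(1/\eps)$, and whenever two clusters can be merged across a separator vertex $v$, \emph{reabsorb} $v$ into the merged cluster and delete it from $S'$. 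One shows via a charging argument (each surviving cluster, except possibly a few per component, contains $\Omega(1/\eps)$ vertices of $D$ that are "used up" — either $\Omega(1/\eps)$ from $\alpha$ or $\Omega(1/\eps)$ from $\beta$, since $H$ is bipartite and a diameter-$\Omega(1/\eps)$ arc alternates sides) that the number of clusters is at most $2\eps M$ (each cluster charged to $\Omega(1/\eps)$ vertices of $\alpha$ or of $\beta$, and $|\alpha|,|\beta|\le M$), with the $\max\{1,\cdot\}$ handling the degenerate case $D=\emptyset$ or $D$ tiny (then output a single cluster, empty separator). This merging is again $O(1/\eps)$ rounds after a $\logstar$-preprocessing, because each cluster has diameter $O(1/\eps)$ and the merge pattern can be computed by a constant number of "local elections" within $O(1/\eps)$-balls. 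I expect the delicate bookkeeping — verifying that reabsorbing separator vertices during merges keeps property 4 intact (the reabsorbed $v$ may still have neighbors in other clusters) — to be the subtle point: one must only reabsorb $v$ when \emph{all} its $H$-neighbors lie in the single merged cluster, which is exactly the condition that makes $v$'s arc-endpoints internal, so it is consistent, but it requires the merging to be done in a carefully coordinated, conflict-free order (e.g., by processing the block-cut tree in a BFS layering of depth $O(1/\eps)$, which terminates because within a single diameter-$O(1/\eps)$ region the block-cut tree has bounded "height" relative to the chopping).

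\textbf{Putting it together.} Combining Steps 1–3: we output clusters $\{C_i\}$ (subcacti of $G$, each with ghost nodes glued on, weak diameter $O(1/\eps)$), separator $S=S'$ with $|S|\le\eps|D|=\eps|\alpha\oplus\beta|$, ghost set $Z=V\setminus(\alpha\oplus\beta)$, and $k\le\max\{1,2\eps M\}$ clusters, all properties of Definition~\ref{def:smallSeparator} verified, in $O(\logstar(n)/\eps)$ rounds. This is exactly the statement of the lemma. The only nonstandard ingredient is the super-fast cactus-chopping primitive, which reduces to the classical $O(\logstar n)$-round path/cycle coloring-and-blocking technique applied block-by-block over the (locally computable) block-cut tree of $G[\alpha\oplus\beta]$.
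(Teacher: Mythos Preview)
Your Step~3 has a genuine gap that breaks the cluster-count bound. You propose to merge clusters \emph{along the block-cut tree of $H=G[D]$}, but such merging can never join clusters from different connected components of $H$. Yet $H$ can have far more than $2\eps M$ components: for instance, if $D$ consists of $|D|/2$ isolated edges, you get $|D|/2$ components, and we only know $M\ge |D|/2$, so for small $\eps$ this is $\gg 2\eps M$. Your charging argument (``each surviving cluster, except possibly a few per component, contains $\Omega(1/\eps)$ vertices of $D$'') leaks exactly here: the ``few per component'' exceptions sum to at least the number of components of $H$, which is uncontrolled. Relatedly, your Step~1 fallback of placing a ghost node with no clustered neighbor ``into any cluster'' can destroy the weak-diameter bound, since a vertex in $V\setminus(\alpha\cup\beta)$ has all its neighbors in $\alpha\cap\beta\subseteq Z$ and may be arbitrarily far in $G$ from every $H$-cluster.

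The paper resolves both issues with one move: after decomposing $G[D]$, it runs the cluster-merging subroutine a \emph{second} time, now on the full connected graph $G$ (ghost nodes included, each starting as its own singleton cluster). Because $G$ is connected, this merge can fuse clusters across different components of $H$ via ghost-node paths, and it guarantees that whenever more than one cluster survives, each has diameter $\ge 1/\eps$ in $G$; since any vertex cover of a length-$1/\eps$ path contains $\ge 1/(2\eps)$ vertices, each cluster contains $\ge 1/(2\eps)$ nodes of $\alpha$, giving at most $2\eps M$ clusters. This also automatically places every ghost node in a bounded-weak-diameter cluster. As a secondary remark, your Step~2 (explicit block-cut tree plus arc-chopping of cycles) is a different primitive from the paper's, which instead grows clusters generically from singletons and then bounds the number of inter-cluster edges in one shot via the cactus-minor property ($\le 3k$ inter-cluster edges for $k$ clusters); the paper's route avoids having to reason about high-degree cut vertices or coordinate chopping across blocks.
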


\Cref{thm:cactus_core_graphs} (formal proof in \cref{sec:distributed}) follows by plugging the results of \Cref{thm:variousClassesBatch} and \Cref{lem:cactus_core_decomp}
into  \Cref{gen::lma::decomp_to_sched_computation}.
In the proof of the next lemma (see \cref{sec:distributed}), we merge small diameter clusters until all clusters have diameter of at least $1/\eps$. The cluster merging is inspired by the classic distributed minimum spanning tree algorithms in \cite{GallagerHS83_MST,GarayKP98_MST}.

\begin{restatable}[Cluster Merging]{lemma}{lemClusterMerging} \label{lem:compressingClustering}
	Let  $G=(V,E)$ be a graph with a $d$-diameter clustering $\mathcal{C}= (C_i)_{i\in k}$.  
	For every $\eps\in (0,1)$, there is an  $O\parenths*{\log^* (n)/\eps + d}$-round \LOCAL algorithm that 
	using $\mathcal{C}$, computes a $\parenths*{O(1/\eps)+d}$-diameter clustering $\mathcal{C'}$, where additionally, each cluster either has diameter at least $1/\eps$ or consists of a whole connected component of $G$.
\end{restatable}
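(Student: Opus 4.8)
\emph{Setup.} I would treat each connected component of $G$ separately, since the required guarantees are per-component and disconnected components do not interfere; so assume $G$ is connected. Form the \emph{cluster graph} $H$ with vertex set $\{C_1,\dots,C_k\}$ and $C_i\sim C_j$ whenever some edge of $G$ joins $C_i$ and $C_j$; then $H$ is connected. The plan is to output $\mathcal{C}'$ as a \emph{coarsening} of $\mathcal{C}$ in which each new cluster is a union of original clusters forming a subtree of a spanning forest of $H$; since each $H$-edge used realizes a $G$-edge and each $C_i$ is internally connected, every merged cluster is automatically connected in $G$. The two nontrivial things to control are the diameter of each merged cluster -- which I want to keep \emph{additive}, $O(1/\eps)+d$, rather than multiplicative in $d$ -- and the dichotomy ``diameter $\ge 1/\eps$ or a whole connected component''.

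\emph{Leaders and territories.} Put $r=\Theta(1/\eps)$ with a large enough hidden constant. In each $C_i$ designate a center node $c_i$ (so every node of $C_i$ is within distance $d$ of $c_i$). Compute a maximal set $R\subseteq\{c_1,\dots,c_k\}$ of centers that are pairwise at $G$-distance $>r$ (a $\Theta(1/\eps)$-net of the centers); by maximality every center is within $G$-distance $r$ of some member of $R$. Using standard distributed ruling-set / symmetry-breaking machinery at distance $r$ -- in the spirit of the MST mergers of \cite{GallagerHS83_MST,GarayKP98_MST} together with Cole--Vishkin-style deterministic coin tossing along the merge structure -- this costs $O(\log^*(n)/\eps)$ rounds, and it is the only place symmetry breaking is needed. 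Call a cluster a \emph{leader} if its center is in $R$. Now run BFS in $G$ from all leader centers simultaneously; this has depth at most $r+d$ (every node is within $r+d$ of a leader center) and costs $O(1/\eps+d)$ rounds. Each cluster $C_j$ then attaches to the neighboring cluster lying one step closer (in $G$) to $C_j$'s nearest leader center, ties broken by ID; this gives a spanning forest of $H$ rooted at the leaders, and each rooted subtree is the candidate merged cluster.

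\emph{Diameter bound and cleanup.} If $v\in C_j$ lies in the territory of leader $c_i$, then $c_i$ is $v$'s nearest leader; since $c_j$ is within $r$ of \emph{some} leader center and $v$ is within $d$ of $c_j$, we get $\mathrm{dist}_G(v,c_i)\le r+d$. Hence every territory is contained in a $G$-ball of radius $r+d$ about its leader center, so it has diameter at most $2(r+d)=O(1/\eps)+2d$. Because $R$ is $\Theta(1/\eps)$-separated, any territory whose leader has another leader within range $O(1/\eps)$ must reach out $\Omega(1/\eps)$ along the corresponding shortest path, so its diameter is $\ge 1/\eps$. The only possible undersized territories are thus ``isolated'' ones. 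For the cleanup: form the graph on undersized territories together with their $G$-adjacencies; each undersized territory adjacent to a large one merges into such a neighbor -- since it has diameter $<1/\eps\le r$ and touches the large territory, this raises the large cluster's diameter by only $O(1/\eps)$, \emph{independently of how many} undersized territories merge in, keeping it $O(1/\eps)+d$. Any maximal family of undersized territories with no large neighbor is closed under $G$-adjacency, hence is an entire connected component of $G$; moreover it contains a single leader (two leaders would each be $>r$ apart and thus have diameter $\ge 1/\eps$, a contradiction), so by the net property it lies in a ball of radius $r+d$ and has diameter $O(1/\eps)+d$ -- output it as one cluster, which is a whole component. Propagating these merges takes $O(1/\eps+d)$ further rounds, giving total round complexity $O(\log^*(n)/\eps+d)$ and all claimed properties.

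\emph{Main obstacle.} The delicate point is making the three requirements coexist: connectivity of the merged clusters, an \emph{additive} $O(1/\eps)+d$ diameter bound (a naive ``merge $1/\eps$ hops of clusters of diameter $d$'' yields the useless $O(d/\eps)$), and a genuine $\ge 1/\eps$ lower bound on the diameter of every merged cluster that is not a whole component. The leader/net construction is precisely what reconciles them -- the ball-radius accounting ``$r+d$ from one's leader'', the choice of the separation parameter $r=\Theta(1/\eps)$, and the treatment of undersized ``isolated'' territories all have to line up. The $O(\log^* n)$ factor is inherited from computing the $\Theta(1/\eps)$-net and, by the paper's own $\Omega(\log^* n)$ lower bound (\Cref{thm:logstarLB}), cannot be removed.
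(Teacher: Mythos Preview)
The central gap is the claimed $O(\log^*(n)/\eps)$-round computation of the $r$-net $R$. What you describe is a maximal independent set among the centers in the graph where two centers are adjacent whenever their $G$-distance is at most $r$; equivalently, an MIS in (an induced subgraph of) $G^r$. On general graphs this is \emph{not} known to be achievable in $O(r\log^* n)$ deterministic rounds: the $O(\log^* n)$ symmetry-breaking bounds of Cole--Vishkin type apply only to very restricted structures (oriented lists, rooted trees, pseudoforests, bounded-degree graphs), and $G^r$ need not be any of these. Your appeal to ``standard ruling-set machinery in the spirit of \cite{GallagerHS83_MST,GarayKP98_MST}'' does not close this: the Bor\r{u}vka/GHS paradigm is \emph{iterative} fragment merging, not a one-shot distance-$r$ net, and it is precisely the iterative structure that keeps each symmetry-breaking instance on a pseudoforest where $O(\log^* n)$ is available.

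The paper's proof exploits exactly this distinction. It runs $\lceil\log(1/\eps)\rceil$ phases; in phase $t$, every ``small'' cluster (diameter $\le 2^t$) points to one adjacent small cluster, yielding a \emph{pseudoforest} on the cluster graph, on which an MIS is computed in $O(\log^* n)$ cluster-graph rounds ($O(2^t\log^* n)$ real rounds) to govern the merges. Summing the geometric series gives $O(\log^*(n)/\eps)$. Crucially, only small clusters ever merge in Stage~1, so after Stage~1 every cluster has diameter at most $\max\{d,\,O(1/\eps)\}$; a single final absorption of leftover small clusters into large neighbours (Stage~2) adds only $O(1/\eps)$, giving $d+O(1/\eps)$. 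Your Voronoi-radius argument, by contrast, gives territory diameter $2(r+d)=O(1/\eps)+2d$, which does not match the stated $O(1/\eps)+d$; the paper avoids the factor~$2$ on $d$ because an original cluster of diameter $d>1/\eps$ never participates in Stage~1 and is touched only once, in Stage~2.
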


The (somewhat technical)  proof of \Cref{lem:cactus_core_decomp} is in \cref{sec:distributed}. We sketch it below. 
 
\begin{proof}[Proof sketch for \Cref{lem:cactus_core_decomp}]
In short, to compute the desired decomposition, we begin with the trivial decomposition, where each node of $G$ forms its own $0$-diameter cluster, and the trivial (but large) separator set $S=\alpha\oplus\beta$; the number of clusters is 
upper bounded by $n$.
Next, in two steps, we first merge the clusters of $G[\alpha\oplus\beta]$ via \Cref{lem:compressingClustering}, in order to reduce the number of separating nodes, and then we merge all clusters of $G$ via \Cref{lem:compressingClustering}, in order to reduce the number of clusters. We detail on the crucial ingredients for the whole process. 

\textbf{$(d,x)$-separator decomposition $\mathcal{C}_1$ with $d = 0$ and $x =  |\alpha\oplus\beta|$, $k=n$:}
Each node of $G$ forms its own $0$-diameter cluster and we use  separator set $S = \alpha\oplus\beta$.

\textbf{$(d,x)$-separator decomposition $\mathcal{C}_2$ with $d = O(1/\eps)$ and $x = \eps |\alpha\oplus\beta|$, $k=n$:}
We consider the induced graph $G[\alpha\oplus\beta]$ and apply \cref{lem:compressingClustering} on $\mathcal{C}_1$ restricted to $G[\alpha\oplus\beta]$, where every node in $S$ is treated as a $0$-diameter cluster, to compute the clusters of the $O(1/\eps)$-diameter clustering $\mathcal{C}_2$, in which each cluster either contains at least $1/\eps$ nodes, or a complete connected component of $G[\alpha\oplus\beta]$. If we ignore (for the sake of simplifying this proof sketch) the clusters of $\mathcal{C}_2$ that form complete connected components, we can upper bound the number of clusters by  $x=O(\eps |\alpha\oplus\beta|)$  (as each cluster contains at least $1/\eps$ nodes).  Since cactus graphs are closed under minor operations and any $x$-node cactus graph contains at most $3x/2$ edges, the \emph{cluster graph} $H$, in which each cluster of $\mathcal{C}_2$ is contracted to a single \emph{cluster node}, and two such cluster nodes have an edge if any of their nodes are neighbors in $G$,  has at most $3x/2$ edges. As any edge in a cactus graph is contained in at most $2$ cycles, there can be at most $2$ edges between any two clusters of $\mathcal{C}_2$ in $G$, and thus any edge of $H$ corresponds to at most two inter-cluster edges in $G$ (but any inter-cluster edge in $G$ has to correspond to at least one edge in $H$).  Hence, the $3x/2$ edges of $H$ imply a bound of $6x/2$ on the inter-cluster edges between the clusters of $\mathcal{C}_2$.
To get $\mathcal{C}_2$, add one of the endpoints of each inter-cluster edge to the separator set $S$, and see that this decomposition of $G[\alpha\oplus\beta]$, combined with the decomposition $\mathcal{C}_1$ restricted to $G[V\setminus(\alpha\oplus\beta)]$, gives the desired result (see \Cref{lem:cactus_decomp} for details of this step).

\textbf{$(d,x)$-separator decomposition $\mathcal{C}_3$ with $d = O(1/\eps)$ and $x = \eps |\alpha \oplus \beta|$, $k=2\eps M$:}
We merge clusters using \cref{lem:compressingClustering} again and form a clustering with a small number of clusters, while increasing the diameter of each cluster by at most $O(1/\eps)$.
Finally, note that the separator set $S$ of $\mathcal{C}_2$ also separates the computed $(d+O(1/\eps))$-diameter clustering, and gives the desired decomposition (see \Cref{lem:newSeparatorDecomp} for details).
\end{proof}

\subsection{Distributed Computation of Greedy Schedules}\label{sec:mainarboverview}

We conclude with a brief description of the ideas behind \cref{thm:arbOvershooting}. We want to distributively simulate the following greedy algorithm from \cref{thm:arboricitybounds}: process $\alpha$-nodes in an increasing order of degree (in $G[\alpha\oplus\beta]$), and for each of them, say $v$, add all $\beta$-neighbors of $v$ to the vertex cover, then remove $v$. A na\"ive idea is, for rounds $i=1$ to $\Delta$ (max. degree), add  the $\beta$-neighbors of all $\alpha$-nodes of degree $i$ to the vertex cover (in parallel), then remove those $\alpha$-nodes. This does not work, e.g., if all nodes have the same degree. 

To fix this, we need to process $\alpha$-nodes in small groups (of size $\eps \max\{|\alpha|,|\beta|\}$), but still in the degree order. This is achieved by computing an $O(1/\eps)$-diameter clustering, $C_1,\dots,C_k$, with \cref{lem:compressingClustering}, and in each cluster $C$, letting an $\eps$-fraction  of $\alpha$-nodes of given degree $i$, $A_{i,j}^C\subseteq C$, be processed (for $j=1,\dots,O(1/\eps)$ iterations), which effectively means an $\eps$ fraction of \emph{all} degree $i$ nodes are processed in parallel (namely, $\bigcup_{\text{clusters } C} A_{i,j}^C$). Overall, we get an $O(\Delta/\eps)$-batch schedule with the desired approximation (essentially, using the proof of \cref{thm:arboricitybounds}). To reduce the schedule length to $O(\lambda/\eps^2)$, we use the fact that there is only a small fraction of nodes with degree much larger than the arboricity, and those nodes can be processed in a single batch. More formally, we first add to the vertex cover the set $\beta'$ of all $\beta$-vertices of degree larger than $\lambda/\eps$ (they form an $\approx\eps$ fraction of all), then using the bounded degree of vertices in $\beta\setminus \beta'$, we construct a $O(\lambda/\eps^2)$-batch   $(\beta\setminus\beta')$-to-$\alpha$ schedule using the algorithm described above, and revert it  (cf. \cref{obs:reverseschedule}), to finally obtain an $\alpha$-to-$\beta$ schedule. See \cref{ssec:overshooting} for details.

\appendix

\clearpage

\section{Roadmap of Appendix} 
The appendix is structured as follows. 
\begin{itemize}
	\item \cref{app:sequential}: We give full proofs of existential results on various graph classes. 
	
	In \Cref{app:thmsequential} we prove \Cref{thm:sequential} which states the existence of good reconfiguration schedules for cactus graphs and planar graphs. 
	
	In \Cref{ssec:degbased} we show that $(1+\eps)$-type approximations \'a la \Cref{thm:variousClassesBatch} do not extend to bounded arboricity graphs, that is, we prove \Cref{thm:arboricitybounds}. 
	\item \cref{sec:distributed}: We give complementary information for \cref{sec:tecDist}. 
	\item \cref{sec:SLOCAL}: We extend the results of \cref{sec:tecDist} and show how to compute a small separator decomposition on general graphs in $\poly\log n$ rounds. This result can, e.g., be used to compute an $O(1/\eps)$-batch $(1+\eps,O(1/\eps))$-approximation schedules on planar graphs (\cref{thm:distrPlanar}).
	\item \cref{ssec:overshooting}: We show that the quality of our existential approximate schedules for bounded arboricity graphs  can be computed efficiently by a distributed algorithm (\Cref{thm:arbOvershooting}); this result does not use small separator decompositions. 
	\item \cref{app:distrLowerbounds}: We give lower bounds for the run-time of the distributed computation of a reconfiguration schedule. In particular, we show that our distributed algorithms that compute reconfiguration schedules on cactus graphs are asymptotically tight,  and we show that there are  graph classes on which no efficient distributed algorithm can obtain the best (or almost best) existing schedule. 
\end{itemize}

\section{Sequential Reconfiguration Schedules for Various Graph Classes}\label{app:sequential}

In this section, we prove \Cref{thm:sequential,thm:arboricitybounds}. We begin with the latter theorem. It is based on repeated decompositions of graphs and recursive computation of schedules, and leverages the presence of small separators in planar and cactus graphs. The second one studies particular schedules that process the nodes based on the degree sequence.

\subsection{Separable Graphs (Proof of Theorem \Cref{thm:sequential})}
\label{app:thmsequential}

We construct reconfiguration schedules for classes with small separators with an additive approximation that depends on the size of the separators. The same results also follow from the bound of~\cite{deBergJM16} in terms of the \emph{pathwidth} (via a known connection between separators and pathwidth, see e.g.,~\cite[Theorems 20,21]{treewidth}). For the particular class of cactus graphs tighter results can be obtained: such graphs have  $(1,2)$-approximation schedules~\cite{ItoNZ16}. We give a simple and unified proof for both results.

\seqSched*

In the discussion below, we will only consider \emph{monotone} sequential schedules, therefore, we assume that we are given a  graph $G$ with vertex covers $\alpha$ and $\beta$, $\alpha\cap\beta=\emptyset$. The latter implies that every edge $e$ in $G$ has one $\alpha$-vertex and one $\beta$-vertex.

We will consider hereditary graph classes that have small separators. Recall that a graph class is \emph{hereditary} if it is closed under taking induced subgraphs.

Let $f:\mathbb{N}\rightarrow \mathbb{R}_+$ be a positive non-decreasing function. A graph class $\mathcal{F}$ is \emph{$f$-separable} if there is a constant $n_0>0$ such that for every $n>n_0$ and an $n$-node graph $G=(V,E)\in \mathcal{F}$, there is a subset $V'\subseteq V$ of vertices of size $|V'|\le f(n)$ such that $G-V'$ consists of two subgraphs $G_1=(V_1,E_1)$ and $G_2=(V_2,E_2)$, and there is no edge in $G$ between $V_1$ and $V_2$, and $\max(|V_1|,|V_2|)\le 2n/3$. We call $V'$ a \emph{separator}.

Our proofs follow by a inductive application of \cref{lem:discon}.

First, we give a simple generalization of~\cite[Thm. 2]{ItoNZ16}. 
\begin{theorem}\label{thm:bicon}
Let $\mathcal{G}$ be a hereditary graph class, and $\eta\in [1,2]$ and $k\ge 1$ be reals. Assume that for every 2-connected graph $G\in \mathcal{G}$ and vertex covers $\alpha,\beta$ in $G$, there is an $(\eta,k)$-approximation reconfiguration schedule. Then, such schedules exist for every graph in $\mathcal{G}$ with any pair of vertex covers.
\end{theorem}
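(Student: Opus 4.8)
The plan is to prove this by strong induction on $|V(G)|$, repeatedly splitting $G$ with \cref{lem:discon} until only $2$-connected pieces (handled by the hypothesis) and trivial pieces remain; this is the natural generalization of the block-decomposition argument of~\cite[Thm.~2]{ItoNZ16}. Before the induction I would reduce to disjoint covers: restricting to $G[\alpha\oplus\beta]$ with covers $\alpha\setminus\beta$ and $\beta\setminus\alpha$ stays inside $\mathcal{G}$ by hereditariness, and since a schedule there of cost $\le\eta\max\{|\alpha\setminus\beta|,|\beta\setminus\alpha|\}+k$ extends to $G$ (freezing $\alpha\cap\beta$) with cost $\le\eta\max\{|\alpha|,|\beta|\}+k$ using $\eta\ge1$, it suffices to treat the case $\alpha\cap\beta=\emptyset$; then $G$ is bipartite with sides $\alpha$ and $\beta$. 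In the induction, graphs with at most $2$ vertices admit a trivial one-batch schedule of cost $\le2\le\eta+k$ (here $k\ge1$ is used), and connected $2$-connected graphs are covered directly by the hypothesis.

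For a disconnected $G$, I would let $V_1$ be the vertex set of one connected component and $V_2$ the union of the others. There are no edges between $V_1$ and $V_2$, so \cref{lem:discon} applies in at least one of the two orderings of the pair: of the inequalities $|\beta_1|-|\alpha_1|\le|\beta_2|-|\alpha_2|+1$ and the one obtained by swapping indices, at least one must hold (if both failed we would get $a>b+1$ and $b>a+1$ for the two signed differences $a,b$, a contradiction). Recursing on the two strictly smaller induced subgraphs (both in $\mathcal{G}$, with disjoint restricted covers) and concatenating the resulting schedules via \cref{lem:discon} finishes this case.

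The core case is a connected $G$ with a cut vertex $c$; by \cref{obs:reverseschedule} I may assume $c\in\alpha$ (otherwise reconfigure $\beta\to\alpha$ and reverse). Let $D_1,\dots,D_t$ with $t\ge2$ be the components of $G-c$, put $\delta_i=|\beta\cap D_i|-|\alpha\cap D_i|$ and $\Sigma=\sum_i\delta_i$, and choose an index set $I$ as follows: if $\Sigma\ge0$ take $I=\{i^*\}$ with $\delta_{i^*}$ minimum, and if $\Sigma<0$ take $I=\{1,\dots,t\}\setminus\{i^*\}$ with $\delta_{i^*}$ maximum. Set $V_1=\bigcup_{i\in I}D_i$ and $V_2=V\setminus V_1$, so that $c\in V_2$ and every $V_1$--$V_2$ edge passes through $c$; since $G$ is bipartite and $c\in\alpha$, no such edge joins $V_1\cap\alpha$ to $V_2$, so the separation hypothesis of \cref{lem:discon} holds. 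A direct computation gives $(|\beta_1|-|\alpha_1|)-(|\beta_2|-|\alpha_2|+1)=2\sum_{i\in I}\delta_i-\Sigma$, which the choice of $I$ makes nonpositive: when $\Sigma\ge0$ we use $\min_i\delta_i\le\Sigma/t\le\Sigma/2$, and when $\Sigma<0$ we use $\max_i\delta_i\ge\Sigma/t\ge\Sigma/2$ (both via $t\ge2$). Thus \cref{lem:discon} applies, the induced subgraphs $G[V_1],G[V_2]\in\mathcal{G}$ are strictly smaller with disjoint restricted covers, the induction hypothesis supplies $(\eta,k)$-approximation schedules for each, and their concatenation is the desired schedule for $G$.

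The main obstacle is precisely the balance bookkeeping in the cut-vertex case: \cref{lem:discon} is asymmetric, and the separation condition forces $c$ to lie on the $V_2$ side, so one cannot repair a failing inequality by swapping $V_1\leftrightarrow V_2$ as in the disconnected case; instead one must exploit the freedom in the choice of $I$, together with the schedule-reversal trick to normalize $c\in\alpha$, to guarantee $|\beta_1|-|\alpha_1|\le|\beta_2|-|\alpha_2|+1$ regardless of the sign of $\Sigma$. The remaining ingredients---closure of $\mathcal{G}$ under induced subgraphs, preservation of disjointness and of the vertex-cover property under restriction, and the strict decrease of $|V(G)|$---are routine.
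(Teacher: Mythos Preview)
Your argument is correct (modulo the harmless slip that the two-vertex base case is a two-batch, not one-batch, schedule; the cost bound $2\le \eta+k$ is what matters and holds since $\eta\ge 1$, $k\ge 1$). The reduction to disjoint covers, the disconnected case, and the balance computation $2\sum_{i\in I}\delta_i-\Sigma\le 0$ under your choice of $I$ all check out.

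However, the paper's proof is considerably simpler at the cut-vertex step, and it is worth seeing why. The paper does \emph{not} normalize $c\in\alpha$ via reversal, and it does \emph{not} choose a clever subset $I$ of components. Instead it first picks the labeling of the two sides so that $|V_1\cap\beta|-|V_1\cap\alpha|\le |V_2\cap\beta|-|V_2\cap\alpha|$ (a free WLOG on the components of $G-v$), and then places the cut vertex $v$ itself: into $V_1$ if $v\in\beta$, into $V_2$ otherwise. This single move simultaneously preserves the separation hypothesis of \cref{lem:discon} (the only possible $V_1$--$V_2$ edges touch $v$, and $v$ is never an $\alpha$-vertex on the $V_1$ side with a $\beta$-neighbor on the $V_2$ side) and shifts the balance by at most $+1$, giving the needed inequality directly. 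Your approach expends the freedom in how to group the components of $G-c$; the paper expends the freedom in where to put $c$. The paper's version avoids the case split on the sign of $\Sigma$, the min/max averaging, and the reversal trick, at the cost of relying on the slightly weaker (but sufficient) reading of the separation hypothesis in \cref{lem:discon}, namely that no $\alpha$-node in $V_1$ has a $\beta$-neighbor in $V_2$.
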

\begin{proof}
    We prove the claim by induction on the graph size, the base cases being either graphs with at most 2 vertices, or 2-connected graphs in $\mathcal{G}$. Graphs with at most 2 vertices clearly have $(1,1)$-approximation schedules, while $2$-connected graphs have $(\eta,k)$-approximation schedules, by assumption. Let $G$ be any graph in $\mathcal{G}$ with at least 3 vertices and with a vertex $v$ such that $G-v$ is disconnected (i.e., $G$ is not 2-connected). Let $G_1=(V_1,E_1),G_2=(V_2,E_2)$ be two subgraphs (each with at least one vertex) of $G-v$ with no edge between them. Assume, without loss of generality, that $|V_1\cap\beta|-|V_1\cap \alpha|\le |V_2\cap\beta|-|V_2\cap \alpha|$. Let $V'_1=V_1\cup \{v\}$ and $V'_2=V_2$, if $v\in \beta$, and $V'_1=V_1$ and $V'_2=V_2\cup \{v\}$, otherwise. Clearly, $|V'_1\cap\beta|-|V'_1\cap \alpha|\le |V'_2\cap\beta|-|V'_2\cap \alpha|+1$. Since $|V'_1|,|V'_2|<|V|$, by the inductive assumption, there are $(\eta,k)$-approximation schedules $\mathcal{S}_1$ and $\mathcal{S}_2$ for $G[V'_1],V'_1\cap\alpha,V'_1\cap\beta$, and $G[V'_2],V'_2\cap\alpha,V'_2\cap\beta$, respectively; thus, \Cref{lem:discon} applies, proving the induction, i.e., that there is an $(\eta,k)$-approximation schedule for $G,\alpha,\beta$.
\end{proof}

\subsubsection{Cactus Graphs and Forests (Cactus part of \Cref{thm:sequential})}

 \Cref{thm:bicon} 
 implies that every cactus graph 
admits monotone $(1,2)$-approximation schedules, and every forest admits  $(1,1)$-approximation schedules. The theorem asserts that it suffices to limit ourselves to 2-connected instances in each class. In forests, there are no 2-connected instances with more than 2 vertices, so we are done. 
For cacti, it is well-known that every $2$-connected cactus is a cycle graph \cite{geller_manvel_1969}, and it is easy to find a $(1,2)$-approximation schedule for any cycle: add any $\beta$-vertex, to reduce the question to a path, which by the remark above, has a $(1,1)$-approximation schedule.

\begin{theorem}[Cactus part of \Cref{thm:sequential}]
Let $G$ be a cactus graph with vertex covers $\alpha$ and $\beta$. There is a monotone sequential $(1,2)$-approximation schedule for $\alpha,\beta$.
\end{theorem}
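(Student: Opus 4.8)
The plan is to derive the cactus case of \Cref{thm:sequential} directly from \Cref{thm:bicon}. Since the class of cactus graphs is minor-closed (it is the class of diamond-minor-free graphs), it is in particular hereditary, so \Cref{thm:bicon} applies with $\eta=1$ and $k=2$: it suffices to exhibit a monotone $(1,2)$-approximation schedule for every $2$-connected cactus graph with an \emph{arbitrary} pair of vertex covers. As set up at the start of the section, I may assume $\alpha\cap\beta=\emptyset$: a monotone schedule never touches $\alpha\cap\beta$ nor $V\setminus(\alpha\cup\beta)$, so I restrict to $H=G[\alpha\oplus\beta]$, whose parts $\alpha\setminus\beta$ and $\beta\setminus\alpha$ are disjoint vertex covers of $H$ (the graph is bipartite with these parts), solve there, and then re-add the $|\alpha\cap\beta|$ always-present vertices to every intermediate set. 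The latter step keeps every set a vertex cover of $G$ (edges inside $\alpha\oplus\beta$ are covered by the $H$-schedule; every other edge has an endpoint in $\alpha\cap\beta$), keeps batches independent, and only shifts the cost from $\max(|\alpha\setminus\beta|,|\beta\setminus\alpha|)+2$ up to $\max(|\alpha|,|\beta|)+2$, i.e.\ it remains a $(1,2)$-approximation with $M=\max(|\alpha|,|\beta|)$ intact.

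Next I invoke the structural fact \cite{geller_manvel_1969} that every $2$-connected cactus graph on at least $3$ vertices is a cycle $C_n$ (graphs on at most $2$ vertices being a base case of \Cref{thm:bicon} already). For such a $C_n$ with vertex covers $\alpha,\beta$, if $\alpha\oplus\beta$ is a proper subset of $V(C_n)$ then $G[\alpha\oplus\beta]$ is a disjoint union of paths, hence a forest, and the $(1,1)$-approximation for forests already suffices; so the only genuine case is the even cycle ($n$ even, since $C_n$ is then bipartite) with $\alpha$ and $\beta$ being the two color classes, each of size $M=n/2$.

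For the even cycle the schedule is the one sketched before the statement: first add an arbitrary vertex $v\in\beta$ to the cover as a single-vertex batch, reaching a cover of size $M+1$; this opens the cycle into a path $P=C_n-v$. Both $\alpha$ and $\beta\setminus\{v\}$ are vertex covers of $P$ (an edge of $C_n$ covered in $\beta$ only by $v$ is incident to $v$ and hence absent from $P$), and since $P$ is a forest it has a monotone $(1,1)$-approximation schedule from $\alpha$ to $\beta\setminus\{v\}$, of cost at most $\max(|\alpha|,|\beta|-1)+1=M+1$ within $P$. I then add $v$ to every set of this schedule: each resulting set is a vertex cover of $C_n$ (the $P$-edges are covered by the path schedule, the two edges at $v$ by $v$ itself), its cost in $C_n$ is at most $M+2$, the batches remain singletons (hence independent sets), the first set equals $\alpha\cup\{v\}$ and the last equals $\beta$, and since each vertex of $P$ is reconfigured at most once and $v$ is added exactly once at the start, the concatenated schedule is monotone. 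This produces the required monotone $(1,2)$-approximation schedule for the cycle and completes the hypothesis of \Cref{thm:bicon}.

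The routine feasibility and counting checks aside, the only place that needs genuine care is the bookkeeping of the two reductions — ``restrict to $G[\alpha\oplus\beta]$, solve, re-add $\alpha\cap\beta$'' and ``open the cycle at $v$, solve on $P$, re-add $v$'' — namely verifying that each preserves feasibility at \emph{every} intermediate step (including the internal orderings of batches, which here are trivial as the batches are singletons) and that the additive overhead stays exactly $2$ rather than accumulating, together with the observation that the first reduction is precisely what lets us feed \Cref{thm:bicon} an arbitrary (not necessarily disjoint) pair of vertex covers on a $2$-connected cactus. I expect no serious obstacle beyond this.
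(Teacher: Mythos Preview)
Your proposal is correct and follows essentially the same approach as the paper: apply \Cref{thm:bicon} to the hereditary class of cactus graphs, reduce the $2$-connected case to cycles via \cite{geller_manvel_1969}, and handle a cycle by adding one $\beta$-vertex to open it into a path and then invoking the $(1,1)$-approximation for forests. The paper's argument is terser, but your added bookkeeping (the reduction to disjoint covers, the forest-versus-full-cycle case split on $\alpha\oplus\beta$, and the explicit cost tracking) is exactly the detail the paper leaves implicit.
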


\subsubsection{Planar and  Separable Graphs (Planar part of \Cref{thm:sequential})}

In the proof of the following theorem we use  \Cref{lem:discon}. 
\begin{theorem}\label{thm:separable}
Let $\mathcal{F}$ be a hereditary $f$-separable class, for a non-decreasing function $f$. For every $n\in \mathbb{N}$, let  $T(n)$ be the minimum value such that for every $n$-vertex graph $H\in \mathcal{F}$, there is a $(1,T(n))$-approximation schedule for any two disjoint vertex covers in $H$. 
Then there is a constant $n_0>0$ such that for every $n>n_0$, 
$T(n)\le T(\lfloor 2n/3\rfloor)+f(n)$. In particular, $T(n)\le c+f(n)\log_{3/2} n$ holds for a constant $c>0$.
\end{theorem}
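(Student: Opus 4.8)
\textbf{Proof plan for Theorem \ref{thm:separable}.}
The plan is to induct on the number of vertices $n$ and build the schedule recursively using the separator structure, invoking \Cref{lem:discon} at each level of the recursion. First I would fix the constant $n_0$ to be the threshold from the definition of $f$-separability, so that for $n>n_0$ every $n$-vertex graph $H\in\mathcal{F}$ admits a separator $V'$ with $|V'|\le f(n)$ splitting $H-V'$ into $G_1=(V_1,E_1)$, $G_2=(V_2,E_2)$ with no edges between them and $\max(|V_1|,|V_2|)\le 2n/3$. Given two disjoint vertex covers $\alpha,\beta$ of $H$, I would distribute $V'$ between the two sides to balance the ``surplus'' quantity $|V_i\cap\beta|-|V_i\cap\alpha|$: assign each vertex of $V'$ that lies in $\beta$ to whichever side currently has the smaller surplus (and symmetrically for $\alpha$-vertices of $V'$), obtaining a partition $V=V_1'\mathbin{\dot\cup} V_2'$ with no edge between $V_1'\setminus\alpha\text{-part}$ issues — more precisely, arrange that there is no edge between $V_1'\cap\alpha$ and $V_2'$, and that $||V_1'\cap\beta|-|V_1'\cap\alpha|) - (|V_2'\cap\beta|-|V_2'\cap\alpha|)|\le 1$. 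Since $|V_i'|\le 2n/3 + f(n)$, I would want this to still be $\le\lfloor 2n/3\rfloor$ after also noting the separator is absorbed; the cleaner bookkeeping is simply $|V_i'|<n$ combined with a size argument, but to get the stated recurrence one keeps the two recursive instances to be induced subgraphs on at most $\lfloor 2n/3\rfloor+f(n)$ vertices and handles the $f(n)$ separator vertices as the additive slack.

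The core step is then: by the inductive hypothesis (applied to the hereditary subclass, using $f$-separability and that induced subgraphs inherit $f$-separability), each $G[V_i']$ with disjoint vertex covers $V_i'\cap\alpha$, $V_i'\cap\beta$ admits a $(1,T(|V_i'|))$-approximation schedule, hence a $(1,T(\lfloor 2n/3\rfloor))$-approximation schedule if $|V_i'|\le\lfloor2n/3\rfloor$. Choosing the side assignment so that one side, say $V_1'$, carries the separator (or, more carefully, splitting the separator so both sides stay within $\lfloor 2n/3\rfloor + f(n)$ and bounding $T$ monotonically), \Cref{lem:discon} with $\eta=1$ and $k=T(\lfloor 2n/3\rfloor)+f(n)$ — after the balancing guarantees the surplus hypothesis $|\beta_1|-|\alpha_1|\le|\beta_2|-|\alpha_2|+1$ and the no-edge hypothesis between $V_1'\cap\alpha$ and $V_2'$ — yields that the concatenation of the two schedules is a $(1,T(\lfloor 2n/3\rfloor)+f(n))$-approximation schedule for $H$. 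This establishes $T(n)\le T(\lfloor 2n/3\rfloor)+f(n)$ for all $n>n_0$. Unrolling the recursion: the depth is $\log_{3/2} n$, each level contributes at most $f(n)$ (using that $f$ is non-decreasing, so every $f$ value encountered is $\le f(n)$), and the base cases $n\le n_0$ contribute a constant $c=T(n_0)=O(1)$; hence $T(n)\le c+f(n)\log_{3/2} n$.

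The main obstacle I anticipate is the size bookkeeping: the separator has $f(n)$ vertices that must go somewhere, so naively $|V_i'|$ could be as large as $2n/3+f(n)$ rather than $\lfloor 2n/3\rfloor$, which does not literally match the stated recurrence. The fix is to observe that one only needs $|V_i'|<n$ to apply induction (so the recursion terminates), and then to absorb the $+f(n)$ into the additive term: either put \emph{all} separator vertices on the side whose recursive instance we then treat more carefully, or, cleanest, re-derive the unrolled bound directly by noting that at each of the $O(\log n)$ recursion levels the graph size drops by a factor $2/3$ (up to the lower-order separator contribution), and $\sum_{\text{levels}} f(\cdot)\le f(n)\log_{3/2}n$. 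The other point requiring care is the precise direction of \Cref{lem:discon}'s asymmetry — ``no edge between $V_1\cap\alpha$ and $V_2$'' — which forces the separator/side assignment to respect both the $\alpha/\beta$ membership of separator vertices and the surplus-balancing inequality simultaneously; a short case analysis on where a separator vertex lies ($\alpha$ or $\beta$) and which side has smaller surplus resolves this.
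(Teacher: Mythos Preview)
Your plan has a genuine gap in the step where you distribute the separator $V'$ between the two sides and then invoke \Cref{lem:discon}. The lemma's edge hypothesis --- no edge between $V_1'\cap\alpha$ and $V_2'$ --- is \emph{not} something you can arrange by a ``short case analysis'' on where each separator vertex goes. A separator $\alpha$-vertex may have $\beta$-neighbours in both $V_1$ and $V_2$; whichever side you assign it to, it either sits in $V_1'\cap\alpha$ with a neighbour in $V_2'$, or the symmetric violation appears. The only assignment that respects the edge hypothesis is the rigid one ``all of $V'\cap\beta$ to the first side, all of $V'\cap\alpha$ to the second side'' (using that $\alpha$ and $\beta$ are each independent), but then you have \emph{no} freedom left to balance the surplus, and it is easy to cook up instances where the surplus inequality fails for both labellings of $V_1,V_2$. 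So the simultaneous balancing you rely on is not achievable in general.

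The paper's route sidesteps both this issue and your size-bookkeeping headache: it does \emph{not} absorb $V'$ into the recursive instances at all. It applies \Cref{lem:discon} directly to $G_1=G[V_1]$ and $G_2=G[V_2]$ --- between which there are no edges whatsoever, so the edge hypothesis is trivial and one of the two orderings automatically satisfies the surplus inequality --- obtaining a $(1,T(\lfloor 2n/3\rfloor))$-schedule for $G_1\cup G_2$. The separator is then handled na\"ively by bracketing: add $V'\cap\beta$ at the start and remove $V'\cap\alpha$ at the end. This contributes exactly the additive $|V'|\le f(n)$ and yields $T(n)\le T(\lfloor 2n/3\rfloor)+f(n)$ on the nose, with the recursive instances genuinely of size at most $\lfloor 2n/3\rfloor$. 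Your unrolling of the recurrence is fine; the fix is purely in how the separator is treated at each level.
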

\begin{proof}
Let $G\in \mathcal{F}$ be an $n$-vertex graph with vertex covers $\alpha,\beta$, for which there is no  $(1,T(n)-1)$-approximation schedule. 
They exist by the minimality of $T(n)$. Let $V'$ be a 
separator of size $|V'|\le f(n)$  and $G_1=(V_1,E_1),G_2=(V_2,E_2)$ be the two disconnected subgraphs of $G-V'$ with $\max(|V_1|,|V_2|)\le 2n/3$ ($V'$ exists since $\mathcal{F}$ is a $f$-separating family). Since $\mathcal{F}$ is hereditary, we have $G_1,G_2\in \mathcal{F}$. By minimality, $T$ is a non-decreasing function. 
These observations imply that each of $G_1,G_2$ has a $(1,T(\lfloor 2n/3\rfloor))$-approximation schedule. Let $\mathcal{S}_i$ be the batch sequence of such a schedule for $G_i$, $i=1,2$. By \Cref{lem:discon}, 
either $\mathcal{S}_1,\mathcal{S}_2$ or $\mathcal{S}_2,\mathcal{S}_1$ (depending on the counts of $\alpha$ and $\beta$ vertices) is a $(1,T(\lfloor 2n/3\rfloor))$-approximation schedule for $G_1\cup G_2$ (note that in this case, $V_s=\emptyset$). Assume, without loss of generality, it is the former. It follows then that $\mathcal{S}_1,V'\cap\beta,\mathcal{S}_2,V'\cap\alpha$ is a $(1,T(\lfloor 2n/3\rfloor)+|V'|)$-approximation schedule, which implies that $T(n)\le T(\lfloor 2n/3\rfloor)+f(n)$, since $|V'|\le f(n)$. The claim $T(n)\le c+f(n)\log_{3/2} n$ follows from this recursion, since it ends  in roughly $\log_{3/2} n$ iterations, and $f(n)$ is a non-decreasing function. 
\end{proof}

The most prominent class to which the theorem above applies is the class of planar graphs. The celebrated Planar Separator Theorem~\cite{LiptonT80,DjidjevV97} shows that planar graphs are $2\sqrt{n}$-separable.
\begin{theorem}[Proof of Planar part of \Cref{thm:sequential}]\label{thm:planar}
Let $G$ be a planar graph with vertex covers $\alpha$ and $\beta$. Let  $M=\max\{|\alpha|,|\beta|\}$. There is a monotone sequential $(1,O(\sqrt{M}))$-approximation schedule for $\alpha,\beta$.
\end{theorem}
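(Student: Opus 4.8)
The plan is to invoke \Cref{thm:separable} with $\mathcal{F}$ the class of planar graphs. As set up at the beginning of this subsection it suffices to treat disjoint covers: in general the vertices of $\alpha\cap\beta$ are untouched by a monotone schedule, so we may pass to $G'=G[V\setminus(\alpha\cap\beta)]$ with the disjoint covers $\alpha\setminus\beta$ and $\beta\setminus\alpha$; since $\max\{|\alpha|,|\beta|\}=\max\{|\alpha\setminus\beta|,|\beta\setminus\alpha|\}+|\alpha\cap\beta|$ and all intermediate covers for $G$ are those for $G'$ together with the frozen set $\alpha\cap\beta$, a monotone $(1,t)$-approximation schedule for $G'$ lifts to a monotone $(1,t)$-approximation schedule for $G$. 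So assume $\alpha\cap\beta=\emptyset$. Then every edge has exactly one endpoint in $\alpha$ and one in $\beta$ (both are vertex covers), hence every vertex outside $\alpha\cup\beta$ is isolated; such vertices lie in neither $\alpha$ nor $\beta$ and are thus untouched by a monotone schedule, so we may delete them and assume $n=|V|=|\alpha|+|\beta|\le 2M$, where $M=\max\{|\alpha|,|\beta|\}$.

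Planar graphs form a hereditary class (they are closed under taking induced subgraphs, in fact under minors), and by the Planar Separator Theorem~\cite{LiptonT80,DjidjevV97} they are $2\sqrt{n}$-separable: every $n$-vertex planar graph has a set $V'$ with $|V'|\le 2\sqrt{n}$ whose removal splits it into two parts of size at most $2n/3$ with no edge between them. Hence \Cref{thm:separable} applies with $f(n)=2\sqrt{n}$, giving, for $n$ above the threshold $n_0$ of that theorem, the recursion $T(n)\le T(\lfloor 2n/3\rfloor)+2\sqrt{n}$, where $T(n)$ is the minimum additive term achievable over all $n$-vertex planar graphs with two disjoint vertex covers; recall the recursion itself comes from splitting off $V'$, recursing on both sides, gluing via \Cref{lem:discon}, and then processing $V'\cap\beta$ and $V'\cap\alpha$ naively in two extra sub-schedules.

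The one point that needs care is that the generic estimate $T(n)\le c+f(n)\log_{3/2}n$ from \Cref{thm:separable} only yields $O(\sqrt{n}\log n)$, which is too weak. Instead, unroll the recursion directly: $T(n)\le O(1)+\sum_{i\ge 0}2\sqrt{(2/3)^i\,n}=O(1)+2\sqrt{n}\sum_{i\ge 0}\bigl(\sqrt{2/3}\bigr)^i=O(\sqrt{n})$, since $\sqrt{2/3}<1$ makes the geometric series sum to a constant, and the $O(1)$ absorbs the finitely many base cases $n\le n_0$ (where trivially $T(n)\le n$). Combining with $n\le 2M$ gives $T(n)\le T(2M)=O(\sqrt{M})$, which is exactly a monotone sequential $(1,O(\sqrt{M}))$-approximation schedule for $\alpha,\beta$.

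I do not anticipate a real obstacle: the whole argument is an application of the recursive separator scheme already encapsulated in \Cref{thm:separable}, and the only deviations from a verbatim application are two routine observations — shaving the spurious $\log$ factor by noting that in the square-root regime the recursion telescopes to a convergent geometric sum, and reducing the vertex count from $n$ to $O(M)$ using disjointness of the covers. If anything requires a careful word it is bookkeeping around the floor $\lfloor 2n/3\rfloor$ and the base-case constant, both of which only contribute an additive $O(1)$.
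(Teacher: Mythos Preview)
Your proposal is correct and follows essentially the same route as the paper: reduce to disjoint covers, restrict to $G[\alpha\oplus\beta]$ (you do this in two steps, the paper in one), invoke \Cref{thm:separable} with the Planar Separator Theorem, and unroll the recursion $T(n)\le T(\lfloor 2n/3\rfloor)+2\sqrt{n}$ to a convergent geometric series in $\sqrt{2/3}$ rather than settle for the generic $f(n)\log_{3/2}n$ bound. The paper's proof is terser but the content is the same.
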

\begin{proof}
We only consider monotone schedules, so we can assume $\alpha\cap\beta=\emptyset$. The Planar Separator Theorem and \cref{thm:separable} imply that for $n$-vertex planar graphs, the additive approximation satisfies $T(n)\le T(\lfloor 2n/3\rfloor)+2\sqrt{n}$, for $n$ larger than a constant $n_0$. We have, therefore, $T(n)\le O(1) + 2\sqrt{n}\cdot \sum_{i=0}^{\infty}(2/3)^{i/2}=O(1)+11\sqrt{n}$. We apply the scheduling to $G[\alpha\oplus\beta]$, which is planar and has at most $2M$ vertices. We get a $(1,T(2M))=(1,O(\sqrt{M}))$-approximation.
\end{proof}

Separator theorems are also known for a number of other graph classes. Graphs of genus $g$ are $O(g\sqrt{n})$-separable~\cite{genus}, graphs  with a forbidden minor with $h$ vertices are $O(h\sqrt{n})$-separable~\cite{KR10}, graphs with treewidth $T$ are $T$-separable~\cite[Thm. 19]{treewidth}, $k$-nearest neighbor graphs in $\mathbb{R}^2$ are $\sqrt{kn}$-separable~\cite{nearestneighbor}. In the theorem below, we merely select a representative class.

\subsection{Bounded Degree/Arboricity Graphs (Proof of \Cref{thm:arboricitybounds})}\label{ssec:degbased}

In this section, we prove \cref{thm:arboricitybounds}. The proof is presented in two parts. First, we present the algorithm, as well as its batched variant, consisting of 4 batches (\Cref{thm:4batch}). Then, we present the impossibility result (\Cref{thm:arblower}). We restate the theorem. 

\thmarboricity*

Recall, that the arboricity of a graph $G=(V,E)$ is the minimum number of forests that are needed to cover the edge set $E$.  Alternatively, the arboricity equals the maximum density of a subgraph, i.e., $\lambda=\max_{U\subseteq V, |U|\ge 2}\left\lceil\frac{|E(G[U])|}{|U|-1}\right\rceil$~\cite{nashWilliams64}.

Consider a graph $G$ with two vertex covers $\alpha$ and $\beta$.
Here we consider the performance of the following natural \emph{monotone sequential greedy} schedule.

Since we deal with monotone schedules, we focus on a schedule from $A=\alpha\setminus \beta$ to $B=\beta\setminus \alpha$. 

\subparagraph*{Sequential greedy schedule.} The schedule iterates over the vertices in $A$ in an increasing order $v_1,v_2,\dots$ of their degree in $G[A\cup B]$ (breaking ties arbitrarily), and for each vertex $v$, adds all its neighbors $N(v)$ (one by one) to the vertex cover, if they haven't been added before,  and then removes $v$ from the current cover.

\smallskip

We first express the cost of the schedule in terms of the parameters of the graph $G[A\cap B]$, particularly the number of vertices and edge density, then we show that this leads to an approximation in terms of the arboricity of $G$.
We will make use of the following simple fact.

\begin{fact}\label{fact:averges}
For any non-decreasing sequence $0\le a_1\le a_2\le \dots\le a_n$ of non-negative reals and index $1\le k\le n$, $\frac{\sum_{i=1}^k a_i}{k}\le \frac{\sum_{i=1}^n a_i}{n}$.
\end{fact}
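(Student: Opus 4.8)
\textbf{Proof plan for Fact~\ref{fact:averges}.}
The plan is to compare the prefix average directly against each of the remaining terms. Write $P=\sum_{i=1}^k a_i$ for the partial sum, so that the left-hand side of the claimed inequality is $P/k$. The only structural fact I will use is monotonicity of the sequence, exploited twice: first to bound the prefix average by the last term of the prefix, and then to bound each tail term by that same last term.

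Concretely, I would first observe that since $a_1\le a_2\le\dots\le a_k$, the term $a_k$ is the largest among the first $k$ terms, hence it is at least their average: $a_k\ge \tfrac1k\sum_{i=1}^k a_i = P/k$. Next, for every index $j$ with $k<j\le n$, monotonicity gives $a_j\ge a_k\ge P/k$. Summing this over $j=k+1,\dots,n$ yields $\sum_{j=k+1}^{n} a_j \ge (n-k)\cdot \tfrac{P}{k}$. (If $k=n$ the tail is empty and the claim is an equality, so assume $k<n$; all $a_i\ge 0$ is not even needed beyond making the statement meaningful, but it guarantees $P\ge 0$ so no sign issues arise.)

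Adding the prefix back in, $\sum_{i=1}^{n} a_i = P + \sum_{j=k+1}^{n} a_j \ge P + (n-k)\tfrac{P}{k} = P\cdot\tfrac{n}{k}$. Dividing both sides by $n$ gives $\tfrac1n\sum_{i=1}^{n}a_i \ge \tfrac{P}{k} = \tfrac1k\sum_{i=1}^{k}a_i$, which is exactly the assertion.

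There is no real obstacle here; the statement is a one-line averaging argument, and the only thing to be slightly careful about is the degenerate case $k=n$ (equality) and making sure the chain $a_j\ge a_k\ge P/k$ is invoked in the right order. I would present it in essentially the three sentences above.
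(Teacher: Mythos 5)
Your proof is correct. The paper states \Cref{fact:averges} without proof (as a standard elementary inequality), so there is no proof in the paper to compare against; your argument — bounding the prefix average by $a_k$ and then each tail term from below by $a_k$ — is the natural one-line justification and handles the $k=n$ case cleanly.
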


\begin{lemma}\label{lem:bddeg1b1}
Let $G=(V,E)$ be a graph with two vertex covers $\alpha$ and $\beta$ such that the graph induced by $D=\alpha\oplus \beta$ has $h>0$ edges.
Then the cost of the greedy schedule is at most  $|\alpha\cap\beta|+|A|+|B|-\frac{|A||B|}{h}+1$, 
where $A=\alpha\setminus \beta$ and $B=\beta\setminus \alpha$.
\end{lemma}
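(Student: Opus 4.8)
The plan is to track the size of the current vertex cover during the greedy schedule and bound its maximum. Throughout the schedule, the vertex cover consists of the fixed part $X=\alpha\cap\beta$ (never touched, since the schedule is monotone), together with the current working set inside $D=A\cup B$. So it suffices to bound, at every intermediate step, the quantity (number of not-yet-removed $A$-vertices) $+$ (number of already-added $B$-vertices), and then add $|X|$. The worst case within a single iteration is right after we have added all neighbors of the current vertex $v_j$ but before removing $v_j$; at that moment the cost inside $D$ is at most $|A|-(j-1) + |N_B(\{v_1,\dots,v_j\})|$, where $N_B(\cdot)$ denotes the set of $B$-neighbors (recall $G[D]$ is bipartite between $A$ and $B$, so all neighbors of $A$-vertices lie in $B$). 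Adding the "$+1$" for the transient step where $v_j$ is still present gives the shape of the claimed bound; the real work is bounding $|N_B(\{v_1,\dots,v_j\})|$.

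The key step is the following counting argument. Let $d_1\le d_2\le\dots\le d_{|A|}$ be the degrees of the $A$-vertices in $G[D]$ in the order processed. Since every edge of $G[D]$ has exactly one endpoint in $A$, we have $\sum_{i=1}^{|A|} d_i = h$. The number of $B$-vertices adjacent to $\{v_1,\dots,v_j\}$ is at most $\sum_{i=1}^j d_i$, but it is also at most $|B|$. I would combine these two bounds via the arithmetic mean: by Fact \ref{fact:averges}, $\sum_{i=1}^j d_i \le \frac{j}{|A|}\sum_{i=1}^{|A|} d_i = \frac{jh}{|A|}$. Hence the cost inside $D$ at the critical moment of iteration $j$ is at most
\[
|A| - (j-1) + \min\!\left\{|B|,\ \tfrac{jh}{|A|}\right\} + 1.
\]
Viewing the right-hand side as a function of the real variable $j$, it is the minimum of two affine functions — one decreasing in $j$ (slope $-1$) and one increasing (slope $h/|A|\ge 0$, since $h>0$ forces $|A|\ge 1$); wait, one must handle $A=\emptyset$ separately, but then the schedule is vacuous and the cover never exceeds $|X|+|B|\le$ the claimed bound. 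The maximum of this pointwise minimum over $j$ is attained where the two lines cross, i.e. when $\tfrac{jh}{|A|}=|B|$, giving value $|A|-|B|\cdot\tfrac{|A|}{h}+1+|B| = |A|+|B|-\tfrac{|A||B|}{h}+1$. (If the crossing point falls outside $[1,|A|]$, the maximum is at an endpoint and is only smaller, since at $j=|A|$ we get $\le 1+|B|+1$ and at $j=1$ we get $\le |A|+1$, both dominated by the claimed bound once one checks $\tfrac{|A||B|}{h}\le\min\{|A|,|B|\}$, which holds because $h\ge\max\{|A|,|B|\}$ as each of $A,B$ is covered by the edges of the bipartite graph $G[D]$ — assuming no isolated vertices in $D$, which we may, since isolated vertices of $D$ are trivially reconfigured in the first/last batch without affecting the bound.) Adding $|X|=|\alpha\cap\beta|$ yields exactly $|\alpha\cap\beta|+|A|+|B|-\tfrac{|A||B|}{h}+1$.

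The main obstacle I anticipate is being careful about what "cost" means at the transient moment inside an iteration and making sure the "+1" is accounted for exactly once: the definition of cost (property 2 of Definition \ref{def:reconfigurationSchedule}) bounds $|V_i\cup V_{i+1}|$, and within the greedy schedule a single iteration is itself a short sequence of single-vertex additions followed by one removal, so the relevant quantity is really $\max$ over the sub-steps of the cover size, which is $|X| + \big(|A|-(j-1)\big) + |N_B(\{v_1,\dots,v_j\})|$ — already including $v_j$, hence no extra $+1$ is needed from the removal step itself, but the $+1$ in the statement comes from the fact that after optimizing over the real relaxation of $j$ we may round, or more simply, it is a safe slack term; I would just carry it through conservatively. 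A secondary subtlety is that the greedy order is by degree in $G[A\cup B]=G[D]$, which is exactly what makes Fact \ref{fact:averges} applicable — the degrees are processed in non-decreasing order — so I would make sure to invoke that ordering explicitly at the point where the averaging bound is used.
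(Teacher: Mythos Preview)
Your proposal is correct and uses essentially the same argument as the paper: reduce to $G[D]$, apply Fact~\ref{fact:averges} to the non-decreasing degree sequence to get $|N_B(\{v_1,\dots,v_j\})|\le jh/|A|$, and combine this with the $|A|-(j-1)$ remaining $A$-vertices. The paper's write-up differs only cosmetically---it fixes an index realizing the maximum cost and bounds it directly rather than maximizing the bound over all $j$---and your uncertainty about the ``$+1$'' is misplaced: it is simply the $+1$ in $-(j-1)=-j+1$, not a slack term or a rounding artifact.
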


\begin{proof}
Any \emph{monotone} reconfiguration schedule from $\alpha$ to $\beta$ in $G$ (including the greedy one) is equivalent to (has the same batch sequence as) a monotone reconfiguration schedule from $A$ to $B$ in $G[D]$. The two schedules only differ by additive $|X|$ in cost, where $X=\alpha\cap\beta$. Thus, we will consider the greedy schedule as a schedule  $\mathcal{S}=(V_i)_{i\in [|D|]}$ for $A$ and $B$ in $G[D]$, which gives us the corresponding schedule $\mathcal{S}=(V_i\cup X)_{i\in [|D|]}$ for $\alpha$ and $\beta$.  

The cost of the schedule (from $A$ to $B$) is the size of the vertex cover (in $G[D]$) obtained at a 
step
with index $1\le i< |D|$ such that in the next 
step 
$i+1$, a vertex $v_i$ is removed from $V_i$. Note that there can be more than one index which realize the cost of the schedule.  
Fix any such $i$ and let $A_i$ be the vertices added until batch $i$ and $B_i$ the vertices removed until batch $i$ (both times inclusive). 
Note that $N(\{v_i\}\cup A_i)=B_i$. 
Then, using \cref{fact:averges}, we have
\[
\frac{|B_i|}{|A_i|+1}\leq\frac{\sum_{v\in A_i\cup\{v_i\}} d_{G[D]}(v)}{|A_i\cup\{v_i\}|}\le \frac{\sum_{v\in A} d_{G[D]}(v)}{|A|}=\frac{h}{|A|}\ ,
\]
implying that $|A_i|\ge |B_i||A|/h-1$. 
The cost of the schedule just after batch $i$ is bounded by
\[
|V_i|=|A|-|A_i|+|B_i|\le |A|+|B_i|\left(1-\frac{|A|}{h}\right)+1\le |A|+|B|-\frac{|A||B|}{h}+1\ .
\]
We obtain the claimed cost by adding the size of $\alpha\cap \beta$ to this bound.
\end{proof}

\cref{lem:bddeg1b1} combined with the bound on the number of edges in a graph with arboricity $\lambda$ yields the following theorem, which gives the positive half of \cref{thm:arboricitybounds}.
\begin{theorem}[Greedy Schedule] \label{thm:seqmindeg}
Let $G=(V,E)$ be a graph with two vertex covers $\alpha$ and $\beta$ such that the graph induced by $D=\alpha\oplus \beta$ has arboricity $\lambda$. 
The greedy schedule is a $(2-\frac{1}{2\lambda},1)$-approximation.
\end{theorem}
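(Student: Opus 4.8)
The plan is to reduce Theorem~\ref{thm:seqmindeg} to Lemma~\ref{lem:bddeg1b1} by plugging in the arboricity bound on the edge count of $G[D]$, and then to optimize over the relative sizes of $A=\alpha\setminus\beta$ and $B=\beta\setminus\alpha$. First I would recall that a graph on $t$ vertices of arboricity $\lambda$ has at most $\lambda(t-1) < \lambda t$ edges; applying this to $G[D]$, which has $|D|=|A|+|B|$ vertices, gives $h \le \lambda(|A|+|B|)$. Since the bound in Lemma~\ref{lem:bddeg1b1} is $|\alpha\cap\beta| + |A|+|B| - \frac{|A||B|}{h}+1$ and this is decreasing in $h$, substituting the upper bound on $h$ only weakens the estimate in a controlled way, yielding a cost of at most
\[
|\alpha\cap\beta| + |A|+|B| - \frac{|A||B|}{\lambda(|A|+|B|)} + 1.
\]

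Next I would express everything in terms of $M=\max\{|\alpha|,|\beta|\}$. Writing $a=|A|$, $b=|B|$, $x=|\alpha\cap\beta|$, we have $|\alpha|=a+x$, $|\beta|=b+x$, so $M = \max\{a,b\}+x$. The cost bound becomes $x + a + b - \frac{ab}{\lambda(a+b)} + 1$. Assume w.l.o.g.\ $a\le b$ (the greedy schedule and its analysis are symmetric under reversing via Observation~\ref{obs:reverseschedule}, and in any case the arithmetic is symmetric in $a,b$), so $M = b+x$ and the cost is $x + a + b - \frac{ab}{\lambda(a+b)}+1$. I want to show this is at most $(2-\frac{1}{2\lambda})M + 1 = (2-\frac{1}{2\lambda})(b+x)+1$. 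Since $2-\frac{1}{2\lambda}\ge 1$ and $x\ge 0$, it suffices to show $a+b-\frac{ab}{\lambda(a+b)} \le (2-\frac{1}{2\lambda})b$, i.e.\ $a - \frac{ab}{\lambda(a+b)} \le (1-\frac{1}{2\lambda})b$. Using $a\le b$ we have $\frac{ab}{\lambda(a+b)} \ge \frac{a\cdot a}{\lambda\cdot 2b}\cdot\frac{b}{a}$— more cleanly, $\frac{ab}{a+b} \ge \frac{a}{2}$ since $a+b\le 2b$... wait, that gives $\frac{ab}{\lambda(a+b)} \ge \frac{a}{2\lambda}$, hence $a - \frac{ab}{\lambda(a+b)} \le a - \frac{a}{2\lambda} = (1-\frac{1}{2\lambda})a \le (1-\frac{1}{2\lambda})b$, which is exactly what is needed.

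The main obstacle is the bookkeeping to make sure the reduction from $\alpha,\beta$ to $A,B$ (and the additive $|\alpha\cap\beta|$ term) is handled cleanly and that the degenerate cases do not bite: in particular when $D$ has no edges ($h=0$, excluded by the hypothesis $h>0$ but worth a one-line remark that then $A,B$ are independent of each other and a trivial $4$-batch or even direct schedule works with cost $\le M+1$), when $a=0$ or $b=0$ (schedule is trivial), and when $M$ is small relative to $\lambda$. A minor subtlety is that Lemma~\ref{lem:bddeg1b1} uses $h$ = number of edges of $G[D]$ directly, so I should state the arboricity bound $h \le \lambda(|D|-1)\le \lambda(|A|+|B|)$ and note monotonicity of the cost expression in $h$ before substituting. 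Modulo these routine checks, the argument above gives the claimed $(2-\frac{1}{2\lambda},1)$-approximation, completing the positive half of Theorem~\ref{thm:arboricitybounds}.
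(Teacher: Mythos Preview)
Your proposal is correct and follows essentially the same route as the paper: both invoke Lemma~\ref{lem:bddeg1b1}, bound the edge count $h$ of $G[D]$ via arboricity by $h\le \lambda(|A|+|B|)\le 2\lambda\max\{|A|,|B|\}$, and then use $\frac{|A||B|}{h}\ge \frac{\min\{|A|,|B|\}}{2\lambda}$ to conclude. The only cosmetic difference is that the paper writes the computation in terms of $m'=\min\{|A|,|B|\}$ and $M'=\max\{|A|,|B|\}$ directly, whereas you fix $a\le b$ and carry the $x=|\alpha\cap\beta|$ term explicitly; your handling of the degenerate cases ($h=0$, $a=0$, $b=0$) is a nice addition that the paper leaves implicit.
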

\begin{proof}
Let $|A|,|B|$ and $h$ as in \cref{lem:bddeg1b1} and let $M' = \max\{|A|,|B|\}$, $m'=\min\{|A|,|B|\}$ and $X = \alpha\cap\beta$. By \cref{lem:bddeg1b1}, the schedule has cost $|X|+M'+m'-\frac{m'M'}{h}+1$.  
Since $G[D]$ is a graph with $m' + M'$ nodes, $h$ edges, and arboricity $\lambda$, we have that $\lambda \ge \frac{h}{m'+M'-1}$, hence $h\le \lambda(m'+M'-1)\le 2\lambda M'$. 
Thus, recalling that $M=M'+|X|$, the cost of the schedule can be bounded as
\begin{align*}
|X|+M'+m'-\frac{m'M'}{h}+1 &\le |X|+M'+m'-\frac{m'}{2\lambda}+1 \\
&\le |X|+M'+m'\left(1-\frac{1}{2\lambda}\right)+1\\&\le M\left(1-\frac{1}{2\lambda}\right)+1\ ,
\end{align*}
which gives us the claimed approximation.
\end{proof}

Using the schedule provided by \cref{thm:seqmindeg} together with the compression technique from \cref{thm::sched_compres}, we get a $2\lceil 1/\eps\rceil$-batch schedule that is a $(2-\frac{1}{2\lambda}+\eps, 2)$-approximation for arboricity-$\lambda$ graphs, for arbitrarily small $\eps\in (0,1)$. This, however, is non-trivial only for $\eps<1/(2\lambda)$. Here we show, using a more direct approach, that  approximation similar to \cref{thm:seqmindeg} can be achieved with only 4 batches. 
Let $G=(V,E)$ be a graph with two vertex covers $\alpha$ and $\beta$. Again, since we are dealing with only monotone schedules, we will focus on $A=\alpha\setminus\beta$ and $B=\beta\setminus\alpha$.

\subparagraph*{4-batch schedule.} Let $\xi=\left\lfloor\frac{|A||B|}{h+|A|}\right\rfloor$.
Let $X=\alpha\cap \beta$, $A_1$ be the set of the $\xi$ lowest degree (in $G[A\cup B]$) vertices in $A$, and let $B_1=N_{G}(A_1)\cap B$ be the set of neighbors of nodes in $A_1$. The schedule consists of the following four vertex covers: $V_0=X\cup A=\alpha$, $V_1=X\cup A\cup B_1$,
 $V_2=X\cup (A\setminus A_1) \cup B_1$, $V_3=X\cup (A\setminus A_1) \cup B$, $V_4=X\cup B=\beta$.
 
 Again, we first express the cost of this 4-batch schedule in terms of the edge density and other parameters of $G[A\cup B]$, then it will translate into an approximation in terms of arboricity.

 \begin{lemma}\label{lem:4batch}
Let $G=(V,E)$ be a graph with two vertex covers $\alpha$ and $\beta$ such that the graph induced by $D=\alpha\oplus \beta$ has $h>0$ edges. 
The cost of the 4-batch schedule is at most  $|\alpha\cap \beta|+|A|+|B|-\frac{|A||B|}{h+|A|}+1$, 
where $A=\alpha\setminus \beta$ and $B=\beta\setminus \alpha$.
\end{lemma}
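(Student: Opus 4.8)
The plan is to verify the explicitly given four covers $V_0,\dots,V_4$ directly, splitting the work into a validity check and a cost bound. Throughout I will use the structural facts already recorded in the paper: $G[D]$ is bipartite with sides $A=\alpha\setminus\beta$ and $B=\beta\setminus\alpha$, both of which are independent sets (complements of vertex covers); every vertex of $V\setminus(\alpha\cup\beta)$ has all of its neighbors in $X=\alpha\cap\beta$; and $\sum_{a\in A}d_{G[D]}(a)=h$. I also use that $\xi\le|A|$, which is needed for the schedule even to be defined ($A_1$ is a set of $\xi$ vertices of $A$) and is automatic in the relevant regime: if every vertex of $B$ has a neighbor in $A$, then charging each $b\in B$ to a distinct incident edge of $G[D]$ gives $h\ge|B|$, whence $\xi=\lfloor |A||B|/(h+|A|)\rfloor\le |A||B|/(h+|A|)\le |A|h/(h+|A|)<|A|$.

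First I would check validity. The covers $V_0=\alpha$ and $V_4=\beta$ are vertex covers by hypothesis; $V_1=\alpha\cup B_1$ and $V_3=\beta\cup(A\setminus A_1)$ are supersets of $\alpha$ and $\beta$, hence vertex covers as well. For $V_2=X\cup(A\setminus A_1)\cup B_1$ the only edges to worry about lie inside $D$ (edges meeting $X$ are covered by $X$, and edges meeting a vertex outside $\alpha\cup\beta$ have their other endpoint in $X$): an edge $\{a,b\}$ with $a\in A$, $b\in B$ is covered by $a$ if $a\notin A_1$, and otherwise $b\in N_G(A_1)\cap B=B_1\subseteq V_2$. As for the batches, $\mathcal{E}_1=V_0\oplus V_1=B_1\subseteq B$, $\mathcal{E}_2=V_1\oplus V_2=A_1\subseteq A$, $\mathcal{E}_3=V_2\oplus V_3=B\setminus B_1\subseteq B$, $\mathcal{E}_4=V_3\oplus V_4=A\setminus A_1\subseteq A$, each independent since $A$ and $B$ are. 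Since every batch consists purely of additions or purely of removals, the cost of the schedule equals $\max_{0\le i\le 4}|V_i|$.

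Next I would bound the cost. The sizes are $|V_0|=|X|+|A|$, $|V_1|=|X|+|A|+|B_1|$, $|V_2|=|X|+|A|-\xi+|B_1|$, $|V_3|=|X|+|A|-\xi+|B|$, $|V_4|=|X|+|B|$. Using $|B_1|\le|B|$ and $\xi\le|A|$ one gets $|V_0|,|V_2|\le|V_1|$ and $|V_2|,|V_4|\le|V_3|$, so only $|V_1|$ and $|V_3|$ matter. For $|V_3|$ the bound is immediate from $\xi>\tfrac{|A||B|}{h+|A|}-1$. For $|V_1|$ the key estimate is $|B_1|\le\sum_{a\in A_1}d_{G[D]}(a)$ (each $b\in B_1$ has a neighbor in $A_1$, and since $A$ is independent the $D$-neighbors of $a\in A$ all lie in $B$); applying \cref{fact:averges} to the non-decreasing degree sequence of $A$ with $k=\xi$ yields $\sum_{a\in A_1}d_{G[D]}(a)\le\frac{\xi}{|A|}\sum_{a\in A}d_{G[D]}(a)=\frac{\xi h}{|A|}\le\frac{h|B|}{h+|A|}=|B|-\frac{|A||B|}{h+|A|}$, using $\xi\le\frac{|A||B|}{h+|A|}$ in the last step. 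Hence $|V_1|\le|X|+|A|+|B|-\frac{|A||B|}{h+|A|}$, one even less than the claimed bound.

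I do not expect a real obstacle here: the whole thing is a direct verification. The only place that required thought is the choice of the cut-off $\xi$, which must simultaneously keep the two peaks of the schedule below the target — the over-full cover $V_1$ obtained after adding the $B$-neighbors of the low-degree part $A_1$, and the cover $V_3$ obtained after removing $A_1$ but before touching the rest of $A$ — and one reads off that $\xi=\lfloor |A||B|/(h+|A|)\rfloor$ does exactly this, via the averaging inequality. The one subtlety to be careful about is the degenerate case $\xi>|A|$ (which can occur when $G[D]$ has isolated $B$-vertices), which I dispose of by noting that the four-batch schedule presupposes $\xi\le|A|$.
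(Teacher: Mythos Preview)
Your argument is correct and follows the same approach as the paper: both identify $|V_1|$ and $|V_3|$ as the relevant maxima, bound $|B_1|\le \xi h/|A|$ via \cref{fact:averges}, and then observe that the specific choice $\xi=\lfloor |A||B|/(h+|A|)\rfloor$ balances the two peaks up to the $+1$ from rounding. Your treatment is slightly more explicit (you verify the vertex-cover property and independence of each batch, and you flag the degenerate case $\xi>|A|$, which the paper leaves implicit), but the substance is the same.
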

 
\begin{proof} 
Since the schedule is monotone, and we never change the status of the nodes in $X=\alpha\cap \beta$, let us assume first that $X=\emptyset$. We analyze the cost of the schedule under this assumption, then add $|X|$ to it, in the case when $X$ is non-empty. 

Now, the 4-batch schedule simplifies to:
 $V_0=A$, $V_1=A\cup B_1$, 
 $V_2=(A\setminus A_1) \cup B_1$, $V_3=(A\setminus A_1) \cup B$, $V_4=B$,
where $A_1$ is the set of $\xi$ lowest degree vertices in $A$ and $B_1=N(A_1)\cap B$. We will show that in order to get the claimed cost, we need to choose $\xi$ as in the algorithm presented above.

The cost of the schedule is the maximum among the size of $V_1$, which is of size $|A|+|B_1|$, and of $V_3$, which is of size $|A|+|B|-\xi$ (note that $V_2 \subseteq V_1$ and thus not included). 
Note that $|B_1|\le \sum_{v\in A_1}d_{G[D]}(v)$, so using \cref{fact:averges}, we have
\[
\frac{|B_1|}{|A_1|}\le \frac{\sum_{v\in A_1}d_{G[D]}(v)}{|A_1|}\le \frac{\sum_{v\in A}d_{G[D]}(v)}{|A|}=\frac{h}{|A|}\ .
\]
Thus, $|B_1|\le |A_1|h/|A|=\xi h /|A|$, hence the cost of the schedule is at most the larger of $|A|+\xi h/|A|$ and $|A|+|B|-\xi$. Let us choose $\xi$ so that those expressions are equal: $|B|-\xi=h\xi/|A|$, so $\xi=|A||B|/(h+|A|)$. This gives us cost $|A|+|B|-\frac{|A||B|}{h+|A|}$. Recall, however, that $\xi$ has to be an integer, so we round it \emph{down}, i.e., we let $\xi=\floor*{|A||B|/(h+|A|)}$, which gives us cost $|A|+|B|-\frac{|A||B|}{h+|A|}+1$ (note that the expression $|A|+\xi h/|A|$ can only be improved by this rounding).
To complete the proof, recall that in order to bound the cost of schedule in the case $X\neq \emptyset$, we only need to add $|X|$ to the cost above. 
 \end{proof}

\begin{theorem}\label{thm:4batch}
Let $G=(V,E)$ be a graph with two vertex covers $\alpha$ and $\beta$ such that the graph induced by $D=\alpha\oplus \beta$ has arboricity $\lambda$. 
The 4-batch schedule is a $(2-\frac{1}{2\lambda+1},1)$-approximation.
\end{theorem}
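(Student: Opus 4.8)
The plan is to combine Lemma~\ref{lem:4batch} with the arboricity-based edge count bound exactly as in the proof of Theorem~\ref{thm:seqmindeg}, only now carrying through the extra $|A|$ in the denominator. First I would set up the same notation: let $M'=\max\{|A|,|B|\}$, $m'=\min\{|A|,|B|\}$, $X=\alpha\cap\beta$, $h$ the number of edges in $G[D]$, and recall $M=M'+|X|$. By Lemma~\ref{lem:4batch}, the cost of the 4-batch schedule is at most $|X|+|A|+|B|-\frac{|A||B|}{h+|A|}+1 = |X|+M'+m'-\frac{m'M'}{h+|A|}+1$.

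Next I would use that $G[D]$ has $m'+M'$ vertices and arboricity $\lambda$, so by the Nash-Williams characterization $h\le \lambda(m'+M'-1)\le 2\lambda M'$. Since also $|A|\le M'$, we get $h+|A|\le 2\lambda M'+M' = (2\lambda+1)M'$. Therefore $\frac{m'M'}{h+|A|}\ge \frac{m'M'}{(2\lambda+1)M'}=\frac{m'}{2\lambda+1}$, which is exactly the place where the $2\lambda+1$ (rather than $2\lambda$) appears. Plugging this in, the cost is at most $|X|+M'+m'\bigl(1-\frac{1}{2\lambda+1}\bigr)+1 \le M'\bigl(1-\frac{1}{2\lambda+1}\bigr)+|X|+1 \le M\bigl(1-\frac{1}{2\lambda+1}\bigr)+1$, using $m'\le M'$ and $1-\frac{1}{2\lambda+1}\le 1$ together with $M=M'+|X|$. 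This yields the claimed $(2-\frac{1}{2\lambda+1},1)$-approximation, since $1-\frac{1}{2\lambda+1}=\frac{2\lambda}{2\lambda+1}=2-\frac{2\lambda+2}{2\lambda+1}$... let me instead just note $1-\frac{1}{2\lambda+1}\le 2-\frac{1}{2\lambda+1}$ trivially, so the bound $M(1-\frac{1}{2\lambda+1})+1\le M(2-\frac{1}{2\lambda+1})+1$ is what we report (the tighter form $1-\frac{1}{2\lambda+1}$ is what we actually prove, and it is $\le 2-\frac{1}{2\lambda+1}$).

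There is essentially no obstacle here — it is a routine substitution. The only point requiring a little care is the handling of $X=\alpha\cap\beta$: since the schedule never touches $X$ and is monotone, the cost is additive in $|X|$, which is already built into Lemma~\ref{lem:4batch}; one must then absorb $|X|+1$ correctly into $M\bigl(1-\tfrac{1}{2\lambda+1}\bigr)+1$ by noting $|X|\le M-M'$ and that the coefficient of $M'$ on the left is at most $1$. This is the same bookkeeping already done in the proof of Theorem~\ref{thm:seqmindeg}, so I would present it in the same three-line display format.
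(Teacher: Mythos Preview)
Your approach is exactly the paper's: invoke Lemma~\ref{lem:4batch}, bound $h+|A|\le (2\lambda+1)M'$ via the arboricity edge count together with $|A|\le M'$, and finish as in Theorem~\ref{thm:seqmindeg}. However, your final chain of inequalities is broken. You assert
\[
|X|+M'+m'\Bigl(1-\tfrac{1}{2\lambda+1}\Bigr)+1 \;\le\; M'\Bigl(1-\tfrac{1}{2\lambda+1}\Bigr)+|X|+1,
\]
but this silently drops a whole $M'$: the left side exceeds the right by $M'\cdot\tfrac{1}{2\lambda+1}+m'\bigl(1-\tfrac{1}{2\lambda+1}\bigr)>0$. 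Consequently the ``tighter'' bound $M\bigl(1-\tfrac{1}{2\lambda+1}\bigr)+1$ you believe you have established is impossible, since the cost of any schedule is at least $\max\{|\alpha|,|\beta|\}=M$.

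The correct finishing line is
\[
|X|+M'+m'\Bigl(1-\tfrac{1}{2\lambda+1}\Bigr)+1 \;\le\; |X|+M'\Bigl(2-\tfrac{1}{2\lambda+1}\Bigr)+1 \;\le\; M\Bigl(2-\tfrac{1}{2\lambda+1}\Bigr)+1,
\]
where the first inequality uses $m'\le M'$ and the second uses $|X|\le |X|\bigl(2-\tfrac{1}{2\lambda+1}\bigr)$ together with $M=M'+|X|$. (Note that the displayed chain in the paper's proof of Theorem~\ref{thm:seqmindeg} has the analogous $1$-versus-$2$ typo in its last line, which may have misled you; the target there is likewise $M\bigl(2-\tfrac{1}{2\lambda}\bigr)+1$.)
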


\begin{proof} 
The proof follows along the lines of the proof of \cref{thm:seqmindeg}, with the only difference that in \cref{lem:4batch}, we have the term $\frac{|A||B|}{h+|A|}$ instead of the $\frac{|A||B|}{h}$ term of \cref{lem:4batch}. The additional $|A|$ term is that gives rise to $+1$ in $\frac{1}{2\lambda+1}$ term in the approximation.
\end{proof}

\subparagraph*{How well can we schedule bounded degree/arboricity graphs?}
We prove below that for every $d\ge 4$, there exist $d$-regular bipartite graphs $G=(A\cup B, E)$ of arbitrarily large size, such that any vertex cover reconfiguration schedule from $A$ to $B$ is no better than a $(2-\Theta(1)/{\sqrt{d}}, 0)$-approximation. It follows, in particular, that for every $\lambda\ge 4$, there are graphs of arboricity at most $\lambda$ where every reconfiguration schedule  is at best a $(2-\Theta(1)/\sqrt{\lambda}, 0)$-approximation.
This proves the negative part of \cref{thm:arboricitybounds}.

\begin{theorem}[Arboricity Lower bound]\label{thm:arblower}
For every $4 \leq d < N/2$, there is a bipartite $d$-regular graph $G=(A\cup B, E)$ with $|A|=|B|=n\in [N, 2N]$, such that any vertex cover reconfiguration schedule from $A$ to $B$ is no better than a $\left(2-\frac{4}{\sqrt{d+1}+1}, 0\right)$-approximation.
\end{theorem}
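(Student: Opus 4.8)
The plan is to exploit the expansion of Ramanujan graphs. Take $G=(A\cup B,E)$ to be a bipartite $d$-regular Ramanujan graph on $n+n$ vertices, as provided by \cite{MarcusSS13} (such graphs exist for every $d\ge 3$ and infinitely many $n$, and by taking disjoint unions or using the construction's flexibility we can land $n\in[N,2N]$); its nontrivial singular values are bounded by $2\sqrt{d-1}$. The key structural consequence of expansion that I want is a \emph{vertex-expansion / edge-counting} statement: for every $A'\subseteq A$, the neighborhood $N(A')\subseteq B$ satisfies a lower bound of the form $|N(A')|\ge \frac{|A'|\,n}{|A'| + (\text{something})\cdot(n-|A'|)}$, or more conveniently the edge-count version: the number of edges inside $A'\cup B'$ is at most $\frac{d}{n}|A'||B'| + \sqrt{d-1}\cdot\sqrt{|A'||B'|}$ (the expander mixing lemma for bipartite biregular graphs). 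Since both $A$ and $B$ are vertex covers (the graph is bipartite), any schedule reconfigures $A$ to $B$, and monotonicity may be assumed by \cite{LokshtanovM19} since $A\cap B=\emptyset$.

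The core argument is then a potential/pigeonhole step. Consider any schedule from $A$ to $B$; at each moment the current cover is $(A\setminus A')\cup B'$ for some $A'\subseteq A$ already removed and $B'\subseteq B$ already added, and feasibility forces $N(A')\subseteq B'$, hence $|B'|\ge |N(A')|$. The cover size is $n-|A'|+|B'|\ge n-|A'|+|N(A')|$. I would consider the first moment at which $|A'|$ reaches a carefully chosen threshold $t$ (more precisely, the step just before the removal that pushes $|A'|$ to exceed $t$, so that $|A'|\le t$ but $|N(A')|$ is already governed by a set of size $\approx t$). Using the expander mixing lemma with $B'=N(A')$: every edge from $A'$ lands in $N(A')$, so $d|A'| = e(A',B)\le e(A',N(A'))\le \frac{d}{n}|A'||N(A')| + \sqrt{d-1}\sqrt{|A'||N(A')|}$, which rearranges to a lower bound on $|N(A')|$ in terms of $|A'|$. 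Plugging $|A'|=t$ and optimizing $t$ over $[0,n]$ gives a lower bound on $\max_{\text{steps}}(n-|A'|+|N(A')|)$ of the form $\bigl(2-\Theta(1/\sqrt{d})\bigr)n$; chasing the constants (the Ramanujan bound $2\sqrt{d-1}$, and replacing $\sqrt{d-1}$ by $\sqrt{d+1}$ or similar slack to get a clean closed form) yields exactly the claimed $\bigl(2-\tfrac{4}{\sqrt{d+1}+1}\bigr)n$. Since $M=\max\{|A|,|B|\}=n$, this is a $\bigl(2-\tfrac{4}{\sqrt{d+1}+1},0\bigr)$-approximation lower bound.

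The main obstacle I anticipate is twofold: first, getting the \emph{right} expansion inequality in a form that is tight enough — the naive expander mixing lemma gives an additive error $\sqrt{d-1}\sqrt{|A'||B'|}$, and one must be careful that this is applied to $B'=N(A')$ (not an arbitrary set) and that the quantity $n-|A'|+|N(A')|$ is minimized at an interior $t$ where the bound is genuinely $\Theta(\sqrt{d})$ away from $2n$, not $o(1)$ away. Second, the bookkeeping to convert "there is a step where $|A'|\approx t$" into a clean statement despite batches/jumps — here I would simply track $|A'|$ as a monotone nondecreasing integer-valued quantity over the schedule (it only grows, since the schedule is monotone and removals only remove $A$-vertices), pick the last step with $|A'|\le t$, and argue $|N(A'_{\text{current}})|$ is within one vertex's worth of $|N(A'')|$ for $|A''|=t$ — or, cleaner, just take $t$ to be precisely the value $|A'|$ attains at that step and optimize the resulting expression over all integers $t\in[1,n]$, since whatever $t$ the schedule passes through, the bound must hold for it. Verifying the constant $\tfrac{4}{\sqrt{d+1}+1}$ comes out of the optimization (rather than some uglier function of $d$) is the routine-but-delicate calculation I would defer.
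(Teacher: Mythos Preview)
Your proposal is essentially the paper's proof: take a bipartite $d$-regular Ramanujan graph from \cite{MarcusSS13}, look at the first moment the schedule has removed an $\eta$-fraction of $A$, use spectral expansion to lower-bound $|N(A')|$, and optimize $\eta$ to obtain the stated constant. The only difference is that the paper invokes Tanner's vertex-expansion bound \cite{Tanner84} (which yields $|N(S)|\ge n\bigl(1-(1+\eta d^2/4(d-1))^{-1}\bigr)$ and hence the clean constant $\tfrac{4}{\sqrt{d+1}+1}$ directly) rather than the edge-form expander mixing lemma you propose; the latter loses a factor of $(1-y)$ and would make the optimization messier, so if you want the exact constant you should switch to Tanner's inequality.
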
 

We will use several known results on Ramanujan graphs. A $d$-regular bipartite graph $G=(A\cup B, E)$ is \emph{Ramanujan} if the largest non-trivial eigenvalue of its adjacency matrix is in the interval $[-2\sqrt{d-1}, 2\sqrt{d-1}]$ (such a matrix always has two \emph{trivial} eigenvalues, $d$ and $-d$). 
\begin{theorem}[{\cite[Theorem 5.5]{MarcusSS13}}]\label{thm:ramexists} For every $d\ge 3$ and $N>2d$, there exist $d$-regular bipartite Ramanujan graphs with $n\in [N,2N]$ nodes.
\end{theorem}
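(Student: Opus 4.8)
The statement is exactly the Marcus--Spielman--Srivastava existence theorem for bipartite Ramanujan graphs, and the plan is to reproduce its proof, which rests on three ingredients: a Ramanujan base graph, the spectral theory of $2$-lifts, and the method of interlacing polynomials to locate a good signing.

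\textbf{Base graph and lifts.} First I would take the complete bipartite graph $K_{d,d}$: it is $d$-regular, bipartite, has $d$ vertices per side, and its adjacency spectrum is $\{d,-d,0^{(2d-2)}\}$, so all of its non-trivial eigenvalues equal $0\in[-2\sqrt{d-1},2\sqrt{d-1}]$, i.e.\ it is Ramanujan. Next I would recall the theory of $2$-lifts: a $2$-lift of a graph $G=(V,E)$ is specified by a signing $s\colon E\to\{\pm1\}$, has vertex set $V\times\{0,1\}$, is again $d$-regular, and (when $G$ is balanced bipartite with $n$ vertices per side) is balanced bipartite with $2n$ vertices per side. The key spectral fact is that the adjacency spectrum of the lift is the multiset union of the spectrum of $A_G$ (the ``old'' eigenvalues) and the spectrum of the signed adjacency matrix $A_s$, obtained from $A_G$ by negating the entries of edges $e$ with $s(e)=-1$ (the ``new'' eigenvalues). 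For bipartite $G$ the spectrum of $A_s$ is symmetric about $0$ (conjugating $A_s$ by the $\pm1$ bipartition signing negates it), so a bound $\lambda_{\max}(A_s)\le 2\sqrt{d-1}$ automatically yields $\lambda_{\min}(A_s)\ge -2\sqrt{d-1}$; and since $2\sqrt{d-1}<d$ for $d\ge3$, neither $d$ nor $-d$ is a new eigenvalue, so the lift inherits the trivial eigenvalues $\pm d$ with multiplicity one and is connected bipartite. Hence if $G$ is bipartite Ramanujan and $s$ is chosen with $\lambda_{\max}(A_s)\le 2\sqrt{d-1}$, then the lift is again bipartite Ramanujan. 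Iterating $k$ times from $K_{d,d}$ produces $d$-regular bipartite Ramanujan graphs with $d\cdot2^k$ vertices per side for every $k\ge0$, and since consecutive attainable sizes differ by a factor of $2$, for any $N>2d$ some attainable size lies in $[N,2N]$: take $k=\lceil\log_2(N/d)\rceil$, so that $d\cdot2^k\ge N$, while $d\cdot2^{k-1}<N$ forces $d\cdot2^k<2N$.

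\textbf{Finding a good signing.} The crux is the lemma that every $d$-regular graph $G$ admits a signing $s$ with $\lambda_{\max}(A_s)\le 2\sqrt{d-1}$, and here is the interlacing-polynomials argument. Consider the $2^{|E|}$ characteristic polynomials $\chi_s(x)=\det(xI-A_s)$, $s\in\{\pm1\}^E$, and the averaged polynomial $f(x)=\mathbb{E}_s[\chi_s(x)]$ over a uniformly random signing. Two facts are needed. (i) Expanding $\det(xI-A_s)$ as a sum over permutations, the only terms surviving the expectation are those supported on fixed points and transpositions along actual edges (a cycle of length $\ge3$ contributes a product of distinct independent signs, of mean $0$); hence $f(x)=\sum_j(-1)^j m_j(G)\,x^{n-2j}=\mu_G(x)$ is exactly the matching polynomial of $G$, where $m_j(G)$ counts $j$-edge matchings. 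By the Heilmann--Lieb theorem, $\mu_G$ is real-rooted with all roots in $(-2\sqrt{d-1},2\sqrt{d-1})$ for a graph of maximum degree $d$, so $\lambda_{\max}(f)<2\sqrt{d-1}$. (ii) Building the signing one coordinate at a time, the conditional expected polynomials form an \emph{interlacing family}: at each step the two conditional polynomials obtained by fixing the next sign to $+1$ or $-1$ are real-rooted and share a common interlacer. The abstract lemma that real-rooted polynomials with positive leading coefficients sharing a common interlacer satisfy $\max_i\lambda_{\max}(p_i)\ge\lambda_{\max}(\sum_i p_i)$, applied recursively down the tree of partial signings, then produces a signing $s^\ast$ with $\lambda_{\max}(A_{s^\ast})=\lambda_{\max}(\chi_{s^\ast})\le\lambda_{\max}(f)\le2\sqrt{d-1}$. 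Combining (i) and (ii) gives the signing lemma, and with it the theorem.

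\textbf{Main obstacle.} Everything except ingredient (ii) is classical: the base case, the lift spectrum decomposition, the matching-polynomial identity, and Heilmann--Lieb. The genuinely hard and novel part is establishing that the conditional expected polynomials form an interlacing family --- i.e.\ the real-rootedness/stability closure results of Marcus--Spielman--Srivastava: that an average of real-rooted polynomials sharing a common interlacer is itself real-rooted, and that the specific polynomials arising from partial signings always do share common interlacers. That step requires a careful argument (via real stable polynomials, or via ``matching polynomials of graphs with partially weighted edges''), and it is the heart of their paper; the remainder of the proof above is bookkeeping around it.
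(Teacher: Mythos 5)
The paper does not prove this theorem; it is cited as an external result (Theorem 5.5 of Marcus--Spielman--Srivastava) and used as a black box, so there is no in-paper proof to compare against. Your proposal is a faithful high-level reproduction of the MSS argument: $K_{d,d}$ as the Ramanujan base, the Bilu--Linial decomposition of a $2$-lift's spectrum into $\sigma(A_G)\cup\sigma(A_s)$, the Godsil--Gutman identity $\mathbb{E}_s[\chi_s]=\mu_G$, the Heilmann--Lieb root bound on $\mu_G$, the bipartite symmetry of $\sigma(A_s)$ (so the one-sided bound $\lambda_{\max}(A_s)\le 2\sqrt{d-1}$ suffices), the interlacing-families lemma to extract a signing whose new eigenvalues are at most $\lambda_{\max}(\mu_G)$, and finally doubling plus pigeonhole to land the per-side size $d\cdot 2^k$ in $[N,2N]$ (you interpret $n$ as vertices per side, which is consistent with how the paper invokes the theorem with $|A|=|B|=n$). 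The ingredients and their assembly are all correct, and you rightly flag that the real-rootedness / common-interlacer machinery is where the sketch defers to the technical core of the MSS paper rather than reproving it. As a reproduction of a cited external theorem's proof, this is the right approach and the right structure.
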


The \emph{incidence matrix} $M$ of a bipartite graph $G=(A\cup B, E)$ is a $|A|\times |B|$ binary matrix, where the $(i,j)$ entry is $1$ if and only if the $i$th vertex in $A$ is adjacent to the $j$th vertex in $B$. Let $\delta_1\ge\delta_2\ge\dots\ge\delta_{|A|}$ be the eigenvalues of the symmetric matrix $MM^T$, where $M^T$ is the transpose of $M$.
\begin{theorem}[{\cite[Theorem 2.1]{Tanner84}}]\label{thm:tan}
Let $G=(A\cup B, E)$ be a $d$-regular bipartite graph, such that $\delta_1>\delta_2$. Then for every subset $S\subseteq A$ of vertices,  it holds that 
\[
|N(S)|\ge \frac{d^2|S|}{\frac{|S|}{|A|}(d^2-\delta_2)+\delta_2}\ .
\]
\end{theorem}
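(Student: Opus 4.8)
The plan is to prove this classical spectral expansion bound by estimating, in two ways, the vector $y = M^T \mathbf{1}_S \in \RR^{|B|}$, where $\mathbf{1}_S$ is the indicator vector of $S$. The $j$-th coordinate of $y$ is the number of neighbors that vertex $j \in B$ has inside $S$; hence $y$ is supported precisely on $N(S)$, its $\ell_1$-norm is $\|y\|_1 = \sum_{i\in S}\deg(i) = d|S|$, and $\|y\|_2^2 = \mathbf{1}_S^T M M^T \mathbf{1}_S$. The preliminary observation is that $M M^T$ is a nonnegative symmetric matrix whose every row sums to $d^2$ (the number of length-two walks from a vertex of $A$ in a $d$-regular bipartite graph), so by Perron--Frobenius its largest eigenvalue is $\delta_1 = d^2$, with the all-ones vector $\mathbf{1}$ a corresponding eigenvector; combined with the hypothesis $\delta_1 > \delta_2$, the eigenspace of $\delta_1$ is one-dimensional, spanned by $\mathbf{1}$, so every vector orthogonal to $\mathbf{1}$ has Rayleigh quotient at most $\delta_2$.

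Next I would derive the two estimates. On the one hand, since $y$ is supported on $N(S)$, Cauchy--Schwarz gives $d|S| = \|y\|_1 \le \sqrt{|N(S)|}\cdot\|y\|_2$, i.e. $|N(S)| \ge d^2|S|^2/\|y\|_2^2$. On the other hand, to bound $\|y\|_2^2 = \mathbf{1}_S^T M M^T \mathbf{1}_S$ from above, decompose $\mathbf{1}_S = \tfrac{|S|}{|A|}\mathbf{1} + w$ with $w \perp \mathbf{1}$; then $\|w\|^2 = |S| - |S|^2/|A|$, the cross term vanishes because $M M^T \mathbf{1} = d^2\mathbf{1}$, and therefore
\[
\mathbf{1}_S^T M M^T \mathbf{1}_S \;=\; \frac{|S|^2}{|A|^2}\,d^2|A| \;+\; w^T M M^T w \;\le\; \frac{d^2|S|^2}{|A|} + \delta_2\Bigl(|S| - \frac{|S|^2}{|A|}\Bigr) \;=\; |S|\Bigl(\frac{|S|}{|A|}(d^2-\delta_2) + \delta_2\Bigr),
\]
where the inequality uses $w^T M M^T w \le \delta_2\|w\|^2$ from the preliminary observation. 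Plugging this upper bound into $|N(S)| \ge d^2|S|^2/\|y\|_2^2$ and cancelling one factor of $|S|$ yields exactly $|N(S)| \ge d^2|S|\big/\bigl(\tfrac{|S|}{|A|}(d^2-\delta_2)+\delta_2\bigr)$ (the case $S=\emptyset$ being trivial).

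I do not anticipate a genuine obstacle: this is a standard expander-mixing-style argument (Cauchy--Schwarz together with a Rayleigh-quotient bound). The two points that need slight care are (i) checking that $y$ is exactly zero outside $N(S)$, so that Cauchy--Schwarz is applied over the correct index set, and (ii) pinning down that the top eigenvalue of $M M^T$ is $d^2$ and is simple --- this is where both the $d$-regularity and the hypothesis $\delta_1 > \delta_2$ are used, and it is precisely what justifies $w^T M M^T w \le \delta_2\|w\|^2$. Everything else is routine bookkeeping with the decomposition of $\mathbf{1}_S$.
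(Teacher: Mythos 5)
The paper cites this result as Theorem~2.1 of Tanner (1984) without reproducing its proof, so there is no in-paper argument to compare against. Your reconstruction is correct and is essentially the classical spectral argument (it matches Tanner's original proof): both estimates on $y=M^T\mathbf{1}_S$ are valid, the decomposition $\mathbf{1}_S=\tfrac{|S|}{|A|}\mathbf{1}+w$ with the cross term vanishing is exactly the right bookkeeping, and the Rayleigh-quotient bound $w^T MM^T w\le\delta_2\|w\|^2$ is correctly justified by the facts that $MM^T$ has constant row sums $d^2$, is PSD (so $\delta_2\ge 0$), and the hypothesis $\delta_1>\delta_2$ forces the $d^2$-eigenspace to be one-dimensional and spanned by $\mathbf{1}$. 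The only care-points you flagged (support of $y$; simplicity of the top eigenvalue; $S=\emptyset$) are precisely the right ones, and you handle them correctly.
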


Note that the right hand side of the inequality can be transformed into the following form that is more convenient for us:
\[
|N(S)|\ge \frac{|A|}{1+\frac{|A|}{|S|}\cdot \frac{\delta_2}{d^2}}=|A|\left(1-\left(1+\frac{|S|}{|A|}\cdot \frac{d^2}{\delta_2}\right)^{-1}\right)\ .
\]
It is easy to verify that for every bipartite graph $G$ with adjacency matrix $X$ and incidence matrix $M$, every eigenvalue of $MM^T$ is also an eigenvalue of $XX^T$, and hence is a \emph{square of an eigenvalue} of $X$ (compare \cite[Theorem 4.15]{HooryL06} with the theorem above in this context). Also note that $\delta=d^2$ is an eigenvalue of $MM^T$, corresponding to the all-ones eigenvector. 
Therefore, if the graph is $d$-regular, $d\ge 4$, and Ramanujan, we have that $d^2=\delta_1>4(d-1)\ge \delta_2$, and \cref{thm:tan} applies, giving us the following. 
\begin{corollary}\label{c:ramanujan}
Let $G=(A\cup B, E)$ be a $d$-regular bipartite Ramanujan graph, for $d\ge 4$. Then for every subset $S\subseteq A$ of vertices it holds that 
\[
|N(S)|\ge |A|\left(1-\left(1+\frac{|S|}{|A|}\cdot \frac{d^2}{4(d-1)}\right)^{-1}\right)\ .
\]  
\end{corollary}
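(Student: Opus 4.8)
The plan is to obtain the corollary as an immediate consequence of Tanner's expansion bound (\cref{thm:tan}), once we control the second-largest eigenvalue $\delta_{2}$ of $MM^{T}$. The first task is eigenvalue bookkeeping. The adjacency matrix $X$ of the bipartite graph $G=(A\cup B,E)$ has block form with zero diagonal blocks and off-diagonal blocks $M$ and $M^{T}$, where $M$ is the $|A|\times|B|$ incidence matrix; hence $X^{2}$ is block-diagonal with diagonal blocks $MM^{T}$ and $M^{T}M$, so every eigenvalue of $MM^{T}$ is the square of an eigenvalue of $X$. Since $G$ is $d$-regular, the all-ones vector on $A$ is an eigenvector of $MM^{T}$ with eigenvalue $d^{2}$, and this is its largest eigenvalue (as $d$ is the Perron eigenvalue of $X$), so $\delta_{1}=d^{2}$. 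Every other eigenvalue of $MM^{T}$ is the square of a non-trivial eigenvalue $\mu$ of $X$ (the two trivial eigenvalues of $X$ being $\pm d$), and the Ramanujan condition gives $|\mu|\le 2\sqrt{d-1}$; therefore $\delta_{2}\le (2\sqrt{d-1})^{2}=4(d-1)$.

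Next I would check the hypothesis $\delta_{1}>\delta_{2}$ needed to invoke \cref{thm:tan}: this holds because $\delta_{1}-\delta_{2}\ge d^{2}-4(d-1)=(d-2)^{2}>0$ for every $d\ge 4$. Applying \cref{thm:tan} then gives, for every $S\subseteq A$,
\[
|N(S)|\ \ge\ \frac{d^{2}|S|}{\frac{|S|}{|A|}(d^{2}-\delta_{2})+\delta_{2}}\,.
\]

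The final step is a short algebraic rearrangement together with the estimate $\delta_{2}\le 4(d-1)$. Writing $t=|S|/|A|$ and dividing numerator and denominator by $td^{2}$, the right-hand side equals $\dfrac{|A|}{1+\frac{\delta_{2}}{d^{2}}\left(\frac{1}{t}-1\right)}$; discarding the nonnegative subtrahend $\frac{\delta_{2}}{d^{2}}$ in the denominator only decreases the fraction, so $|N(S)|\ge \dfrac{|A|}{1+\frac{\delta_{2}}{d^{2}}\cdot\frac{1}{t}}$, which equals $|A|\bigl(1-(1+t\cdot\frac{d^{2}}{\delta_{2}})^{-1}\bigr)$ via the identity $\frac{c}{1+c}=1-(1+1/c)^{-1}$. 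Since $x\mapsto 1-(1+x)^{-1}$ is increasing on $x>0$ and $\frac{d^{2}}{\delta_{2}}\ge\frac{d^{2}}{4(d-1)}$, replacing $\frac{d^{2}}{\delta_{2}}$ by $\frac{d^{2}}{4(d-1)}$ preserves the lower bound and yields exactly the claimed inequality (the case $S=\emptyset$ being trivial). The only step requiring genuine care is the eigenvalue bookkeeping in the first paragraph — matching eigenvalues of $MM^{T}$ to squares of eigenvalues of $X$ and separating the trivial value $d^{2}$ from the non-trivial eigenvalues the Ramanujan property bounds; everything after that is monotonicity and elementary algebra.
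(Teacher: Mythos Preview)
Your proposal is correct and follows essentially the same route as the paper: relate the eigenvalues of $MM^{T}$ to squares of eigenvalues of the adjacency matrix, use the Ramanujan bound to get $\delta_{2}\le 4(d-1)<d^{2}=\delta_{1}$, and plug into Tanner's theorem with a short algebraic rewrite. If anything, you are more careful than the paper, which presents the passage from Tanner's bound to the form $\frac{|A|}{1+\frac{|A|}{|S|}\cdot\frac{\delta_{2}}{d^{2}}}$ as a mere ``transformation'' while it is in fact the weakening you identify (dropping the $-\frac{\delta_{2}}{d^{2}}$ term in the denominator).
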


\begin{proof}[Proof of \cref{thm:arblower}]
Let $G=(A\cup B, E)$ be a $d$-regular bipartite Ramanujan  graph with $|A|=|B|=n$ provided by \cref{thm:ramexists}. 
Consider a reconfiguration schedule $\mathcal{S}=V_0,V_1,\dots,V_{\ell}$ from  $A$ to $B$. Let $V_i$ be the first vertex cover for which $|A\setminus V_i|\ge \eta n$,
where $\eta\in (1/n,1)$ is a parameter we will specify below. Let $C=A\setminus V_i$ and note that $N(C)\subseteq V_{i-1}$. Thus, since $|A\setminus V_{i-1}|< \eta n$, we have $|V_{i-1}|\ge |A|-|A\setminus V_{i-1}| + |N(C)|\ge n-\eta n+|N(C)|$. From \cref{c:ramanujan}, 
we also have that
\[
|N(C)|\ge n\cdot \left(1-\left(1+\frac{\eta d^2}{4(d-1)}\right)^{-1}\right)\ge n\cdot \left(1-\frac{1}{1+\eta d/4}\right)\ ,
\]
which implies that
\[
|V_{i-1}|\ge n\left(2-\eta - \frac{1}{1+\eta d/4}\right)\ .
\]
Optimizing for $\eta$, we see that the best choice is when $\eta$ and $\frac{1}{1+\eta d/4}$ are balanced, that is, $(d/4)\eta^2+\eta=1$. This means that denoting $c=d/4$, we obtain \[
\eta=\frac{-1+\sqrt{4c+1}}{2c}=\frac{(-1+\sqrt{4c+1})(1+\sqrt{4c+1})}{2c(1+\sqrt{4c+1})}=\frac{2}{\sqrt{4c+1}+1}=\frac{2}{\sqrt{d+1}+1}\ .
\]
Thus, the claim follows as the cost of the schedule is at least
\[
|V_{i-1}|\ge n\left(2-2\eta\right)=n\left(2-\frac{4}{\sqrt{d+1}+1}\right)\ ,\qedhere
\]
\end{proof}

\begin{proof}[Proof of \Cref{thm:arboricitybounds}]
	The existence of a schedule is precisely \Cref{thm:4batch}, and the impossibility result is \Cref{thm:arblower}. 
\end{proof}

\section{Distributed Computation of Schedules I (Cactus Graphs)}
\label{sec:distributed}

The objective of this section is to provide the formal proof of \Cref{thm:cactus_core_graphs}. As its  proof only requires statements that we have already presented in \Cref{sec:tecDist} we begin with a proving the theorem. Afterwards we prove the statements whose proofs are omitted in \Cref{sec:tecDist}, namely we prove \cref{gen::lma::decomp_to_sched_computation} and provide additional intuition for small separator decompositions in \Cref{sec:generalFramework}, we prove  \cref{lem:compressingClustering} (cluster merging) in \Cref{ssec:clusterMerging}, and we prove \Cref{lem:cactus_core_decomp} (computing small separator decompositions for cactus graphs) in \Cref{ssec:using}. The central tool of this section is the notion of a small separator decomposition which allows simultaneous reconfiguration of `distant clusters' of the graph. The small separator decomposition is also used in \Cref{sec:SLOCAL} to compute reconfiguration schedules for planar and outerplanar graphs. 

\thmCactusCoreRecon*

\begin{proof}
	Let $D=\alpha\oplus \beta$. By our assumption, $G[D]$ is a cactus graph. 
	Use \cref{lem:cactus_core_decomp} to compute a $(d,x)$-separator decomposition $\mathcal{C}=\{C_1,\dots,C_{k}\}$ of $G$ with ghost node set $Z=V\setminus (\alpha\oplus \beta)$, separator set $S$ and parameters $d=O(1/\eps)$, $x=\eps |D|$ and $k= \max\{1, 2 \eps M\}$ (recall that $M = \max\{|\alpha|,|\beta|\}$).
	
	Now, apply \cref{gen::lma::decomp_to_sched_computation} to the clustering $\mathcal{C}$ to compute a reconfiguration schedule: 
	For each cluster $C\in \mathcal{C}$, $G[D \cap C]$ is a cactus graph, hence, from \cref{thm:variousClassesBatch},  $G[C\cap D]$ (and thus also $G[C]$) has an $O(1/\eps)$-batch $(1+\eps,3)$-approximation schedule. Thus, we can apply \cref{gen::lma::decomp_to_sched_computation} on the clustering $\mathcal{C}$ and obtain an $O(1/\eps)$-batch $(1 + \eps,\eps |D| + k\cdot 3)$-approximation schedule, that is, in total we can upper bound the size by 
	\begin{align*}
		(1+\eps) M+\eps|D|+k\cdot 3\stackrel{|D|\leq 2M}{\leq} (1+\eps) M+\eps \cdot 2M + \max\{1, 2 \eps M\} \cdot 3 \leq (1+9\eps)M+3,
	\end{align*}
	showing that the schedule is a $(1+9\eps,3)$-approximation. Rescaling $\eps$ provides the result.

	For the runtime,  applying \cref{lem:cactus_core_decomp} takes $O(\logstar(n)/\eps)$ rounds.  \cref{gen::lma::decomp_to_sched_computation} takes $O(d)=O(1/\eps)$ rounds. Thus, the algorithm consists of $O(\logstar(n)/\eps)$ rounds in total.
\end{proof}
Due to \cref{obs:eh_free}, \cref{thm:cactus_core_graphs} can be applied if $G$ belongs to one of various graph classes.

\corfastDistributedReconf*

\begin{proof} To apply \cref{thm:cactus_core_graphs}, it suffices to notice, as we did in \cref{thm:variousClassesBatch}, that for each graph class above,  $G[\alpha\oplus\beta]$ is a cactus graph.
\end{proof}

\cref{thm:variousClassesBatch} shows that even-hole-free graphs have a $(\ceil{1/\eps}+1)$-batch $(1+\eps,3)$-approximation schedule, and by \cref{thm::t_lb} this schedule is almost optimal. \cref{cor:variousClassesDist} gets almost the same schedule as in \cref{thm:variousClassesBatch}, with the same approximation factor and a length of the same order in $O(\logstar(n)/\eps$) rounds.

\subsection{Small Separator Decompositions} 
\label{sec:generalFramework}
In this section, we recall the purely graph-theoretic small separator decomposition, give additional intuition and provide missing proofs in using it to quickly compute batch reconfiguration schedules with a distributed algorithm. Algorithms to compute such decompositions are postponed to \Cref{ssec:using} and \Cref{sec:SLOCAL}. 

Let $U\subseteq V$ be a subset of the vertices of a given graph $G=(V,E)$. The \emph{weak diameter} of $U$ is the maximum distance measured in $G$ between any two vertices of $U$. The \emph{strong diameter} (or simply diameter) is the diameter of the graph $G[U]$, that is, it is the maximum distance between any two vertices of $U$ measured in the graph $G[U]$.

\DefsmallSeparator*

The separating property of the set $S$ in a $(d,x)$-separator decomposition can be grasped easily by setting $Z=\emptyset$. In this case, the clusters and the separator set partition all vertices of the graph such that there is no edge between any two clusters, that is, any path between two clusters has to go through $S$. If $Z$ is nonempty, at least one endpoint of any inter-cluster edge has to be in $Z$. 
We emphasize that the clusters in a $(d,x)$-separator decomposition can be disconnected.  If we want to compute a $(d,x)$-separator decomposition in the distributed setting we require that each cluster is equipped with a unique ID and a leader node that does not have to be contained in the cluster. Each node of the cluster needs to know the cluster ID and a path of length at most $d$ to the cluster leader.  In simultaneous and independent work, a similar but different decomposition has been used to compute $(1+\eps)$-approximations of the minimum vertex cover problem in the CONGEST model, deterministically and in $\poly(\eps^{-1}, \log n)$ rounds \cite{FK20}.
The next lemma shows that a small separator decomposition is helpful to compute batch reconfiguration schedules.

\lemseparatorHelps*

The proof of \Cref{gen::lma::decomp_to_sched} appears in \Cref{sec:tecDist}.

\lemdecomptoschedcomputation*

\begin{proof}
We show how to construct the required schedule given in \cref{gen::lma::decomp_to_sched}. Consider its four phases.
Phases 1 and 4 can each be implemented in a single round. To compute the schedule of each cluster (in phases 2 and 3), each node learns its entire cluster in $O(d)$ rounds and locally computes its position in the schedule.
\end{proof}

In the proof of \Cref{gen::lma::decomp_to_sched} we reconfigure clusters of the given small separator decomposition in a specific order. The next remark explains why a na\"ive usage of small separator decompositions that does reconfigure all clusters at the same time can be hurtful for the approximation guarantee. 
\begin{remark} 
	\label{remark:nonNaive}
 Assume, for a given small separator decomposition,  we can promise some good approximation on each cluster $C\in \mathcal{C}$, e.g., an $(\eta,c)$-approximation.  Note that the approximation guarantee is only with respect to the maximum between $C_\alpha = C\cap\alpha$ and $C_\beta = C\cap\beta$. Thus, simultaneously reconfiguring a cluster $C$ with $|C_\alpha| \gg |C_\beta|$ and a cluster $C'$ with $|C'_\alpha| \ll |C'_\beta|$ at the same time might result in a much worse overall approximation. 

For illustration, consider a graph composed of the disjoint union of two stars  $S_1$ and $S_2$ of the same size (i.e., $K_{1,t}$ for some integer $t$) with $\alpha$ set to be the center of $S_1$ and the leaves of $S_2$ and $\beta$ defined as its complement. This graph immediately has a $4$-batch $(1,1)$-approximation, which adds $\beta\cap S_2$, removes $\alpha\cap S_2$, adds $\beta\cap S_1$ and then removes $\alpha \cap S_1$. On the other hand, see that a $(1,1)$-approximation on $S_1$ (resp. $S_2$) must add all of $S_1$ (resp. $S_2$) to the vertex cover before removing a single node from it. Therefore, applying the schedule of the stars simultaneously might result in a trivial  $(2,0)$-approximation algorithm.
\end{remark}

\subsection{Cluster Merging (Proof of \cref{lem:compressingClustering})}
\label{ssec:clusterMerging}
In the following Lemma, we show that given a partition of the graph into connected clusters, we can quickly merge some clusters in the partition to ensure that each cluster either has diameter bounded from below or consists of a whole connected component of the graph. Its proof is inspired by the well-known GHS and GKP algorithms for the distributed computation of a minimum-weight spanning tree \cite{GallagerHS83_MST,GarayKP98_MST}. We use this subroutine in several parts of the paper, in particular, we use it at the end of this section to transform a small separator decomposition of $G[\alpha\oplus\beta]$ with potentially many distinct clusters into a small separator decomposition of all of $G$ that only has few clusters (cf. \cref{lem:newSeparatorDecomp}), which is essential to keep the cost low when applying \cref{gen::lma::decomp_to_sched}.

We emphasize that the clusters in a $d$-diameter clustering need to be connected and that all vertices of the graph are contained in some cluster.

\lemClusterMerging*

\begin{proof}
We describe the algorithm that merges clusters from $\mathcal{C}$ to obtain $\mathcal{C'}$. To this end, assume that initially, $\mathcal{C'}=(C_i)_{i=1}^k$ is identical to $\mathcal{C}$. 
For simplicity of exposition, we consider the \emph{cluster-graph} $G'$ with the set $\mathcal{C}'$ of vertices, where two clusters are \emph{adjacent} if any of their corresponding vertices are adjacent in $G$. We describe the algorithm using $G'$, and explain below how this can be implemented in $G$. 
The algorithm consists of two stages, the first consists of $O(\log (1/\eps))$ iterations, while the second happens only once. 

\subparagraph*{Stage 1.} This stage consists of iterations $t=0,1,2,\dots,\lceil\log_{2}(1/\eps)\rceil$. Next, we describe an iteration $t$.
A cluster $C_i$ is \emph{small} (in iteration $t$) if it has diameter at most $2^t$ and otherwise it is \emph{large}. A small cluster is \emph{isolated} if it is not adjacent to any other small cluster (it could still have edges to large clusters).

We update the clustering $\mathcal{C'}$ by merging several small clusters together. 
To this end, every small cluster $C'_i$ (thought of as a node in $G'$) picks an arbitrary edge $\{v,u\}\in E(G')$ connecting it to another small cluster $C'_j$, if such an edge exists. Let $F$ be the set of selected edges. The edges in $F$ are oriented as follows: if an edge $\{C'_i,C'_j\}$ was picked only by $C'_i$, it becomes an out-edge of $C'_i$, i.e., oriented as $(C'_i,C'_j)$, otherwise (if $C'_i$ and $C'_j$ both picked $\{C'_i,C'_j\}$) it is oriented arbitrarily. This gives us a pseudoforest $\vec F$ (in $G'$), i.e., a directed subgraph with maximum outdegree 1. We compute a maximal independent set $I$ in $\vec F$  with the algorithm from~\cite{Cole_1986}.

We update the clustering $\mathcal{C}'$, as follows: every non-isolated small cluster that is not in $I$ picks an arbitrary adjacent cluster in $I$ and merges with it. That is, given a cluster $C'_i\in I$, let $C'_{i_1},\dots,C'_{i_l}$ be the clusters that merge with $C'_i$. We remove $C'_i$ and $C'_{i_1},\dots,C'_{i_l}$   from $\mathcal{C}'$ and add a new cluster $C'=C'_i\cup C'_{i_1}\cup\dots\cup C'_{i_l}$. This completes the description of one iteration of Stage 1.

\subparagraph*{Stage 2.} Every cluster with diameter smaller than $1/\eps$ picks an arbitrary adjacent cluster, if there is any, and merges with it. 

~\\\noindent This completes the description of the algorithm. 

\medskip

\subparagraph*{Diameter bounds.}
Towards the rest of the analysis, let us first note that after each iteration $t$ of Stage 1, due to the maximality of the independent set $I$, if a small cluster was not merged, then all of its neighbors in $G'$ must have diameter greater than $2^t$. Therefore, after the last iteration ($t=\lceil \log(1/\eps)\rceil$) of Stage 1, every cluster with diameter smaller than $1/\eps$ (i.e., a small cluster) that is not a whole connected component of $G$, is adjacent to a cluster of diameter at least $1/\eps$, and hence is merged with such a cluster in Stage 2. 
Thus, after Stage 2, every cluster is either an entire connected component of $G$ or has diameter at least $1/\eps$. 

Next, let us see that the diameter of a cluster does not increase abruptly within a single step of the algorithm. In iteration $t$ of Stage 1, only clusters with diameter at most $2^t$ participate in a merge, and whenever a cluster merges with a set of other clusters they together induce a subgraph of $G'$ of diameter at most $2$ (which implies that the longest chain of such clusters is of length 3). Thus, after iteration $t$, every new cluster has diameter at most $3\cdot 2^{t}+2$, that is, $2^t$ for every cluster in any chain of merged clusters and the additive $2$ for the edges connecting the clusters in the chain. Therefore, after Stage 1, every cluster has diameter at most $\max\curlybrackets*{d,3\cdot \lceil 1/\eps\rceil+2} = d + O(1/\eps)$. In Stage 2, every cluster of diameter smaller than $1/\eps$ that is not a separate component of $G$ merges into another cluster, hence, after this stage, the maximum diameter of a cluster is increased by at most $2 + 2/\eps$. This yields that the maximum diameter of a cluster is still at most $d + O(1/\eps)$.

\subparagraph*{Runtime and implementation in $G$.} 
First, we can use a preprocessing stage (before Stage 1) of $O(d)$ rounds, to let each cluster elect a leader and compute its diameter. Note that in iteration $t$ of Stage 1, 
all clusters that participated have diameter bounded by $2^t$ and thus one round of an algorithm on the cluster-graph in iteration $t$ can be simulated in $O(2^t)$ rounds in $G$. Computing a maximal independent with the algorithm from \cite{Cole_1986} uses $O(\logstar \eta)$ \emph{cluster-graph rounds} where $\eta$ is the maximum ID of a node in $\vec F$ (here, $\eta=poly(|V|)$, as each $C_i$ can assume the ID of its leader).
Therefore, the first stage has runtime
\begin{align*}
    \sum_{t=1}^{\ceil*{\log(1/\eps)}}O(\logstar(n) \cdot 2^t)  = O\parenths*{\frac{\logstar n }{\eps}}.
\end{align*}
Similarly, Stage 2 takes $O(1/\eps)$ rounds in $G$.
\end{proof}

The value of $\eps$ in \cref{lem:compressingClustering} can either be a constant or chosen subconstant and can depend on $n$ without violating the correctness of the claim.
Next, we show that any small separator decomposition of $G[\alpha\oplus \beta]$ can be transferred into a small separator decomposition of $G$ with few clusters ($O(\eps M)$ with $M=\max\{|\alpha|,|\beta|\}$), where the ghost node set can be chosen in order to apply \cref{gen::lma::decomp_to_sched_computation}. We begin with a simple observation about vertex covers.

Next, we use \Cref{lem:compressingClustering} to reduce the number of clusters in a small separator decomposition without increasing the size of the separator set and with only mildly increasing the cluster diameter.
\begin{restatable}{lemma}{newSeparatorDecomp}
	\label{lem:newSeparatorDecomp}
	For any $\eps>0$, $d\geq 1$ there is a deterministic \LOCAL algorithm that, given a graph $G=(V,E)$ with two vertex covers $\alpha$ and $\beta$ and a $(d,x)$-separator decomposition of $G[\alpha\oplus \beta]$ with an empty ghost node set, computes a $(d + O(1/\eps),x)$-separator decomposition of $G$ with a ghost node set $Z=V\setminus (\alpha \oplus \beta)$ and $\max\{1,2\eps M\}$ clusters in $O(\logstar(n)/\eps+d)$ rounds.
\end{restatable}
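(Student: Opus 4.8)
The plan is to keep the given separator set $S$ verbatim as the separator of the output decomposition, and to produce the new clusters by a single call of the cluster-merging routine \Cref{lem:compressingClustering} on the whole (connected) graph $G$. Write $D=\alpha\oplus\beta$ and $M=\max\{|\alpha|,|\beta|\}$, and let $\mathcal{C}_{\mathrm{in}}=(C_1,\dots,C_k)$ with separator $S\subseteq D$, $|S|\le x$, be the input $(d,x)$-separator decomposition of $G[D]$, so $C_1\cup\dots\cup C_k=D\setminus S$. Since $S\subseteq D$, it is automatically disjoint from the prescribed ghost set $Z=V\setminus D$, and reusing it is free and immediately gives $|S|\le x$ for the output. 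First I would form the $d$-diameter clustering of $G$ whose parts are $C_1,\dots,C_k$ together with a singleton $\{v\}$ for every $v\in S\cup Z$; this is a valid input to \Cref{lem:compressingClustering} provided the $C_i$ are connected with induced diameter $\le d$, which holds in the intended application (where $\mathcal{C}_{\mathrm{in}}$ is itself the output of a cluster-merging step), and otherwise one first refines $\mathcal{C}_{\mathrm{in}}$ restricted to $D\setminus S$ into such pieces. Applying \Cref{lem:compressingClustering} with parameter $\Theta(\eps)$ then yields, in $O(\logstar(n)/\eps+d)$ rounds, a clustering $\mathcal{C}''$ of $V$ of strong diameter $d+O(1/\eps)$ in which every part either has diameter at least $c/\eps$, for a suitable constant $c>0$, or equals the unique connected component $V$ of $G$. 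The output decomposition is $\bigl(\{C''\setminus S : C''\in\mathcal{C}'',\ C''\setminus S\neq\emptyset\},\,S\bigr)$ with ghost set $Z$.

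Checking the conditions of \Cref{def:smallSeparator} is then routine. For the diameter, $C''\setminus S\subseteq C''$, so its weak diameter in $G$ is at most the strong diameter of $C''$, i.e.\ $d+O(1/\eps)$; the separator size is unchanged. For the separating property, the point is that cluster merging only ever forms unions of the initial parts, so each input cluster $C_i$ stays intact inside a single part of $\mathcal{C}''$. Hence if $u,v\in D\setminus S$ lie in distinct output clusters and $\{u,v\}\in E$, then $u\in C_a$, $v\in C_b$ with $a\neq b$, making $\{u,v\}$ an edge of $G[D]$ between two distinct input clusters, contradicting the separating property of $\mathcal{C}_{\mathrm{in}}$. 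Any vertex that could spoil separation and is not in $D\setminus S$ lies in $S$ or in $Z$, so $S$ separates $\mathcal{C}''$ with respect to $Z$.

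The substantive part is bounding the number of output clusters by $\max\{1,2\eps M\}$; since deleting $S$ creates no new parts, it suffices to bound $|\mathcal{C}''|$. If $\mathcal{C}''=\{V\}$ we are done, so assume every part has diameter at least $c/\eps$ and hence contains a shortest path on $\ge c/\eps$ vertices. Here I would invoke the structural fact (stated in the preliminaries) that $Y:=V\setminus(\alpha\cup\beta)$ is an independent set: on any path at least every other vertex lies in $V\setminus Y=\alpha\cup\beta$, so every part of $\mathcal{C}''$ contains $\Omega(1/\eps)$ vertices of $\alpha\cup\beta$. Since the parts partition $V$, $\sum_{C''\in\mathcal{C}''}|C''\cap(\alpha\cup\beta)|\le|\alpha\cup\beta|\le|\alpha|+|\beta|\le 2M$, whence $|\mathcal{C}''|=O(\eps M)$, and tuning $c$ makes it at most $2\eps M$. (If $G$ is disconnected one runs the construction per component, the $\max\{1,\cdot\}$ absorbing tiny components.) The running time is dominated by the call to \Cref{lem:compressingClustering}, i.e.\ $O(\logstar(n)/\eps+d)$ rounds.

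The main obstacle is precisely this count: \Cref{lem:compressingClustering} controls only cluster \emph{diameters}, so a naive argument bounds $|\mathcal{C}''|$ by $O(\eps n)$ rather than $O(\eps M)$ — imagine long ``threads'' of $Y$-vertices forming large-diameter clusters carrying no $(\alpha\cup\beta)$-mass. What rescues the $O(\eps M)$ bound is the independence of $Y$ (and, implicitly, that its neighbours all lie in $\alpha\cap\beta$), which forces a constant fraction of every long merged cluster into $\alpha\cup\beta$, a set of size $O(M)$. A secondary wrinkle is that \Cref{lem:compressingClustering} expects a genuine \emph{strong}-diameter clustering whereas a separator decomposition is a priori only weak; this is harmless in the application but in general requires the refinement step noted above.
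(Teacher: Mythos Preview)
Your proposal is correct and follows essentially the same approach as the paper: form a $d$-diameter clustering of $G$ from the input clusters plus singletons for $S\cup Z$, apply \Cref{lem:compressingClustering}, then strip $S$ from each merged cluster; the paper's cluster-count argument is exactly your ``at least half the vertices of any path lie in a vertex cover'' observation, phrased as ``every vertex cover of a path contains at least half of its vertices'' and applied to $\alpha$ (or $\beta$) directly rather than to $\alpha\cup\beta$. The refinement you flag for weak-diameter inputs is also in the paper (it takes connected components of each input cluster as the first step).
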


We emphasize that \cref{lem:newSeparatorDecomp} can deal with disconnected input clusters as long as their weak diameter is small which is along the definition of a $(d,x)$-separator decomposition. As such, the output clusters of \cref{lem:newSeparatorDecomp} are not necessarily connected (i.e. each cluster might be formed of several disconnected components).

\begin{proof} 
Let $D=\alpha\oplus\beta$, $X=\alpha\cap\beta$ and let $\mathcal{C}=\{C_1,\dots,C_l\}$  be the $(d,x)$-separator decomposition of $G[D]$ with separator set $S\subseteq D$ where each connected component of a cluster of the input separator decomposition is its own cluster.  Note that $D\cup X=\alpha\cup \beta$. We say that $\mathcal{C}$ is the \textbf{first clustering}.

The \textbf{second clustering} is formed by the clusters in $\mathcal{C}$ and a new cluster for every vertex that is not contained in $C_1\cup \dots \cup C_l$. Each such new cluster consists of a single node.  
We obtain a $d$-diameter clustering of $G$ (with the same $d$ as in the first clustering). 

\subparagraph*{Third clustering.} Apply  \cref{lem:compressingClustering} to the second clustering and obtain a $(d + O(1/\eps))$-diameter clustering $\mathcal{C'}$ of $G$. If $\mathcal{C'}$ consists of a single component, then clearly the number of components is at most $\max\{1,2\eps  M\}$. Otherwise, $\mathcal{C'}$ consists of at least $2$ components, and by  \cref{lem:compressingClustering}, since $G$ is connected, each cluster has minimum diameter $1/\eps$. Since every vertex cover of a path contains at least half of its vertices, 
the diameter $1/\eps$ implies that every vertex cover of $G$ must have at least $1/2\eps$ nodes in $C'_i$ and so, we conclude that there are at most $2\eps M$ clusters.

\subparagraph*{Fourth clustering.} Now, notice that each cluster $C'$ consists of $S$-nodes, clusters $C_{i}$ (from the clustering $\mathcal{C}$), and $Z$-nodes, where we defined $Z=V\setminus D$. We remove from each $C'$ its $S$-nodes, and claim that this yields a $(d + O(1/\eps),x)$-separator decomposition with a ghost node set $Z$ and at most $\ceil*{2\eps M}$ clusters.
\begin{enumerate}
    \item Before removing the $S$-nodes, each cluster $C'$ had a strong diameter of $O(1/\eps)$. After the removal, it might become disconnected, but must still have a weak diameter of $O(1/\eps)$.

    \item By the definition of $S$ (in the first clustering), we have $|S| \leq x$.
    
    \item From the definition of a $d$-diameter clustering, before removing the $S$-nodes, $\mathcal{C}'$ was a partition of $V$. Hence, $S$ together with the new clusters again form a partition of $V$. 
		
    \item $S$ separates: Note that every component $C_{i}$ of $\mathcal{C}$ is contained in a distinct cluster $C'$ of $\mathcal{C}'$, and for every $C'$, $C' \setminus Z$ is composed only of such $C_{i}$ components. Therefore, together with the fact that $S$ separates $\{C_{i}\}_{i}$, this shows that $S$ separates $\{C'_j \setminus Z\}_j$.\qedhere
\end{enumerate}
\end{proof}

\subsection{Computing Small Separator Decompositions for Cactus Graphs}
\label{ssec:using}
In this section we show that we can efficiently compute small separator decompositions, if $G[\alpha\oplus\beta]$ is a cactus graph. In fact, we prove the following restated lemma. 

\lemCactusCoreDecomp*

We begin with the following observation on cactus graphs that is crucial for bounding the number of separators in our algorithm. Recall that a graph $H$ is a minor of a graph $G$ if $H$ can be obtained from $G$ by removing edges, vertices, or contracting edges (merging their endpoints).
\begin{observation}[Cactus graphs]\label{obs:cactusMatching}\label{obs:cactusMinor}
    1. The family of cactus graphs is closed under taking minors.
    2. Every connected cactus graph on $n$ vertices contains at most $3n/2$ edges.
\end{observation}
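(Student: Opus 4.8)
The plan is to base both parts on the classical block decomposition of a cactus. Write the \emph{blocks} of a graph for its bridges and its maximal $2$-connected subgraphs; since every cycle is $2$-connected, each cycle of a graph lies inside a single block. The first step I would carry out is to record the folklore characterization ``$G$ is a cactus if and only if every block of $G$ is a single edge or a cycle'': if some block $B$ were $2$-connected but not a cycle, an ear decomposition of $B$ exhibits an initial cycle together with an added ear, and the ear plus each of the two arcs into which its endpoints split that cycle gives two distinct cycles sharing the ear's edge(s), contradicting the cactus property; the converse is immediate, since each edge lies in exactly one block and a cycle-block carries exactly one cycle.

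With this characterization in hand, part~(1) reduces to checking the three elementary minor operations. Deleting a vertex or an edge yields a subgraph $H\subseteq G$, and a subgraph of a cactus is a cactus (an edge lying on two cycles of $H$ would lie on two cycles of $G$), so the only real work is \textbf{edge contraction}, which I expect to be the main obstacle. For $G/e$ with $e=\{u,w\}$ identified to a vertex $z$, the key claim to establish is that every cycle $C$ of $G/e$ lifts to a cycle $\widetilde C$ of $G$ whose edge set equals that of $C$ together with possibly the edge $e$: if $z\notin C$ take $\widetilde C=C$, and if $z\in C$ with $C$-neighbours $a,b$ of $z$, then $a$ is adjacent in $G$ to $u$ or to $w$ and likewise for $b$, so one closes the walk by routing through $u$, through $w$, or through the path $u$-$e$-$w$. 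Since $e$ is absent from $G/e$, the edge set of $C$ is recovered from that of $\widetilde C$, so $C\mapsto\widetilde C$ is injective on edge sets; hence two distinct cycles of $G/e$ sharing an edge would produce two distinct cycles of $G$ sharing the corresponding edge, which is impossible. (If $G/e$ acquires parallel edges I would first delete them, a subgraph step that preserves being a cactus.) I would also remark that part~(1) is in any case immediate from the forbidden-minor characterization of cacti noted in \cref{sec:defs}, as classes defined by excluding a fixed minor are minor-closed.

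For part~(2), let $G$ be a connected cactus with $n$ vertices and $m$ edges. I would first note that the cycles of $G$ are pairwise edge-disjoint (a common edge would lie on two cycles) and each is its own block, so, being edge-disjoint, they are linearly independent in the cycle space and, being the simple cycles of $G$, they span it; hence the number of cycle-blocks equals the cyclomatic number $m-n+1$. Partitioning the edge set over the blocks and using that each cycle-block has at least $3$ edges while each bridge-block has exactly one, we get $m=\sum_{B}e(B)\ge 3\,(m-n+1)$, which rearranges to $2m\le 3(n-1)$, i.e.\ $m\le\tfrac32(n-1)\le\tfrac32 n$, as claimed. That is the entire argument; the only slightly delicate point is the identity ``$\#$cycle-blocks $=m-n+1$'', which I would justify via the block-wise decomposition of the cycle space, and I would note in passing that equality holds precisely when every block is a triangle (e.g.\ for the friendship graph).
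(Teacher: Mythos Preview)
Your proof is correct and, for part~2, essentially identical to the paper's: both arrive at $m-n+1\le m/3$ by observing that the cycles of a cactus are edge-disjoint, each has at least three edges, and their number equals the cyclomatic number (the paper phrases this as ``remove one edge per cycle to get a tree'' rather than via the cycle space, but it is the same count).

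For part~1, your argument is more elaborate than the paper's. The paper gives a one-line contrapositive---if an edge of the minor lies on two cycles, then ``reverting'' the minor operations keeps it on two cycles---without spelling out the lifting. Your block-decomposition setup and the explicit cycle-lifting for $G/e$ make that step rigorous, at the cost of more machinery; the quick route you mention at the end (cacti are exactly the diamond-minor-free graphs, and forbidden-minor classes are trivially minor-closed) is the shortest correct proof and is also implicit in the paper's \cref{sec:defs}.
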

\begin{proof} For the first claim, observe that if after removing vertices or edges, or contracting edges, an edge belongs to at least two cycles (i.e., the obtained graph is not cactus), then it will belong to at least two cycles after reverting the above operations too.
To bound the number of edges, observe that all cactus graphs are simple, and since each edge belongs to at most 1 cycle, one can remove an edge from each cycle to obtain a tree, hence the number of edges is $m=n-1+c$, where $c$ is the number of cycles. Also, each cycle contains at least 3 edges, and every edge belongs to at most one cycle, hence, $c\le m/3$. Thus, $m-n+1\le m/3$, which implies that $m< 3n /2$.
\end{proof}

In the proof of the following statement we use the cluster merging procedures developed in \Cref{ssec:clusterMerging} to compute a decomposition of the graph with small diameter clusters; then we add one vertex of each inter-cluster edge to the separator set and \Cref{obs:cactusMinor} is crucial to show that the separator set is small. 

\begin{restatable}[Small separator set, many clusters]{lemma}{lemCactusDecomp}\label{lem:cactus_decomp}
	For any $\eps > 0$ and for any $n$-node connected cactus graph $G=(V,E)$ with a ghost set $Z = \emptyset$, there exists a $(d,x)$-separator decomposition with $d = O(1/\eps)$ and $x = \eps n$, and with the additional property that the strong diameter of each cluster of the decomposition is at most $d$. Furthermore, such a decomposition can be found in $O\parenths*{\logstar(n)/\eps}$ rounds in \LOCAL.
\end{restatable}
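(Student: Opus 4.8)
The plan is to start from the trivial decomposition where every vertex is its own cluster, run the cluster-merging procedure of \Cref{lem:compressingClustering} to make all clusters have strong diameter at least $1/\eps$ (or be a full connected component, which here cannot happen since $G$ is connected and we only stop if a single cluster remains), and then absorb one endpoint of every surviving inter-cluster edge into the separator set $S$. The diameter bound $d = O(1/\eps)$ is immediate from \Cref{lem:compressingClustering} since the input clustering has diameter $0$; the runtime bound $O(\logstar(n)/\eps)$ is likewise immediate from that lemma (with $d=0$). The strong-diameter property of the output clusters holds because \Cref{lem:compressingClustering} produces connected clusters of bounded strong diameter, and we only ever \emph{add} vertices to $S$, which keeps the remaining pieces as (possibly now disconnected — but see below) subsets; to preserve strong connectivity we should instead remove $S$-vertices carefully, so I would rather add to $S$ a vertex that is an \emph{endpoint of an inter-cluster edge lying in the "child" side} and argue the clusters stay connected, or simply accept weak diameter — but the lemma statement demands strong diameter, so the cleanest route is: pick for each inter-cluster edge the endpoint in whichever cluster we are about to "sacrifice few nodes from", and note that by the edge-count bound below, only an $\eps$-fraction of nodes are removed, after which the residual clusters are still of diameter $O(1/\eps)$ — if a residual cluster becomes disconnected we split it into its connected components, each of diameter $\le O(1/\eps)$. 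That only increases the number of clusters, which is not constrained here.

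The crux is bounding $|S|$ by $\eps n$. After merging, either there is a single cluster (then $S = \emptyset$ and we are done), or every cluster has strong diameter at least $1/\eps$ and hence at least $1/\eps$ vertices, so the number $k$ of clusters satisfies $k \le \eps n$. Now form the \emph{cluster graph} $H$ by contracting each cluster to a point. Since cactus graphs are minor-closed (\Cref{obs:cactusMinor}, part 1) and $H$ is a minor of $G$, $H$ is itself a cactus, so by part 2 of \Cref{obs:cactusMinor} it has at most $3k/2$ edges. The key additional observation — the same one used in the proof sketch of \Cref{lem:cactus_core_decomp} — is that in a cactus graph any two vertex sets can be joined by at most two edges that are "parallel" in the contracted sense (any edge lies in at most one cycle, so two clusters cannot be joined by three internally-disjoint paths/edges without creating an edge in two cycles), hence each edge of $H$ corresponds to at most two actual inter-cluster edges of $G$. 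Therefore the total number of inter-cluster edges in $G$ is at most $2 \cdot 3k/2 = 3k \le 3\eps n$. Adding one endpoint per inter-cluster edge gives $|S| \le 3\eps n$; rescaling $\eps$ by a constant factor gives the claimed $|S| \le \eps n$.

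Finally I would verify the four conditions of \Cref{def:smallSeparator} with $Z = \emptyset$: the partition property holds since the merged clustering partitions $V$ and we moved a subset into $S$ (splitting residual clusters into components as needed keeps it a partition); the diameter bound is $d = O(1/\eps)$ as argued; $|S| \le x = \eps n$ as just shown; and $S$ separates because every inter-cluster edge of $G$ had one endpoint placed into $S$, so no edge remains between two distinct clusters. I expect the main obstacle to be the bookkeeping around keeping clusters of \emph{strong} diameter bounded after pulling vertices into $S$ — the honest fix is to observe that removing vertices from a diameter-$O(1/\eps)$ connected cluster leaves connected components each of diameter $O(1/\eps)$ (measured inside the original cluster, hence inside $G$), and to list those components as separate clusters; since this lemma places no upper bound on the number of clusters, this is harmless, and the subsequent \Cref{lem:newSeparatorDecomp} is precisely what later controls the cluster count.
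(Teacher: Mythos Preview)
Your proposal is correct and follows essentially the same argument as the paper: apply \Cref{lem:compressingClustering} to the trivial $0$-diameter clustering, bound the number of clusters by $\eps n$ via the minimum-diameter guarantee, use minor-closure of cacti plus the ``at most two parallel inter-cluster edges'' observation to bound inter-cluster edges by $3k$, move one endpoint of each into $S$, and split any disconnected residual cluster into components. Your handling of the strong-diameter bookkeeping (splitting into components and noting the lemma imposes no cluster-count bound) is exactly what the paper does, and your rescaling of $\eps$ at the end matches as well.
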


\begin{proof}
Apply \cref{lem:compressingClustering} on the 0-diameter initial clustering of $G$, with every cluster consisting of a single node $v \in V$.  
This  yields a clustering $\mathcal{C}'= \{C'_1,\hdots,C'_k$\} with each cluster having a strong diameter $O(1/\eps)$, in $O(\logstar(n)/\eps)$ rounds. 
If $k = 1$, we let $S = \emptyset$ to get the desired decomposition. On the other hand, if $k \geq 2$, recall that \cref{lem:compressingClustering} gives in this case clusters of diameter at least $1/\eps$ (as $G$ is connected), and so, the total number of clusters is bounded by $k\leq \eps n$. 

It follows from the definition of a $d$-diameter clustering and the assumption that $G$ is connected that each cluster induces a connected subgraph of $G$. Let us now bound the number of inter-cluster edges $I$ (i.e., edges with endpoints in different clusters) for $k\geq 2$. First, observe that there can be at most 2 inter-cluster edges connecting any two given clusters, since otherwise, using the fact that each cluster induces a connected subgraph, we can find two cycles that share an edge, which contradicts the assumption that $G$ is a cactus graph.
Let us form the \emph{cluster-graph} $H=(\mathcal{C}', E')$, where two vertices $C'_i,C'_j\in\mathcal{C}'$ are adjacent if and only if there is an inter-cluster edge in $G$ with one endpoint in $C'_i$ and the other in $C'_j$. The observation above implies that an edge in $E'$ can correspond to at most 2 inter-cluster edges, i.e., $|E'|\ge |I|/2$. On the other hand, since each cluster $C'_i$ induces a connected subgraph of $G$, observe that $H$ is a simple minor of $G$. Recall from  \cref{obs:cactusMinor} that cactus graphs are closed under taking minors, hence $H$ is a cactus graph, which implies by  \cref{obs:cactusMatching} that the number of its edges can be bounded by
\begin{align*}
    |\mathcal{C'}| - 1 + |M(H)| \leq \frac{3}{2}|\mathcal{C'}|= \frac{3}{2}k,
\end{align*}
where $M(H)$ is a maximum matching of $H$. Putting all together, thus the number of inter-cluster edges is at most $|I|\leq 2|E'|\leq 3k$.

To construct $S$, for each inter-cluster edge, arbitrarily add one of its two endpoints to $S$ and remove that node from its cluster, while the other endpoint remains in its cluster. These removals might disconnect clusters in $\mathcal{C'}$, creating a new clustering $\mathcal{C}$ with (possibly) more clusters; however, this process does not introduce new edges between clusters. Thus $S$ separates the clusters $\mathcal{C}$ and we get that $|S|$ is exactly the number of inter-cluster edges, hence, at most $3k\leq 3\eps n$. The clusters of $\mathcal{C}$ along with $S$ provide the desired decomposition. Since $\eps$ was arbitrary, we get the claim of the lemma.
\end{proof}

To obtain \Cref{lem:cactus_core_decomp} from the small separator decomposition provided by \Cref{lem:cactus_decomp} we use \Cref{lem:newSeparatorDecomp} to reduce the number of clusters. 

\begin{proof}[Proof of \Cref{lem:cactus_core_decomp}]
Let $D=\alpha\oplus \beta$. By our assumption, $G[D]$ is a cactus graph with components $G_1,\hdots,G_k$. 
We first create a small separator decomposition of $G[D]$, which we later extend to a decomposition of the whole graph.

Apply the cactus graph decomposition in \cref{lem:cactus_decomp} to each component $G_i$ to obtain a clustering $\{C_{ij}\}_{j=1,\hdots,k_i}$ and a separator set $S_i$, for each such $G_i$. We use those to form a clustering $\mathcal{C} = \{C_{ij}\}_{j=1,\hdots,k_i,i=1,\hdots, k}$ and a separator set $S = \bigcup_{i=1}^{k}S_i$ of $G[D]$. Note that by \cref{lem:cactus_decomp}, each $C_{ij}$ is a connected cactus graph with diameter $O(1/\eps)$ and the size of $S$ is bounded by $\sum_{i=1}^{k}\eps |G_i| = \eps |D|$. The obtained clustering is a $(d,x)$-separator decomposition of $G[D]$ with an empty ghost node set,  separator set $S$, and parameters $d=O(1/\eps)$ and $x=\eps |D|$.
Now, apply \cref{lem:newSeparatorDecomp} to obtain a $(d,x)$-separator decomposition $\mathcal{C'}=\{C_1',\dots,C_{k'}'\}$ of $G$ with ghost node set $Z=V\setminus (\alpha\oplus \beta)$, separator set $S$ and parameters $d=O(1/\eps)$, $x=\eps |D|$ and $k'= \max\{1, 2 \eps M\}$ (recall that $M = \max\{|\alpha|,|\beta|\}$).

For the runtime, see that applying \cref{lem:cactus_decomp} to each component takes $O(\logstar(n)/\eps)$ rounds and \cref{lem:newSeparatorDecomp} takes $O(\logstar(n)/\eps+d)=O(\logstar(n)/\eps)$ rounds. Thus, the algorithm consists of $O(\logstar(n)/\eps)$ rounds in total.
\end{proof}

\section{Distributed Computation of Schedules II (Planar Graphs)}
\label{sec:SLOCAL}
The main objective of this section is to prove the following result.
\begin{theorem}[Reconfiguration on Planar Graphs]
	\label{thm:distrPlanar}
	Let $\eps \in (0,1)$. There exists a deterministic $O(\poly\log (n) /\eps^2)$ round \LOCAL algorithm to  compute a  $O(1/\eps)$-batch $(1+\eps,O(1/\eps))$-approximation vertex cover reconfiguration schedules on planar graphs.
\end{theorem}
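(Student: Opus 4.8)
The plan is to realize \Cref{thm:distrPlanar} as a direct instantiation of the framework of \Cref{gen::lma::decomp_to_sched_computation}, the only missing ingredient being a $\poly\log(n)$-time construction of a small separator decomposition. Throughout write $D=\alpha\oplus\beta$ and $Z=V\setminus D$; note that $G[D]$ is bipartite and, as an induced subgraph of a planar graph, planar, and that we may assume $G$ is connected. The target is a $(d_2,\eps|D|)$-separator decomposition of $G$ with ghost set $Z$, $d_2=\poly\log(n)/\eps^2$, and only $k=\max\{1,2\eps^2 M\}$ clusters; once we have it, each $G[C_i]$ is planar and hence carries a monotone $O(1/\eps)$-batch $(1+\eps,O(1/\eps))$-approximation schedule by \Cref{thm:variousClassesBatch}, so \Cref{gen::lma::decomp_to_sched_computation} finishes the job.

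The core step is building a $(d_1,\eps|D|)$-separator decomposition of $G[D]$ with empty ghost set and $d_1=O(\log n/\eps)$ by ball carving. In the \SLOCAL model one scans the vertices of $G[D]$ in an arbitrary order; when the current vertex $v$ is still unassigned, grow a BFS from $v$ inside the subgraph induced by the not-yet-removed vertices. If every radius $\ell\le L$ satisfied $|B_\ell|\ge(1+\eps)|B_{\ell-1}|$ we would get $|B_L|\ge(1+\eps)^L$, so there is a radius $\ell^\star\le\log_{1+\eps}n=O(\log n/\eps)$ whose shell $B_{\ell^\star}\setminus B_{\ell^\star-1}$ has at most $\eps|B_{\ell^\star-1}|$ vertices. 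Declare $B_{\ell^\star-1}$ a cluster (its radius, hence its weak diameter in $G$, is $O(\log n/\eps)$), put the shell into $S$, and remove all these vertices; the shell separates the new cluster from the remaining graph, so distinct clusters are non-adjacent, and summing the shell bounds over all clusters gives $|S|\le\eps|D|$. This \SLOCAL algorithm has locality $O(\log n/\eps)$, so by the \SLOCAL-to-\LOCAL derandomization of~\cite{SLOCAL17} on top of a deterministic network decomposition~\cite{awerbuch89,linial93,Vaclav_2019} it runs in $\poly\log(n)/\eps$ deterministic \LOCAL rounds.

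It then remains to cut down the cluster count and plug everything in. Apply \Cref{lem:newSeparatorDecomp} with parameter $\eps^2$ to the decomposition of $G[D]$: this returns a $(d_2,\eps|D|)$-separator decomposition of $G$ with $d_2=d_1+O(1/\eps^2)=O(\log n/\eps+1/\eps^2)$, ghost set $Z=V\setminus D$, and $k=\max\{1,2\eps^2 M\}$ clusters, in $O(\logstar(n)/\eps^2+d_1)$ extra rounds. Feeding this decomposition and the planar per-cluster schedules ($\ell=O(1/\eps)$, $\eta=1+\eps$, $c=O(1/\eps)$) into \Cref{gen::lma::decomp_to_sched_computation} yields, in $O(d_2)=O(\poly\log(n)/\eps^2)$ rounds, an $O(1/\eps)$-batch schedule of cost at most $(1+\eps)M+\eps|D|+k\cdot c$. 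Using $|D|\le 2M$, the term $\eps|D|$ is $O(\eps M)$, while $k\cdot c=\max\{1,2\eps^2 M\}\cdot O(1/\eps)$ is $O(\eps M)$ when $2\eps^2M\ge 1$ and $O(1/\eps)$ otherwise (in which case also $\eps|D|<1/\eps$); in all cases the total cost is $(1+O(\eps))M+O(1/\eps)$. Rescaling $\eps$ by a constant gives the claimed $O(1/\eps)$-batch $(1+\eps,O(1/\eps))$-approximation in $O(\poly\log(n)/\eps^2)$ rounds.

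The main obstacle is the second paragraph: the thin-shell existence is a one-line geometric estimate, but turning ball carving into a \emph{deterministic, poly-logarithmic-round} \LOCAL procedure requires the network-decomposition/derandomization machinery, and one must verify that carving on $G[D]$ (rather than $G$), and then reinserting the $Z$-nodes and stripping $S$ from the merged clusters, preserves the separating property — which is exactly what \Cref{lem:newSeparatorDecomp} is engineered to deliver. A secondary subtlety is balancing the two error sources, the separator size $|S|$ and the per-cluster additive blow-up $k\cdot c$; this is what dictates the cluster-merging scale $\eps^2$ and hence the $1/\eps^2$ in the round complexity.
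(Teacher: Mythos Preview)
Your proposal is correct and follows essentially the same approach as the paper: ball carving on $G[D]$ via \SLOCAL plus network-decomposition derandomization (the paper's \Cref{thm:ballCarving}), then \Cref{lem:newSeparatorDecomp} to bound the cluster count, then \Cref{gen::lma::decomp_to_sched_computation} with the planar per-cluster schedules from \Cref{thm:variousClassesBatch}. The only cosmetic difference is that the paper bundles the first two steps into a single black box (\Cref{thm:genGraphSepDecomp}) and invokes it with parameter $\eps^2$ throughout, whereas you use $\eps$ for the carving and $\eps^2$ only for the merging; both choices yield the same $(1+O(\eps),O(1/\eps))$ bound and $O(\poly\log(n)/\eps^2)$ round complexity.
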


To prove \cref{thm:distrPlanar}, we use \cref{thm:genGraphSepDecomp}, which shows that one can compute efficiently a small separator decomposition on any graph.
\begin{theorem}[Distributed Small Separator Decomposition] \label{thm:genGraphSepDecomp}
	For any $\eps>0$ there is a deterministic \LOCAL algorithm that for any connected $n$-node graph $G=(V,E)$ with vertex covers $\alpha$ and $\beta$ computes a $(d,x)$-separator decomposition with $d=O(\log n/\eps)$ and $x=\eps |\alpha \oplus \beta|$ and $k = \max\{1,2\eps M\}$ clusters ($M = \max\{|\alpha|,|\beta|\}$)  in $O(\poly\log (n) /\eps)$ rounds.
\end{theorem}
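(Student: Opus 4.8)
The plan is to reduce the statement to constructing a decomposition of the ``core graph'' $G[D]$, where $D:=\alpha\oplus\beta$, and then invoke \Cref{lem:newSeparatorDecomp} to add back the ghost nodes $V\setminus D$ and to bring the number of clusters down to $\max\{1,2\eps M\}$. Concretely, I would first show how to compute, deterministically in $O(\poly\log(n)/\eps)$ rounds, a $(d',x)$-separator decomposition of the (bipartite) graph $G[D]$ with \emph{empty} ghost set, $d'=O(\log n/\eps)$ and $x=\eps|D|$; feeding this into \Cref{lem:newSeparatorDecomp} (with parameter $\eps$) then yields a $(d'+O(1/\eps),x)=(O(\log n/\eps),\eps|D|)$-separator decomposition of $G$ with ghost set $Z=V\setminus D$ and $\max\{1,2\eps M\}$ clusters, at the cost of only $O(\logstar(n)/\eps+d')=O(\log n/\eps)$ extra rounds. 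So the whole task is the core decomposition.

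For that I would exhibit an \SLOCAL algorithm with locality $R=O(\log n/\eps)$ and run it through the \SLOCAL-to-\LOCAL transformation of \cite{SLOCAL17} combined with the deterministic network decompositions of \cite{awerbuch89,linial93,Vaclav_2019}, which turns any deterministic \SLOCAL algorithm of locality $R$ into a deterministic $R\cdot\poly\log n=\poly\log(n)/\eps$-round \LOCAL algorithm. The \SLOCAL algorithm processes the vertices of $D$ one at a time. When a vertex $v$ is processed and is not yet assigned, it grows BFS balls $B(v,0)\subseteq B(v,1)\subseteq\cdots$ \emph{inside the subgraph of $G[D]$ induced by the currently unassigned vertices}; since $|B(v,i)|\le|D|\le n$, there is an index $r\le\log_{1+\eps}n=O(\log n/\eps)$ with $|B(v,r+1)|\le(1+\eps)\,|B(v,r)|$. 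The algorithm then opens a new cluster $C:=B(v,r)$, places the shell $B(v,r+1)\setminus B(v,r)$ into the separator set $S$, and marks $C$ and the shell as assigned; if $v$ is already assigned when processed, the algorithm does nothing. Only an $O(R)$-radius neighborhood (and the cluster/separator decisions already made there by earlier vertices) is inspected, so the locality is $O(R)$, and the output is well defined for every processing order.

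Three properties remain to be verified, and the third is the crux. (i) Weak diameter: any two vertices of a cluster $C=B(v,r)$ are within distance $2r=O(\log n/\eps)$ in $G[D]$, hence in $G$. (ii) Size of $S$: each cluster $C_j$ contributes a shell of at most $\eps|C_j|$ vertices, and the clusters are pairwise disjoint subsets of $D$, so $|S|=\sum_j|\mathrm{shell}_j|\le\eps\sum_j|C_j|\le\eps|D|$. (iii) Separation: suppose clusters $C_i$ (opened earlier) and $C_j$ (opened later) carried an edge $\{u_i,u_j\}$ with $u_i\in C_i$, $u_j\in C_j$. At the moment $C_i$ was opened, $u_j$ was still unassigned (it joins $C_j$ only afterwards), and $u_j$ is adjacent to $u_i\in C_i=B(v_i,r_i)$, so its distance to $v_i$ in the then-unassigned subgraph is at most $r_i+1$; it cannot be $\le r_i$ (otherwise $u_j\in C_i$), so $u_j$ was in the shell of $C_i$ and was moved to $S$, contradicting $u_j\in C_j$. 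Hence no inter-cluster edge exists in $G[D]$, i.e.\ $S$ separates the clusters.

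Putting the two steps together gives the claimed decomposition of $G$, with round complexity dominated by the \SLOCAL simulation, namely $O(\poly\log(n)/\eps)$. The main obstacle I expect is making the \SLOCAL step watertight: phrasing ``ball growth inside the unassigned subgraph'' so that it is genuinely a locality-$O(\log n/\eps)$ \SLOCAL procedure (a processed vertex must only read the cluster/separator decisions made by earlier vertices within distance $O(R)$), checking that the argument in (iii) is order-independent, and confirming that rescaling $\eps$ by a constant absorbs the constant factors hidden in the diameter and separator bounds as well as those introduced by \Cref{lem:newSeparatorDecomp}.
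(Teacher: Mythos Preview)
Your proposal is correct and follows essentially the same route as the paper: the paper first proves a separate ball-carving theorem (\Cref{thm:ballCarving}) via exactly the \SLOCAL procedure you describe (grow balls in the remaining graph until $|B_{r+1}(v)|\le(1+\eps)|B_r(v)|$, put the ball in a cluster and the shell in $S$, then simulate via \Cref{thm:SLOCAL}), and then invokes \Cref{lem:newSeparatorDecomp} to extend the decomposition of $G[D]$ to all of $G$ and cap the cluster count. The only cosmetic difference is that the paper records the clusters as having \emph{strong} diameter $O(\log n/\eps)$ (since each $C_i$ is a BFS ball in the graph of unassigned vertices), whereas you state the weaker weak-diameter bound; either suffices for \Cref{lem:newSeparatorDecomp}.
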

 We postpone the formal proof of \cref{thm:genGraphSepDecomp} and prove \cref{thm:distrPlanar} assuming its correctness. Note that a cluster (provided by \Cref{thm:genGraphSepDecomp}) in a planar 
 graph is also planar;  
 thus, we can use \Cref{thm:variousClassesBatch} to obtain reconfiguration schedules for each cluster, as needed by \Cref{gen::lma::decomp_to_sched_computation}.

\begin{proof}[Proof of \cref{thm:distrPlanar}]
We apply \cref{thm:genGraphSepDecomp} to compute a $(d,x)$-separator decomposition with parameters $d=O(\log n/\eps^2)$, $x=\eps^2 |\alpha\oplus\beta|$ and $k= \max\{1,2 \eps^2 M\}$ in $O(\poly\log (n) /\eps^2)$ rounds. Every cluster is planar as well, and so, by \cref{thm:variousClassesBatch} it admits an $O(1/\eps)$-batch $(1+\eps,O(1/\eps))$-approximation schedule. Applying \cref{gen::lma::decomp_to_sched_computation} gives us an $O(1/\eps)$-batch $(1+\eps,x+O(k/\eps))$-approximation schedule. Recall that $x=\eps^2|\alpha\oplus\beta| \leq 2\eps^2\max\{|\alpha|,|\beta|\} = 2\eps^2M$, and $k/\eps=\max\{1,2 \eps^2 M\}/\eps\le 1/\eps+2\eps M$, and so, the approximation can be rewritten as $(1+O(\eps),O(1/\eps))$. Rescaling $\eps$ gives the desired result.
\end{proof}

\subsection{Computing Small Separator Decompositions for General Graphs}
To prove \cref{thm:genGraphSepDecomp}, we first compute a small separator decomposition with no bound on the number of clusters (\Cref{thm:ballCarving}) and then in a second step use the results from \Cref{ssec:clusterMerging} to reduce the number of clusters. Both steps do not make any assumption on a respective graph class. 
\begin{theorem} \label{thm:ballCarving}
	For any $\epsilon>0$ there is a deterministic \LOCAL algorithm that, given a (possibly disconnected) graph $G=(V,E)$, computes a strong $(d,x)$-separator decomposition with $d=O(\log n/\eps)$ and $x=\eps n$  in $O(\poly\log (n)/\eps)$ rounds.
\end{theorem}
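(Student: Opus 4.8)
The plan is to compute the decomposition by a ball-carving procedure with geometrically distributed radii, run in parallel from a small set of ``center'' candidates, so that each vertex ends up in a low-diameter cluster while only an $\eps$-fraction of vertices land on the boundary (the separator). First I would recall the standard padded-decomposition / random-shift idea (as in Miller–Peng–Xu or the deterministic network-decomposition constructions cited as \cite{awerbuch89,linial93,Vaclav_2019}): assign each vertex a radius drawn from a truncated geometric distribution with expectation $\Theta(\log n / \eps)$ and truncation at $O(\log n/\eps)$; a vertex $v$ joins the cluster of the center $u$ minimizing $\mathrm{dist}_G(u,v) - r_u$ (ties broken by ID). A vertex is placed in the separator set $S$ if the ball-carving is ``not decisive'' at $v$, i.e.\ two different centers come within distance $1$ of winning. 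The key quantitative fact is that the probability a fixed vertex is ``cut'' in one level of such a decomposition is $O(\eps)$, because the geometric radius has the memoryless property and the truncation only fails with probability $\mathrm{poly}(1/n)$; hence in expectation $|S| \le \eps n$, and this can be derandomized using the method of conditional expectations / the \SLOCAL-to-\LOCAL transformation of \cite{SLOCAL17} (this is exactly the machinery the paper alludes to in \cref{sec:SLOCAL}).

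Second, I would verify the three structural properties of a strong $(d,x)$-separator decomposition. The diameter bound $d = O(\log n/\eps)$ is immediate from the radius truncation: every vertex reaches its center within $O(\log n/\eps)$ hops, and two vertices of the same cluster are within $2$ times that, and moreover the cluster induces a connected subgraph because shortest-path trees to the center stay inside the cluster (a vertex on the $u$-to-$v$ geodesic is at least as close to $u$'s center, so it is in the same cluster — this is the standard ``clusters are closed under geodesics'' argument, and it gives \emph{strong} diameter, not just weak). The separating property ``no edge between $C_i \setminus Z$ and $C_j \setminus Z$'': if $\{v,w\}\in E$ with $v\in C_i$, $w\in C_j$, $i\ne j$, then the two candidate centers are within distance $1$ of tying at $v$ (and at $w$), so both $v$ and $w$ were thrown into $S$ — contradiction with $v\in C_i$, $w\in C_j$. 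Here there are no ghost nodes, so $Z=\emptyset$ and the statement is unconditional. Finally $|S|\le\eps n = x$ follows from the derandomized bound above, and the partition property $C_1,\dots,C_k,S$ of $V$ is by construction.

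Third, I would handle the round complexity. Running the carving costs $O(\log n/\eps)$ rounds if the randomness is shared; the derandomization via \cite{SLOCAL17,Vaclav_2019} costs an additional $\mathrm{poly}\log n$ factor, giving $O(\mathrm{poly}\log n / \eps)$ rounds total, as claimed. Disconnectedness is no issue: the construction is purely local and treats each component independently. I would also note that, since we only need a \emph{strong} decomposition here (not the number-of-clusters bound — that is added later in \cref{thm:genGraphSepDecomp} via \cref{lem:newSeparatorDecomp}), there is nothing more to prove.

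The main obstacle I expect is the derandomization step: obtaining the clean $|S|\le \eps n$ bound \emph{deterministically} in $\mathrm{poly}\log n$ rounds. The randomized analysis gives the cut probability $O(\eps)$ per vertex fairly painlessly, but turning this into a deterministic \LOCAL algorithm requires either invoking the deterministic network-decomposition results of \cite{Vaclav_2019} as a black box together with the \SLOCAL framework of \cite{SLOCAL17} — which is the intended route — or carefully running the method of conditional expectations over the geometric radii with a bounded-independence hash family; the bookkeeping to ensure the truncation-failure term stays $\mathrm{poly}(1/n)$ and does not contaminate the $\eps n$ bound is the delicate part. Everything else (diameter, connectivity, the separating property) is routine once the carving is set up.
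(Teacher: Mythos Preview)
Your proposal takes a different route from the paper. The paper's proof is more elementary: it runs the classical \emph{deterministic} sequential ball-carving procedure of Awerbuch et al.\ directly as an \SLOCAL algorithm. One processes vertices in an arbitrary order; for the current vertex $v$, one finds the smallest radius $r_v$ with $|B_{r_v+1}(v)|\le (1+\eps)|B_{r_v}(v)|$ in the residual graph, declares $C_i=B_{r_v}(v)$ a cluster, moves the boundary layer $B_{r_v+1}(v)\setminus B_{r_v}(v)$ into $S$, and deletes $B_{r_v+1}(v)$. A one-line geometric argument gives $r_v=O(\eps^{-1}\log n)$, and $|S|\le \eps n$ follows by summing the stopping condition over all clusters. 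Strong diameter is automatic, since each cluster is literally a ball in the residual graph. The only remaining step is the black-box \SLOCAL$\to$\LOCAL conversion of \cite{SLOCAL17,Vaclav_2019}.

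Your MPX/random-shift route is sound in outline but strictly more work: you introduce randomness only to then have to remove it, and you correctly flag the derandomization as the main obstacle. The paper simply never introduces randomness, so there is nothing to derandomize. Note also that the \SLOCAL$\to$\LOCAL transformation is not in itself a derandomization tool---it converts a deterministic sequential local algorithm into a \LOCAL one---so to go your way you would still need either to exhibit such a deterministic sequential procedure (which is precisely the paper's ball carving) or to actually carry out the conditional-expectations argument over the geometric radii that you sketch. The paper's path sidesteps all of this, and also makes the strong-diameter and separating properties immediate rather than requiring the geodesic-closure and gap arguments you outline.
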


\subparagraph*{SLOCAL model.} The proof of \cref{thm:ballCarving} goes through an \SLOCAL (explained hereafter) algorithm and its conversion into a LOCAL algorithm with the desired run-time.
In the \SLOCAL model, nodes of a graph are processed in an arbitrary (adversarial) sequential order $v_1,v_2,\dots,v_n$. In an \SLOCAL algorithm with locality $r$, the output of a vertex is a function from its $r$-neighborhood in the graph, including outputs of already processed vertices. E.g., a $(\Delta+1)$-vertex coloring can be computed by an SLOCAL algorithm with locality $1$ by iterating through the vertices in an arbitrary order and greedily choosing a color that is not used by an already colored neighbor. The computed solution depends on the order in which vertices are processed. The Achilles' heel of a correct \SLOCAL algorithm for some problem $\Pi$ is that it has to compute a feasible solution regardless of the order, even if it is picked adversarially. In the classic \SLOCAL model vertices cannot determine the output of other nodes, but it was already observed in \cite{SLOCAL17} that one can let vertices determine the output of vertices in their $R$-hop neighborhood at the cost of increasing the locality by at most a constant factor. Thus,  ball carving algorithms like the one by Awerbuch et al. \cite{awerbuch89} and also the algorithm to be presented in the proof of \cref{thm:ballCarving} are in fact \SLOCAL algorithms. Ghaffari et al. \cite{SLOCAL17} show that an \SLOCAL algorithm with locality $R$ can be simulated in the \LOCAL model in time $O(R\cdot (d+1)\cdot c)$ given a $(d,c)$-network decomposition of $G^{O(R)}$. 
\begin{definition}[Network Decomposition, \cite{awerbuch89}]
\label{def:decomposition}
  A weak \emph{$\big(d(n),c(n)\big)$-network decomposition} of an
  $n$-node graph $G=(V,E)$ is a partition of $V$ into clusters such
  that each cluster has weak  diameter at most $d(n)$ and the
  cluster-graph is properly colored with colors $1,\dots,c(n)$.
\end{definition}
As such a decomposition with $d=c=O(\log n)$ can be computed in $O(R\cdot \log^2 n)$ rounds with the randomized algorithm by Linial and Saks \cite{linial93} and deterministically in time $O(R\cdot \log^7 n)$ with the algorithm from \cite{Vaclav_2019}, the following result holds. 
\begin{theorem}[\SLOCAL in \LOCAL, \cite{SLOCAL17,Vaclav_2019,GGR20}] 
\label{thm:SLOCAL}
We can run an \SLOCAL algorithm with locality $R$ in $O(R\cdot \log^2 n)$ randomized time and in  $O(R\cdot \log^7 n)$ deterministic time in \LOCAL.
\end{theorem}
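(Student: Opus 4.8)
The plan is to obtain Theorem~\ref{thm:SLOCAL} by composing the \SLOCAL-to-\LOCAL simulation of Ghaffari et al.~\cite{SLOCAL17}, recalled in the paragraphs above, with known network-decomposition constructions. Recall the simulation statement: if the vertices of $G$ know a weak $(d,c)$-network decomposition of the power graph $G^{O(R)}$, then any \SLOCAL algorithm with locality $R$ can be executed in $O(R\cdot(d+1)\cdot c)$ rounds of \LOCAL. Hence it suffices to (i) construct such a decomposition and (ii) charge the cost of its construction; the final round complexity is the sum of the construction time and the $O(R\cdot(d+1)\cdot c)$ simulation time with $d=c=O(\log n)$.

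For the construction, I would first note that one communication round on $G^{O(R)}$ is emulated by $O(R)$ rounds of $G$: each node floods its (unbounded-size) message to all vertices within the relevant $O(R)$-radius ball and forwards everything it receives, so any $T$-round \LOCAL algorithm on $G^{O(R)}$ becomes an $O(R\cdot T)$-round algorithm on $G$. Running the randomized algorithm of Linial and Saks~\cite{linial93} --- which builds a weak $(O(\log n),O(\log n))$-network decomposition in $O(\log^2 n)$ rounds --- on $G^{O(R)}$ then yields such a decomposition in $O(R\log^2 n)$ rounds of $G$; running instead the deterministic algorithm of Rozho\v{n} and Ghaffari~\cite{Vaclav_2019}, of the same quality and round complexity $O(\log^7 n)$, yields it in $O(R\log^7 n)$ deterministic rounds. (The exponent $7$ is immaterial here: any deterministic $\poly\log n$ construction, e.g.\ the refinements of~\cite{GGR20}, plugs in identically, which is why all three references are cited.) Plugging $d=c=O(\log n)$ into the simulation bound gives $O(R\log^2 n)$ rounds for the simulation once the decomposition is known, so the totals are $O(R\log^2 n)$ randomized and $O(R\log^7 n)$ deterministic, as claimed.

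If one wants the simulation itself sketched rather than merely cited: given the $(O(\log n),O(\log n))$-decomposition of $G^{O(R)}$, process its $c=O(\log n)$ color classes one after another in an order that the adversarial sequential vertex order is taken to respect. Within a color class the clusters are non-adjacent in $G^{O(R)}$, hence have pairwise disjoint radius-$R$ neighborhoods in $G$, so their \SLOCAL updates can be computed in parallel without interference: each cluster gathers the needed topology and the already-committed outputs through its cluster center in $O(R\cdot d)=O(R\log n)$ rounds, then commits the outputs of all of its members simultaneously, and distinct clusters of one color never influence each other, so the result is consistent with the sequential schedule. Summing over the $O(\log n)$ classes gives $O(R\log^2 n)$ rounds.

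The argument has no deep obstacle --- it is a composition of cited results --- so the point I would be most careful about is the power-graph bookkeeping: one must fix the constant in the exponent of $G^{O(R)}$ large enough that radius-$R$ balls around vertices of two $G^{O(R)}$-non-adjacent clusters are genuinely disjoint (so same-color clusters truly do not interfere during the simulation), yet note that it is a fixed constant, independent of $n$ and $R$, so that running the network-decomposition algorithm on $G^{O(R)}$ only ever uses communication each node can simulate inside its $O(R)$-hop ball in $G$. These are exactly the constants used in~\cite{SLOCAL17}; making them explicit is routine but is where a careless version of the proof could slip.
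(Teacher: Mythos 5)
Your proposal is correct and follows the same route as the paper, which proves the theorem by exactly the composition you describe: the Ghaffari et al.\ \SLOCAL-to-\LOCAL simulation bound $O(R(d+1)c)$ applied to a weak $(O(\log n),O(\log n))$-network decomposition of $G^{O(R)}$, obtained via Linial--Saks (randomized, $O(\log^2 n)$ rounds) or Rozho\v{n}--Ghaffari (deterministic, $O(\log^7 n)$ rounds), with the $O(R)$-factor overhead for simulating $G^{O(R)}$. The paper treats this as an immediate consequence of the cited works rather than spelling out the color-class-by-color-class simulation as you do, but the argument is the same.
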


We use \cref{thm:SLOCAL} to prove \cref{thm:ballCarving}.
\begin{proof}[Proof of \cref{thm:ballCarving}]
We  describe a sequential algorithm to compute such a decomposition that is inspired by a sequential construction inspired by \cite{awerbuch89,linial93,SLOCAL17}. This algorithm can be turned into a distributed algorithm with \cref{thm:SLOCAL}. 
\subparagraph*{Ball carving in $G$.}
We describe a sequential ball carving process in $G$. We sequentially iterate through the (remaining) vertices in an arbitrary order; assume that we have constructed clusters $C_1,\dots, C_{i-1}$ and that we are currently processing a vertex $v$ with the objective to construct cluster $C_i$. We compute the smallest radius $r_v$ such that 
\begin{align}
\label{eqn:ballGrowing}
|B_{r_v+1}(v)|\leq (1+\epsilon)|B_{r_v}(v)|
\end{align}
holds, where the ball around $v$ is computed in the graph induced by the remaining vertices of $G$. 
Then we set $C_i=B_{r_v}(v)$, add $B_{r_v+1}(v)\setminus B_{r_v}(v)$ to $S$ and remove $B_{r_v+1}(v)$ from the graph. 
\begin{claim}
 There always exists an integer $r_v=O(\epsilon^{-1}\log n)$ satisfying \cref{eqn:ballGrowing}.
\end{claim}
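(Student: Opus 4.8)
The plan is to argue by contradiction using the fact that every ball $B_r(v)$ (computed in the graph induced by the currently remaining vertices) contains at most $n$ vertices. Suppose that for every integer $r$ in the range $0 \le r \le R$, where $R := \lceil \log n / \log(1+\epsilon)\rceil$, the inequality $|B_{r+1}(v)| \le (1+\epsilon)|B_r(v)|$ \emph{fails}, i.e. $|B_{r+1}(v)| > (1+\epsilon)|B_r(v)|$. Since $B_0(v) = \{v\}$, we have $|B_0(v)| = 1$, and then a trivial induction on $r$ gives $|B_{r}(v)| > (1+\epsilon)^{r}$ for every $r \le R+1$. In particular $|B_{R+1}(v)| > (1+\epsilon)^{R+1} \ge (1+\epsilon)^{\log n/\log(1+\epsilon)} = n$, which is impossible because $B_{R+1}(v)$ is a set of vertices of (a subgraph of) $G$ and hence has at most $n$ elements. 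Therefore there exists some integer $r \in \{0,1,\dots,R\}$ with $|B_{r+1}(v)| \le (1+\epsilon)|B_r(v)|$, and the smallest such $r$, which is the $r_v$ chosen by the algorithm, satisfies $r_v \le R$.

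It remains to bound $R = \lceil \log n/\log(1+\epsilon)\rceil$ by $O(\epsilon^{-1}\log n)$. For this I would use the elementary inequality $\log(1+\epsilon) \ge \epsilon/2$, valid for all $\epsilon \in (0,1)$ (it follows from $\log(1+\epsilon) \ge \epsilon - \epsilon^2/2$ and $\epsilon \le 1$, or directly from concavity of $\log$). Hence $R \le 1 + \frac{\log n}{\epsilon/2} = 1 + \frac{2\log n}{\epsilon} = O(\epsilon^{-1}\log n)$, which is exactly the claimed bound.

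I do not expect any real obstacle here: this is the standard ``ball growing cannot continue too long'' argument, and the only points to be careful about are (i) that the balls are taken in the residual graph after removing earlier clusters, so their size is still at most $n$ (indeed at most the number of remaining vertices), and (ii) that $|B_0(v)| = 1$ gives the clean base case for the geometric growth. If one prefers to avoid the $\log(1+\epsilon) \ge \epsilon/2$ estimate, one can instead note directly that after $k$ consecutive failures the ball size multiplies by more than $(1+\epsilon)$, and $(1+\epsilon)^{k} > n$ once $k > \log_{1+\epsilon} n$, and then convert to base $e$; the constant is immaterial for the $O(\cdot)$ statement.
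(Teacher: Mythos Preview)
Your proof is correct and follows essentially the same approach as the paper: both argue that if \cref{eqn:ballGrowing} fails for all radii up to $r$, then $|B_{r+1}(v)| \ge (1+\eps)^{r}$, which contradicts $|B_{r+1}(v)|\le n$ once $r$ exceeds $\log_{1+\eps} n = O(\eps^{-1}\log n)$. Your version is slightly more detailed (spelling out the base case and the $\log(1+\eps)\ge \eps/2$ estimate), but there is no substantive difference.
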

\begin{proof}
We have that $|B_0(v)|=1$, and whenever all radii up to $r$ do not satisfy  \cref{eqn:ballGrowing} we have 
\begin{align*}
|B_{r+1}(v)|\geq (1+\eps)|B_r(v)|\geq\dots\geq (1+\eps)^{r}|B_0(v)|= (1+\eps)^{r}.
\end{align*}
As $|B_{r+1}(v)|\leq n$ the condition cannot be violated for all $r\in O(\log_{1+\eps}n)$ and we have that $r_v=O(\log_{1+\eps}n)=O(\eps^{-1}\log n)$.
\renewcommand{\qed}{\ensuremath{\hfill\blacksquare}}
\end{proof}
\renewcommand{\qed}{\hfill \ensuremath{\Box}}

\begin{claim}
At the end of the process $C_1,\dots,C_k, S$ gives a strong $(d,x)$-separator decomposition with empty ghost node set and $x=\eps|n|$.
\end{claim}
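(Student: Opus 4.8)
The plan is to verify the four defining properties of a strong $(d,x)$-separator decomposition (\Cref{def:smallSeparator}) for the output $C_1,\dots,C_k,S$, with empty ghost set $Z=\emptyset$, with $d=O(\eps^{-1}\log n)$ (using the bound $r_v=O(\eps^{-1}\log n)$ established in the previous claim), and with $x=\eps n$. It is convenient to fix notation: when the $i$-th cluster is created, let $G'_i$ be the graph induced by the vertices still present, let $v^{(i)}$ be the chosen center and $r_i=r_{v^{(i)}}$ its radius, so that $C_i=B_{r_i}(v^{(i)})$ (ball taken in $G'_i$) and the shell $\partial_i:=B_{r_i+1}(v^{(i)})\setminus C_i$ is exactly the set of vertices added to $S$ in this step; the step finishes by deleting $C_i\cup\partial_i$ from the graph.

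First I would dispatch the partition and strong-diameter conditions. Since every step deletes $C_i\cup\partial_i$ and the process runs until the graph is empty, every vertex of $V$ is deleted exactly once, landing either in a unique $C_i$ or in a unique $\partial_i\subseteq S$; hence $S=\bigcup_i\partial_i$ and $C_1,\dots,C_k,S$ partition $V$. For the strong diameter, observe that every shortest $G'_i$-path from $v^{(i)}$ to a vertex $u\in C_i=B_{r_i}(v^{(i)})$ stays inside $C_i$, so $G'_i[C_i]$ has radius at most $r_i$ around $v^{(i)}$ and diameter at most $2r_i$; as $G'_i$ is an induced subgraph of $G$ and $C_i\subseteq V(G'_i)$, we have $G[C_i]=G'_i[C_i]$, so the strong diameter of $C_i$ is at most $2r_i=O(\eps^{-1}\log n)=:d$.

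Next, the bound on $|S|$. The choice of $r_i$ as the smallest radius satisfying \cref{eqn:ballGrowing} gives $|C_i|+|\partial_i|=|B_{r_i+1}(v^{(i)})|\le(1+\eps)\,|B_{r_i}(v^{(i)})|=(1+\eps)|C_i|$, i.e.\ $|\partial_i|\le\eps|C_i|$. Since the clusters $C_i$ are pairwise disjoint and the shells $\partial_i$ partition $S$, summing yields $|S|=\sum_i|\partial_i|\le\eps\sum_i|C_i|\le\eps|V|=\eps n$, which is the claimed $x$.

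The last property, that $S$ separates the clusters, is the one I expect to require the most care. Fix $i<j$ and suppose, for contradiction, there is an edge $\{u,w\}$ with $u\in C_i$ and $w\in C_j$. Since cluster $C_j$ is built after cluster $C_i$ and steps only delete vertices, $w$ is still present at step $i$, so $w\in V(G'_i)$ and $\{u,w\}$ is an edge of $G'_i$; as $u\in C_i=B_{r_i}(v^{(i)})$, the vertex $w$ is then within distance $r_i+1$ of $v^{(i)}$ in $G'_i$, i.e.\ $w\in B_{r_i+1}(v^{(i)})=C_i\cup\partial_i$, so $w$ is deleted during step $i$ and cannot belong to the later cluster $C_j$ — a contradiction. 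With $Z=\emptyset$ this is exactly property~4, and since $i<j$ was arbitrary there are no inter-cluster edges. The crux is precisely this observation: any edge leaving a cluster $C_i$ can only reach vertices that were already removed in earlier steps or that sit in its own shell $\partial_i$ (hence in $S$); everything else is a routine consequence of the stopping rule and of the fact that each step deletes a full radius-$(r_i+1)$ ball.
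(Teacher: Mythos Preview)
Your proof is correct and follows essentially the same approach as the paper: partition via successive ball removals, bound $|S|$ by summing the shell sizes $|\partial_i|\le\eps|C_i|$ from the stopping condition, and establish separation by observing that any neighbor of $C_i$ still present at step $i$ lies in $\partial_i\subseteq S$. Your write-up is more detailed than the paper's---in particular you explicitly verify the strong-diameter bound via the observation that $G[C_i]=G'_i[C_i]$ has radius $\le r_i$ around its center, which the paper leaves implicit---but the underlying argument is the same.
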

\begin{proof}
At the end of the process all vertices in $V$ are clustered in some $C_i$ or contained in $S$. Let $S_i$ be the nodes that are added to $S$ when constructing $C_i$ by forming a ball of radius $r_v$ around $v$. Due to  \cref{eqn:ballGrowing} we can bound $|S_i|=|B_{r_v+1}(v)\setminus B_{r_v}(v)|\le \eps|C_i|$, that is, we can bound $|S|=|S_1\cup \dots \cup S_k|$ by $\eps(|C_1|+\dots+|C_k|)\leq \eps |V|$. The separation property of $S$ follows as each cluster $C_i$ is separated from all clusters $C_j$ with $j>i$ by the vertices in $S_i\subseteq S$. \renewcommand{\qed}{\ensuremath{\hfill\blacksquare}}
\end{proof}
\renewcommand{\qed}{\hfill \ensuremath{\Box}}
This ``sequential'' algorithm can be implemented in $\poly\log (n) /\eps$  rounds in the distributed setting via \cref{thm:SLOCAL}.
\end{proof}

\subparagraph*{Small separator decomposition on general graphs.} We are now ready to prove \cref{thm:genGraphSepDecomp} using \cref{thm:ballCarving}.

\begin{proof}[Proof of \cref{thm:genGraphSepDecomp}]
Let $G=(V,E)$ and let $\alpha$ and $\beta$ be two vertex covers of $G$. We first create a small separator decomposition of $G[\alpha\oplus\beta]$, and later extend it to a decomposition of the whole graph.

Apply \cref{thm:ballCarving} to compute a strong $(d,x)$-separator decomposition of $G[D]$ with empty ghost node set and separator set $S$ in $O(\poly\log n/\eps)$ rounds. Now, apply \cref{lem:newSeparatorDecomp} to obtain a $(d,x)$-separator decomposition of $G$ with ghost node set $Z=V\setminus (\alpha\oplus \beta)$, separator set $S$ and parameters $d=O(\log n/\eps)$, $x=\eps |\alpha\oplus\beta|$ and $k= \max\{1,2 \eps M\}$.
	
The run-time is dominated by the first step which takes $O(\poly\log n/\eps)$ rounds.
\end{proof}

\section{Distributed Computation of Schedules III (Bounded Arboricity)}
\label{ssec:overshooting}
In this section we prove the following theorem for the distributed computation of reconfiguration schedules on graphs with bounded arboricity. 

\arbOvershooting*

Note that even though the assumption on the knowledge of the arboricity is somewhat unusual, the nodes might be able to get it indirectly in some cases, for example, if they know that the graph is planar, they also know that the arboricity is bounded by $3$. In fact, since we are only interested in the arboricity of the bipartite graph $G[\alpha\oplus \beta]$, they know that $\lambda\le 2$ as triangle free planar graphs have at most $2n-3$ edges. We obtain the following corollary.

\begin{corollary}[Planar Graphs]
	\label{cor:planar}
	Let  $G=(V,E)$ be a planar graph with two vertex covers $\alpha$ and $\beta$. There is a  $O\parenths*{\log^* (n)/\varepsilon}$-round \LOCAL algorithm that computes an $O(1/\eps^2)$-batch reconfiguration schedule from $\alpha$ to $\beta$ that is a $(2-\frac{1}{4}+\eps,1)$-approximation.
\end{corollary}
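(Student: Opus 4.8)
The plan is to derive this as an immediate corollary of \Cref{thm:arbOvershooting}, applied with the arboricity parameter $\lambda = 2$. The only thing that needs justification is \emph{(i)} that $2$ is a valid upper bound on the arboricity of $G[\alpha\oplus\beta]$ for planar $G$, and \emph{(ii)} that the nodes may use this value without any communication — which holds because planarity of the input is a standing promise.

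First I would establish \emph{(i)}. The graph $G[\alpha\oplus\beta]$ is an induced subgraph of the planar graph $G$, hence planar. Moreover $\alpha\setminus\beta$ and $\beta\setminus\alpha$ are independent sets (each is the complement of a vertex cover), so $G[\alpha\oplus\beta]$ is bipartite and in particular triangle-free. A simple triangle-free planar graph on $k\ge 3$ vertices has at most $2k-4$ edges, and on $k\le 2$ vertices at most $1$ edge; in all cases it has at most $2(k-1)$ edges. Since every induced subgraph of $G[\alpha\oplus\beta]$ is again triangle-free and planar, the arboricity characterization $\lambda=\max_{U\subseteq V,\,|U|\ge 2}\bigl\lceil |E(G[U])|/(|U|-1)\bigr\rceil$ recalled in \Cref{ssec:degbased} gives that the arboricity of $G[\alpha\oplus\beta]$ is at most $2$.

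Then, since the nodes know a priori that $G$ is planar, they may all set $\lambda := 2$ with no communication, and by the previous paragraph this is a correct upper bound on the arboricity of $G[\alpha\oplus\beta]$, exactly as required by the hypothesis of \Cref{thm:arbOvershooting}. Invoking that theorem yields, for every $\eps>0$, an $O(\logstar(n)/\eps)$-round \LOCAL algorithm computing an $O(\lambda/\eps^2)=O(1/\eps^2)$-batch schedule from $\alpha$ to $\beta$ that is a $\bigl(2-1/(2\lambda)+\eps,\,1\bigr)=\bigl(2-\tfrac14+\eps,\,1\bigr)$-approximation, which is precisely the claimed statement.

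There is essentially no hard step here: \Cref{thm:arbOvershooting} does the heavy lifting, no rescaling of $\eps$ is needed since plugging in $\lambda=2$ reproduces the stated parameters verbatim, and \Cref{thm:arbOvershooting} does not require $G$ to be connected. The only point to state carefully is that $\lambda=2$ can be fixed locally — which is what makes the hypothesis ``$\lambda$ known to all nodes'' of \Cref{thm:arbOvershooting} free of charge in the planar setting, so that no distributed arboricity estimation is required.
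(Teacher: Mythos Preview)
Your proposal is correct and follows essentially the same approach as the paper: the paper derives the corollary directly from \Cref{thm:arbOvershooting} after observing that $G[\alpha\oplus\beta]$ is bipartite (hence triangle-free) and planar, so its arboricity is at most $2$ and this bound is known to all nodes a priori. Your write-up is in fact more detailed than the paper's one-line justification.
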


In order to prove \Cref{thm:arbOvershooting}, we first present a simple algorithm that yields an  $O(\Delta/\eps)$-batch schedule (\cref{ssec:longDegree}).  In \cref{ssec:shortDegree}, we show how the bounded arboricity case can be reduced to a bounded degree case and obtain a schedule of length $O(\lambda/\eps^2)$.

In both parts, $G = (V,E)$ is the input graph and $\alpha$ and $\beta$ are the two input vertex covers. Let $A = \alpha\setminus\beta$, by $B = \beta\setminus\alpha$ and by $M=\max\{|\alpha|,|\beta|\}$.
\subsection{An $O(\Delta/\eps)$-Batch Schedule}
\label{ssec:longDegree}
In this section, we assume  the graph $G$ is arbitrary (i.e., no knowledge of $\lambda$).

We call a partition of the vertices in $A$ into clusters $\mathcal{E}_1,\hdots,\mathcal{E}_\ell$ \emph{degree ordered} if  
\begin{itemize}
\item for every two nodes $v_i \in \mathcal{E}_i$, $v_j \in \mathcal{E}_j$ with $i < j$, $deg(v_i) \leq deg(v_j)$, where $deg$ refers to the degree in the induced subgraph $G[\alpha\oplus\beta]$, and
\item $|\mathcal{E}_i| = O(\eps M)$ where $M = \max\{|\alpha|,|\beta|\}$  and $\eps$ is some small (tunable) parameter. 
\end{itemize}
Note that the partition is defined for given parameters $\ell,\eps$, and vertex covers $\alpha$ and $\beta$. 
We associate the following schedule with a given degree ordered partition. 

\subparagraph*{$O(\ell)$-batch schedule (for a given degree ordered partition).}
At step $2i-1$, add $N(\mathcal{E}_i)\cap B$ to the vertex cover. At step $2i$, remove $\mathcal{E}_i$ from the vertex cover.

In \cref{lem:degBatchCost} we bound the length and the cost of the schedule. Later, in \cref{lem:degBasedClustering} we show how to compute the needed partition with  an adequate choice of parameters ($\ell$ and $\eps$). 
Iterating through the $\mathcal{E}_i$'s and reconfiguring them sequentially, with an arbitrary order inside each $\mathcal{E}_i$, yields the same schedule as described in \cref{ssec:degbased}, which is the main ingredient for the next lemma.
\begin{lemma}[Schedule Cost]\label{lem:degBatchCost}
Let $\eps \in (0,1), \xi\geq 1, c\geq 0$ and let $G=(V,E)$ be a graph with two vertex covers $\alpha$ and $\beta$ for which the sequential greedy schedule yields a $(\xi,c)$-approximation, regardless of the order of vertices inside a degree class. 
Then any schedule associated with a degree ordered partition into $\ell$ clusters and parameter $\eps$  is an $O(\ell)$-batch $(\xi + \eps,c)$-approximation schedule.
\end{lemma}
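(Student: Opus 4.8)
The plan is to argue that the batch schedule associated with a degree-ordered partition $\mathcal{E}_1,\dots,\mathcal{E}_\ell$ produces, at every point in time, a vertex cover that is essentially the same as an intermediate vertex cover of the sequential greedy schedule, up to an additive $O(\eps M)$ slack coming from the size of a single batch. First I would observe that the batch schedule is \emph{monotone}: each $\alpha$-node of $A$ is removed exactly once (in the step $2i$ for its cluster $\mathcal{E}_i$) and each $\beta$-node of $B$ is added exactly once (the first time some $N(\mathcal{E}_i)$ containing it is processed). Hence it suffices to reason about the order in which nodes are touched. Since the partition is degree-ordered, processing the clusters $\mathcal{E}_1,\dots,\mathcal{E}_\ell$ in order, and within each $\mathcal{E}_i$ in an arbitrary order, is a legal \emph{sequential} greedy execution: it processes $A$-nodes in non-decreasing degree order (in $G[\alpha\oplus\beta]$), and for each such node adds all its not-yet-added $B$-neighbors before removing it. So the sequential greedy schedule obtained from this particular tie-breaking is, by hypothesis, a $(\xi,c)$-approximation, i.e. every intermediate cover in it has size at most $\xi M + c$.

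Next I would relate a state of the batch schedule to a state of this sequential execution. Fix any step of the batch schedule; the current cover $V_t$ differs from some intermediate cover of the sequential execution only in that a batch operation may have been applied "all at once". Concretely, right after step $2i-1$ (adding $N(\mathcal{E}_i)\cap B$), the cover equals the sequential cover reached just before the processing of $\mathcal{E}_i$, plus at most $|N(\mathcal{E}_i)\cap B|$ extra $B$-nodes that the sequential run would add one at a time while interleaving removals; but the key point is the \emph{opposite} inequality — the batch cover is contained in the sequential cover taken at the moment \emph{after} all of $N(\mathcal{E}_i)\cap B$ has been added but before any node of $\mathcal{E}_i$ is removed. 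In the sequential run that moment has cover size at most $\xi M + c$, and the batch cover at step $2i-1$ is a subset of it, hence also at most $\xi M + c$. The only genuine overshoot happens because after step $2i-1$ the batch schedule has \emph{not yet} removed $\mathcal{E}_i$, whereas a careful sequential run removes nodes of $\mathcal{E}_i$ as soon as their neighborhoods are covered. The extra nodes retained are at most the whole cluster $\mathcal{E}_i$, of size $O(\eps M)$, giving a bound of $\xi M + c + O(\eps M) = (\xi+\eps)M + c$ after rescaling the constant hidden in $O(\eps M)$ into $\eps$. Steps $2i$ (removals) only shrink the cover, so they are harmless. Finally, the number of batches is $2\ell = O(\ell)$, and validity (each batch an independent set, each $V_t$ a cover) follows exactly as in the sequential greedy argument: a node $v\in\mathcal{E}_i$ is removed only after all of $N(v)\cap B$ was added in step $2i-1\le 2i$, so every edge stays covered, and within $\mathcal{E}_i$ no two removed $\alpha$-nodes are adjacent since $A$ is independent, while the added set $N(\mathcal{E}_i)\cap B\subseteq B$ is independent and disjoint in time from the removal step.

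The main obstacle I anticipate is making the "batch cover is a subset of a sequential intermediate cover" claim airtight when batches are large: one must be careful that in the sequential run the nodes of $\mathcal{E}_i$ are \emph{not} removed before their $B$-neighbors are all added, so that the batch cover after step $2i-1$ really is dominated by a single well-defined sequential snapshot whose size the hypothesis controls, and simultaneously that the $O(\eps M)$ slack is charged only once (to the current cluster) and not accumulated over all clusters. Getting the bookkeeping of "which sequential snapshot dominates which batch state" exactly right, rather than the high-level matching sketched above, is where the real work lies; everything else is the routine re-derivation of the greedy schedule's validity and cost in the batched language.
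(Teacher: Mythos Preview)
Your proof lands on the correct argument, and that argument is exactly the paper's: compare the batch cover $W_s$ right before removing $\mathcal{E}_s$ (equivalently, right after step $2s-1$) with the sequential cover $W'_s$ obtained \emph{after} the last node of $\mathcal{E}_s$ has been removed in the greedy sequential execution; then $W_s = W'_s \cup \mathcal{E}_s$, so $|W_s| \le (\xi M + c) + |\mathcal{E}_s| \le (\xi + O(\eps))M + c$, and rescaling finishes. Validity and the $O(\ell)$ length are as you say.

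However, your exposition contains a false intermediate claim that you should drop. You write that ``the batch cover is contained in the sequential cover taken at the moment after all of $N(\mathcal{E}_i)\cap B$ has been added but before any node of $\mathcal{E}_i$ is removed,'' and conclude the batch cover is already $\le \xi M + c$. No such moment exists in the sequential greedy run: it interleaves, processing each $v\in\mathcal{E}_i$ by adding $N(v)\cap B$ and then immediately removing $v$ before touching the next vertex of $\mathcal{E}_i$. So there is no single sequential snapshot that contains all of $N(\mathcal{E}_i)\cap B$ together with all of $\mathcal{E}_i$, and the containment inequality you assert is unsupported (indeed, it would give the stronger bound $\xi M + c$ with no $\eps$ slack, which is not what you want to prove). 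Your very next sentence then contradicts this by (correctly) identifying an overshoot of $|\mathcal{E}_i|$. Simply delete the containment paragraph and go straight to the comparison with $W'_s$; that is the whole proof, and the ``obstacle'' you anticipate evaporates once you pick this snapshot.
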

\begin{proof}
Let $\mathcal{S}$ be the batch schedule associated with the given partition $\mathcal{E}_1,\dots,\mathcal{E}_{\ell}$. Schedule $\mathcal{S}$ has length $O(\ell)$  and is valid as $A=\alpha\setminus\beta$ and $B=\beta\setminus\alpha$ are independent sets, and a node is only removed from the cover after all of its neighbors have been added (so all edges are always covered). 

To bound the cost of $\mathcal{S}$, let $\mathcal{S}'$ be an arbitrary sequential execution of $\mathcal{S}$, i.e, the  reconfiguration step `Remove $\mathcal{E}_i$' is replaced with $|\mathcal{E}_i|$ single steps in which the vertices in $\mathcal{E}_i$ are removed in an arbitrary order, and `Add $N(\mathcal{E}_i)\cap B$' is replaced with $|N(\mathcal{E}_i)\cap B|$ single steps, accordingly. By the assumption in the lemma $\mathcal{S}'$ is a $(\xi,c)$-approximation.

Let $1\leq s\leq \ell$ and let $W_s$ be the vertex cover right before removing $\mathcal{E}_s$ in $\mathcal{S}$. Observe that 
\[
W_s=\bigcup_{r=s}^{\ell} \mathcal{E}_r\cup \left(N(\bigcup_{r=1}^{s} \mathcal{E}_r)\cap B\right)\ .
\]
Next, consider the vertex cover $W'_s$ obtained in the greedy schedule $\mathcal{S}'$, just after removing the last vertex of $\mathcal{E}_{s}$. We have 
\[
W'_s=\bigcup_{r=s+1}^{\ell}\mathcal{E}_r\cup \left(N(\bigcup_{r=1}^{s} \mathcal{E}_r)\cap B\right)\ .
\]
Schedule $\mathcal{S}'$  achieves cost at most $\xi M + c$. Thus, $|W'_s| \leq \xi M + c$. We bound the size of $W_s$ by
\begin{align*}
    |W_s| = |W'_s|+|W_s \setminus W'_s| = |W'_s| + |\mathcal{E}_s| \stackrel{(*)}{\leq} (\xi + O(\eps)) M + c.
\end{align*}
where we used $|\mathcal{E}_i|=O(\eps M)$ (by the definition of the partition) at $(*)$. Rescaling $\eps$ implies the claimed bound on the cost of our schedule.
\end{proof}

Now, we show how to compute a suitable degree ordered partition of $A$.

\begin{lemma}[Degree-Based Clustering]\label{lem:degBasedClustering}
Let $\eps \in (0,1)$ and let $G = (V,E)$ be a graph with two vertex covers $\alpha$ and $\beta$. There exists a \LOCAL algorithm that finds a degree ordered partition $\mathcal{E}_1,\dots,\mathcal{E}_{\ell}$ of $A$ with parameter $\eps$ and length $\ell=O(\Delta/\eps) $ in  $O(\logstar(n)/\eps)$ rounds.
\end{lemma}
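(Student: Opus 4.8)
The plan is to reduce the problem to computing a low-diameter clustering of $G$ and then doing purely local work inside each cluster. First, every vertex computes its degree in $G[\alpha\oplus\beta]$ in one round (for $v\in A$ this is just $|N(v)\cap B|$, as $A$ is an independent set). Next, apply \Cref{lem:compressingClustering} to the trivial $0$-diameter clustering of $G$ (each vertex its own cluster) to obtain, in $O(\logstar(n)/\eps)$ rounds, an $O(1/\eps)$-diameter clustering $C_1,\dots,C_k$ in which every $C_j$ either has strong diameter at least $1/\eps$ or is a whole connected component of $G$. Since each $G[C_j]$ is connected with strong diameter $O(1/\eps)$, every vertex can learn its entire cluster together with the $G[\alpha\oplus\beta]$-degrees of all its vertices in $O(1/\eps)$ further rounds. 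Inside a cluster $C$, for each degree value $i$ let $A_i^C=\{v\in A\cap C:\deg_{G[\alpha\oplus\beta]}(v)=i\}$, and deterministically (using IDs) split $A_i^C$ into at most $\lceil 1/\eps\rceil$ consecutive blocks $A_{i,1}^C,A_{i,2}^C,\dots$ each of size at most $\lceil\eps|A_i^C|\rceil\le\eps|A_i^C|+1$; all vertices of $C$ agree on this split because they see the same data. Finally define the batch $\mathcal{E}_{(i,j)}=\bigcup_C A_{i,j}^C$ and order the nonempty batches lexicographically by $(i,j)$, with $i$ running over $0,\dots,\Delta$ and $j$ over $1,\dots,\lceil 1/\eps\rceil$. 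Each $\mathcal{E}_{(i,j)}$ consists of vertices of degree exactly $i$, so the ordering is degree ordered, and the number of batches is at most $(\Delta+1)\lceil 1/\eps\rceil=O(\Delta/\eps)$, as required.

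The crucial quantitative point, and the step I expect to be the main obstacle, is bounding the batch sizes, which forces a bound on the number of clusters $k$. From the rounding in the within-cluster split, $|\mathcal{E}_{(i,j)}|=\sum_C|A_{i,j}^C|\le\eps\sum_C|A_i^C|+|\{C:A_i^C\neq\emptyset\}|\le\eps|A_i|+k\le\eps M+k$, using $|A_i|\le|A|\le M$. Hence it suffices to show $k=O(\eps M)$ (not merely $O(\eps n)$, which is all a generic clustering gives). Here we exploit the instance structure: $V\setminus(\alpha\cup\beta)$ is an independent set, so along any path in $G$ no two consecutive vertices lie outside $\alpha\cup\beta$, and therefore any path with at least $1/\eps$ vertices contains $\Omega(1/\eps)$ vertices of $\alpha\cup\beta$. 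When $G$ is connected and is not a single cluster, \Cref{lem:compressingClustering} guarantees every $C_j$ has strong diameter $\ge 1/\eps$, so $G[C_j]$ contains such a path and $|C_j\cap(\alpha\cup\beta)|=\Omega(1/\eps)$; as the clusters partition $V$ and $|\alpha\cup\beta|\le 2M$, we get $k=O(\eps M)$. (If $G$ is a single cluster of diameter $O(1/\eps)$ then $k=1$ and every vertex computes the whole partition centrally; if $G$ is disconnected we run the construction per connected component, noting that components meeting neither $\alpha$ nor $\beta$ are singletons contributing no $A$-vertices.) Plugging $k=O(\max\{1,\eps M\})$ back gives $|\mathcal{E}_{(i,j)}|=O(\max\{1,\eps M\})=O(\eps M)$ in the sense of the definition of a degree ordered partition, the harmless additive constant being analogous to the $\max\{1,2\eps M\}$ bounds used elsewhere in the paper.

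For the running time: computing degrees is $O(1)$ rounds; invoking \Cref{lem:compressingClustering} on the trivial clustering is $O(\logstar(n)/\eps)$ rounds; learning each cluster and computing the local splits is an additional $O(1/\eps)$ rounds; relabeling the $\mathcal{E}_{(i,j)}$ as $\mathcal{E}_1,\dots,\mathcal{E}_\ell$ is local. In total $O(\logstar(n)/\eps)$ rounds with $\ell=O(\Delta/\eps)$, as claimed.
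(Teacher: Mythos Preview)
Your proposal is correct and follows essentially the same approach as the paper: apply \Cref{lem:compressingClustering} to the trivial clustering, bound the number of clusters via the diameter guarantee, split each degree class within each cluster into $\lceil 1/\eps\rceil$ pieces, and take unions across clusters in lexicographic order. The only cosmetic difference is in the cluster-count argument: the paper observes that any vertex cover of a diameter-$\ge 1/\eps$ path contains at least $1/(2\eps)$ vertices and bounds $k$ directly against $|\alpha|\le M$, whereas you use that $V\setminus(\alpha\cup\beta)$ is independent to get $|C_j\cap(\alpha\cup\beta)|=\Omega(1/\eps)$ and bound against $|\alpha\cup\beta|\le 2M$; both give $k=O(\eps M)$.
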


\begin{proof}
For $1\le i\le\Delta$, let $U_i$ be the set of degree-$i$ vertices of $A$ (in $G[A\cup B]$). Note that $A=\bigcup_{i=1}^\Delta U_i$.
Starting with the trivial clustering of $G$ in which each vertex forms a singleton cluster, apply the algorithm from \cref{lem:compressingClustering} with parameter $\eps$ to compute a clustering $\mathcal{C}$. The clustering consists of at most $\ceil{2\eps M}$ clusters, since if there is more than one cluster, every cluster has diameter at least $1/\eps$, and contains at least $1/(2\eps)$ nodes from any vertex cover of $G$ (since any vertex cover of a path contains at least half of its vertices).

Within each cluster $C\in \mathcal{C}$, we compute the following partition of $C \cap A$. 
For each $1\le i\le \Delta$, partition $U_i\cap C$ into exactly $t=\lceil 1/\eps\rceil$ subsets $A^C_{i,1},A^C_{i,2},\dots,A^C_{i,t}$, each of size at most $\ceil*{\eps|U_i \cap C|}$ (some of them might be empty). 
Let $A_{i,j} = \bigcup_{C\in\mathcal{C}}A^C_{i,j}$, we obtain
\begin{align*}
    |A_{i,j}| = \sum_{C\in\mathcal{C}} |A^C_{i,j}| \leq  \sum_{C\in\mathcal{C}} (\eps|U_i \cap C| + 1) = \eps|U_i| + |\mathcal{C}| = O(\eps M)\ ,
\end{align*}
where the last equality follows from the facts that $U_i \subseteq A$ and $|\mathcal{C}| = O(\eps M)$. 

We obtain the desired partition by ordering the sets $A_{i,j}$ lexicographically $A_{1,1},A_{1,2},\hdots,A_{1,t},\allowbreak A_{2,1}, \allowbreak A_{2,2},\hdots$, formally, we set $\mathcal{E}_k = A_{i,j}$ for $k = (i-1)\cdot t + j$ for $k=1,\dots,\Delta\cdot t$. The length of the partition is $\ell=\Delta\cdot t=O(\Delta/\eps)$. 

The runtime bound follows,  as applying \cref{lem:compressingClustering} takes $O(\logstar(n)/\eps$) rounds, and for each node to know its corresponding $i,j$ (and to decide on the index in the partition), it just needs to learn its cluster $C$, which is done in additional $O(1/\eps)$ rounds.
\end{proof}

We summarize the results of this section in the following theorem.
\begin{theorem}\label{thm:deltabased}
Let $G = (V,E)$ be an $n$-node graph with two vertex covers $\alpha$ and $\beta$, for which the sequential greedy schedule is an $(\eta,c)$-approximation.
For every $\eps>0$, there exists an $O\parenths*{\log^* (n)/\eps}$-round \LOCAL algorithm that computes an $O\left(\Delta/\eps\right)$-batch $(\eta+\eps,c)$-approximation  schedule from $\alpha$ to $\beta$.
\end{theorem}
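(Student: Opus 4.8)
The plan is to simply assemble \Cref{lem:degBasedClustering} (to produce the partition) and \Cref{lem:degBatchCost} (to bound the cost of the associated schedule), and then argue the distributed bookkeeping is cheap. Concretely, I would first invoke \Cref{lem:degBasedClustering} with parameter $\eps$: in $O(\logstar(n)/\eps)$ rounds it outputs a degree ordered partition $\mathcal{E}_1,\dots,\mathcal{E}_\ell$ of $A=\alpha\setminus\beta$ with $|\mathcal{E}_i|=O(\eps M)$ and $\ell=O(\Delta/\eps)$, where degrees are measured in $G[\alpha\oplus\beta]$. I would then let $\mathcal{S}$ be exactly the schedule associated with this partition as in the ``$O(\ell)$-batch schedule'' description: at step $2i-1$ add $N(\mathcal{E}_i)\cap B$ to the cover, and at step $2i$ remove $\mathcal{E}_i$. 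This is a $2\ell=O(\Delta/\eps)$-batch monotone schedule from $\alpha$ to $\beta$.

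Next I would verify the hypothesis needed by \Cref{lem:degBatchCost}, namely that the sequential greedy schedule is a $(\eta,c)$-approximation \emph{regardless of the tie-breaking order within a degree class}. This is immediate from the analysis of the greedy schedule: \Cref{lem:bddeg1b1} bounds its cost purely in terms of $|\alpha\cap\beta|$, $|A|$, $|B|$, and the edge count $h$ of $G[\alpha\oplus\beta]$, none of which depend on how ties are broken; hence the $(\eta,c)$-approximation guarantee assumed in the statement holds for every such order. Applying \Cref{lem:degBatchCost} with $\xi=\eta$ then yields that $\mathcal{S}$ is an $O(\ell)$-batch $(\eta+\eps,c)$-approximation schedule (after the harmless rescaling of $\eps$ already performed inside that lemma's proof).

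Finally I would argue the schedule is computed in the claimed time and the output format is correct. \Cref{lem:degBasedClustering} runs in $O(\logstar(n)/\eps)$ rounds, and at its conclusion every node $v\in A$ knows the index $k$ of the batch $\mathcal{E}_k$ containing it (determined by its degree class $i$ and its sub-index $j$ within its cluster), hence knows that it is removed from the cover at step $2k$; every node $u\in B$ can learn, in $O(1)$ additional rounds, the smallest index $k$ of an $\mathcal{E}_k$-neighbor and hence the step $2k-1$ at which it is added; nodes in $\alpha\cap\beta$ never change and nodes outside $\alpha\cup\beta$ never enter the cover. Thus each node knows its membership in every $V_t$, which is the required output, and the total round complexity is $O(\logstar(n)/\eps)$.

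There is no substantial obstacle here beyond correctly matching the hypotheses: the only point requiring care is observing that the greedy schedule's $(\eta,c)$ guarantee is order-insensitive within degree classes, so that \Cref{lem:degBatchCost} applies verbatim; everything else is a direct composition of the two lemmas.
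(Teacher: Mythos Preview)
Your proposal is correct and follows exactly the paper's approach: the paper's proof is a single sentence invoking \Cref{lem:degBasedClustering} and \Cref{lem:degBatchCost} with a rescaling of $\eps$, and you do precisely this with added detail on the distributed output. Your extra justification that the greedy bound is insensitive to tie-breaking within a degree class is not strictly needed---since the greedy schedule is defined with arbitrary tie-breaking, the hypothesis ``the sequential greedy schedule is an $(\eta,c)$-approximation'' is naturally read as holding for every such order, matching the precondition of \Cref{lem:degBatchCost} directly---but it does no harm.
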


\begin{proof}
Applying the described schedule with an appropriate scaling of $\eps$, along with \cref{lem:degBatchCost,lem:degBasedClustering} yields the desired result.
\end{proof}

Applying \cref{thm:deltabased} to graphs of bounded arboricity $\lambda$, together with the sequential $(2-1/(2\lambda),1)$-approximation schedule from \cref{thm:seqmindeg} yields the following result. 
\begin{corollary}\label{cor:degBasedBoundedArb}
Let $G = (V,E)$ be a graph with arboricity $\lambda$ and two vertex covers $\alpha$ and $\beta$. For every $\eps > 0$, assuming that nodes know the maximum degree $\Delta$ in $G[\alpha\oplus\beta]$, there exists an $O(\logstar(n)/\eps)$-round \LOCAL algorithm that computes an $O(\Delta/\eps)$-batch $\left(2-\frac{1}{2\lambda}+\eps,1\right)$-approximation schedule from $\alpha$ to $\beta$.
\end{corollary}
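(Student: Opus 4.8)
The plan is to derive this corollary as a direct composition of two earlier results: the existential analysis of the sequential greedy schedule in \cref{thm:seqmindeg}, and the distributed simulation framework of \cref{thm:deltabased}. Recall that \cref{thm:deltabased} says that whenever the sequential greedy schedule is an $(\eta,c)$-approximation (for every way of breaking ties inside a degree class), one can compute an $O(\Delta/\eps)$-batch $(\eta+\eps,c)$-approximation schedule by a deterministic \LOCAL algorithm in $O(\logstar(n)/\eps)$ rounds, as long as the nodes know $\Delta$. So the only task is to supply the right values of $\eta$ and $c$ for graphs of arboricity $\lambda$, and then invoke \cref{thm:deltabased}.

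First I would apply \cref{thm:seqmindeg}. Since arboricity is monotone under taking subgraphs, $G[\alpha\oplus\beta]$ has arboricity at most $\lambda$, and since $2-\frac{1}{2\lambda}$ is non-decreasing in $\lambda$, \cref{thm:seqmindeg} gives that the sequential greedy schedule is a $(2-\frac{1}{2\lambda},1)$-approximation. The one point that deserves a second look is that \cref{thm:deltabased} (via \cref{lem:degBatchCost}) needs this guarantee to hold \emph{regardless of how ties are broken within each degree class}. This is immediate from the proof of \cref{lem:bddeg1b1}: the cost bound there comes from an averaging step (\cref{fact:averges}) applied to the whole set $A=\alpha\setminus\beta$ together with its degree sequence in $G[\alpha\oplus\beta]$, and is insensitive to the internal order of equal-degree vertices. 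Hence the hypothesis of \cref{thm:deltabased} holds with $\eta=2-\frac{1}{2\lambda}$ and $c=1$.

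Finally I would apply \cref{thm:deltabased} with these parameters (rescaling $\eps$ if needed), which produces an $O(\Delta/\eps)$-batch $\left(2-\frac{1}{2\lambda}+\eps,1\right)$-approximation schedule, computed in $O(\logstar(n)/\eps)$ rounds; the assumption that the nodes know $\Delta$ is used only inside the degree-based clustering of \cref{lem:degBasedClustering} (to fix the partition length $\ell=O(\Delta/\eps)$). There is no genuine obstacle here — the statement is a mechanical composition — and the only place that calls for a line of justification is the tie-breaking-independence remark above, which is exactly why \cref{lem:bddeg1b1} was phrased through a global averaging argument rather than a greedy invariant.
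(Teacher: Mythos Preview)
Your proposal is correct and matches the paper's approach exactly: the paper derives the corollary in a single sentence by combining \cref{thm:seqmindeg} (the sequential greedy is a $(2-\tfrac{1}{2\lambda},1)$-approximation) with \cref{thm:deltabased}. Your added remarks on arboricity being monotone under subgraphs and on tie-breaking-independence are valid elaborations that the paper leaves implicit.
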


\subsection{From Bounded Degree to Bounded Arboricity: an $O(\lambda/\eps^2)$-Batch Schedule}
\label{ssec:shortDegree}
\cref{cor:degBasedBoundedArb} gives  a $(2-1/(2\lambda)+\eps,1)$-approximation schedule of length $O(\Delta/\eps)$ on graphs with arboricity $\lambda$. However,  graphs of very low arboricity (such as planar graphs) might have a very large maximum degree, making the schedule inefficient in terms of schedule length.
We will now describe a batch schedule for graphs of bounded arboricity which uses the results of the previous section as a subroutine. It requires global knowledge of  $\lambda$ and $\eps$. We use the notation $\eta= \lceil2\lambda/\eps\rceil$ to describe the schedule.

\subparagraph*{$O(\lambda/\eps^2)$-batch schedule.} In the first batch, add all vertices in $B_\eta=\{v\in B : |N(v)\cap A|\le \eta\}$ to the vertex cover. Then apply the algorithm from \cref{thm:deltabased}  with $\Delta=\eta$ and $\eps'=\eps/2$ to compute a schedule from $B'=B\setminus B_{\eta}$ to $A'=A\setminus \{v\in A : N(v)\cap B\subseteq B_\eta\}$. Reverse the obtained schedule via \cref{obs:reverseschedule}, to complete the $A$-to-$B$ schedule.

\smallskip

We will use the following result to bound the cost of the schedule in the proceeding theorem. 
\begin{lemma}[{\cite[Lemma~8]{Solomon18}}]\label{lem:shay}
Let $W$ be a vertex cover in a graph $G=(V,E)$ of arboricity $\lambda$, and $H=\{v\in V: d_G(v)\ge \lambda/\tau\}$, for a given $\tau\in (0,1)$. Then $|H\setminus W|\le \eps |W|$. 
\end{lemma}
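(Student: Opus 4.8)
The plan is to give a one-shot double-counting argument that uses only two ingredients: the complement of a vertex cover is an independent set, and the Nash--Williams characterization of arboricity — for every $U$ with $|U|\ge 2$ one has $|E(G[U])|\le\lambda(|U|-1)$, so every subgraph of $G$ on $t\ge 1$ vertices has fewer than $\lambda t$ edges (equivalently, $G$ admits an orientation with out-degree at most $\lambda$ at every vertex). Both facts are already available: the first is immediate from the definition of a vertex cover, the second is the formula for $\lambda$ quoted earlier in the paper.

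First I would set $X:=H\setminus W$; if $X=\emptyset$ there is nothing to prove, so assume $X\neq\emptyset$. Since $W$ is a vertex cover, $V\setminus W\supseteq X$ is independent, so every edge incident to $X$ has its other endpoint in $N(X)\subseteq W$, and two distinct vertices of $X$ share no edge. Hence the set $E_X$ of edges incident to $X$ has size exactly $\sum_{v\in X}d_G(v)$, which is at least $|X|\cdot(\lambda/\tau)$ because $X\subseteq H$. On the other hand, $E_X$ is the edge set of a (bipartite) subgraph of $G$ on the vertex set $X\cup N(X)$, which has at most $|X|+|W|$ vertices, so by Nash--Williams it has fewer than $\lambda\bigl(|X|+|W|\bigr)$ edges. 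Chaining the two bounds gives $|X|\,\lambda/\tau<\lambda(|X|+|W|)$, i.e.\ $|X|\cdot\tfrac{1-\tau}{\tau}<|W|$, i.e.\ $|H\setminus W|<\tfrac{\tau}{1-\tau}\,|W|$, which is the content of \Cref{lem:shay}. (For $\tau\le 1/2$ this is at most $2\tau|W|$, and every invocation of \Cref{lem:shay} in this paper applies it with $\tau$ an $\eps$-scale parameter, so the constant is harmless; alternatively, defining $H$ with threshold $\lambda(1+\tau)/\tau$ in place of $\lambda/\tau$ yields $|H\setminus W|\le\tau|W|$ verbatim by the same computation, replacing the degree lower bound by the in-degree lower bound $\lambda/\tau$ in the orientation version.)

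There is no substantial obstacle here. The only two points that deserve an explicit line are that $N(X)\subseteq W$ and that no edge incident to the independent set $X$ is double-counted — both immediate from $W$ being a vertex cover — together with quoting the Nash--Williams inequality in the slightly non-tight ``$<\lambda t$'' form used above, which is all that is needed. Everything else is routine arithmetic.
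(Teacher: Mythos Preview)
The paper does not supply its own proof of this lemma; it is quoted verbatim from \cite[Lemma~8]{Solomon18} and used as a black box in the proof of \Cref{thm:arbOvershooting}, so there is no in-paper argument to compare against. (Note also the evident typo in the statement as printed: the conclusion should read $|H\setminus W|\le\tau|W|$, since $\eps$ does not occur in the hypotheses.)

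Your double-counting argument is correct and is essentially the standard proof of this fact. The one honest discrepancy, which you already flag, is that the chain of inequalities yields $|H\setminus W|<\frac{\tau}{1-\tau}\,|W|$ rather than $\tau|W|$; every step is sound, you simply land on a constant that is off by a factor $1/(1-\tau)$ from the stated bound. As you observe, the only invocation in the paper sets $\tau=\lambda/\eta=O(\eps)$, so $\frac{\tau}{1-\tau}=\tau(1+O(\tau))$ and the slack is absorbed into the existing $\eps$-rescaling in \Cref{thm:arbOvershooting}. Your parenthetical fix (raising the degree threshold to $\lambda/\tau+\lambda$, equivalently reasoning via a $\lambda$-out-degree orientation and lower-bounding in-degrees) recovers the bound exactly if one insists on it.
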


\begin{proof}[Proof of \Cref{thm:arbOvershooting}]
The validity of the schedule follows from the validity of the $B'$-to-$A'$ schedule and \cref{obs:reverseschedule}.
Thus, from \cref{thm:deltabased}, the runtime of the algorithm is $O(\log^*(n)/\eps)$ (note that reverting the schedule can be done by each node locally as the $O(\eta/\eps)$ global upper bound on the length of the schedule is globally known). We apply \cref{lem:shay} with the vertex cover $A'$ in the graph $G[A'\cup B']$ and $\tau=\lambda/\eta$, to see that $|B_\eta|\le (\lambda/\eta)|A|$. Thus, the first batch only contributes 1 to the length of the schedule and a $\lambda|A|/\eta$ term to the cost. By \cref{thm:deltabased} and \cref{thm:seqmindeg}, the  $A'$-to-$B'$ schedule we computed is an $O(\eta/\eps)=O(\lambda/\eps^2)$-batch schedule and has cost at most $(2-1/(2\lambda)+\eps/2)M+1$, where $M=\max(|\alpha|, |\beta|)$. Thus, the overall cost is $(2-1/(2\lambda)+\eps/2 + \lambda/\eta)M+1$, which gives the promised approximation, by the choice of $\eta$.
\end{proof}

\section{Lower Bounds for Distributed Computation of Schedules}
\label{app:distrLowerbounds}
First, in \Cref{subsec::weighted_lb}, we present a \emph{gap} between the existence of reconfiguration schedules of weighted vertex covers and their computation. That is, we show that there exists a family of weighted cycles on which there always exists a $(1+\eps)$-approximation schedule, but computing it takes $\Omega(n)$ rounds.
Then, in \Cref{subsec::unweightedLB} we lift the lower bound to the unweighted case. Finally, in  \cref{subsec::unweightedLBcycles}, we turn our focus to unweighted cycles, and show that computing a $(2-\eps)$-approximation schedule  with $O(1)$ batches requires $\Omega(\logstar n)$ rounds.

\subsection{An $\Omega(n)$ Lower Bound for Weighted Cycles}
\label{subsec::weighted_lb}

In the \emph{(minimum) weighted vertex cover} problem each vertex $v$ is assigned a \emph{weight} $w(v)\in \NN$ and the objective is to find a vertex cover $S$ of minimum total weight $\sum_{v\in S} w(v)$. 
The weighted \emph{vertex cover reconfiguration problem} is a natural generalization of the unweighted vertex cover reconfiguration problem, that is, the only change is that the cost of a schedule is determined by the total weight of the (intermediate) vertex covers. It is known that there are graph families on which even $(2-\eps)$-approximation schedules do not exist for constant $\eps>0$, e.g., complete bipartite graphs.
In this section, we show that there are graphs on which $(1+\eps)$-approximation schedule exists and can also be computed efficiently in the centralized setting, but computing such schedules in the \LOCAL model takes $\Omega(n)$ rounds.

Let $c \geq 1$ be a constant integer and $t: \NN \to \NN$ be a function such that $t(n) = o(n)$. Next, we define a graph family  $\mathcal{G}_{t,c}= \{G_{n,t,c}\mid n\in\NN\}$. 
We omit $t$ and $c$ when they are clear from context.

\subparagraph*{Graph $G_{n,t,c}$.} 
Let $G_{n,t,c}=(V,E,w)$ be an $n$-node cycle consisting of $k := \frac{n}{4t(n) + 2}$ 
identical paths, $I_1, \hdots, I_k$, where
\begin{itemize}
    \item $I_i$ is a path $v_1^i,v_2^i, \hdots, v_{4t(n)+1}^i, v_{4t(n)+2}^i$ of $4t(n)+2$ nodes;
    \item \emph{heavy} nodes: $w(v_{2t(n)+1}^i)=w(v_{2t(n)+2}^i)=c\cdot t(n)$; 
    \item \emph{light} nodes: $w(v_j^i)=1$, for $1\le j\le 4t(n)+2$.
    \item The paths are connected with edges  $(v_{4t(n)+2}^i,v_{1}^{i+1})$, $1\le i \le k-1$, and $(v_{4t(n)+2}^k, v_{1}^{1})$.
\end{itemize}

\subparagraph*{Properties of $G_{n,t,c}$.} Note that this graph has two optimal vertex covers: 
\begin{align*} 
	V_{odd} & := \bigcup_{i=1}^k\{v_j^i \ \vert \ j \text{ is odd}\}, \text{ and } 
	 V_{even}  := \bigcup_{i=1}^k\{v_j^i \ \vert \ j \text{ is even}\}.
\end{align*}
Note that when restricted to any single segment $I_i$, both $V_{odd}$ and $V_{even}$ have exactly $1$ heavy node and $2t(n)$ light nodes, hence, using the notation $w(S)=\sum_{v\in S} w(v)$, we have $w(V_{odd}\cap I_i) = w(V_{even}\cap I_i) = (c+2)t(n)$, and $w(V_{odd}) = w(V_{even}) = k(c+2)t(n) = \Theta(n)$.

\begin{lemma}
\label{goodBatchSched}
Let $c\in\NN$ and $t(n) = o(n)$. For every $G = G_{n,t,c} \in \mathcal{G}$ and  pair $\alpha, \beta$ of vertex covers of $G$, there is a monotone $(1,O(t(n)))$-approximation schedule of length $O\parenths*{\frac{n}{t(n)}}$.
\end{lemma}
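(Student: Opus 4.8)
The plan is to build a monotone schedule that reconfigures the $2k$ segments $I_1,\dots,I_k$ one at a time, spending a constant number of batches per segment so that at most one segment is ``in flux'' at any moment (which keeps the additive overshoot $O(t)$), while choosing the order in which segments are processed so that the intermediate ``already-$\beta$, not-yet-$\beta$'' mixed cover never exceeds $W:=\max\{w(\alpha),w(\beta)\}$.

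First I would make the standard monotonicity reduction: a monotone schedule only changes vertices of $D=\alpha\oplus\beta$, so vertices outside $D$ are immutable ``ghosts'', and $A:=\alpha\setminus\beta$, $B:=\beta\setminus\alpha$ are independent sets (complements of vertex covers), with every edge of $G[D]$ having exactly one endpoint in each of $A,B$. Set $\delta_p:=w(\beta\cap I_p)-w(\alpha\cap I_p)$ and $S_m:=\sum_{p\le m}\delta_p$ (so $S_0=0$, $S_k=w(\beta)-w(\alpha)$), pick $j_0\in\arg\max_{0\le m\le k}S_m$, and process the segments in the cyclic order $I_{j_0+1},I_{j_0+2},\dots,I_{j_0}$. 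The point of this choice is that for every prefix $Q$ of this cyclic order one has $\sum_{p\in Q}\delta_p\le\max\{0,S_k\}$; hence once the segments of $Q$ are in their $\beta$-state and the rest in their $\alpha$-state, the cover weighs $w(\alpha)+\sum_{p\in Q}\delta_p\le\max\{w(\alpha),w(\beta)\}=W$. (A per-segment weight argument shows $\delta_p\in[-O(t),O(t)]$, so a plain left-to-right order would be disastrous here, whereas the rotation to $\arg\max$ fixes it.)

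To process a segment $I_i$ I use two batches: first add all of $B\cap I_i$ and, to protect the edge leading into the next segment to be processed, also pre-add that segment's first $\beta$-vertex $v_1$ when it lies in $B$ (a \emph{light} vertex that gets cleaned up one round later); then remove all of $A\cap I_i$. One checks that each batch is an independent set (using that $A,B$ are independent and that a pre-added boundary $B$-vertex cannot be adjacent to another added $B$-vertex) and that every intermediate set is a vertex cover: internal edges of $I_i$ stay covered because each removed $A$-vertex has both path-neighbors in $B\cup X$, which are already in the cover, and the two boundary edges of $I_i$ stay covered because the previously processed segment has its boundary $B$-vertex in the cover and the next segment's boundary $B$-vertex was pre-added. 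The sole exception is the first segment $I_{j_0+1}$, whose incoming-boundary neighbor (the last vertex of $I_{j_0}$) is processed last; if its own $v_1$ is an $A$-vertex and that boundary neighbor lies in $B$, I postpone the removal of this single light $A$-vertex to one extra final batch. The total length is thus $2k+1=O(n/t)$.

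For the cost, at any moment the cover is the union of the $\beta$-states of the processed segments, the $\alpha$-states of the others, plus (i) the one in-flux segment, which holds $\alpha\cap I_i$ together with $B\cap I_i$ of total weight at most $w(I_i)=4t+2ct=O(t)$, (ii) at most one pre-added light boundary vertex, and (iii) at most one postponed light vertex; by the previous paragraph the first part is at most $W$, so the whole cover weighs at most $W+O(t)$, giving a $(1,O(t))$-approximation, and robustness holds since an addition batch only grows the cover and a removal batch only shrinks a valid one while keeping incident edges covered. I expect the main obstacle to be precisely the tension between the two analyses: the order that keeps the mixed cover below $W$ must not create ``valleys'' of unprocessed segments sandwiched between processed ones, as that would require many simultaneous pre-added/postponed boundary vertices and wreck the $O(t)$ additive term; this is resolved by noting that the $\arg\max$-rotation turns the processing into a genuine cyclic sweep, so at every moment the processed segments form a single cyclic arc and there are only $O(1)$ processed/unprocessed boundaries.
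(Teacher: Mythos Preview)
Your approach is correct and genuinely different from the paper's. The paper does not process the fixed segments $I_1,\dots,I_k$ at all; instead it looks at the connected components of $G[\alpha\oplus\beta]$, splits them into four families according to whether $w(C_i\cap\alpha)>w(C_i\cap\beta)$ and whether $C_i$ contains a heavy vertex, and reconfigures the families in the order $\mathcal{C}_{\alpha,l},\mathcal{C}_{\alpha,h},\mathcal{C}_{\beta,l},\mathcal{C}_{\beta,h}$. The light families are handled as unweighted paths via the batch-compression theorem (\Cref{thm:variousClassesBatch} with $\eps=t(n)/n$), while each heavy component is walked segment-by-segment. The global $(1,O(t))$ bound then follows from the ``$\alpha$-favoring first, $\beta$-favoring last'' ordering of the four families, which plays the same role as your $\arg\max$ rotation.

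Your argument is more self-contained: it never invokes \Cref{thm:variousClassesBatch}, it treats light and heavy parts uniformly, and the partial-sum / cyclic-rotation trick is a clean replacement for the four-way classification. The trade-off is that your proof concentrates all the delicacy in the two boundary edges of the current arc (the pre-added $v_1^{next}$ and the single postponed $v_1^{j_0+1}$), and you should make explicit that the pre-added vertex is \emph{not} re-added when $I_{next}$ is processed (so monotonicity is preserved). The paper's approach avoids this boundary bookkeeping entirely because distinct components of $G[\alpha\oplus\beta]$ have no edges between them, at the cost of importing the unweighted compression result and a longer case analysis.
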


\begin{proof}
Let $C_1,\hdots, C_r$ be the set of connected components in $G[\alpha\oplus\beta]$. First, we reduce the case $r=1$ to the case $r>1$, and then prove the lemma for the latter. If $r=1$, add two weight-$1$ $\beta$-nodes, $b_1$ and $b_2$, to $\alpha$ and consider the following reasoning given a schedule from $\alpha \cup \{b_1,b_2\}$ to $\beta$: if $\mathcal{S}$ is an $(\eta,c)$-approximation schedule, one can obtain an $(\eta,c+2)$-approximation schedule $\mathcal{S}'$ from $\alpha$ to $\beta$ (add $\{b_1,b_2\}$ to the cover, then apply $\mathcal{S}$). Therefore, as $G[(\alpha\cup\{b_1,b_2\})\oplus\beta]$ has at least $2$ connected components, we can only reason about the case $r>1$.

Now, consider $r>1$ and denote by $\mathcal{C}_{\alpha,h}$ the family of all the sets $C_i$ for which $w(C_i \cap \alpha) > w(C_i \cap \beta)$ and $C_i$ contains a heavy node and denote by $\mathcal{C}_{\alpha,l}$ the family of all the sets $C_i$ for which $w(C_i \cap \alpha) > w(C_i \cap \beta)$ and $C_i$ contains only light nodes, similarly, define $\mathcal{C}_{\beta,h}$ and $\mathcal{C}_{\beta,l}$ for sets $C_i$ with $w(C_i \cap \alpha) \leq w(C_i \cap \beta)$. Abusing the above notation, we will occasionally refer to, e.g.,  $\mathcal{C}_{\alpha,l}$, as a set of nodes.

We now prove that every set $\mathcal{C}_{\gamma,\xi}$ ($\gamma \in \{\alpha,\beta\}, \xi\in\{l,h\}$) admits a $(1,O(t(n)))$-approximation schedule of length $O\big(\frac{n}{t(n)}\big)$
between $\alpha$ and $\beta$ on the graph induced by $G[\mathcal{C}_{\gamma,\xi}]$. With this, see that a schedule which reconfigures $\mathcal{C}_{\alpha,l}$, $\mathcal{C}_{\alpha,h}$, $\mathcal{C}_{\beta,l}$ and $\mathcal{C}_{\beta,h}$,  in this order, is a valid schedule as there are no edges between $G[\mathcal{C}_{\gamma,\xi}]$ and $G[\mathcal{C}_{\gamma',\xi'}]$ if $\gamma\neq \gamma'$ or $\xi\neq \xi'$. To see that the global approximation guarantee follows, note the following
\begin{enumerate}
    \item During the reconfiguration of $\mathcal{C}_{\alpha,l}$, the weight of the cover is at most $w(\alpha) + O(t(n))$.
    \item During the reconfiguration of $\mathcal{C}_{\alpha,h}$, the weight of the cover is at most
    \begin{align*}
        w(\beta \cap \mathcal{C}_{\alpha,l}) + w(\alpha \setminus \mathcal{C}_{\alpha,l}) + O(t(n)) \leq w(\alpha) + O(t(n)).
    \end{align*}
    The inequality follows from the fact that $w(\beta \cap \mathcal{C}_{\alpha,l}) \leq w(\alpha \cap \mathcal{C}_{\alpha,l})$.
    \item During the reconfiguration of $\mathcal{C}_{\beta,l}$, the weight of the vertex cover is at most 
    \begin{align*}
        w(\beta \cap \mathcal{C}_{\alpha}) +
        w(\alpha \cap \mathcal{C}_{\beta,h}) +
        w(\beta \cap \mathcal{C}_{\beta,l}) +
        O(t(n)) \leq w(\beta) + O(t(n)),
    \end{align*}
    where $\mathcal{C}_{\alpha} = \mathcal{C}_{\alpha,l} \cup \mathcal{C}_{\alpha,h}$. Note that the inequality follows from the fact that $w(\alpha \cap \mathcal{C}_{\beta,h}) \leq w(\beta \cap \mathcal{C}_{\beta,h})$.
    \item Lastly, during the reconfiguration of $\mathcal{C}_{\beta,h}$, the weight of the cover is at most $w(\beta)  + O(t(n))$.
\end{enumerate}

\subparagraph*{Schedule for $\mathcal{C}_{\alpha,l}$ and $\mathcal{C}_{\beta,l}$.} To prove that all families have a good reconfiguration schedule, first note that both $\mathcal{C}_{\alpha,l}$ and $\mathcal{C}_{\beta,l}$ are forests of nodes with weight $1$. Therefore, the reconfiguration task is the same as the one for unweighted vertex covers, and so, applying \cref{thm:variousClassesBatch} with $\eps=t(n)/n=o(1)$, we have a $(1,O(t(n)))$-approximation schedule of length $O\parenths*{\frac{n}{t(n)}}$ on $\mathcal{C}_{\alpha,l}$ and $\mathcal{C}_{\beta,l}$. 

\subparagraph*{Schedule for $C_i \in \mathcal{C}_{\alpha,h}$.} For $\mathcal{C}_{\alpha,h}$, consider the segments $I_1,\hdots,I_k$ of $G$ as described above.
Let $C_i \in \mathcal{C}_{\alpha,h}$ and denote by $k_i = |\{ j\in[k] \mid I_j\cap C_i \neq \emptyset\}|$ the number of segments $C_i$ intersects. Note that every $C_i$ is a path alternating between $\alpha$-nodes and $\beta$-nodes, and so, we can assume without loss of generality that $\alpha \cap C_i = V_{odd}\cap C_i$ and $\beta \cap C_i = V_{beta}\cap C_i$. Let $I_{j_1},\hdots,I_{j_{k_i}}$ be the segments intersecting $C_i$ and consider the following schedule:
\begin{itemize}
    \item $V_0 = \alpha \cap C_i$.
    \item For $s = 1,\hdots,k_i$, $V_{2s-1} = V_{2s-2} \cup (\beta_{j_s} \cap C_i)$, $V_{2s} = V_{2s-1} \setminus (\alpha_{j_s} \cap C_i)$.
\end{itemize}

\noindent\textit{Properties of the schedule of $C_i$:} The validity of the schedule follows from the fact that $v_{2t(n)+2}^{j_s} \in \beta$ for all $s = 1,\hdots,k_i-1$. For the approximation, see that in $w(V_0 \oplus V_1) \leq w(\beta \cap I_{j_1}) = (c+2)t(n)$ and $(V_1 \oplus V_2)\subseteq \alpha$ (i.e., only removing nodes), implying that $w(V_2) \leq w(\alpha\cap C_i) + (c+2)t(n)$. For $s = 2,\hdots,k_i-1$ we have $w(V_{2s-1}\oplus V_{2s-2}) = w(V_{2s}\oplus V_{2s-1}) = (c+2)t(n)$, implying that $w(V_{2s-1}) \leq w(\alpha\cap C_i) + 2(c+2)t(n)$. Finally, $w(V_{2k_i-1} \oplus V_{2k_i-2}) \leq  w(\beta \cap I_{j_{k_i}}) = (c+2)t(n)$ and $(V_{2k_i} \oplus V_{2k_i-1}) \subseteq \alpha$, giving a schedule with a cost bounded by $3(c+2)t(n) = O(t(n))$ and a length of $O(k_i)$.

\noindent\textit{Combining $C_i\in \mathcal{C}_{\alpha,h}$ Schedules:} Consider the schedule of $\mathcal{C}_{\alpha,h}$ in which we reconfigure the $C_i$, one after the other, where we reconfigure $C_i$ in $O(k_i)$ batches, totalling in $\sum_{i} O(k_i)$ batches. As $w(C_i \cap \beta) \leq w(C_i \cap \alpha)$ for all $C_i \in \mathcal{C}_{\alpha,h}$, before the reconfiguration of $C_i$ the weight of the cover is at most $w(\mathcal{C}_{\alpha,h}\cap \alpha)$. In addition, since every $C_i$ admits a $(1,O(t(n)))$-approximation schedule (and $w(C_i \cap \beta) \leq w(C_i \cap \alpha)$), during the reconfiguration of $C_i$ the weight of the cover is at most $w(\mathcal{C}_{\alpha,h}\cap \alpha) + O(t(n))$. This means that the computed schedule is a $(1,O(t(n)))$-approximation schedule of $\mathcal{C}_{\alpha,h}$. The length of the schedule is $\sum_{i}O(k_i)$. To bound this expression, see that every segment $I_j$ can intersect at most $4$ connected components in $\mathcal{C}_{\alpha,h}$. Indeed, if $I_j$ intersects $5$ connected components $C_{1},\hdots,C_{5}$, then at least $3$ of them are contained in $I_j$. As $I_j$ only contains only $2$ heavy nodes, at least one $C_i$ does not contain a heavy node. This is a contradiction, as $C_i \in \mathcal{C}_{\alpha,h}$ if and only if $C_i$ contains a heavy node. Therefore, we conclude that $\sum_{i}O(k_i) = O(k) = O\parenths*{\frac{n}{t(n)}}$.  

\subparagraph*{Schedule for $\mathcal{C}_{\beta,h}$.}  To complete the proof, we reason about the existence of a schedule from $(\beta\cap \mathcal{C}_{\beta,h})$ to $(\alpha \cap \mathcal{C}_{\beta,h})$. Note that $w(C_i\cap\alpha) \leq w(C_i\cap\beta)$ for all $C_i \in \mathcal{C}_{\beta,h}$, therefore, a similar reasoning as 
for $\mathcal{C}_{\alpha,h}$ yields the same result. Then, by \cref{obs:reverseschedule}, reversing this schedule gives the desired schedule for $\mathcal{C}_{\beta,h}$.
\end{proof}

We now give our main results for this section. 
\begin{theorem}\label{thm::weighted_lb}
For every $\eps \in (0,1)$ and $t(n) = o(n)$, there exists a family $\mathcal{G}$ of weighted graphs, on which, for every large enough $n$, the following hold.
\begin{enumerate}
    \item Every $n$-node graph $G\in \mathcal{G}$ admits a monotone $(1+\eps)$-approximation schedule.
    \item On an ID space of size $\Omega(n^2)$, no $t(n)$-round algorithm can compute a monotone $(2 - \eps)$-approximation schedule of length $O(n)$ on $\mathcal{G}$.
\end{enumerate}
\end{theorem}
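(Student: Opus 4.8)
The plan is to treat the two assertions separately. Assertion~1 follows directly from \Cref{goodBatchSched}: for any pair $\alpha,\beta$ of vertex covers of $G=G_{n,t,c}$ that lemma gives a monotone $(1,O(t(n)))$-approximation schedule, and since every vertex cover of $G$ must contain at least one heavy vertex in each of the $k=n/(4t(n)+2)$ segments, $M=\max\{w(\alpha),w(\beta)\}\ge k\cdot c\,t(n)=\Theta(n)$. As $t(n)=o(n)$, the additive term $O(t(n))$ is $o(M)$, so for all sufficiently large $n$ the cost is at most $(1+\eps)M$.

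For Assertion~2 I fix $\alpha=V_{odd}$, $\beta=V_{even}$ and assume, for contradiction, a deterministic $t(n)$-round algorithm $\mathcal{A}$ that on \emph{every} ID-assignment outputs a monotone $(2-\eps)$-approximation schedule $(V_0,\dots,V_L)$ of length $L=O(n)$. I first record a locality observation: the $\beta$-only vertex $u_i:=v^i_{2t(n)+2}$ has its entire $t(n)$-ball inside segment $I_i$ (it reaches only positions $t(n)+2,\dots,3t(n)+2$ of $I_i$), and since the schedule is monotone, $u_i$ is inserted into the cover in exactly one batch, whose index $a_i\in\{1,\dots,L\}$ is therefore a fixed function $g(\sigma_i)$ of the window $\sigma_i$ of IDs visible to $u_i$; the function $g$ is the same for every segment, because the segments are structurally identical, weights and $\alpha/\beta$-labels included. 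Likewise, the heavy $\alpha$-only vertex $v^i_{2t(n)+1}$ is removed in a single batch $r_i$ depending only on segment $I_i$, and, since $v^i_{2t(n)+1}$ and $u_i$ are adjacent, feasibility together with the independent-batch requirement forces $a_i<r_i$.

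The core is a counting/planting step. Since $g$ maps the $(2t(n)+1)$-element ID-windows into the $L+1=O(n)$ possible insertion batches, some batch $a^\ast$ is attained on a $\ge 1/(L+1)$ fraction of all windows; a greedy matching-type extraction then produces $(1-\eps/4)k$ \emph{pairwise disjoint} such windows, using the assumed ID universe of size $\Omega(n^2)$ (disjointness is needed because distinct segments occupy disjoint vertex sets). Planting these as the ID-windows of $(1-\eps/4)k$ of the segments, and labelling the remaining vertices arbitrarily, forces $a_i=a^\ast$ on all those segments. In $V_{a^\ast}$ each such segment already contains $u_i$ (just inserted) and still contains $v^i_{2t(n)+1}$ (as $r_i>a_i=a^\ast$); hence $V_{a^\ast}\cap I_i$ is a vertex cover of the path $I_i$ containing both of its heavy vertices, and a short computation shows every such cover has weight at least $(2c+2)t(n)$, while every other segment contributes at least the optimum $(c+2)t(n)$. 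Summing over segments, $w(V_{a^\ast})\ge k(c+2)t(n)+(1-\eps/4)k\,c\,t(n)=\bigl(1+\tfrac{(1-\eps/4)c}{c+2}\bigr)M$, which exceeds $(2-\eps)M$ once the constant $c=c(\eps)$ is chosen large enough, contradicting the approximation guarantee of $\mathcal{A}$. Both hypotheses are used here: monotonicity reduces each heavy vertex's output to a single batch index, and length $O(n)$ makes the pigeonhole over batches effective.

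The step I expect to be the main obstacle is this counting/planting step: arranging that the pigeonhole over insertion batches yields \emph{many mutually disjoint} ID-windows on which $g$ is constant while keeping the ID space only polynomially large, and verifying that the resulting partial assignment extends to a consistent injective labelling of the whole cycle without disturbing the local views of the planted heavy vertices. The auxiliary facts — that $a_i<r_i$ always holds, that $M=\Theta(n)$, and the per-segment weight lower bounds $(c+2)t(n)$ and $(2c+2)t(n)$ — are routine computations about vertex covers of paths.
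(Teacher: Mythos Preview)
Your approach is essentially the paper's: exploit that each heavy $\beta$-vertex $u_i=v^i_{2t(n)+2}$ has its $t(n)$-ball inside segment $I_i$, turn the algorithm into a finite colouring of local ID-patterns by batch index, and pigeonhole to synchronise many heavy $\beta$-insertions into a single batch. Part~1 is exactly as in the paper.

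The one genuine issue is the quantitative step you yourself flagged. As you phrase it --- ``some batch $a^\ast$ is hit by a $\ge 1/(L{+}1)$ fraction of \emph{all} windows, then greedily extract disjoint ones'' --- each greedy pick kills at most $(2t{+}1)^2/R$ of the total windows, so you can extract at most $R/((2t{+}1)^2(L{+}1))$ disjoint windows. To get $\approx k=n/(4t{+}2)$ of them you need $R=\Omega(n\cdot t\cdot L)=\Omega(n^2\,t(n))$, which overshoots the hypothesised $\Omega(n^2)$ whenever $t(n)$ is unbounded. The fix is simply to swap the order: first fix $k\cdot L$ \emph{pairwise disjoint} ID-windows (this uses only $k\cdot L\cdot(2t{+}1)=O(n^2)$ IDs), and then pigeonhole over these --- at least $k$ land in the same batch $a^\ast$. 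This is precisely what the paper does, except it works with full $(4t{+}2)$-sized segment labellings (IDs assigned in increasing order within a segment) rather than $(2t{+}1)$-windows; that choice also makes your ``extend to a consistent injective labelling'' step disappear, and synchronises all $k$ segments rather than $(1-\eps/4)k$ of them, so the final arithmetic is the cleaner $1+\tfrac{c}{c+2}>2-\eps$ for $c>2/\eps$.

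Your auxiliary observations ($a_i<r_i$, the per-segment lower bounds $(c{+}2)t(n)$ and $(2c{+}2)t(n)$, $M=\Theta(n)$) are all correct and match the paper's reasoning.
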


\begin{proof}
Let $\eps \in (0,1)$ and let $t(n) = o(n)$. Now, let $c\in\NN$ be some constant such that $c > \frac{2}{\eps}$ and take $\mathcal{G} = \mathcal{G}_{c,t}$ the graph family described above.

For the first part of the proof, note that \cref{goodBatchSched} gives a $(1,O(t(n)))$-approximation schedule for any constant $c>1$ and function $t(n) = o(n)$. As $M \geq k(c+2)t(n) = \Theta(n)$ ($M = \max\{w(\alpha),w(\beta)\}$), we have $t(n)/M = o(1)$. Thus, for every constant $\eps \in (0,1)$ and for any large enough $n$, a $(1,O(t(n)))$-approximation schedule implies that the size of all covers in the schedule is bounded by
\begin{align*}
    M + O(t(n)) = M + M\frac{O(t(n))}{M} < M(1+\eps)\ ,
\end{align*}
where the last inequality holds as $\frac{O(t(n))}{M} =o(1)< \eps$ for every large enough $n$.
Hence, it is indeed a $(1+\eps,0)$-approximation schedule.

For the second part of the proof, let $\mathcal{A}$ be an algorithm that computes a schedule of length $\ell(n)$ in $t(n)=o(n)$ rounds. Fix a large enough $n$ and let $t = t(n)$. Consider the $n$-node graph $G = G_{n,t,c}\in \mathcal{G}$ and let $\alpha = V_{odd}$ and $\beta = V_{even}$. Our goal is to show that there exists an ID assignment to nodes of $G$ such that, in the schedule computed by $\mathcal{A}$, all $k = \frac{n}{4t(n)+2}$ heavy $\beta$-nodes join the cover at the same time. Before their addition the vertex cover has cost at least $w(\alpha)=k(c+2)t(n)$, as $\alpha$ is a minimum vertex cover of the graph. Thus, after the addition of the $k$ $\beta$-nodes of cost $ct(n)$ each, we obtain a cost of $w(\alpha)+kct(n)$, which translates to a multiplicative approximation of at least
\begin{align*}
    \frac{w(\alpha)+kct(n)}{w(\alpha)} = 1 + \frac{kct(n)}{k(c+2)t(n)} = 1 + \frac{c}{c+2} > 2 - \frac{2}{c} > 2-\eps.
\end{align*}
Due to $c > \frac{2}{\eps}$ we obtain that $\mathcal{A}$ does not compute a $(2-\eps)$-approximation schedule on  $G$ and yields the desired result.

\subparagraph*{ID assignment, such that all $\beta$-nodes join at the same time.} To complete the proof, we have to show that there exists an assignment of IDs to the nodes of $G$, for which all heavy $\beta$-nodes join the cover at the same time. For this, note that in $t(n)$ rounds, each node learns only its $t(n)$-hop neighborhood, i.e., the $2t(n)+1$ nodes closest to it. Therefore, we can define a coloring $\mathcal{A} :
{[ID] \choose {4t(n)+2}} \to [\ell(n)]$ in the following way: 
Given a set $U = \{id_i\}_{i\in [4t(n)+2]}$ ordered in an ascending order (i.e., $id_1 < id_2 < \hdots < id_{4t(n)+2}$), $\mathcal{A}(U)$ is defined to be the time the heavy $\beta$-node $v^i_{2t(n)+2} \in I_i$ joins the cover when the IDs of the segment $I_i \subseteq G$ are in accordance to $U$, that is, the ID of $v_j^i$ is $id_j$ for all $j=1,\hdots,4t(n)+2$ (See \cref{fig:LBSegment}).

Recall, that the ID space is of size $\Omega(n^2)$ and see that $(4t(n)+2)k\ell(n) = O(n^2)$. Therefore, we can take $k\ell(n)$ pairwise disjoint sets of IDs $\{U_i\}_{i\in [k\ell(n)]} \subseteq {[ID] \choose {4t(n)+2}}$. Since $\mathcal{A}$ returns integers between $1$ and $\ell(n)$, by the pigeonhole principle, there are at least $k$ sets $U_{i_1},\hdots,U_{i_k}$ for which $\mathcal{A}(U_{i_1}) = \mathcal{A}(U_{i_2}) = \hdots = \mathcal{A}(U_{i_k})$. Therefore, if we set the IDs of $I_j$ in accordance to $U_{i_j}$ for all $j = 1,\hdots,k$, we get that in the computed schedule, all heavy $\beta$-nodes join the cover at the same time, giving the desired results.
\end{proof}

\begin{figure}[t]
    \centering
    \includegraphics{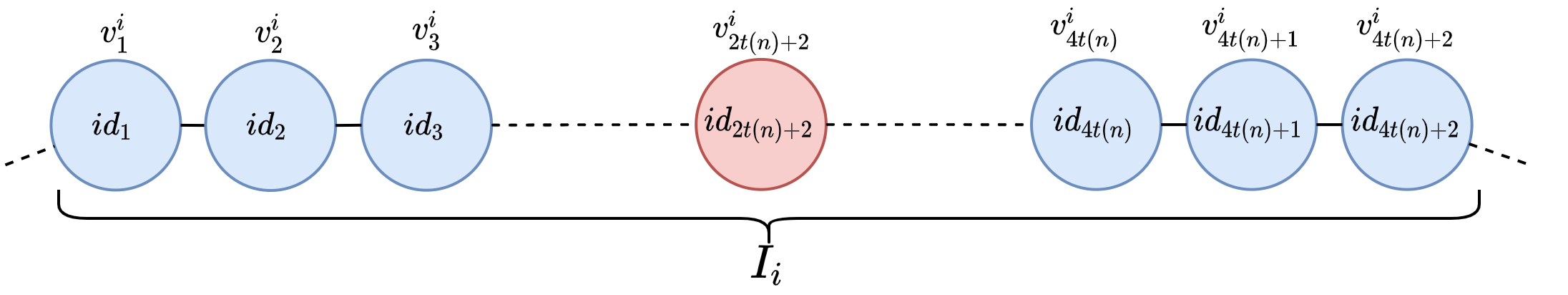}
    \caption{Segment $I_i$ with IDs in accordance to $U = \{id_i\}_{i=1}^{4t(n)+2}\subseteq {[ID] \choose 4t(n)+2}$. See that node $v_{j}^i$ has ID $id_j$ and that the red node, $v_{2t(n)+2}^i$, is the only heavy $\beta$-node in $I_i$. }
    \label{fig:LBSegment}
\end{figure}

\subsection{An $\Omega(n)$ Lower Bound for Unweighted Graphs}
\label{subsec::unweightedLB}
In this section, we reduce the task of schedule computation on weighted graphs to the same task on unweighted graphs. The lower bound for computing a reconfiguration schedule of weighted vertex cover from \cref{subsec::weighted_lb} combined with the proposed reduction, implies a lower bound for computing a reconfiguration schedule of unweighted vertex covers.

In the proposed reduction, given a weighted graph $G=(V,E,w)$ with $w : \NN \to \NN$, we replace each $v$ with $w(v)$ copies of $v$, and every edge $(u,v)$ is replaced by a complete bipartite graph $K_{w(u),w(v)}$ between all copies of $u$ and $v$. We continue with the formal definition of this graph.

\begin{definition}[Reduction Graph]
Let $G = (V,E,w)$ be a weighted graph with weights $w : V \to \NN$. We define the \emph{Reduction Graph} of $G$ as an unweighted graph $G' = (V',E')$ for which
\begin{align*}
    V' &= \bigcup_{v \in V}\curlybrackets*{v_i}_{i=1}^{w(v)}, \quad E' = \bigcup_{(u,v)\in E}\bigcup_{\substack{i \in [w(u)] \\ j \in [w(v)]}} \parenths*{u_i,v_j},
\end{align*}
where $[k] := \{1,2,\hdots,k\}$. That is, each vertex $v\in G$ has corresponding $w(v)$ vertices and every edge $(u,v)\in E$ has a corresponding  complete bipartite graph denoted by $G'_{u,v} \cong K_{w(u),w(v)}$.
\end{definition}

Given a weighted graph $G=(V,E,w)$ with weights $w:V\to \NN$ and its reduction graph $G' = (V',E')$ we denote by $corr(v) := \curlybrackets*{v_1,\hdots,v_{w(v)}} \subseteq V'$ the set of $w(v)$ nodes \emph{corresponding} to $v$. Given a subset $U \subseteq V$ we extend this notation and denote $corr(U) = \bigcup_{v\in U}corr(v)$. In addition, given a set $U' \subseteq V'$, we denote $corr^{-1}(U') = \{v\in V\mid corr(v) \subseteq U'\}$.
We first show a connection between vertex covers of a weighted graph $G$ and covers of its unweighted corresponding graph $G'$.

\begin{lemma}\label{reductionCoverEquivalence}
Let $G=(V,E,w)$ be a weighted graph with weights $w:V\to \NN$ and let $G'=(V',E')$ be the unweighted reduction graph of $G$. The following claims hold:
\begin{enumerate}
    \item For every vertex cover $U$ of $G$, the set $U' :=corr(U)$ is a vertex cover of $G'$ of size $|U'| = w(U)$.
    \item For every vertex cover $U'$ of $G'$, the set $U := corr^{-1}(U')$ is a vertex cover of $G$ of weight $w(U) \leq |U'|$.
\end{enumerate}
\end{lemma}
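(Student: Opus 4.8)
The plan is to prove each of the two claims directly from the definition of the reduction graph, checking the covering property edge-by-edge and then counting.

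For claim~1, suppose $U$ is a vertex cover of $G$ and let $U' = corr(U) = \bigcup_{v \in U}\{v_1,\dots,v_{w(v)}\}$. I would take an arbitrary edge $(u_i, v_j) \in E'$; by construction of $G'$ this edge belongs to the bipartite gadget $G'_{u,v}$ for some edge $(u,v) \in E$. Since $U$ is a vertex cover of $G$, at least one of $u,v$ lies in $U$, say $u \in U$; then \emph{all} copies of $u$, and in particular $u_i$, lie in $corr(u) \subseteq U'$, so the edge is covered. Hence $U'$ is a vertex cover of $G'$. For the size, note that the sets $corr(v)$ for distinct $v \in V$ are pairwise disjoint (each node of $G'$ corresponds to exactly one vertex of $G$) and $|corr(v)| = w(v)$, so $|U'| = \sum_{v \in U} w(v) = w(U)$.

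For claim~2, suppose $U'$ is a vertex cover of $G'$ and let $U = corr^{-1}(U') = \{v \in V : corr(v) \subseteq U'\}$. Take an arbitrary edge $(u,v) \in E$. I would argue by contradiction that $u \in U$ or $v \in U$: if not, then neither $corr(u)$ nor $corr(v)$ is fully contained in $U'$, so there exist copies $u_i \notin U'$ and $v_j \notin U'$; but $(u_i, v_j)$ is an edge of $G'_{u,v} \subseteq E'$ left uncovered by $U'$, a contradiction. Hence $U$ is a vertex cover of $G$. For the weight bound, observe that $corr(U) = \bigcup_{v \in U} corr(v) \subseteq U'$ by definition of $U$, and again using disjointness of the $corr(v)$ we get $w(U) = \sum_{v \in U} w(v) = |corr(U)| \le |U'|$.

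There is no real obstacle here — the argument is a routine unfolding of definitions; the only mild subtlety is that in claim~2 the inequality (rather than equality) is genuinely necessary, since $U'$ may contain some but not all copies of a vertex $v$, and those "partial" copies contribute to $|U'|$ without contributing $v$ to $U$. I would make sure to flag the disjointness of the corresponding sets explicitly, since it is used in both size computations and follows immediately from the fact that $V' = \bigcup_{v\in V}\{v_i\}_{i=1}^{w(v)}$ is indexed by pairs $(v,i)$.
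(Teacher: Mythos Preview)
Your proof is correct and follows essentially the same approach as the paper's: both arguments unfold the definition of the reduction graph, check the covering property by tracing an edge of one graph back to an edge of the other, and compute sizes via $|corr(v)|=w(v)$ together with the disjointness of the $corr(v)$. Your version is in fact a bit more explicit about the disjointness and about why only an inequality holds in claim~2, but the underlying reasoning is identical.
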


\begin{proof}
For the first part, see that $U' = \bigcup_{v \in U}corr(v)$ and that $w(v) = |corr(v)|$ for all $v\in V$ and so, $w(U) = \sum_{v\in U}w(v) = \sum_{v\in U}|corr(v)| = |U'|$. Now, let $(u',v')\in E'$ and let $u,v\in V$ be nodes such that $u' \in corr(u)$ and $v' \in corr(v)$. As $(u',v') \in E'$ we must have $(u,v)\in E$, and so, either $u \in U$ or $v \in U$. Therefore, we conclude that either $u' \in corr(u)$ or $v' \in corr(v)$ is in $U'$.

For the second part, see that $corr(v) \subseteq U'$ for every $v \in U$, and so, $w(U) = \sum_{v\in U}w(v) = \sum_{v\in U}|corr(v)| \leq |U'|$. To show that $U$ is a vertex cover, see that as every edge $(u,v)$ has a corresponding graph $G'_{u,v} \cong K_{w(u),w(v)}$ and as $U'$ is a vertex cover of $G'$, either $corr(u)\subseteq U'$ or $corr(v)\subseteq U'$ (otherwise, $G'_{u,v}$ would have not been covered).
\end{proof}

\begin{lemma}\label{lem::weighted_reduction}
Let $G=(V,E,w)$ be a weighted graph with weights $w : V \to \NN$, let $\alpha$ and $\beta$ be two vertex covers of $G$ and let $G' = (V',E')$ be the unweighted reduction graph of $G$. $G$ admits an $(\eta,c)$-approximation schedule of length $\ell$ between $\alpha$ and $\beta$ if and only if $G'$ admits a reconfiguration schedule between $corr(\alpha)$ and $corr(\beta)$ with the same length and approximation.
\end{lemma}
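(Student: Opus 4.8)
<br>

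The plan is to prove both directions by exhibiting an explicit correspondence between batches in a schedule for $G$ and batches in a schedule for $G'$, using \Cref{reductionCoverEquivalence} to transfer feasibility and cost.

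First I would set up notation: write $M=\max\{w(\alpha),w(\beta)\}=\max\{|corr(\alpha)|,|corr(\beta)|\}$ (the equality is exactly \Cref{reductionCoverEquivalence}(1)), so that the target $(\eta,c)$-approximation bounds translate verbatim between the two graphs. For the forward direction, suppose $\mathcal{S}=(V_i)_{i\in[\ell]}$ is an $(\eta,c)$-approximation schedule from $\alpha$ to $\beta$ in $G$. I define $\mathcal{S}'=(corr(V_i))_{i\in[\ell]}$. By \Cref{reductionCoverEquivalence}(1), each $corr(V_i)$ is a vertex cover of $G'$ with $|corr(V_i)|=w(V_i)$, so $corr(V_0)=corr(\alpha)$, $corr(V_\ell)=corr(\beta)$, and the cost bound $|corr(V_i)\cup corr(V_{i+1})|=|corr(V_i\cup V_{i+1})|=w(V_i\cup V_{i+1})\le \eta M+c$ follows (using that $corr$ is a disjoint union, so it commutes with $\cup$ and respects sizes). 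For the robustness/independence condition, I observe $corr(V_i)\oplus corr(V_{i+1})=corr(V_i\oplus V_{i+1})$, and that $corr$ maps independent sets of $G$ to independent sets of $G'$: if $u,v\in V_i\oplus V_{i+1}$ are non-adjacent in $G$ then no $G'_{u,v}$ exists, so no copy of $u$ is adjacent to any copy of $v$; and distinct copies $u_a,u_b$ of the same vertex $u$ are never adjacent in $G'$ by construction. Hence every batch of $\mathcal{S}'$ is independent, and $\mathcal{S}'$ is a valid schedule of the same length and approximation.

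For the converse, suppose $\mathcal{S}'=(W_i)_{i\in[\ell]}$ is a schedule from $corr(\alpha)$ to $corr(\beta)$ in $G'$. Here the subtlety is that an arbitrary vertex cover $W_i$ of $G'$ need not be of the form $corr(U)$ — it may contain some but not all copies of a vertex. I would apply $corr^{-1}$: set $V_i=corr^{-1}(W_i)$. By \Cref{reductionCoverEquivalence}(2), each $V_i$ is a vertex cover of $G$ with $w(V_i)\le|W_i|$. The endpoints are correct since $corr^{-1}(corr(\alpha))=\alpha$ and likewise for $\beta$. The cost bound needs a small argument: I want $w(V_i\cup V_{i+1})\le|W_i\cup W_{i+1}|$; this follows because $corr^{-1}(W_i)\cup corr^{-1}(W_{i+1})\subseteq corr^{-1}(W_i\cup W_{i+1})$, and \Cref{reductionCoverEquivalence}(2) applied to $W_i\cup W_{i+1}$ gives $w(corr^{-1}(W_i\cup W_{i+1}))\le|W_i\cup W_{i+1}|\le\eta M+c$. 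Finally, the independence of $V_i\oplus V_{i+1}$: if $u,v\in V_i\oplus V_{i+1}$ with $u\ne v$, say $u\in V_i\setminus V_{i+1}$, then $corr(u)\subseteq W_i$ and $corr(u)\cap W_{i+1}=\emptyset$ (since $u\notin corr^{-1}(W_{i+1})$ — actually I must be slightly careful: $u\notin V_{i+1}$ only means $corr(u)\not\subseteq W_{i+1}$; but since $W_i\oplus W_{i+1}$ is independent and $corr(u)$ is a clique-free... let me instead use that every copy of $u$ that changed status lies in the independent set $W_i\oplus W_{i+1}$), so a copy $u_a\in corr(u)\cap(W_i\oplus W_{i+1})$ and, symmetrically, a copy $v_b$ of $v$ both lie in $W_i\oplus W_{i+1}$; since that set is independent, $u_a$ and $v_b$ are non-adjacent in $G'$, hence no $G'_{u,v}$ exists, hence $(u,v)\notin E$.

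The main obstacle I anticipate is precisely this last point in the converse direction: reconstructing independence of the batch $V_i\oplus V_{i+1}$ in $G$ from independence of $W_i\oplus W_{i+1}$ in $G'$, because $W_i$ can split a vertex's copies. The clean way around it is to note that for any vertex $u$ whose membership changes between $V_i$ and $V_{i+1}$, \emph{at least one} copy $u_a$ must change membership between $W_i$ and $W_{i+1}$ (otherwise $corr(u)$ would sit identically in both, forcing $u\in V_i\iff u\in V_{i+1}$), and that copy witnesses the needed non-edge via the complete-bipartite structure of $G'_{u,v}$. I would also remark that monotonicity is preserved under both maps (a vertex of $G$ changes at most once iff its copies collectively change at most once in the natural simulation), so the correspondence also respects the "monotone" qualifier used elsewhere in the paper, though the statement as given does not explicitly demand it.
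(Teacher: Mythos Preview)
Your proposal is correct and follows essentially the same route as the paper: apply $corr$ to each $V_i$ for the forward direction and $corr^{-1}$ to each $W_i$ for the converse, invoking \Cref{reductionCoverEquivalence} for feasibility and cost, and for independence in the converse direction observe that a vertex $u$ changing status in $G$ forces at least one copy $u_a$ to change status in $G'$, which together with the complete-bipartite structure of $G'_{u,v}$ yields the required non-edge. Your write-up is in fact slightly more careful than the paper's in two places: you verify the cost bound on the unions $V_i\cup V_{i+1}$ (as the definition of schedule cost actually requires) rather than only on the individual $V_i$, and you explicitly treat the case of two copies of the same vertex lying in a batch of $\mathcal{S}'$.
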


\begin{proof} We show both directions of the equivalence separately. 
\subparagraph*{From weighted to unweighted.}
Let $\mathcal{S} = (V_0 = \alpha, \hdots, V_\ell = \beta)$ be an $(\eta,c)$-approximation reconfiguration schedule between $\alpha$ and $\beta$. Define the following schedule $\mathcal{S}' = (V_0' = corr(V_0),\hdots, C_\ell'=corr(V_\ell))$. 
To show validity, first see that by \cref{reductionCoverEquivalence}, every $V'_i$ is a vertex cover. Next, to show that every $V_i' \oplus V_{i+1}'$ is an independent set, let $u',v' \in V_i' \oplus V_{i+1}'$ and let $u,v \in V$ be nodes such that $u' \in corr(u)$ and $v' \in corr(v)$. Note that $u,v\in V_i \oplus V_{i+1}$ (since $u',v' \in V_i' \oplus V_{i+1}'$) and so $(u,v) \notin E$, therefore, by the definition of the reduction graph, $(u',v') \notin E'$.

For the approximation factor, see that by \cref{reductionCoverEquivalence}, $|V_i'| = w(V_i)$ for all $i = 0,\hdots,\ell$, and in particular, the initial and final covers admit $w(\alpha) = |corr(\alpha)|$, $w(\beta) = |corr(\beta)|$. Therefore, as for every $i$, $w(V_i) \leq \eta\max\{w(\alpha),w(\beta)\} + c$, we must also have $|V_i'| \leq \eta\max\{|corr(\alpha)|,|corr(\beta)|\} + c$.
Therefore, the cost of $\mathcal{S'}$ is the same as of $\mathcal{S}$. Note that in addition, $|V'_0| = w(V_0)$ and $|V'_\ell| = w(V_\ell)$ and so, $\mathcal{S}'$ has an approximation factor of $(\eta,c)$ as well.

\subparagraph*{From unweighted to weighted.}
Now, let $\alpha$ and $\beta$ be vertex covers of $G$ and let $\mathcal{S}'=(V'_0=corr(\alpha) , \hdots , V'_\ell=corr(\beta))$ be a reconfiguration schedule between $corr(\alpha)$ and $corr(\beta)$. We define the schedule $\mathcal{S}$ between $\alpha$ and $\beta$ as $\mathcal{S}=\parenths*{V_0=corr^{-1}(V'_0),\hdots,V_{\ell} = corr^{-1}(V'_\ell)}$.
To show validity, first see that by \cref{reductionCoverEquivalence}, every $V'_i$ is a vertex cover. Next, to show that $V_i \oplus V_{i+1}$ is an independent set, assume for contradiction that there are $u,v \in V_i \oplus V_{i+1}$ such that $(u,v)\in E$. This implies that there is some $u' \in (V'_i \oplus V'_{i+1}) \cap corr(u)$ and some $v' \in (V'_i \oplus V'_{i+1}) \cap corr(v)$. This is a contradiction, as $(u',v')\in E'$ by definition of the reduction graph, but $V'_i \oplus V'_{i+1}$ must be an independent set since $\mathcal{S}'$ is a valid schedule.

For the approximation factor, see that by \cref{reductionCoverEquivalence}, $w(V_i) = |V_i'|$ for all $i = 0,\hdots,\ell$, and in particular, the initial and final vertex covers admit $|corr(\alpha)| = w(\alpha)$, $|corr(\beta)| = w(\beta)$. Therefore, as for every $i$, $|V_i'| \leq \eta\max\{|corr(\alpha)|,|corr(\beta)|\} + c$, we must also have $w(V_i) \leq \eta\max\{w(\alpha),w(\beta)\} + c$.
\end{proof}

In the following theorem, we lift the lower bound for weighted vertex cover reconfiguration from \Cref{subsec::weighted_lb} to unweighted vertex cover reconfiguration.

\begin{theorem}\label{thm::unwighted_gen_lb}
For every $\eps \in (0,1)$ and $t(n) = o(n)$, there exists a family $\mathcal{G}$ of unweighted graphs, on which, for every large enough $n$, the following hold.
\begin{enumerate}
    \item Every $n$-node graph $G\in \mathcal{G}$ admits a monotone $(1+\eps)$-approximation schedule.
    \item On an ID space of size $\Omega(n^2)$, no $t(n)$-round algorithm can compute a monotone $(2 - \eps)$-approximation schedule of length $O(n)$ on $\mathcal{G}$.
\end{enumerate}
\end{theorem}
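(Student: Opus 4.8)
The plan is to derive Theorem~\ref{thm::unwighted_gen_lb} from the weighted lower bound of Theorem~\ref{thm::weighted_lb} by applying the reduction machinery of Lemma~\ref{lem::weighted_reduction}. Concretely, fix $\eps\in(0,1)$ and $t(n)=o(n)$, let $\mathcal{G}_w$ be the family of weighted graphs from Theorem~\ref{thm::weighted_lb} (for the same $\eps$ and an appropriate $t$), and let $\mathcal{G}$ be the family of reduction graphs $G'$ of all $G\in\mathcal{G}_w$. The first thing to check is the size parameter: a graph $G=G_{n,t,c}\in\mathcal{G}_w$ has $\Theta(k\cdot t(n))=\Theta(n)$ total weight (each heavy node has weight $c\cdot t(n)$, each light node weight $1$), so its reduction graph $G'$ has $N:=|V'|=w(V)=\Theta(n)$ vertices; thus $\mathcal{G}$ is indeed an infinite family of unweighted graphs and, since $t(n)=o(n)$, also $t(N)=o(N)$ after rescaling the hidden constants. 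For the positive part, Theorem~\ref{thm::weighted_lb}(1) gives a monotone $(1+\eps)$-approximation schedule for $G$ between any two vertex covers, and by Lemma~\ref{lem::weighted_reduction} this transfers to a monotone $(1+\eps)$-approximation schedule for $G'$ between the corresponding covers (the lemma preserves both length and the $(\eta,c)$ parameters, and monotonicity is preserved since $corr$ maps ``touched once'' to ``touched once''). Since every pair of vertex covers of $G'$ need not arise as $corr$ of a pair of covers of $G$ — here is a subtlety — I would restrict attention to the hard instance $\alpha'=corr(V_{odd})$, $\beta'=corr(V_{even})$, which suffices for a lower bound; for the upper-bound direction one can instead invoke that every subgraph relevant to our instances is a disjoint union of complete bipartite pieces over a cycle, but the cleanest route is: the positive claim of Theorem~\ref{thm::unwighted_gen_lb} only requires existence of \emph{some} $(1+\eps)$-schedule for \emph{each} graph in the family, so it is enough to exhibit it for the covers we care about, namely via Lemma~\ref{lem::weighted_reduction} applied to the $(1+\eps)$-schedule from Theorem~\ref{thm::weighted_lb}(1).

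For the negative part, suppose toward contradiction that some deterministic $t(N)$-round \LOCAL algorithm $\mathcal{A}$ computes a monotone $(2-\eps)$-approximation schedule of length $O(N)$ on all graphs in $\mathcal{G}$, on an ID space of size $\Omega(N^2)$. I would simulate $\mathcal{A}$ on $G'$ to obtain an algorithm $\mathcal{A}_w$ for $G$: each vertex $v$ of $G$ internally simulates all $w(v)$ of its copies in $G'$ (it knows $w(v)$ and can locally assign the copy-IDs), and since a ball of radius $r$ around $v$ in $G$ maps to a ball of radius $O(r)$ around $corr(v)$ in $G'$ — in fact, because each edge of $G$ becomes a complete bipartite block of \emph{diameter two}, distances in $G'$ are within a factor $2$ of distances in $G$ — the simulation costs only a constant factor in rounds, so $\mathcal{A}_w$ runs in $O(t(N))=O(t(n))=o(n)$ rounds. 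By the ``unweighted to weighted'' direction of Lemma~\ref{lem::weighted_reduction}, applying $corr^{-1}$ to the schedule $\mathcal{A}$ outputs on $G'$ yields a schedule for $G$ of the same length $O(N)=O(n)$ and the same approximation $(2-\eps,0)$. But this contradicts Theorem~\ref{thm::weighted_lb}(2), which says no $t(n)$-round algorithm computes a monotone $(2-\eps)$-approximation schedule of length $O(n)$ on $\mathcal{G}_w$.

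The one technical point that needs care — and which I expect to be the main obstacle — is the ID-space bookkeeping in the simulation. Theorem~\ref{thm::weighted_lb}(2) needs an ID space of size $\Omega(n^2)$ on $G$, while the copies in $G'$ multiply the number of vertices; I would handle this by letting vertex $v$ of $G$, holding ID $\mathrm{id}(v)$, assign its copies the IDs $(\mathrm{id}(v),1),\dots,(\mathrm{id}(v),w(v))$ from a product ID space of size $\Omega(n^2)\cdot O(t(n))=\Omega(N^2)$, which is consistent with the hypothesis on $\mathcal{A}$'s ID space for $G'$ (after absorbing the $O(t(n))=o(n)$ factor, still polynomial). One must verify that the adversarial ID assignment used in the proof of Theorem~\ref{thm::weighted_lb}(2) — where IDs of segments are chosen so that all heavy $\beta$-nodes join simultaneously — can be realized through this product encoding; since the adversary there only constrains the relative order of IDs within a segment, and our product map is order-compatible within each vertex's copy block, the adversary for $G'$ has at least as much freedom, so the reduction goes through. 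A secondary minor check is that $\mathcal{A}$'s output, being a valid schedule on $G'$, has each batch an independent set of $G'$, hence its $corr^{-1}$-image has each batch an independent set of $G$ (Lemma~\ref{reductionCoverEquivalence} together with the independence-preservation argument inside Lemma~\ref{lem::weighted_reduction}), so the resulting $G$-schedule is genuinely valid; this is already contained in Lemma~\ref{lem::weighted_reduction} and needs no new work.
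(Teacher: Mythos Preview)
Your overall strategy---derive the unweighted theorem from the weighted one via the reduction graph and Lemma~\ref{lem::weighted_reduction}---is exactly the paper's approach, and the negative part (simulation of the unweighted algorithm on the weighted graph) is essentially identical to what the paper does. There are, however, two points where your execution diverges from the paper and where a genuine gap remains.

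\textbf{The parameter bookkeeping.} You invoke Theorem~\ref{thm::weighted_lb} ``for the same $\eps$ and an appropriate $t$'' and later say the simulated algorithm runs in $O(t(N))=O(t(n))$ rounds, contradicting a ``no $t(n)$-round'' statement. The constant hidden in the $O(\cdot)$ is a real problem: the weighted lower-bound construction $G_{\widetilde n,t,c}$ has segments of length $4t(\widetilde n)+2$, so a heavy $\beta$-node is isolated from its segment boundary only for algorithms running in at most $t(\widetilde n)$ rounds, not $C\cdot t(\widetilde n)$ rounds. The paper closes this by choosing an intermediate function $h(n)=o(n)$ with $t(n)=o(h(n))$ (for instance $h(n)=t(n)\sqrt{n/t(n)}$), applying Theorem~\ref{thm::weighted_lb} with parameter $h$, and observing that the simulated algorithm runs in $O(t(\widetilde n))=o(h(\widetilde n))\le h(\widetilde n)$ rounds. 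Your ``appropriate $t$'' needs to be made precise in exactly this way.

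\textbf{The positive part for arbitrary covers.} You correctly flag that Lemma~\ref{lem::weighted_reduction} only produces schedules between covers of the form $corr(\alpha),corr(\beta)$, and you propose to sidestep this by restricting to the hard instance. The paper takes a different route: it proves the positive claim for \emph{all} pairs of covers of the reduction graph, directly and without invoking Lemma~\ref{lem::weighted_reduction}. Since the reduction graph of the weighted cycle has the property that every $2$-connected induced subgraph is a complete bipartite graph $K_{a,b}$ with $a,b=O(h(n))$, Theorem~\ref{thm:bicon} (applied after deleting one degree-$2$ $\beta$-node to break the global cycle) yields a monotone $(1,O(h(n)))$-approximation schedule for any $\alpha,\beta$; since $M=\Omega(n)$ and $h(n)=o(n)$, this is a $(1+\eps)$-approximation for large $n$. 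Your proposed shortcut proves strictly less than the theorem as stated and as the paper proves it; the structural argument via Theorem~\ref{thm:bicon} is the intended one.
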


\begin{proof}~
\subparagraph*{The graph family $\mathcal{G}$.} 
Let $\eps \in (0,1)$, let $t(n) = o(n)$ and let $h(n) = o(n)$ such that $t(n) = o(h(n))$ (e.g. $h(n) = t(n) \cdot \sqrt{n/t(n)}$). Now, let $\widetilde{\mathcal{G}}$ be the family of weighted graphs promised by \cref{thm::weighted_lb} for the parameters $\eps$ and $h(n)$, and let $\mathcal{G} = \{G\mid G \text{ is the reduction graph of some } \widetilde{G}\in \widetilde{\mathcal{G}}\}$. In addition, recall the construction of $\widetilde{\mathcal{G}}$ and see that there exists a constant $c$ for which, every $\widetilde{G}$ has a minimum vertex cover of size $\Omega(\widetilde{n})$ and is composed of $k = \frac{\widetilde{n}}{4h(\widetilde{n})+2} = O\parenths*{\frac{\widetilde{n}}{h(\widetilde{n})}}$ segments, each contains $4h(\widetilde{n})$ nodes with weight $1$ and $2$ nodes with weight $ch(\widetilde{n})$.
Therefore, the reduction graph $G$ of $\widetilde{G}$ has $n = O\parenths*{\frac{\widetilde{n}}{h(\widetilde{n})}}(4h(\widetilde{n}) + 2ch(\widetilde{n})) = O(\widetilde{n})$ nodes and a minimum vertex cover of size $\Omega(n)$.

\subparagraph*{Existence of a good schedule.}
For the first part, let $G\in \mathcal{G}$ be an $n$-node reduction graph of an $\widetilde{n}$-node graph $\widetilde{G} \in \widetilde{\mathcal{G}}$ with $n\ge 3$. 
Let $v \in \beta \setminus \alpha$ with degree $2$, we
would like to apply \Cref{thm:bicon} to the set $\mathcal{C}$ of all induced subgraphs of $G' := G\setminus\{v\}$, which can be viewed as a bounded hereditary class. Let $H$ be a 2-connected induced subgraph of $G'$ with at least 3 vertices (if there is none, $G'$ admits $(1,1)$-approximation schedules). It follows from the definition of $\mathcal{G}$ that $H$ is a complete bipartite graph $K_{a,b}$, with $1\le a,b\le O(h(\widetilde{n})) = O(h(n))$. For every pair of vertex covers in $K_{a,b}$, there is a  trivial monotone $(1,O(h(n)))$-approximation schedule: add all $\beta$-vertices, then remove all $\alpha$-vertices. \Cref{thm:bicon} then implies that there are similar schedules in $G'$ as well. 

Therefore, $G$ admits a $(1,O(h(n)))$-approximation schedule as well, as given a schedule $\mathcal{S}'$ of $G'$, one can convert it to a $(1,O(h(n))+1)$-approximation schedule of $G$: add $v$ to the cover, then apply $\mathcal{S'}$. As $h(n) = o(n)$ and as $M = \max\{|\alpha|,|\beta|\} = \Omega(n)$, we get that for every large enough $n$, a $(1,O(h(n)))$-approximation schedule implies that the size of all vertex covers in the schedule is bounded by
\begin{align*}
    M + O(h(n)) = M+M\frac{O(h(n))}{M} < M(1+\eps).
\end{align*}
Hence, for large enough $n$, a $(1,O(h(n)))$-approximation schedule can be interpreted as a $(1+\eps,0)$-approximation schedule.

\subparagraph*{Reduction from an unweighted algorithm to a weighted one}
For the second part, let $\mathcal{A}$ be a $t(n)=o(n)$-round algorithm which computes an $(\eta,\xi)$-approximation schedule of length $\ell(n) = O(n)$ on $\mathcal{G}$. We will show that algorithm $\mathcal{A}$ implies an $O(t(\widetilde{n}))$-round algorithm $\widetilde{\mathcal{A}}$ which computes an $(\eta,\xi)$-approximation schedule on $\widetilde{\mathcal{G}}$. Therefore, as $t(\widetilde{n}) = o(h(n))$, by \cref{thm::weighted_lb}, this implies that $\widetilde{A}$ does not compute a $(2-\eps)$-schedule on $\widetilde{\mathcal{G}}$. Combined with \cref{thm::weighted_lb}, we conclude that $\mathcal{A}$ does not compute a $(1+\eps)$-approximation schedule on $\mathcal{G}$ as well.

Algorithm $\widetilde{\mathcal{A}}$, on a $\widetilde{n}$-node graph $\widetilde{G} = (\widetilde{V},\widetilde{E})\in \widetilde{\mathcal{G}}$, simulates $\mathcal{A}$ on the $n$-node reduction graph $G$ of $\widetilde{G}$. The simulation is done by letting each node $\widetilde{v} \in \widetilde{V}$ simulate all nodes in $corr(\widetilde{v}) \subseteq V$. Note that there is no simulation overhead to simulate an algorithm in graph $G$ on communication graph $\widetilde{G}$ as for every $(u,v)\in E$ the edge $(\widetilde{u},\widetilde{v}) \in \widetilde{E}$ exists in the communication network. Therefore, from \cref{lem::weighted_reduction} we conclude that $\Tilde{\mathcal{A}}$ computes an $(\eta,\xi)$-approximation schedule of $G$ with length $\ell(n)$ in $t(n)$ rounds, that is transferred to a schedule of $\widetilde{G}$ with the same length and approximation.
 To interpret $\ell(n)$ and $t(n)$ in terms of $\widetilde{n}$, the number of nodes of $\widetilde{G}$, recall that $n = \frac{\widetilde{n}}{4h(\widetilde{n})+2}(4h(\widetilde{n}) + 2ch(\widetilde{n})) = O(\widetilde{n})$. Therefore, $\widetilde{\mathcal{A}}$ takes $t(n) = t(O(\widetilde{n})) = O(t(\widetilde{n})) = o(\widetilde{n})$ rounds and computes a schedule of length $\ell(n) = \ell(O(\widetilde{n})) = O(\ell(\widetilde{n})) = O(\widetilde{n})$. 
\end{proof}

\subsection{An $\Omega(\logstar n)$ Lower Bound  for Unweighted Cycles}
\label{subsec::unweightedLBcycles}
In the following section, we complement the $O(\logstar(n)/\eps)$ runtime of \cref{cor:variousClassesDist}  with a $\Omega(\logstar n)$ lower bound for computing reconfiguration schedules on unweighted cycles. Our proof uses Ramsey's theorem. 
 Ramsey theory has been used before to prove distributed lower bounds, e.g., for reproving Linial's $\Omega(\logstar n)$ lower bound for coloring cycles with $O(1)$ colors \cite{Suomela2020}, or for lower bounds for approximating independent sets \cite{conf/wdag/CzygrinowHW08}.

Our notations in the following definitions are taken from \cite{radziszowski11ramsey}.
\begin{definition}
Given a set $S$ and an $\ell$-coloring $c:{S\choose k}\rightarrow [\ell]$ of all $k$-subsets of $S$, a subset $S'\subseteq S$ of size $|S'|\ge k$ is \emph{monochromatic} (with color $i$) if the restriction of $c$ to the $k$-subsets of $S'$ is constant (and equals $i$).
\end{definition}

\begin{definition}[Ramsey Number] For  $\ell,k,r\in \mathbb{N}$ with $k\le r$, the \emph{Ramsey Number} $R_{\ell}(k,r)$ is the least $m\in \mathbb{N}$ such that, given a set $S$ of size $m$, for every $\ell$-coloring of $k$-subsets of $S$, there is a monochromatic $r$-subset $S'\subseteq S$.
\end{definition} 

In the lemma below, we use Knuth's arrow notation, where $a\uparrow\uparrow b$ is a tower of $a$'s 
of height $b$. 
\begin{lemma}\label{lem:ramsey}
For all $\ell,k,r\ge 2$, $k\le r$, it holds that $R_\ell(k,r)\le 2\uparrow\uparrow (k\log^*\ell + \log^*(3r\ell))$.
\end{lemma}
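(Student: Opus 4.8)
Proof plan.

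The plan is to prove the bound by induction on the uniformity $k$, using the classical Erd\H{o}s--Rado ``step-up'' recursion as the engine and then translating it into the $2\uparrow\uparrow/\logstar$ language. A point that will matter for the bookkeeping: rather than passing to pure towers $2\uparrow\uparrow(\cdot)$ immediately, it is better to keep the dependence on $r$ and $\ell$ at the \emph{top} of an iterated-exponential, because decreasing those parameters in the recursive call then leaves doubly-exponential slack that exactly swallows the polynomial overhead (the ``$\binom{m}{k-1}$'' in the exponent, the extra additive $k-1$, etc.) without costing an extra tower level. So I would carry an auxiliary invariant of the form $R_\ell(k,r)\le T_{g(k)}\big(f(k)\cdot r\ell\big)$, where $T_1(x)=x$, $T_{i+1}(x)=2^{T_i(x)}$, $g(k)$ is linear in $k$, and $f(k)$ grows only exponentially in $k$; at the very end I would convert this to the claimed form using $T_i(x)\le 2\uparrow\uparrow\big(i-1+\logstar x\big)$ together with $\logstar(f(k)r\ell)\le\logstar(3r\ell)+\logstar(f(k))+O(1)$ and $\logstar(f(k))\le\logstar(k)+O(1)\le\logstar(r)+O(1)$ (using $k\le r$), which collapses everything into $2\uparrow\uparrow\big(k\logstar\ell+\logstar(3r\ell)\big)$ since $\logstar\ell\ge1$.

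Next I would record the one combinatorial input: the \textbf{Erd\H{o}s--Rado step-up}, $R_\ell(k,r)\le \ell^{\binom{R_\ell(k-1,r-1)}{\,k-1}}+(k-1)$. I would either cite this or reprove it in a few lines by the standard construction: starting from a host set of size $N$, greedily build a long sequence $x_1,x_2,\dots$ of vertices together with a nested chain of ``candidate'' sets so that, for every $(k-1)$-subset $T\subseteq\{x_1,\dots,x_i\}$ and every later $x_j$, the color $c(T\cup\{x_j\})$ depends only on $T$; each extension of the sequence shrinks the candidate set by a factor at most $\ell^{\binom{i}{k-1}}$, so $N\approx\ell^{\binom{m}{k-1}}$ lets the sequence reach length $m=R_\ell(k-1,r-1)$; this induces an $\ell$-coloring of the $(k-1)$-subsets of $\{x_1,\dots,x_m\}$, and a monochromatic $(r-1)$-set for that coloring plus one more vertex gives a $c$-monochromatic $r$-set (here $r\ge k$ is used).

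For the \textbf{base case} $k=2$ I would invoke the classical multicolor graph-Ramsey bound $R_\ell(2,r)\le \ell^{\ell(r-1)+1}\le\ell^{\ell r}$ and verify $\ell^{\ell r}=2^{\ell r\log_2\ell}\le 2\uparrow\uparrow\big(2\logstar\ell+\logstar(3r\ell)\big)$ by the elementary facts $n\le 2\uparrow\uparrow\logstar n$, $(2\uparrow\uparrow a)^2\le 2\uparrow\uparrow(a+1)$ and $2^{2\uparrow\uparrow a}=2\uparrow\uparrow(a+1)$ (indeed $\ell r\log_2\ell\le(3r\ell)^2\le 2\uparrow\uparrow(\logstar(3r\ell)+1)$, and $2\logstar\ell\ge2$). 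For the \textbf{inductive step}, set $m=R_\ell(k-1,r-1)$; by the invariant $m\le T_{g(k-1)}(f(k-1)(r-1)\ell)$, bound $\binom{m}{k-1}\le m^{k-1}\le m^{r}$, absorb the ``$m^r$'' and the later ``$\cdot\log_2\ell$'' and ``$+(k-1)$'' into the slack created by $(r-1)\mapsto r$ and by a constant-factor increase of $f$, and conclude $R_\ell(k,r)\le T_{g(k-1)+1}\big(f(k)\cdot r\ell\big)$, i.e.\ $g(k)=g(k-1)+1$ with $g(2)=1$, so $g(k)=k-1$; combined with the final conversion this yields the stated bound.

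The step I expect to be the main obstacle is precisely this last bookkeeping: showing that the Erd\H{o}s--Rado recursion raises the iterated-exponential \emph{height} by only one per step (not a larger constant) and that the resulting $O(k)$ in the height collapses to $k\logstar\ell$ when $\ell$ is small (where $\logstar\ell$ is $1$ or $2$, leaving essentially no margin). This forces one to keep $r,\ell,f(k)$ explicit at the top of the tower throughout the induction so their influence can be re-amortized at each step, and to treat the few degenerate parameter regimes (small $r$, in particular $r$ close to $k$, and small $\ell$) by direct estimate, since there the tower is short and the asymptotic absorptions have no room to spare.
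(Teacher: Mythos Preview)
Your approach is correct in outline but takes a genuinely different and more laborious route than the paper. The paper does not re-derive the step-up recursion at all: it simply \emph{cites} the closed-form Erd\H{o}s--Rado bound
\[
R_\ell(k,r)\ \le\ \ell * (\ell^{k-1}) * (\ell^{k-2}) * \cdots * (\ell^{2}) * (r\ell),\qquad a*b:=a^b,\ \text{right-associative},
\]
from \cite[Theorem~1]{10.1112/plms/s3-2.1.417}, then observes that each base $\ell^{i}$ can be replaced by $\ell$ at the cost of bumping the top exponent from $r\ell$ to $3r\ell$, and finally substitutes $\ell\le 2\uparrow\uparrow\log^*\ell$ at every level to turn the height-$(k{-}1)$ tower of $\ell$'s into a tower of $2$'s of height $(k{-}1)\log^*\ell+\log^*(3r\ell)\le k\log^*\ell+\log^*(3r\ell)$. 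The whole proof is three sentences.

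What you do instead is reprove the recursion $R_\ell(k,r)\le \ell^{\binom{R_\ell(k-1,r-1)}{k-1}}+(k-1)$ and then run an explicit induction carrying the invariant $R_\ell(k,r)\le T_{k-1}(f(k)r\ell)$. This is a legitimate and more self-contained argument, and in fact your final height is essentially $k+\log^*(3r\ell)$, which is \emph{stronger} than the paper's $k\log^*\ell+\log^*(3r\ell)$ for $\ell>2$; you then discard the extra precision to match the stated lemma. The cost is exactly what you anticipated: the bookkeeping near the bottom of the tower (small $k$, where $T_{k-3}$ gives no amplification) is delicate, and your claim that ``$f$ grows only exponentially in $k$'' is slightly optimistic---at $k=4$ one needs $f(4)\gtrsim 3f(3)$, so $f$ is factorial-like for the first few steps and constant thereafter. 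This is harmless for the final $\log^*$ conversion (since $\log^*(k!)\le\log^*k+O(1)\le\log^*(3r\ell)+O(1)$ using $k\le r$), but you should say so explicitly rather than leave it as ``constant-factor increase of $f$''. If you are content to cite Erd\H{o}s--Rado, the paper's three-line route avoids all of this.
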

\begin{proof}
From \cite[Theorem 1]{10.1112/plms/s3-2.1.417}, we know that $R_{\ell}(k, r) \le \ell * (\ell^{k-1}) * (\ell^{k-2}) * \dots * (\ell^2) * (r\ell)$, where $a*b := a^b$, and $a*b*c=a*(b*c)$. Since the top exponent $2r\ell$ is larger than $k$ and $\ell\ge 2$, it is easy to see that replacing $\ell^i$ with $\ell$ in the expression above, for all $i$, and adding $k$ to the top exponent, does not decrease its value. Thus, we have $R_\ell(k,r)\le \ell*\dots*\ell*(3r\ell)$. Since $\ell\le 2\uparrow\uparrow \log^*\ell$, we replace each $\ell$ with $2\uparrow\uparrow \log^*\ell$ and $3r\ell$ with $2\uparrow\uparrow \log^*(3r\ell)$, to get the claimed bound.
\end{proof}

\begin{theorem}\label{thm:logstarLB}
Let $\eps\in(0,1)$ be a constant, let $\ell : \NN\to\NN$ be a function such that $\ell(n) < n$ and let $\mathcal{C}=\{C_n \mid n\in 2\NN\}$ be the family of even cycles. 
Given an ID space of size $\Omega(n\ell(n)\log{n})$, any $T = T(n,\ell(n))$-round algorithm $\mathcal{A}$ which computes a monotone $\ell(n)$-batch $(2-\eps)$-approximation schedule on $\mathcal{C}$ for every large enough $n$ must have a runtime of $T = \Omega(\log^* (n)/\log^*(\ell(n)))$ rounds.
\end{theorem}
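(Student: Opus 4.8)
The plan is to assume toward a contradiction that a $T$-round algorithm $\mathcal A$ (with $T=T(n,\ell(n))$ violating the claimed bound) computes such schedules, and to use Ramsey's theorem to freeze $\mathcal A$'s decisions into an order-invariant pattern on a carefully labelled cycle, where the pattern is provably too inflexible to be an approximate schedule. First I would record the standard \LOCAL observation: after $T$ rounds a vertex $v$ of $C_n$ (with $n>2T+1$) decides using only its radius-$T$ ball, a path on $2T+1$ vertices carrying their IDs in the cyclic order around $v$ and their $\alpha/\beta$-labels. On the instances we construct these labels alternate, so they are determined by the single bit ``$v\in\alpha$ vs.\ $v\in\beta$''. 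Hence the batch in which $v$ is reconfigured is a function of the sorted $(2T+1)$-tuple of IDs seen by $v$ together with that bit; equivalently two functions $g_\alpha,g_\beta$ of a sorted $(2T+1)$-subset of the ID space (``batch in which an $\alpha$-/$\beta$-center changes''). Set $\chi(U)=(g_\alpha(U),g_\beta(U))$, a coloring of $(2T+1)$-subsets with at most $\ell(n)^2$ colors, and apply \Cref{lem:ramsey}: since the ID space is large enough (this is where the size $\Omega(n\ell(n)\log n)$ and the assumed smallness of $T$ enter), there is a monochromatic set $S^\star$, of size $n$, with some color $(a^\star,b^\star)$.

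Next I would build the hard instance: take $C_n$ with vertex IDs the $n$ elements of $S^\star$ placed in increasing order around the cycle, and let $\alpha,\beta$ be the two alternating independent sets, both minimum vertex covers of size $n/2=:M$. Every vertex whose radius-$T$ ball does not straddle the single ``wrap'' position (where the ID sequence jumps from its maximum back to its minimum) sees a sorted $(2T+1)$-subset of $S^\star$, so it is reconfigured in batch $a^\star$ if it lies in $\alpha$ and in batch $b^\star$ if it lies in $\beta$; only $O(T)$ vertices near the wrap are ``uncontrolled'', and among the controlled ones all but $O(T)$ also have both neighbours controlled.

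Now I would branch on $a^\star$ versus $b^\star$. If $a^\star=b^\star$, batch $\mathcal E_{a^\star}$ contains a controlled $\alpha$-vertex and an adjacent controlled $\beta$-vertex, violating the independence requirement on batches. If $a^\star<b^\star$, then right after batch $a^\star$ a controlled $\alpha$-vertex has been removed while its controlled $\beta$-neighbour has not yet been added, leaving an edge uncovered, so the schedule is not even feasible. Hence $b^\star<a^\star$; but then at the end of batch $b^\star$ (indeed at every step $t$ with $b^\star\le t<a^\star$) the current cover contains every controlled $\beta$-vertex (just added) and every controlled $\alpha$-vertex (removed only at $a^\star>t$), hence at least $n-O(T)$ of the $n$ vertices. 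Since $M=n/2$ and $T=o(n)$, for large $n$ this cover has size $n-O(T)>(2-\eps)\tfrac n2=(2-\eps)M$, contradicting the $(2-\eps)$-approximation guarantee (and likewise if the approximation carries an additive constant).

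Finally, since no correct $\mathcal A$ can give rise to such a monochromatic set, \Cref{lem:ramsey} must fail to guarantee one, i.e.\ the ID space has size strictly below $R_{\ell(n)^2}(2T+1,n)\le 2\uparrow\uparrow\!\bigl((2T+1)\log^*(\ell(n)^2)+\log^*(3n\ell(n)^2)\bigr)$; combining this with the assumed ID-space size $\Omega(n\ell(n)\log n)$, whose $\log^*$ is $\Theta(\log^* n)$, forces $\log^* n=O\!\bigl(T\log^*\ell(n)\bigr)$, that is $T=\Omega\!\bigl(\log^* n/\log^*\ell(n)\bigr)$. I expect the main obstacle to be precisely this last bookkeeping: choosing the size of the guaranteed monochromatic set and juggling the constants in \Cref{lem:ramsey} so that an ID space of size only $\Omega(n\ell(n)\log n)$ is enough, together with handling the $O(T)$ vertices near the cyclic discontinuity of the ID ordering carefully enough that they cannot repair the schedule (e.g.\ by being the ones that legitimately ``break'' the cycle into a path).
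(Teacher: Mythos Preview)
Your Ramsey-based strategy is the right high-level idea and your case analysis on $(a^\star,b^\star)$ is clean, but the quantitative step at the very end---which you correctly flag as the crux---does not go through. You ask for a single monochromatic set $S^\star$ of size $n$. By \Cref{lem:ramsey}, the relevant Ramsey number satisfies
\[
\log^* R_{\ell^2}(2T+1,n)\ \le\ (2T+1)\log^*(\ell^2)+\log^*(3n\ell^2),
\]
and the second summand alone is $\Theta(\log^* n)$ (since $\ell<n$). Thus the upper bound on the tower height is $O(T\log^*\ell)+\Theta(\log^* n)$. When you compare this with $\log^*$ of the ID space, which is also $\Theta(\log^* n)$, the $\Theta(\log^* n)$ terms cancel and you obtain the vacuous inequality $0\le O(T\log^*\ell)$. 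In other words, an ID space of size $\Omega(n\ell\log n)$ is \emph{not} large enough to guarantee a size-$n$ monochromatic set via \Cref{lem:ramsey}, no matter how small $T$ is; the target size $r=n$ by itself already pushes the Ramsey number beyond the available ID space. So the contradiction never materializes.

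The paper circumvents exactly this obstacle by asking only for \emph{small} monochromatic sets, of size $r=O(T)$ (concretely $r=2cT+2T$ with $c=\lceil 2/\eps\rceil$), for which the $\log^*(3r\ell)$ term is negligible and the Ramsey number $X=R_\ell(2T+1,r)$ satisfies $X<\log n$ once $T<\tfrac{\log^* n}{4\log^*\ell}$. The ID space is then partitioned into $\lfloor R/X\rfloor\ge n\ell$ disjoint blocks, each yielding its own monochromatic $(2cT+2T)$-set; a pigeonhole over the $\ell$ possible colors gives at least $n$ such sets of a \emph{common} color $t$, and these are used to tile the cycle with short paths so that $\Theta(n)$ many $\beta$-nodes all join in batch $t$. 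The ``many small monochromatic sets $+$ pigeonhole'' step is precisely what replaces your single size-$n$ set and is what makes the arithmetic close. Your $O(T)$ uncontrolled ``wrap'' vertices are handled analogously in the paper (there are $O(T)$ boundary vertices per path segment), so that part of your plan is fine; the fix you need is to shrink the Ramsey target from $n$ to $O(T)$ and recover coverage of the whole cycle via the extra pigeonhole.
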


\begin{proof}
Assume, for a contradiction, that $T(n,\ell(n))<\frac{\log^*n}{4\log^*\ell(n)}$, for some sufficiently large even $n$. In the rest of the proof, we let $\ell=\ell(n)$, for simpler notation. Consider $C_n$, and assume, for simplicity, the nodes are labeled with IDs from $[R]$, where $R= \lceil n\ell\log n\rceil$ (the proof can easily adapted for any $R=\Omega(n\ell\log n)$).
Let $\alpha,\beta$, $\alpha\cap\beta=\emptyset$ be  \emph{disjoint minimum vertex covers}; note that the nodes in $C_n$ alternate between $\alpha$ and $\beta$.

\subparagraph*{$\mathcal{A}$ defines a coloring of $(2T+1)$-sized subsets of $R$.} In $T$ rounds, each node $v$ learns the  IDs and type ($\alpha$ or $\beta$) of its $T$-hop neighborhood, i.e., the $2T+1$ nodes closest to it. This information is determined by the $(2T+1)$-tuple of the IDs and the type of $v$, since the types of other nodes are uniquely determined by the type of $v$ and the distance from it. Moreover, we will only focus on  nodes $v$ whose $(2T+1)$-tuple is an increasing sequence; hence, the sequence is determined by the \emph{set} of $(2T+1)$ IDs. The output of $v$ is the batch number where it is processed (recall that the schedule is monotone).  As such, algorithm $\mathcal{A}$ 
corresponds to
a function $\mathcal{A}:\{\alpha,\beta\}\times {R\choose 2T+1} \to \brackets*{\ell}$, which, given the set $S$ of   IDs of the nodes on the $(2T+1)$-path with center node $v$, and the type of $v$, outputs the batch number of $v$.
Our aim is to assign the IDs of nodes so that there is one batch where many $\beta$-nodes are added; therefore, we only focus on $\beta$-nodes, and drop the type parameter of $\mathcal{A}$.

\subparagraph*{There are many monochromatic subsets.}
By the discussion above, we can view  algorithm $\mathcal{A}$ (when restricted to $\beta$-nodes with a monotone ID sequence in the neighborhood) as an $\ell$-coloring of $(2T+1)$-subsets from $[R]$. Let $X=R_{\ell}(2T + 1,2cT + 2T)$, where $c=\lceil 2/\eps\rceil$.  If $R > X$, by the definition of the Ramsey number $R_{\ell}(2T + 1,2cT + 2T)$, there exists a monochromatic subset $S \subseteq [R]$ of size $2cT + 2T$. Hence, one can assign the nodes on an induced path $P \subseteq G$ of length $2cT+2T$ IDs from $S$, in increasing order of IDs. Therefore, as the $T$-hop neighborhood $N^T(v_i)$ of at least $cT$ $\beta$-nodes $v_i\in P$ are all $2T+1$-sized subsets of $S$. As $S$ is monochromatic, each such subset $N^T(v_i)$ is mapped to the same color under $\mathcal{A}$, i.e.,  each of the $cT$ $\beta$-nodes  joins the cover at the same time. 
Now, see that if $R > mX$ 
for some integer $m$, we can take $m$ disjoint subsets of $[R]$, $U_1,\dots,U_{m}$, each of size $X$, and apply the above reasoning for each $U_i$ to obtain $m$ disjoint monochromatic subsets, $S_1,\hdots,S_m$ (there are $m$ separate applications of `Ramsey').
To bound $m$, the number of monochromatic subsets, by \cref{lem:ramsey} we know that $\log^*X\le (2T+1)\log^*\ell + \log^*(12cT\ell)$. Assuming $n$ is large enough with respect to $c$ and $\ell$, and recalling that $T<\frac{\log^* n}{4\log^*\ell}$, we have that $X<\log n$. Therefore, we can take $m := \floor{R/X} \geq \ceil{n\ell\log n}/\log n \geq n\ell$ to obtain disjoint monochromatic subsets $S_1, \dots, S_m$.

\subparagraph*{Many monochromatic subsets have the same color.} While each of the $m$ subsets $S_1,\dots,S_m$ is monochromatic, they might be monochromatic with regard to different colors. But, 
by the pigeonhole principle, there is a color $t\in[\ell]$ such that at least $m/\ell\ge n$ of them are monochromatic with regard to $t$.  Let $S_1,\dots,S_s$ be $s:= \ceil{n/(2cT+2T)}$ of those subsets. We assign the IDs on $C_n$, by partitioning the cycle into paths $P_1,\dots,P_s$, each of length $2cT+2T$ (except perhaps one), and assign the nodes on $P_i$ the IDs from $S_i$, in increasing order of IDs. For each $P_i$, perhaps except for one, it holds that $cT$ of its $\beta$-nodes, namely the ones for which the set of IDs of the $T$-neighborhood is entirely in $S_i$, output $t$. Over the whole graph, at least $(n/(2cT+2T)-1)\cdot cT$ of the $\beta$-nodes output $t$. Just before the $t$-th batch of the schedule,  we have a vertex cover that is at least as large as $M=\max(|\alpha|,|\beta|)=n/2$ (since those are minimum vertex covers). In the $t$-th batch, we add the mentioned $\beta$-vertices, getting a  vertex cover of size
 \[
 M+\frac{ncT}{2cT+2T}-cT=\left(1+\frac{c}{c+1}-\frac{cT}{M}\right)M\ge \left(2-\frac{1}{c+1}-\frac{2c\log^* n}{n}\right)M> (2-\eps)M\ ,
 \]
recalling that $c\ge 2/\eps$, and assuming $n$ is large enough so that $2c\log^*(n)/n<\eps/2$. Thus, algorithm $\mathcal{A}$ gives a schedule whose approximation factor is more than $(2-\eps)$ for the constructed instance. Thus, the assumption that $T<\frac{\log^* n}{4\log^*\ell}$ leads to a contradiction.
\end{proof}

\section*{Acknowledgements}
This project has been supported by the European Union’s Horizon 2020 Research and Innovation Programme under grant agreement no. 755839.

\bibliography{MyBib}

\end{document}